\begin{document}
\theoremstyle{plain}
\newtheorem*{ithm}{Theorem}
\newtheorem*{idefn}{Definition}
\newtheorem{thm}{Theorem}[section]
\newtheorem{lem}[thm]{Lemma}
\newtheorem{dlem}[thm]{Lemma/Definition}
\newtheorem{prop}[thm]{Proposition}
\newtheorem{set}[thm]{Setting}
\newtheorem{cor}[thm]{Corollary}
\newtheorem*{icor}{Corollary}
\theoremstyle{definition}
\newtheorem{assum}[thm]{Assumption}
\newtheorem{notation}[thm]{Notation}
\newtheorem{defn}[thm]{Definition}
\newtheorem{clm}[thm]{Claim}
\newtheorem{ex}[thm]{Example}
\theoremstyle{remark}
\newtheorem{rem}[thm]{Remark}
\newcommand{\unit}{\mathbb I}
\newcommand{\ali}[1]{{\mathfrak A}_{[ #1 ,\infty)}}
\newcommand{\alm}[1]{{\mathfrak A}_{(-\infty, #1 ]}}
\newcommand{\nn}[1]{\lV #1 \rV}
\newcommand{\br}{{\mathbb R}}
\newcommand{\dm}{{\rm dom}\mu}
\newcommand{\lb}{l_{\bb}(n,n_0,k_R,k_L,\lal,\bbD,\bbG,Y)}
\newcommand{\Ad}{\mathop{\mathrm{Ad}}\nolimits}
\newcommand{\Proj}{\mathop{\mathrm{Proj}}\nolimits}
\newcommand{\RRe}{\mathop{\mathrm{Re}}\nolimits}
\newcommand{\RIm}{\mathop{\mathrm{Im}}\nolimits}
\newcommand{\Wo}{\mathop{\mathrm{Wo}}\nolimits}
\newcommand{\Prim}{\mathop{\mathrm{Prim}_1}\nolimits}
\newcommand{\Primz}{\mathop{\mathrm{Prim}}\nolimits}
\newcommand{\ClassA}{\mathop{\mathrm{ClassA}}\nolimits}
\newcommand{\Class}{\mathop{\mathrm{Class}}\nolimits}
\newcommand{\diam}{\mathop{\mathrm{diam}}\nolimits}
\def\qed{{\unskip\nobreak\hfil\penalty50
\hskip2em\hbox{}\nobreak\hfil$\square$
\parfillskip=0pt \finalhyphendemerits=0\par}\medskip}
\def\proof{\trivlist \item[\hskip \labelsep{\bf Proof.\ }]}
\def\endproof{\null\hfill\qed\endtrivlist\noindent}
\def\proofof[#1]{\trivlist \item[\hskip \labelsep{\bf Proof of #1.\ }]}
\def\endproofof{\null\hfill\qed\endtrivlist\noindent}

\newcommand{\varphii}{\varphi}
\newcommand{\pgs}{\caP_{\sigma}}
\newcommand{\oo}{{\boldsymbol\varphii}}
\newcommand{\caA}{{\mathcal A}}
\newcommand{\caB}{{\mathcal B}}
\newcommand{\caC}{{\mathcal C}}
\newcommand{\caD}{{\mathcal D}}
\newcommand{\caE}{{\mathcal E}}
\newcommand{\caF}{{\mathcal F}}
\newcommand{\caG}{{\mathcal G}}
\newcommand{\caH}{{\mathcal H}}
\newcommand{\caI}{{\mathcal I}}
\newcommand{\caJ}{{\mathcal J}}
\newcommand{\caK}{{\mathcal K}}
\newcommand{\caL}{{\mathcal L}}
\newcommand{\caM}{{\mathcal M}}
\newcommand{\caN}{{\mathcal N}}
\newcommand{\caO}{{\mathcal O}}
\newcommand{\caP}{{\mathcal P}}
\newcommand{\caQ}{{\mathcal Q}}
\newcommand{\caR}{{\mathcal R}}
\newcommand{\caS}{{\mathcal S}}
\newcommand{\caT}{{\mathcal T}}
\newcommand{\caU}{{\mathcal U}}
\newcommand{\caV}{{\mathcal V}}
\newcommand{\caW}{{\mathcal W}}
\newcommand{\caX}{{\mathcal X}}
\newcommand{\caY}{{\mathcal Y}}
\newcommand{\caZ}{{\mathcal Z}}
\newcommand{\bbA}{{\mathbb A}}
\newcommand{\bbB}{{\mathbb B}}
\newcommand{\bbC}{{\mathbb C}}
\newcommand{\bbD}{{\mathbb D}}
\newcommand{\bbE}{{\mathbb E}}
\newcommand{\bbF}{{\mathbb F}}
\newcommand{\bbG}{{\mathbb G}}
\newcommand{\bbH}{{\mathbb H}}
\newcommand{\bbI}{{\mathbb I}}
\newcommand{\bbJ}{{\mathbb J}}
\newcommand{\bbK}{{\mathbb K}}
\newcommand{\bbL}{{\mathbb L}}
\newcommand{\bbM}{{\mathbb M}}
\newcommand{\bbN}{{\mathbb N}}
\newcommand{\bbO}{{\mathbb O}}
\newcommand{\bbP}{{\mathbb P}}
\newcommand{\bbQ}{{\mathbb Q}}
\newcommand{\bbR}{{\mathbb R}}
\newcommand{\bbS}{{\mathbb S}}
\newcommand{\bbT}{{\mathbb T}}
\newcommand{\bbU}{{\mathbb U}}
\newcommand{\bbV}{{\mathbb V}}
\newcommand{\bbW}{{\mathbb W}}
\newcommand{\bbX}{{\mathbb X}}
\newcommand{\bbY}{{\mathbb Y}}
\newcommand{\bbZ}{{\mathbb Z}}
\newcommand{\str}{^*}
\newcommand{\lv}{\left \vert}
\newcommand{\rv}{\right \vert}
\newcommand{\lV}{\left \Vert}
\newcommand{\rV}{\right \Vert}
\newcommand{\la}{\left \langle}
\newcommand{\ra}{\right \rangle}
\newcommand{\ltm}{\left \{}
\newcommand{\rtm}{\right \}}
\newcommand{\lcm}{\left [}
\newcommand{\rcm}{\right ]}
\newcommand{\ket}[1]{\lv #1 \ra}
\newcommand{\bra}[1]{\la #1 \rv}
\newcommand{\kl}[2]{\bm {#1}_{\hat \Lambda_{#2}}}
\newcommand{\hln}[1]{\hat\Lambda_{#1}}
\newcommand{\lmk}{\left (}
\newcommand{\rmk}{\right )}
\newcommand{\al}{{\mathcal A}}
\newcommand{\md}{M_d({\mathbb C})}
\newcommand{\ainn}{\mathop{\mathrm{AInn}}\nolimits}
\newcommand{\id}{\mathop{\mathrm{id}}\nolimits}
\newcommand{\Tr}{\mathop{\mathrm{Tr}}\nolimits}
\newcommand{\Ran}{\mathop{\mathrm{Ran}}\nolimits}
\newcommand{\Ker}{\mathop{\mathrm{Ker}}\nolimits}
\newcommand{\Aut}{\mathop{\mathrm{Aut}}\nolimits}
\newcommand{\spn}{\mathop{\mathrm{span}}\nolimits}
\newcommand{\Mat}{\mathop{\mathrm{M}}\nolimits}
\newcommand{\UT}{\mathop{\mathrm{UT}}\nolimits}
\newcommand{\DT}{\mathop{\mathrm{DT}}\nolimits}
\newcommand{\GL}{\mathop{\mathrm{GL}}\nolimits}
\newcommand{\spa}{\mathop{\mathrm{span}}\nolimits}
\newcommand{\supp}{\mathop{\mathrm{supp}}\nolimits}
\newcommand{\rank}{\mathop{\mathrm{rank}}\nolimits}
\newcommand{\idd}{\mathop{\mathrm{id}}\nolimits}
\newcommand{\ran}{\mathop{\mathrm{Ran}}\nolimits}
\newcommand{\dr}{ \mathop{\mathrm{d}_{{\mathbb R}^k}}\nolimits} 
\newcommand{\dc}{ \mathop{\mathrm{d}_{\cc}}\nolimits} \newcommand{\drr}{ \mathop{\mathrm{d}_{\rr}}\nolimits} 
\newcommand{\zin}{\mathbb{Z}}
\newcommand{\rr}{\mathbb{R}}
\newcommand{\cc}{\mathbb{C}}
\newcommand{\ww}{\mathbb{W}}
\newcommand{\nan}{\mathbb{N}}\newcommand{\bb}{\mathbb{B}}
\newcommand{\aaa}{\mathbb{A}}\newcommand{\ee}{\mathbb{E}}
\newcommand{\pp}{\mathbb{P}}
\newcommand{\wks}{\mathop{\mathrm{wk^*-}}\nolimits}
\newcommand{\mk}{{\Mat_k}}
\newcommand{\mnz}{\Mat_{n_0}}
\newcommand{\mn}{\Mat_{n}}
\newcommand{\dist}{\dc}
\newcommand{\braket}[2]{\left\langle#1,#2\right\rangle}
\newcommand{\ketbra}[2]{\left\vert #1\right \rangle \left\langle #2\right\vert}
\newcommand{\abs}[1]{\left\vert#1\right\vert}
\newtheorem{nota}{Notation}[section]
\def\qed{{\unskip\nobreak\hfil\penalty50
\hskip2em\hbox{}\nobreak\hfil$\square$
\parfillskip=0pt \finalhyphendemerits=0\par}\medskip}
\def\proof{\trivlist \item[\hskip \labelsep{\bf Proof.\ }]}
\def\endproof{\null\hfill\qed\endtrivlist\noindent}
\def\proofof[#1]{\trivlist \item[\hskip \labelsep{\bf Proof of #1.\ }]}
\def\endproofof{\null\hfill\qed\endtrivlist\noindent}
\newcommand{\Ln}[1]{\Lambda_{#1}}
\newcommand{\ZZ}{\bbZ_2\times\bbZ_2}
\newcommand{\SSS}{\mathcal{S}}
\newcommand{\cs}{S}
\newcommand{\ct}{t}
\newcommand{\hS}{S}
\newcommand{\vv}{{\boldsymbol v}}
\newcommand{\ala}{a}
\newcommand{\bet}{b}
\newcommand{\gam}{c}
\newcommand{\alphas}{\alpha}
\newcommand{\alphai}{\alpha^{(\sigma_{1})}}
\newcommand{\alphan}{\alpha^{(\sigma_{2})}}
\newcommand{\betas}{\beta}
\newcommand{\betai}{\beta^{(\sigma_{1})}}
\newcommand{\betan}{\beta^{(\sigma_{2})}}
\newcommand{\alphass}{\alpha^{{(\sigma)}}}
\newcommand{\uu}{V}
\newcommand{\mopbk}{{\mathop{\mathrm{bk}}}}
\newcommand{\mopbd}{{\mathop{\mathrm{bd}}}}
\newcommand{\vp}{\varsigma}
\newcommand{\vpr}{R}
\newcommand{\tg}{\tau_{\Gamma}}
\newcommand{\sgg}{\Sigma_{\Gamma}^{(\sigma)}}
\newcommand{\nh}{t28}
\newcommand{\rk}{6}
\newcommand{\nii}{2}
\newcommand{\nhh}{28}
\newcommand{\sjt}{30}
\newcommand{\sjtg}{30}
\newcommand{\bcg}{\caB(\caH_{\alpha})\otimes  C^{*}(\Sigma)}
\newcommand{\hu}{\mathop {\mathrm H_{U}}}
\newcommand{\hd}{\mathop {\mathrm H_{D}}}
\newcommand{\Cbk}{\caC_{\mopbk}}
\newcommand{\CUbk}{\caC_{\mopbk}^U}
\newcommand{\CDbk}{\caC_{\mopbk}^D}
\newcommand{\Cr}{\caC^r}
\newcommand{\Cl}{\caC^l}
\newcommand{\Obk}{O_{\mathop{\mathrm{bk}}}}
\newcommand{\OUbk}{O_{\mathop{\mathrm{bk}}}^U}
\newcommand{\Orbd}{O^r_{\mathop{\mathrm{bd}}}}
\newcommand{\OrUbd}{O^{rU}_{\mathop{\mathrm{bd}}}}
\newcommand{\Obkl}{O_{\mathop{\mathrm{bk},\Lambda_0}}}
\newcommand{\OUbkl}{O_{\mathop{\mathrm{bk},\Lambda_0}}^U}
\newcommand{\Orbdl}{O^r_{\mathop{\mathrm{bd},\Lambda_{r0}}}}
\newcommand{\OrUbdl}{O^{rU}_{\mathop{\mathrm{bd},\Lambda_{r0}}}}
\newcommand{\Obu}{O_{\mathop{\mathrm{bd}}}^{\mathop{\mathrm{bu}}}}
\newcommand{\Obul}{O_{\mathop{\mathrm{bd}},\lz}^{\mathop{\mathrm{bu}}}}
\newcommand{\Odl}{O_{\lzr}^{\caD}}
\newcommand{\fd}{F^{\caD}}
\newcommand{\hilb}{\mathop{\mathrm {Hilb}}_f}
\newcommand{\Obun}[1]{O_{\mathop{\mathrm{bd}#1}}^{\mathop{\mathrm{bu}}}}
\newcommand{\OrUbdn}[1]{O^{rU}_{\mathop{\mathrm{bd}#1 }}}
\newcommand{\Vbk}[2]{\caV_{#1#2}}
\newcommand{\Vbd}[2]{\caV_{#1#2}^{\mopbd}}
\newcommand{\VUbd}[2]{\caV_{#1#2}^{\mopbd U} }
\newcommand{\Vbu}[4]{\caV_{#1,\lmk #2, #3\rmk}^{{#4}}}
\newcommand{\lo}{\Lambda_1}
\newcommand{\lt}{\Lambda_2}
\newcommand{\loz}{\Lambda_1^{(0)}}
\newcommand{\ltz}{\Lambda_2^{(0)}}
\newcommand{\Obj}{{\mathop{\mathrm{Obj}}}}
\newcommand{\Mor}{{\mathop{\mathrm{Mor}}}}
\newcommand{\Irr}{{\mathop{\mathrm{Irr}}}}
\newcommand{\lz}{\Lambda^{(0)}}
\newcommand{\lhz}{\Lambda_l^{(0)}}
\newcommand{\lzr}{\Lambda_{r0}}
\newcommand{\lm}[1]{{\Lambda_{#1}}}
\newcommand{\llz}{(\lz,\lzr)}
\newcommand{\lmr}[1]{{\Lambda_{r#1}}}
\newcommand{\hlm}[1]{{\hat\Lambda_{#1}}}
\newcommand{\tlm}[1]{{\tilde\Lambda_{#1}}}
\newcommand{\ld}{\Lambda}
\newcommand{\bl}{\caB_l}
\newcommand{\brr}{\caB_r}
\newcommand{\fbk}{\caF_{\mopbk}^U}
\newcommand{\fbd}{\caF}
\newcommand{\gu}{\caG^U}
\newcommand{\hb}[2]{\iota^{(\lz,\lzr)}\lmk#1: #2\rmk}
\newcommand{\hfc}{\hat F^{\llz}}
\newcommand{\ffc}{F^{\llz}}
\newcommand{\pdc}[1]{\pi\lmk\caA_{#1}\rmk''}
\newcommand{\Tbk}[3]{T_{#1}^{(\frac{3\pi}2,\frac\pi 2), {#2},{#3}}}
\newcommand{\Vrl}[4]{V_{#1,\lmk #2,#3\rmk}^{(#4)}}
\newcommand{\Tbkv}[2]{T_{#1}^{(\frac{3\pi}2,\frac\pi 2), #2, V_{#1,#2}}}
\newcommand{\Tbdv}[2]{T_{#1}^{\mathrm{(l)} #2\Vrl{{#1}}{#2}}}
\newcommand{\Srl}[4]{S_{#1}^{(l), V_{{#1},{(#2,#3)}}^{(#4)}}}
\newcommand{\go}{\Gamma_1}
\newcommand{\gt}{\Gamma_2}
\newcommand{\CCZ}{\mathop{\mathrm{CCZ}}\nolimits}
\newcommand{\CZ}{\mathop{\mathrm{CZ}}\nolimits}
\title{Anyonic symmetry fractionalization in SET phases
}
\author{Jos\'e Garre Rubio and Yoshiko Ogata}

\maketitle
\abstract{ We consider the anyonic spin systems with a
global symmetry, the so-called symmetry enriched topological (SET) phases.
We introduce the phase characterizing  the symmetry
fractionalization of the anyons. Our assumptions on how the global symmetry
acts prevents anyon permutation effects.}

\section{Introduction}
Two-dimensional topological orders \cite{Wentopo} offer one of the most
important applications of quantum computing: topological error correction
\cite{TC,Topomemory}. That potential is based on the existence of anyons,
quasiparticles that are neither bosons nor fermions \cite{Wilczek} whose
existence has been recently realized experimentally \cite{E2,E3}.

Besides, topological order can stand alone without a global symmetry, like
all the models related to the pioneering work of Kitaev toric code
\cite{TC} and its generalizations \cite{Stringnets}, in nature both aspects
are linked. This is the case of the remarkable fractional quantum Hall
effect (fQHE) \cite{TSG,Laughlin} that hosts anyons that fractionalize the
elementary electron's charge. A similar effect occurs in spin liquids where
there is a spin-charge separation \cite{SLH,SLA}. This gives rise to the
study of symmetry fractionalization patterns (like in charge conservation
symmetry or internal symmetries) in topologically ordered systems
which correspond to symmetry enriched topological (SET) phases.

SET phases have been intensively studied and classified through exactly
solvable Hamiltonians \cite{C1,C2,C3,C4,C5,C6,C7} and remarkably a complete
categorical classification was achieved in \cite{2dSET} which includes
anyon permutation effects.

Besides that, a rigorous classification on non-solvable models valid for
spin lattices on the thermodynamic limit is missing. In this paper we
tackle that problem by using the operator algebraic approach that allows us
to state a definition of equivalence relation for SET phases according to a
well-behaved path of gapped Hamiltonians (technically quasi-local
automorphisms).
Our work stands on the contributions made by one of the authors \cite{MTC}
where the right anyonic property which is invariant under quasi-local
automorphisms has been identified (the so-called approximate Haag duality).
We prove under reasonable assumptions the SET classification of symmetry
fractionalization patterns (no anyon permutation is considered).

While this paper was in preparation, there appeared a paper \cite{KVW} in arXiv whose results overlap 
with this article.

\subsection{Quantum spin systems}\label{sec:qss}
Throughout this paper, we fix some $2\le d\in\nan$.
We denote the algebra of $d\times d$ matrices by $\Mat_{d}$.
For each $z\in\bbZ^2$,  let $\caA_{\{z\}}$ be an isomorphic copy of $\Mat_{d}$, and for any finite subset $\Lambda\subset\bbZ^2$, we set $\caA_{\Lambda} = \bigotimes_{z\in\Lambda}\caA_{\{z\}}$.
For finite $\Lambda$, the algebra $\caA_{\Lambda} $ can be regarded as the set of all bounded operators acting on
the Hilbert space $\bigotimes_{z\in\Lambda}{\bbC}^{d}$.
We use this identification freely.
If $\Lambda_1\subset\Lambda_2$, the algebra $\caA_{\Lambda_1}$ is naturally embedded in $\caA_{\Lambda_2}$ by tensoring its elements with the identity. 
For an infinite subset $\Gamma\subset \bbZ^{2}$,
$\caA_{\Gamma}$
is given as the inductive limit of the algebras $\caA_{\Lambda}$ with $\Lambda$, finite subsets of $\Gamma$.
We call $\caA_{\Gamma}$ the quantum spin system on $\Gamma$.
For a subset $\Gamma_1$ of $\Gamma\subset\bbZ^{2}$,
the algebra $\caA_{\Gamma_1}$ can be regarded as a subalgebra of $\caA_{\Gamma}$. 
For $\Gamma\subset \bbR^2$, with a bit abuse of notation, we write $\caA_{\Gamma}$
to denote $\caA_{\Gamma\cap \bbZ^2}$.
Also, $\Gamma^c$ denotes the complement of $\Gamma$ in $\bbR^2$.
The algebra $\caA:=\caA_{\bbZ^2}$ is the two dimensional quantum spin system we consider.
We also set $\caA_{\rm loc}:=\bigcup_{\Lambda\Subset \bbZ^2}\caA_{\Lambda}$.
For a region $X\subset \bbR^2$, $\partial X$ denotes the boundary of 
$X$. For a region $X\subset \bbR^2$ and $l\in\bbN$,
$X^{(l)}$ denotes the set of points with distance less than or equal to $l$
from $X$.
Throughout the paper, we consider a fixed pure state $\omega$ on $\caA$ with a GNS triple $(\caH,\pi,\Omega)$.

Let $G$ be a finite group and  $U$ a unitary representation of $G$ on $\bbC^{d}$.
We assume that the group action
\begin{align}\label{ffl}
G\ni g \mapsto \Ad U_g\in \Aut\Mat_d
\end{align}
is faithful.
Let $\Gamma\subset \bbZ^{2}$ be a non-empty subset.
For each $g\in G$, there exists a unique automorphism $\beta^{\Gamma}$ on $\caA_{\Gamma}$
such that 
\begin{align}\label{tgg}
\beta_{g}^{\Gamma}\lmk A\rmk=\Ad\lmk\bigotimes_{I} U(g)\rmk\lmk A\rmk,\quad A\in\caA_{I},\quad g\in G,
\end{align}
for any finite subset $I$ of $\Gamma$.
We call the group homomorphism $\beta^{\Gamma}: G\to \Aut \caA_{\Gamma}$, 
the on-site action of $G$ 
on $\caA_{\Gamma}$ given by $U$.
For simplicity, we denote $\beta^{\bbZ^{2}}_{g}$ by $\beta_{g}$.

We assume that our state $\omega$ is invariant under the group action $\beta$.
: For all $g\in G$, $\omega\beta_g=\omega$.
From this assumption, there exists a unitary representation $R: G\to \caU(\caH)$
such that
\begin{align}
\Ad R_g\pi (A)=\pi\beta_g(A),\quad A\in \caA,\quad g\in G.
\end{align}

\subsection{Cones}
In this paper, we have to consider various kinds of cones.
In this subsection we collect notations related to cones.
For $\bm a\in \bbR^2$, $\theta\in\bbR$ and $\varphi\in (0,\pi)$,
set 
\begin{align*}
\Lambda_{\bm a, \theta,\varphi}
:=&\left\{
\bm x\in\bbR^{2}\mid (\bm x-\bm a)\cdot \bm e_{\theta}>\cos\varphi\cdot \lV \bm x-{\bm a}\rV
\right\}\\
=&\bm a+\left\{
t\bm e_{\beta}\mid t>0,\quad \beta\in (\theta-\varphi,\theta+\varphi)\right\}.
\end{align*}
Here, we set $\bm e_\theta:=(\cos\theta,\sin\theta)$ for $\theta\in \bbR$.
We call a subset of $\bbR^{2}$ with this form a {\it cone}, and denote by $\Cbk$ the set of all cones.
For a cone $\Lambda=\Lambda_{\bm a, \theta,\varphi}$ given above,
we set
\begin{align*}
\arg\lmk \Lambda\rmk:=
\left\{ e^{it}\in\bbT \mid t\in [\theta-\varphi,\theta+\varphi]\right\},\quad
|\arg\lmk \Lambda\rmk|:=2\varphi,\quad \text{and}\quad
\bm a_{\Lambda}:=\bm a,\quad
\bm e_{\Lambda}:=\bm e_{\theta}.
\end{align*}
For $\varepsilon>0$ and $\Lambda=\Lambda_{\bm a, \theta,\varphi}$ with $\varphi+\varepsilon<\pi$
we denote the ``fattened''  and ``thinned'' cone by
\begin{align*}
\Lambda_{\varepsilon}:=\Lambda_{\bm a, \theta,\varphi+\varepsilon},\quad
\Lambda_{-\varepsilon}:=\Lambda_{\bm a, \theta,\varphi-\varepsilon}
\end{align*}
Furthermore, for $\varepsilon>0$ and $\Lambda=\Lambda_{\bm a, \theta,\varphi}$ with $\varepsilon< \varphi, \pi-\varphi$, we set ``fattened'' edge of 
$\ld$:
\begin{align}
\begin{split}
\lmk \partial \ld\rmk_\varepsilon:=
\Lambda_{\bm a, \theta+\varphi,\epsilon }\cup
\Lambda_{\bm a, \theta-\varphi,\epsilon }.
\end{split}
\end{align}

For each $\theta\in \bbR$ and $\varphi\in (0, \pi)$, set
\begin{align}\label{ctvdef}
\caC_{(\theta,\varphi)}:=\left\{
\Lambda\in \Cbk\mid \arg\Lambda\cap\bbA_{[\theta-\varphi,\theta+\varphi]}=\emptyset
\right\}.
\end{align}
Here for $I\subset \bbR$, we set
\[
\bbA_I:=\left\{e^{it}\mid t\in I\right\}\subset \bbT.
\]
We consider the following sets of cones:
\begin{align}
\begin{split}
&\CUbk:=\caC_{(\frac{3\pi}2, \frac \pi 2)},\quad
\CDbk:=\caC_{(\frac\pi 2, \frac\pi 2)},\\
&\Cr:=\left\{
\Lambda_{(a,0),0,\varphi}\mid a\in\bbR,\; 0<\varphi<\pi 
\right\},\\
&\Cl:=\left\{
\lmk \Lambda_r\rmk^c\mid \Lambda_r\in\Cr
\right\}
=\left\{
\overline{\Lambda_{(a,0),\pi, \varphi}}
\mid
a\in\bbR,;0<\varphi<\pi
\right\}.
\end{split}
\end{align}
We also introduce the following
\begin{align}
PC:=\left\{ 
(\lo,\lt)\in \caC_l\times \caC_r\middle |\begin{gathered}\lo\cap\lt =\emptyset,\\
\lmk \arg\Lambda_1\rmk_\varepsilon \cap\arg\Lambda_2=\emptyset\quad
\text{for some}\quad \varepsilon>0
\end{gathered}
\right\}.
\end{align}
For $\{(\lm {l\alpha}, \lm {r\alpha})\}_{\alpha}, \{(\lm {l\beta}', \lm {r\beta}')\}_{\beta}\subset PC $,
we write
$\{(\lm {l\alpha}, \lm {r\alpha})\}_{\alpha}\leftarrow \{(\lm {l\beta}', \lm {r\beta}')\}_{\beta}$
if there exists $\varepsilon>0$ such that
\begin{align}\label{kaeru}
\begin{split}
\cup_{\beta}\arg\lmk \lmk \lm {l\beta}'\rmk^c \rmk_\varepsilon
\subset \cap_\alpha \arg \lm {r\alpha}.
\end{split}
\end{align}
Note that (\ref{kaeru}) holds if and only if 
\begin{align}\label{kaeru2}
\begin{split}
\cup_\alpha \arg\lmk  \lmk \lm {r\alpha}\rmk^c\rmk_\varepsilon\subset 
\cap_{\beta}\arg\lmk \lm {l\beta}' \rmk
\end{split}
\end{align}
\subsection{Assumptions on local von Neumann algebras for cones}
Consider the setting in subsection \ref{sec:qss}.
Corresponding to various sets of cones we consider the following $C^*$-algebras.
\begin{align}
\begin{split}
&\caF:=\overline{\cup_{(\lo,\lt)\in PC}\pi\lmk\caA_{\lmk \lo \cup \lt \rmk^c}\rmk''
}^n,\\
& \caB_l:=\overline{{
\cup_{\Lambda_l\in \Cl}\; \pi\lmk\caA_{\Lambda_l}\rmk''}}^n,\\
&\caB_r:=\overline{
\cup_{\Lambda_r\in \Cr}\; \pi\lmk\caA_{\Lambda_r}\rmk''}^n,\\
&\caB_{(\theta,\varphi)}:=
\overline{\cup_{\Lambda\in\caC_{(\theta,\varphi)} }\pi(\caA_{\Lambda})'' }^n.
\end{split}
\end{align}
where ${\overline{\cdot}}^n$ indicates the norm closure.
Note that they are all $C^*$-algebras because of the upward filtering property of
$PC$,
$\Cl$, $\Cr$, $\caC_{(\theta,\varphi)} $ with respect to
(anti) inclusions.
We denote by $\caU(\caF)$ (resp. $\caU(\caB_l)$, $\caU(\caB_r)$, $\caU(\caA)$)
 the set of all unitaries in $\caF$ (resp. $\caB_l$, $\caB_r$, $\caA$).

We denote by $\caU(\caH)$ (resp. $\caB(\caH)$) the set of all unitaries 
(resp. bounded opearators)on the Hilbert space $\caH$.
We assume the Approximate Haag duality.
\begin{assum}\label{a1}[Approximate Haag duality]
For any $\varphi\in (0,2\pi)$ and 
 $\varepsilon>0$ with
$\varphi+4\varepsilon<2\pi$,
there is some $R_{\varphi,\varepsilon}>0$ and decreasing
functions $f_{\varphi,\varepsilon,\delta}(t)$, $\delta>0$
on $\bbR_{\ge 0}$
with $\lim_{t\to\infty}f_{\varphi,\varepsilon,\delta}(t)=0$
such that
\begin{description}
\item[(i)]
for any cone $\Lambda$ with $|\arg\Lambda|=\varphi$, there is a unitary 
$U_{\Lambda,\varepsilon}\in \caU(\caH)$
satisfying
\begin{align}\label{lem7p}
\pi\lmk\caA_{\Lambda^c}\rmk'\subset 
\Ad\lmk U_{\Lambda,\varepsilon}\rmk\lmk 
\pi\lmk \caA_{\lmk \Lambda-R_{\varphi,\varepsilon}\bm e_\Lambda\rmk_\varepsilon}\rmk''
\rmk,
\end{align}
and 
\item[(ii)]
 for any $\delta>0$ and $t\ge 0$, there is a unitary 
 $\tilde U_{\Lambda,\varepsilon,\delta,t}\in \pi\lmk \caA_{\Lambda_{\varepsilon+\delta}-t\bm e_{\Lambda}}\rmk''$
 satisfying
\begin{align}\label{uappro}
\lV
U_{\Lambda,\varepsilon}-\tilde U_{\Lambda,\varepsilon,\delta,t}
\rV\le f_{\varphi,\varepsilon,\delta}(t).
\end{align}
\end{description}
\end{assum}
For right cones $\Cr$, we require a slightly stronger condition of approximate Haag duality.
\begin{assum}\label{a1r}
For any $\varphi\in (0,\frac\pi 2)$, $\varepsilon>0$ with $2\varphi+2\varepsilon<\pi$,
there is some $R^{(r)}_{\varphi,\varepsilon}>0$ and decreasing
functions $f^{(r)}_{\varphi,\varepsilon,\delta}(t)$, $\delta>0$
on $\bbR_{\ge 0}$
with $\lim_{t\to\infty}f^{(r)}_{\varphi,\varepsilon,\delta}(t)=0$
such that
\begin{description}
\item[(i)]
for any cone $\Lambda_r=\ld_{(a,0) 0, \varphi}\in \Cr$, there is a unitary 
$U^{(r)}_{\Lambda_r,\varepsilon}\in \caU(\caH)$
satisfying
\begin{align}\label{lem7p}
\pi\lmk\caA_{\ld_r^c}\rmk'\subset 
\Ad\lmk U^{(r)}_{\Lambda_r,\varepsilon}\rmk
\lmk 
\pi\lmk \caA_{ \ld_{{(a-R^{(r)}, 0), 0, \varphi+\varepsilon}}}\rmk''
\rmk,
\end{align}
and 
\item[(ii)]
for any $\delta>0$ and $t\ge 0$, there is a unitary 
$\tilde U^{(r)}_{\Lambda_r,\varepsilon,\delta,t}\in \pi\lmk 
 \caA_{
 \ld_{{(a-t,0)}, 0, {\varphi+\varepsilon+\delta}}
 }\rmk''\cap \caF$
satisfying
\begin{align}\label{uappro}
\lV
U^{(r)}_{\Lambda_r,\varepsilon}-\tilde U^{(r)}_{\Lambda_r,\varepsilon,\delta,t}
\rV\le f_{\varphi,\varepsilon,\delta}(t).
\end{align}
\end{description}
Note in particular, we have $U^{(r)}_{\Lambda_r,\varepsilon}\in \caF$.
\end{assum}
The condition is stronger because we require $U^{(r)}_{\Lambda_r,\varepsilon}$ to be in $\caF$.
Here is the left version.
\begin{assum}\label{a1l}
For any $\varphi\in (0,\frac\pi 2)$, $\varepsilon>0$ with $2\varphi+2\varepsilon<\pi$,
there is some $R^{(l)}_{\varphi,\varepsilon}>0$ and decreasing
functions $f^{(l)}_{\varphi,\varepsilon,\delta}(t)$, $\delta>0$
on $\bbR_{\ge 0}$
with $\lim_{t\to\infty}f^{(l)}_{\varphi,\varepsilon,\delta}(t)=0$
such that
\begin{description}
\item[(i)]
for any cone $\Lambda_l=\ld_{ (a,0), \pi,  \varphi}\in \Cl$, there is a unitary 
$U^{(l)}_{\Lambda_l,\varepsilon}\in \caU(\caH)$
satisfying
\begin{align}\label{lem7p}
\pi\lmk\caA_{(\lm l)^c}\rmk'\subset 
\Ad\lmk U^{(l)}_{\Lambda_l,\varepsilon}\rmk\lmk
\pi\lmk \caA_{ \ld_{\lmk a+R^{(l)}_{\varphi,\varepsilon}, 0\rmk}, \pi, {(\varphi+\varepsilon)}}\rmk''
\rmk,
\end{align}
and 
\item[(ii)]
 for any $\delta>0$ and $t\ge 0$, there is a unitary 
$\tilde U^{(l)}_{\Lambda_l,\varepsilon,\delta,t}\in \pi\lmk 
\caA_{
  \ld_{{(a+t,0)}, \pi {(\varphi+\varepsilon+\delta)}}
 }\rmk''\cap \caF$
 satisfying
\begin{align}\label{uappro}
\lV
U^{(l)}_{\Lambda_l,\varepsilon}-\tilde U^{(l)}_{\Lambda_l,\varepsilon,\delta,t}
\rV
\le f^{(l)}_{\varphi,\varepsilon,\delta}(t).
\end{align}
\end{description}
Note in particular, we have $U^{(l)}_{\Lambda_l,\varepsilon}\in \caF$.
\end{assum}

Furthermore, we assume the following.
\begin{assum}\label{a2}
For any cone $\Lambda$, $\pi\lmk\caA_{\Lambda}\rmk''$ is properly infinite.
\end{assum}
This condition is satisfied automatically if $\omega$ is a gapped ground state.
\subsection{Bulk braided $C^*$-tensor category}
Consider the setting in subsection \ref{sec:qss}.
For a representation $\rho$ of $\caA$ on $\caH$, and a cone $\Lambda$,
we set
\begin{align}
\begin{split}
&\Vbk{\rho}{\Lambda}:=
\left\{
V_{\rho\Lambda}\in\caU\lmk \caH\rmk
\mid \left.\Ad\lmk V_{\rho\Lambda}\rmk\circ\rho\right\vert_{\caA_{\Lambda^c}}
=\left.\pi \right\vert_{\caA_{\Lambda^c}}
\right\}.
\end{split}
\end{align}
We say a representation $\rho$ of $\caA$ on $\caH$
satisfies the superselection criterion for $\pi$ if 
$\Vbk{\rho}{\Lambda}$ is not empty
for any cone $\Lambda$ in $\bbR^2$.
We denote by $O$  the set of all representations of $\caA$ on $\caH$
satisfying the superselection criterion for $\pi$.
For a cone $\Lambda$, we denote by
$O_\Lambda$ the set of all representations of $\caA$ on $\caH$
satisfying the superselection criterion for $\pi$
such that
\begin{align}
 \left.\rho\right\vert_{\caA_{\Lambda^c}}
=\left.\pi \right\vert_{\caA_{\Lambda^c}}.
\end{align}

For the rest of the paper, we fix $\varphi_0\in (0,\frac\pi 2)$
a pair of cones $(\loz,\ltz)\in PC$, and cones $\lz\in\Cbk^U\cap\caC_{(0,\varphi_0)}$,
$\lhz\in\Cl\cap \caC_{(0,\varphi_0)}$  such that
$\lz\subset \lmk \loz\cup\ltz\rmk^c\subset\lhz $. 
Note that $O_{\lz}\subset O_{\lhz}$.
Recall from Lemma 2.13 of \cite{MTC} that for each
$\rho\in O_{\lhz}$,
 there exists a unique $*$-homomorphism $T_\rho^{\unit}
: \caB_{(0,\varphi_0)}\to \caB(\caH)$ such that 
\begin{description}
\item[(i)]
$T_\rho^{\unit}$ is $\sigma$w-continuous on $\pi\lmk \caA_\Lambda\rmk''$
for all $\Lambda\in\caC_{(0,\varphi_0)}$,
\item[(ii)]
$T_\rho^{\unit}\circ\pi(A)=\rho(A)$,
for all $A\in \caA$.
\end{description}
This $T_\rho^{\unit}$ is an endomorphism on $\caB_{(0,\varphi_0)}$.

The following was shown in Theorem 5.2 \cite{MTC}.
(Note for the proof there, only Assumption \ref{a1} and Assumption \ref{a2} are used.)
\begin{thm}\label{MTC}
Consider the setting in subsection \ref{sec:qss}.
Assume Assumption \ref{a1} and Assumption \ref{a2}.
There exists a braided $C^*$-tensor category $\hat C$ with the following structures.
\begin{enumerate}
\item The set of all objects $\Obj \hat C$  is  $O_{\lhz}$.
\item For $\rho,\sigma\in \Obj \hat C$,
morphisms between them are
given as intertwiners:
\begin{align}
\Mor(\rho,\sigma):=
\left\{ X\in \caB(\caH)\mid X\rho(A)=\sigma(A) X,\quad \text{for all}\quad A\in {\caA}\right\}.
\end{align}
\item The tensor product of $\rho,\sigma\in \Obj \hat C$ is given by
\begin{align}
\rho\otimes\sigma:=T_\rho^{\unit}T_\sigma^{\unit}\pi.
\end{align}
For morphisms $X\in \Mor(\rho,\rho')$, $Y\in \Mor(\sigma,\sigma')$,
with $\rho,\rho',\sigma,\sigma'\in \Obj \hat C$,
the tensor product is given by
\begin{align}
X\otimes Y:= XT_\rho^\unit (Y).
\end{align}
\end{enumerate}
The full subcategory of $\hat C$ consisting of
objects $O_{\lz}$ forms a sub braided $C^*$-tensor category $C$
 of $\hat C$.
\end{thm}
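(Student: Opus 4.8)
\emph{Proof strategy.} The plan is to realise $\hat C$ as a category of cone-localized superselection sectors, in the spirit of the Doplicher--Haag--Roberts and Buchholz--Fredenhagen superselection theory in its cone-localized lattice form, the key technical point being that one has at one's disposal only \emph{approximate} Haag duality (Assumption \ref{a1}) rather than exact Haag duality. The entire construction is carried by the extension map $\rho\mapsto T_\rho^\unit$ furnished by Lemma 2.13 of \cite{MTC}: it turns each object $\rho\in O_{\lhz}$ into an endomorphism of the large $C^*$-algebra $\caB_{(0,\varphi_0)}$ which implements $\rho$ through $\pi$ and is ``localized'' in the sense that it acts as the identity on $\pi(\caA_\Lambda)''$ for those $\Lambda\in\caC_{(0,\varphi_0)}$ that are separated from $\lhz$ (up to the $\varepsilon$-fattening that Assumption \ref{a1} forces on us). All tensor and braiding data will be expressed through these endomorphisms and through the charge transporters in $\Vbk{\rho}{\Lambda}$.

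First I would verify the $C^*$-category axioms for $(O_{\lhz},\Mor)$: intertwiner spaces are norm-closed subspaces of $\caB(\caH)$, stable under the $\caB(\caH)$-adjoint (because the objects are $*$-representations) and under composition, and the $C^*$-identity is inherited from $\caB(\caH)$; using Assumption \ref{a2}, each $\pi(\caA_\Lambda)''$ has enough isometries that the category is also closed under subobjects and finite direct sums. Next, the tensor bifunctor. That $\rho\otimes\sigma=T_\rho^\unit T_\sigma^\unit\pi$ again lies in $O_{\lhz}$ is immediate on $\caA_{\lhz^c}$, where $T_\sigma^\unit\pi=\sigma$ and $T_\rho^\unit\pi=\rho$ both act as $\pi$; and the superselection criterion for $\rho\otimes\sigma$ follows because, after transporting $\sigma$ into a suitable cone so that a charge transporter $V_{\sigma\Lambda}\in\Vbk{\sigma}{\Lambda}$ can be taken in (or norm-approximated within) the domain $\caB_{(0,\varphi_0)}$ of $T_\rho^\unit$, the operator $V_{\rho\Lambda}\,T_\rho^\unit(V_{\sigma\Lambda})$ is a charge transporter for $\rho\otimes\sigma$ into $\Lambda$ --- this is exactly where Assumptions \ref{a1}, \ref{a1r} and \ref{a1l} enter, through the norm estimates $f_{\varphi,\varepsilon,\delta}(t)$. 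On morphisms one checks directly from the intertwining relations, and from the fact that $T^\unit$ is a homomorphism natural in its index, that $X\otimes Y=X\,T_\rho^\unit(Y)$ intertwines $\rho\otimes\sigma$ and $\rho'\otimes\sigma'$; the interchange law and associativity (strict, since composition of endomorphisms is strictly associative) are then formal, with tensor unit $\pi$, for which $T_\pi^\unit=\id$.

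For the braiding, given $\rho,\sigma\in O_{\lhz}$ I would choose unitary charge transporters $u\in\Mor(\rho,\tilde\rho)$ and $v\in\Mor(\sigma,\tilde\sigma)$ carrying $\rho$ and $\sigma$ to objects $\tilde\rho,\tilde\sigma$ localized in two cones that are spatially well separated, one to the ``left'' of the other. The endomorphisms $T_{\tilde\rho}^\unit$ and $T_{\tilde\sigma}^\unit$ are then localized in disjoint cones, hence commute on $\caB_{(0,\varphi_0)}$, so $\tilde\rho\otimes\tilde\sigma=\tilde\sigma\otimes\tilde\rho$; transporting back along $u$ and $v$ yields a unitary $\epsilon_{\rho,\sigma}:=(v\otimes u)^*\circ(u\otimes v)\in\Mor(\rho\otimes\sigma,\sigma\otimes\rho)$. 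That $\epsilon_{\rho,\sigma}$ is independent of the auxiliary choices made on a given ``side'', and that the two possible ``sides'' produce precisely a braiding and its inverse, is the standard DHR argument exploiting the freedom to slide a localization within a connected family of cones; naturality of $\epsilon$ and the hexagon identities then follow formally from the localization and tensoriality facts already established. Finally, since $\lz\subset\lhz$ we have $O_{\lz}\subset O_{\lhz}$; the full subcategory on $O_{\lz}$ is closed under $\otimes$ by the same computation as above with $\lz$ in place of $\lhz$ (now $\rho,\sigma\in O_{\lz}$ agree with $\pi$ on $\caA_{\lz^c}$), and the braiding restricts since the morphism spaces are literally the same.

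The real obstacle I anticipate is the braiding in the presence of only approximate Haag duality. With exact Haag duality ``disjointly localized endomorphisms commute'' is immediate; here one must fatten cones by $\varepsilon$, track the error functions $f_{\varphi,\varepsilon,\delta}(t)$ through the construction of $T^\unit$, and then show that as the two localization cones are pushed apart ($t\to\infty$) the relevant commutators go to $0$ \emph{in norm}, so that $\epsilon_{\rho,\sigma}$ is an honest intertwiner and not merely an approximate one, and likewise that it does not depend on how far the cones were pushed. The same care is what makes the transporter $V_{\rho\Lambda}\,T_\rho^\unit(V_{\sigma\Lambda})$ meaningful at all, since a priori $V_{\sigma\Lambda}$ is only a unitary of $\caH$ and must first be brought into the domain $\caB_{(0,\varphi_0)}$ of $T_\rho^\unit$.
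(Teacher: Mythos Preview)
The paper does not prove this theorem at all: it is quoted verbatim from Theorem~5.2 of \cite{MTC}, and the only comment the present paper adds is the parenthetical remark that the proof there uses only Assumption~\ref{a1} and Assumption~\ref{a2}. So there is no ``paper's own proof'' to compare against; your sketch is in effect a summary of the strategy of \cite{MTC}, and as such it is broadly accurate --- the extension $T_\rho^\unit$, the tensor product as composition of endomorphisms, the braiding via charge transporters pushed to spatially separated cones, and the need to take limits because approximate Haag duality gives only asymptotic commutation, are all the right ingredients.

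Two remarks are worth making. First, you invoke Assumptions~\ref{a1r} and~\ref{a1l} in the tensor step, but the theorem is stated under only Assumption~\ref{a1} and Assumption~\ref{a2}; the left/right strengthened versions are introduced in this paper solely for the $G$-localized sectors of Section~\ref{gsu} and play no role in the bulk category $\hat C$. The point is that to apply $T_\rho^\unit$ to a transporter $V_{\sigma\Lambda}$ one only needs $V_{\sigma\Lambda}\in\caB_{(0,\varphi_0)}$, which Assumption~\ref{a1} alone provides. Second, your description of the braiding as $(v\otimes u)^*(u\otimes v)$ with $\tilde\rho\otimes\tilde\sigma=\tilde\sigma\otimes\tilde\rho$ is slightly too optimistic: with only approximate Haag duality the separated endomorphisms do \emph{not} commute exactly, so the braiding in \cite{MTC} is genuinely defined as a norm limit of such expressions as the separating distance goes to infinity (exactly as you correctly anticipate in your final paragraph, and as is done in the $G$-crossed case in Lemma~\ref{lem38} of the present paper). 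This is not a gap in your proposal so much as a point where the informal middle paragraph and the careful final paragraph need to be reconciled.
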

\begin{nota}
We denote by $\epsilon(\rho,\sigma)$ the braiding of
the category $\hat C$ for $\rho,\sigma\in\Obj\hat C$.
\end{nota}
\subsection{Group action on the ground state $\omega$}
Consider the setting in subsection \ref{sec:qss}.
We will see that fractionalization occurs if
half-space group action results in a local excitation at the boundary.
The latter condition is mathematically described as follows.
\begin{assum}\label{a5} [Local boundary excitation] We assume the following.
\begin{description}
\item[(i)]
For any cone $\ld$, $0<\varepsilon<\frac 12\min \{|\arg\ld|, 2\pi-|\arg\ld|\}$,
and $g\in G$, there exits an automorphism $\gamma_{g,\ld}\in \Aut\caA$,
$\gamma_{g,\ld,\varepsilon}\in \Aut\caA_{(\partial \ld)_\varepsilon}$,
and $u_{g,\ld,\varepsilon}\in \caU(\caA)$ such that
\begin{align}
&\omega\beta_{g}^\ld=\omega\gamma_{g,\ld},\quad g\in G,\quad\label{takao}\\
&\gamma_{g,\ld}=\Ad\lmk u_{g,\ld,\varepsilon}\rmk\gamma_{g,\ld,\varepsilon}.
\end{align}
Let $v_{g,\ld}\in \caU(\caH)$ be the unitary given by (\ref{takao})
such that
\begin{align}
v_{g,\ld}\pi(A)\Omega=\pi\lmk\beta_{g}^\ld \gamma_{g,\ld}^{-1}(A)\rmk\Omega,\quad
A\in \caA.
\end{align}
\item[(ii)]
For any $\ld, \ld'\in \Cl$,
$v_{g,\ld}\caF v_{g,\ld'}^*, v_{g,\ld}^*\caF v_{g,\ld'}\subset \caF$
hold.
\end{description}
\end{assum}
This property holds when $\omega$ is a short range entangled state \cite{O3}.
It also holds when $\omega$ is a gapped ground state and $G=U(1)$.

\subsection{Group action on anyons}
The symmetry group acts on the braided $C^*$-tensor category 
$C$ as follows.
\begin{lem}\label{mizuumi}Consider the setting in subsection \ref{sec:qss}.
Assume Assumption \ref{a1} and Assumption \ref{a2}.
For each $g\in G$, 
\begin{align}
\begin{split}
&\Theta(g)(\rho):=\Ad R_g \rho\beta_{g^{-1}},\quad \rho\in \Obj C,\\
&\Theta(g)(S):=\Ad R_g(S),\quad S\in \Mor(\rho,\sigma),\quad \rho,\sigma\in \Obj C
\end{split}
\end{align}
defines 
an auto-equivalence $\Theta(g)\in \Aut C$
of the $C^*$-tensor category $C$.
Furthermore, $\Theta : G\ni g\mapsto \Theta(g)\in \Aut C$ is
a group homomorphism.
\end{lem}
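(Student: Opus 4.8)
The plan is to verify, in order, that $\Theta(g)$ is a well-defined endofunctor, that it is a monoidal functor, that it is an equivalence, and finally that $\Theta$ is a group homomorphism; the bulk of the work is bookkeeping with the defining properties of $O_{\lz}$ and the structures of Theorem \ref{MTC}.

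First I would check that $\Theta(g)(\rho)\in\Obj C=O_{\lz}$ for $\rho\in O_{\lz}$. Since $\Ad R_g\pi(A)=\pi\beta_g(A)$, for $A\in\caA_{\lz^c}$ we have $\Ad R_g\rho\beta_{g^{-1}}(A)=\Ad R_g\pi\beta_{g^{-1}}(A)=\pi(A)$ because $\beta_{g^{-1}}$ preserves $\caA_{\lz^c}$ (the on-site action is local) and $\rho$ agrees with $\pi$ there; so $\Theta(g)(\rho)$ restricts to $\pi$ on $\caA_{\lz^c}$. For the superselection criterion, given a cone $\Lambda$ pick $V\in\Vbk{\rho}{\beta_{g^{-1}}^{-1}(\Lambda)}$—but since $\beta$ acts on-site it fixes each cone algebra, so in fact one checks directly that $R_g V_{\rho\Lambda}R_g^*$ lies in $\Vbk{\Theta(g)(\rho)}{\Lambda}$ using $\Ad R_g\pi|_{\caA_{\Lambda^c}}=\pi\beta_g|_{\caA_{\Lambda^c}}$ and $\beta_g(\caA_{\Lambda^c})=\caA_{\Lambda^c}$. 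For a morphism $S\in\Mor(\rho,\sigma)$, one has $\Ad R_g(S)\cdot\Ad R_g\rho\beta_{g^{-1}}(A)=\Ad R_g(S\rho(\beta_{g^{-1}}(A)))=\Ad R_g(\sigma(\beta_{g^{-1}}(A))S)$, giving $\Theta(g)(S)\in\Mor(\Theta(g)\rho,\Theta(g)\sigma)$; functoriality (respecting composition and identities) is immediate since $\Ad R_g$ is an algebra homomorphism on $\caB(\caH)$.

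Next comes the monoidal structure, which is the main technical point. I need a natural isomorphism $\Theta(g)(\rho)\otimes\Theta(g)(\sigma)\cong\Theta(g)(\rho\otimes\sigma)$, i.e. relating $T^{\unit}_{\Ad R_g\rho\beta_{g^{-1}}}T^{\unit}_{\Ad R_g\sigma\beta_{g^{-1}}}\pi$ with $\Ad R_g\, (T^{\unit}_\rho T^{\unit}_\sigma\pi)\,\beta_{g^{-1}}$. The key lemma is the covariance of the extension $T^{\unit}_{(-)}$: because $\Ad R_g$ maps $\pi(\caA_\Lambda)''$ to $\pi(\caA_{\beta_g(\Lambda)})''=\pi(\caA_\Lambda)''$ (on-site action fixes $\Lambda$) and is $\sigma$w-continuous, one shows $T^{\unit}_{\Ad R_g\rho\beta_{g^{-1}}}=\Ad R_g\circ T^{\unit}_\rho\circ\Ad R_{g^{-1}}$ on $\caB_{(0,\varphi_0)}$ by checking the two defining properties (i)--(ii) in Theorem \ref{MTC}'s preamble: $\sigma$w-continuity on each $\pi(\caA_\Lambda)''$ and the normalization $T\circ\pi(A)=\Ad R_g\rho\beta_{g^{-1}}(A)$, then invoking uniqueness. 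Substituting this twice collapses the composite and shows $\Theta(g)(\rho)\otimes\Theta(g)(\sigma)=\Theta(g)(\rho\otimes\sigma)$ \emph{on the nose} (with identity coherence morphisms), and one checks tensoriality on morphisms $X\otimes Y:=XT^{\unit}_\rho(Y)$ likewise using the covariance of $T^{\unit}$. That $\Theta(g)$ is an equivalence then follows because $\Theta(g^{-1})$ is a strict inverse: $\Theta(g^{-1})\Theta(g)(\rho)=\Ad R_{g^{-1}}\Ad R_g\,\rho\,\beta_g\beta_{g^{-1}}=\rho$ using $R_{g^{-1}}R_g\in\bbC\unit$ (projective phase acts trivially by $\Ad$) and $\beta_g\beta_{g^{-1}}=\id$; the same computation, tracking the scalar phase $R_gR_h R_{gh}^{-1}$ which drops out under $\Ad$, gives $\Theta(g)\Theta(h)=\Theta(gh)$, so $\Theta$ is a group homomorphism into $\Aut C$.

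\textbf{Main obstacle.} The crux is the covariance identity $T^{\unit}_{\Ad R_g\rho\beta_{g^{-1}}}=\Ad R_g\circ T^{\unit}_\rho\circ\Ad R_{g^{-1}}$; once this is in hand the monoidal and homomorphism properties are formal. It rests on the on-site nature of $\beta$ (so cones, and hence the algebras $\caB_{(\theta,\varphi)}$ and $\pi(\caA_\Lambda)''$, are preserved by $\Ad R_g$) together with the uniqueness clause for $T^{\unit}$; I would also need to confirm $\Ad R_g$ sends $\caB_{(0,\varphi_0)}$ into itself, which again follows from $\beta_g(\Lambda)=\Lambda$ for every cone $\Lambda$. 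A secondary point worth stating explicitly is that $\Theta(g)$ preserves the braiding $\epsilon(\rho,\sigma)$—i.e. it is a braided auto-equivalence—though the lemma as stated only claims an auto-equivalence of the $C^*$-tensor category, so this may be deferred.
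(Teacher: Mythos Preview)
Your proposal is correct and follows essentially the same route as the paper: you identify the key covariance identity $T^{\unit}_{\Theta(g)(\rho)}=\Ad R_g\circ T^{\unit}_\rho\circ\Ad R_g^*$ (proved via the uniqueness clause for $T^{\unit}$ and the fact that $\Ad R_g$ preserves each $\pi(\caA_\Lambda)''$), from which the strict monoidality $\Theta(g)(\rho)\otimes\Theta(g)(\sigma)=\Theta(g)(\rho\otimes\sigma)$ and the group homomorphism property follow formally, exactly as in the paper. One small simplification: in the paper's setting $R$ is a genuine unitary representation, so there is no projective phase to track.
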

\begin{proof}
Let us show that $\Theta(g) : C\to C$ is a functor.
First we check $\Theta(g)(\rho)\in \Obj C$
for any $\rho\in \Obj C$ and $g\in G$.
For any cone $\ld\in \Cbk$ and $V_{{\rho}{\Lambda}}\in \Vbk{\rho}{\Lambda}$,
we have $R_g V_{{\rho}{\Lambda}} R_g^*\in \Vbk{\Theta(g)(\rho)}{\Lambda}$
because
\begin{align}
\begin{split}
&\Ad\lmk R_g V_{{\rho}{\Lambda}} R_g^*\rmk\circ\Theta(g)(\rho)(A)
=\Ad\lmk R_g V_{{\rho}{\Lambda}} \rmk\rho\beta_g^{-1}(A)\\
&=\Ad\lmk R_g \rmk\pi\beta_g^{-1}(A)
=\pi (A)
\end{split}
\end{align}
for any $A\in \caA_{\ld^c}$. Here we used $\beta_g^{-1}(A)\in \caA_{\ld^c}$.
Hence $\Theta(g)(\rho)\in \Obj C$ and $R_g\lmk \Vbk{\rho}{\Lambda}\rmk R_g^*\subset \Vbk{\Theta(g)(\rho)}{\Lambda}$.
By the same argument with $\rho, g$ replaced by
$\Theta(g)(\rho), g^{-1}$, we have 
\begin{align}
R_g\lmk \Vbk{\rho}{\Lambda}\rmk R_g^*= \Vbk{\Theta(g)(\rho)}{\Lambda}
\end{align}
for all
$\ld\in \Cbk$.

For any $\rho,\sigma\in \Obj C$ and $S\in \Mor(\rho,\sigma)$, we have
\begin{align}
\begin{split}
\Theta(g)(S)\cdot \Theta(g)(\rho)(A)
=\Ad R_g\lmk S \rho\beta_{g^{-1}}(A)\rmk
=\Ad R_g\lmk \sigma\beta_{g^{-1}}(A) S\rmk
=\Theta(g)(\sigma)(A)\cdot \Theta(g)(S),\quad A\in \caA.
\end{split}
\end{align}
Hence we have $\Theta(g)(S)\in \Mor(\Theta(g)(\rho), \Theta(g)(\sigma))$.
For any $\rho\in\Obj C$, we have
$\Theta(g)(id_\rho)=\id_{\caH}=\id_{\Theta(g)(\rho)}$.
For any $\rho,\sigma,\tau\in \Obj C$ and $X\in \Mor(\rho,\sigma)$, $Y\in \Mor(\sigma, \tau)$
we have
$\Theta(g)(Y)\Theta(g)(X)=\Theta(g)(YX)$.
Hence $\Theta(g)$ is a functor.

Next, we see that $\Theta(g)$ is a tensor functor.
Note for the tensor unit $\pi$ of $\Obj C$ that
\begin{align}
\Theta(g)(\pi)=\Ad R_g\pi\beta_{g^{-1}}=\pi.
\end{align}
We also note that
\begin{align}\label{tori}
\begin{split}
T_{\Theta(g)(\rho)}^\unit
=\Ad R_g T_\rho^\unit\Ad R_g^*,\quad \rho\in \Obj C.
\end{split}
\end{align}
In fact, because $\Ad R_g$ preserves $\caB_{(0,\varphi_0)}$,
\begin{align}
\begin{split}
\Ad R_g T_\rho^\unit\Ad R_g^*
\end{split}
\end{align}
is a well-defined endomorphism of $\caB_{(0,\varphi_0)}$ which is 
$\sigma$weak continuous on each $\pi(\caA_{\ld})''$ with $\ld\in \caC_{(0,\varphi)}$
such that
\begin{align}
\begin{split}
\Ad R_g T_\rho^\unit\Ad R_g^*\pi
=\Ad R_g\rho\beta_g^{-1}=\Theta(g)(\rho).
\end{split}
\end{align}
Hence by the uniqueness, we get (\ref{tori}).
From (\ref{tori}), we obtain
\begin{align}
\begin{split}
&\Theta(g)(\rho)\otimes\Theta(g)(\sigma)
=T_{\Theta(g)(\rho)}^\unit T_{\Theta(g)(\sigma)}^\unit\pi
=\Ad R_g T_\rho^\unit\Ad R_g^*\Ad R_g T_\sigma^\unit\Ad R_g^*\pi\\
&=\Ad R_g T_\rho^\unit T_\sigma^\unit\Ad R_g^*\pi
=\Ad R_g\lmk \rho\otimes \sigma \rmk\beta_{g^{-1}}
=\Theta(g)\lmk \rho\otimes \sigma\rmk,
\end{split}
\end{align}
for all $\rho,\sigma\in\Obj C$.
Then we have
\begin{align}
\begin{split}
&\varphi_0:=\id_{\caH}\in \Mor\lmk \pi,\Theta(g)(\pi)\rmk,\\
&\varphi_2(\rho,\sigma):=\id_{\caH}
\in \Mor\lmk \Theta(g)(\rho)\otimes\Theta(g)(\sigma), \Theta(g)\lmk \rho\otimes \sigma\rmk\rmk,\quad
\rho,\sigma\in \Obj C.
\end{split}
\end{align}
We claim that $(\Theta(g), \varphi_0, \varphi_2)$ gives a tensor functor from $C$ to $C$.
Note that $\varphi_2$ is natural because
\begin{align}
\begin{split}
&\Theta(g)(X)\otimes \Theta(g)(Y)
=\Theta(g)(X)T_{\Theta(g)(\rho)}^{\unit}\lmk \Theta(g)(Y)\rmk\\
&=\Ad R_g\lmk X\rmk
\Ad R_g T_\rho^\unit\Ad R_g^*
\Ad R_g\lmk Y\rmk\\
&=\Ad R_g\lmk X T_\rho^\unit (Y)\rmk
=\Theta(g) \lmk X\otimes Y\rmk
\end{split}
\end{align}
for $\rho,\rho',\sigma,\sigma'\in \Obj C$, $X\in \Mor(\rho,\rho')$,
$Y\in \Mor(\sigma,\sigma')$.
That $\varphi_0$, $\varphi_2$ are consistent with associativity morphisms and left/right constraint is
trivial because all the involved morphisms are $\id_{\caH}$.
Hence $\Theta(g)$ is a tensor functor from $C$ to $C$.

From the definition, it is clear that
the composition of tensor functors $(\Theta(g), \id_{\caH}, \id_{\caH})$,
$(\Theta(h), \id_{\caH}, \id_{\caH})$
is $(\Theta(gh), \id_{\caH}, \id_{\caH})$.
It in particular tells us that $(\Theta(g), \id_{\caH}, \id_{\caH})$ is an auto-equivalence of
$C$, and 
$\Theta : G\ni g\mapsto \Theta(g)\in \Aut C$ is
a group homomorphism.

\end{proof}

\subsection{Main Theorem}

Let $\Irr O_{\lz}$ be the set of all irreducible elements in $O_{\lz}$,
and $[\Irr O_{\lz}]$ the set of all isomorphic classes. 
For each $\rho\in \Irr O_{\lz}$ we denote by $[\rho]$ the isomorphism class
to which $\rho$ belongs.
For each $a\in [\Irr O_{\lz}]$, we fix a representative $\rho_a\in a$.
We set
\begin{align}
a^{(g)}:=[\Theta(g)(\rho_a)]
\end{align}
for each $a\in [\Irr O_{\lz}]$ and $g\in G$.
Note that this $a^{(g)}$ does not depend on the choice of $\rho_a$.

By the definition, for each $a\in [\Irr O_{\lz}]$ and $g\in G$, 
there exists a unitary $W_a^{(g)}$ on $\caH$
 such that
 \begin{align}\label{wa}
 \Theta(g)(\rho_a)=\Ad(W_a^{(g)})\rho_{a^{(g)}}.
 \end{align}
 From the fact that $\Theta$ is a group homomorphism, we have the following Lemma.
 \begin{lem}Consider the setting in subsection \ref{sec:qss}.
 For any $g,h\in G$ and  $a\in [\Irr O_{\lz}]$, we have
\begin{align}
\omega^{(a)}(g,h):
=\lmk W_{a^{(h)}}^{(g)}\rmk^* R_g \lmk W_a^{(h)}\rmk^* R_g^*  W_a^{(gh)} 
\in U(1).
\end{align}
\end{lem}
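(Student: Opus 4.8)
The plan is to show that the quantity $\omega^{(a)}(g,h)$ is a scalar (i.e.\ proportional to $\id_\caH$) by exhibiting it as an intertwiner from the irreducible object $\rho_{a^{(gh)}}$ to itself, and then to check it is a unitary scalar. Since $\rho_{a^{(gh)}}\in\Irr O_{\lz}$, by Schur's lemma $\Mor(\rho_{a^{(gh)}},\rho_{a^{(gh)}})=\bbC\,\id_\caH$, so any element of that space which is also unitary lies in $U(1)$.

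First I would assemble the relevant intertwiner relations. From (\ref{wa}) we have $\Theta(g)(\rho_a)=\Ad(W_a^{(g)})\rho_{a^{(g)}}$ for all $g$, so $W_a^{(g)}\in\Mor(\rho_{a^{(g)}},\Theta(g)(\rho_a))$. Applying $\Theta(g)$ to the relation $\Theta(h)(\rho_a)=\Ad(W_a^{(h)})\rho_{a^{(h)}}$ and using that $\Theta$ is a group homomorphism (Lemma~\ref{mizuumi}), i.e.\ $\Theta(g)\Theta(h)=\Theta(gh)$, together with $\Theta(g)(X)=\Ad R_g(X)$ on morphisms and $\Theta(g)(\Ad(W)\rho)=\Ad(R_g W R_g^*)\Theta(g)(\rho)$, one obtains
\begin{align}
\Theta(gh)(\rho_a)
=\Ad\lmk R_g W_a^{(h)} R_g^*\rmk\Theta(g)(\rho_{a^{(h)}})
=\Ad\lmk R_g W_a^{(h)} R_g^*\rmk\Ad\lmk W_{a^{(h)}}^{(g)}\rmk \rho_{(a^{(h)})^{(g)}}.
\end{align}
Since also $\Theta(gh)(\rho_a)=\Ad(W_a^{(gh)})\rho_{a^{(gh)}}$, and since $a^{(gh)}=(a^{(h)})^{(g)}$ (which follows from $\Theta$ being a homomorphism applied at the level of isomorphism classes), both $W_a^{(gh)}$ and $R_g W_a^{(h)} R_g^* W_{a^{(h)}}^{(g)}$ are elements of $\Mor\lmk\rho_{a^{(gh)}},\Theta(gh)(\rho_a)\rmk$. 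Therefore
\begin{align}
\lmk W_{a^{(h)}}^{(g)}\rmk^* \lmk R_g W_a^{(h)} R_g^*\rmk^* W_a^{(gh)}
=\lmk W_{a^{(h)}}^{(g)}\rmk^* R_g \lmk W_a^{(h)}\rmk^* R_g^* W_a^{(gh)}
=\omega^{(a)}(g,h)
\end{align}
is a composition of an element of $\Mor(\rho_{a^{(gh)}},\Theta(gh)(\rho_a))$ with the adjoint of another such element, hence lies in $\Mor(\rho_{a^{(gh)}},\rho_{a^{(gh)}})$.

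Next, I would invoke irreducibility: $\rho_{a^{(gh)}}\in\Irr O_{\lz}$ means $\Mor(\rho_{a^{(gh)}},\rho_{a^{(gh)}})=\bbC\id_\caH$, so $\omega^{(a)}(g,h)=\lambda\,\id_\caH$ for some $\lambda\in\bbC$. Finally, $\omega^{(a)}(g,h)$ is manifestly a product of unitaries on $\caH$ (the $W$'s are unitaries by construction, $R_g$ is unitary, and adjoints/products of unitaries are unitary), hence $|\lambda|=1$, i.e.\ $\lambda\in U(1)$, which is the claim. The one point that needs a short verification rather than being purely formal is the identity $a^{(gh)}=(a^{(h)})^{(g)}$ at the level of isomorphism classes, but this is immediate from $\Theta(gh)=\Theta(g)\Theta(h)$ together with the fact that $\Theta(g)$ sends isomorphic objects to isomorphic objects and $a^{(g)}:=[\Theta(g)(\rho_a)]$ is independent of the choice of representative. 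I do not expect a genuine obstacle here; the only care needed is bookkeeping the order of composition of the intertwiners and the placement of the $R_g$'s when pushing $\Theta(g)$ through $\Ad(W_a^{(h)})$.
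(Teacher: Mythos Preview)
Your proof is correct and follows essentially the same route as the paper: both use the homomorphism property $\Theta(g)\Theta(h)=\Theta(gh)$ to obtain $\Ad\lmk R_g W_a^{(h)} R_g^* W_{a^{(h)}}^{(g)}\rmk\rho_{a^{(gh)}}=\Ad W_a^{(gh)}\rho_{a^{(gh)}}$ (together with $(a^{(h)})^{(g)}=a^{(gh)}$), and then invoke irreducibility of $\rho_{a^{(gh)}}$ to conclude the defining combination is a unitary scalar. Your phrasing via morphism spaces is a cosmetic variant of the paper's $\Ad$-equation argument.
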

\begin{proof}
 From (\ref{tori}) in the proof of Lemma \ref{mizuumi},  we have
 \begin{align}\label{kaki}
\Ad R_g T_{\rho_a}^\unit \Ad R_g^*=\Ad(W_a^{(g)})T_{\rho_{a^{(g)}}}^\unit. 
 \end{align}
 Because of 
 \begin{align}
 \begin{split}
 &\Ad(W_a^{(gh)})\rho_{a^{(gh)}}=\Theta(gh) (\rho_a)
 =\Theta(g)\lmk \Theta(h)(\rho_a)\rmk\\
 &=\Theta(g)\Ad(W_a^{(h)})\rho_{a^{(h)}}
 =\Ad\lmk\Theta(g)\lmk W_a^{(h)}\rmk\rmk
 \Theta(g)\rho_{a^{(h)}},
  \end{split}
 \end{align}
we have
\begin{align}
\lmk a^{(h)}\rmk^{(g)}=a^{(gh)} 
\end{align}
and 
\begin{align}
\Ad\lmk R_g W_a^{(h)} R_g^* W_{a^{(h)}}^{(g)}\rmk\circ\rho_{a^{(gh)}}
=\Ad W_a^{(gh)} \rho_{a^{(gh)}}.
\end{align}
By the irreducibility of $ \rho_{a^{(gh)}}$,
\begin{align}
\omega^{(a)}(g,h):
=\lmk W_{a^{(h)}}^{(g)}\rmk^* R_g \lmk W_a^{(h)}\rmk^* R_g^*  W_a^{(gh)} 
\in U(1).
\end{align}
\end{proof}
We consider a $G$-module 
\begin{align}
\begin{split}
\caM:=\bigoplus_{a\in [\Irr O_{\lz}]} U(1)
\end{split}
\end{align}
with right $G$-action
\begin{align}
\begin{split}
\lmk x_a\rmk_{a\in [\Irr O_{\lz}]}\mapsto \lmk x_{a^{(g^{-1})}}\rmk_{a\in [\Irr O_{\lz}]},\quad g\in G.
\end{split}
\end{align}
\begin{lem}\label{fushimi}Consider setting in the subsection \ref{sec:qss}.
If we set
\begin{align}
\begin{split}
&\eta(g,h):=(\eta_a(g,h))_{a\in [\Irr O_{\lz}]}\in \caM,\\
&\eta_a(g,h):= \omega^{(a^{(gh)^{-1}})}(g,h),\\
&g,h\in G^{\times 2},\quad a\in [\Irr O_{\lz}],
\end{split}
\end{align}
then $\eta\in Z^2(G,\caM)$.
The cohomology class $[\eta]_{Z^2(G,\caM)}$
is independent of the choice of $W_a^{(g)}$s.
\end{lem}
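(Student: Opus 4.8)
\textbf{Proof proposal for Lemma \ref{fushimi}.}

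The plan is to verify the $2$-cocycle identity for $\eta$ by pulling it back from the scalar-valued quantities $\omega^{(a)}(g,h)$, which in turn come from a coherence relation among the unitaries $W_a^{(g)}$ implementing $\Theta(g)(\rho_a)=\Ad(W_a^{(g)})\rho_{a^{(g)}}$. First I would record the associativity relation for these unitaries. Starting from $\Theta(gh)=\Theta(g)\Theta(h)$ applied to $\rho_a$ and using the computation already carried out in the proof of the preceding lemma, one gets that $R_g W_a^{(h)} R_g^* W_{a^{(h)}}^{(g)}$ and $W_a^{(gh)}$ implement the same automorphism on the irreducible $\rho_{a^{(gh)}}$, hence differ by the scalar $\omega^{(a)}(g,h)\in U(1)$; equivalently
\begin{align}\label{cochaineq}
W_a^{(gh)}=\omega^{(a)}(g,h)\, W_{a^{(h)}}^{(g)}\, R_g W_a^{(h)} R_g^*.
\end{align}
Now I would compute $W_a^{(ghk)}$ in two ways, first grouping $gh$ then $k$, i.e. expanding $W_a^{(gh\cdot k)}$ via \eqref{cochaineq} and then expanding the resulting $W^{(gh)}$ factor again, and second grouping $g$ then $hk$, i.e. expanding $W_a^{(g\cdot hk)}$ and then the $W^{(hk)}$ factor. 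Using that $R$ is a representation ($R_{gh}=R_gR_h$, so $\Ad R_{gh}=\Ad R_g\Ad R_h$) and that the $\omega^{(\cdot)}(\cdot,\cdot)$ are central scalars, equating the two expressions and cancelling the common unitary part yields the untwisted cocycle identity
\begin{align}
\omega^{(a^{(k)})}(g,h)\,\omega^{(a)}(gh,k)=\omega^{(a)}(h,k)\,\omega^{(a)}(g,hk),
\end{align}
valid for all $g,h,k\in G$ and all $a\in[\Irr O_{\lz}]$ (here I use that $(a^{(h)})^{(g)}=a^{(gh)}$, already established). This is where one must be careful about which index the $\Theta$-action lands on, since $W_{a^{(h)}}^{(g)}$ carries both a superscript and a shifted subscript; keeping that bookkeeping straight is the one genuinely fiddly point.

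Next I would translate this into the $G$-module $\caM$ with its right action $(x_a)\mapsto(x_{a^{(g^{-1})}})$. With $\eta_a(g,h):=\omega^{(a^{(gh)^{-1}})}(g,h)$, the right action of $k$ sends the $a$-component of $\eta(g,h)$ to $\eta_{a^{(k^{-1})}}(g,h)=\omega^{((a^{(k^{-1})})^{(gh)^{-1}})}(g,h)=\omega^{(a^{((ghk)^{-1})\cdot ?})}(g,h)$; the point is to choose the substitution in the scalar identity so that every term is evaluated at the label $a^{((ghk)^{-1})}$ after the appropriate shift. Concretely, replacing $a$ by $a^{((ghk)^{-1})}$ in the scalar cocycle identity, and using $(a^{((ghk)^{-1})})^{(k)}=a^{((gh)^{-1})}$ together with $a^{((ghk)^{-1})}=(a^{((gh)^{-1})})^{(k^{-1})}$-type rewritings, the four scalar factors become exactly $\big(\eta(g,h)\cdot k\big)_a$, $\eta_a(gh,k)$, $\eta_a(h,k)$, and $\eta_a(g,hk)$ respectively. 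This gives the $2$-cocycle relation in $Z^2(G,\caM)$ componentwise, hence $\eta\in Z^2(G,\caM)$.

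Finally, for independence of $[\eta]$ from the choice of the $W_a^{(g)}$: any other choice $\tilde W_a^{(g)}$ satisfies $\tilde W_a^{(g)}=\mu_a(g)\,W_a^{(g)}$ for scalars $\mu_a(g)\in U(1)$, by irreducibility of $\rho_{a^{(g)}}$. Substituting into \eqref{cochaineq} shows $\tilde\omega^{(a)}(g,h)=\omega^{(a)}(g,h)\,\mu_{a^{(h)}}(g)\,\mu_a(h)\,\mu_a(gh)^{-1}$, so $\tilde\omega^{(a)}$ and $\omega^{(a)}$ differ by the coboundary of the $1$-cochain $(g\mapsto\mu_a(g))$ in the appropriate (shifted) sense; assembling over $a\in[\Irr O_{\lz}]$ with the correct relabelling $a\mapsto a^{((gh)^{-1})}$ turns this into $\tilde\eta=\eta\cdot(d\xi)$ for a $1$-cochain $\xi\in C^1(G,\caM)$, whence $[\tilde\eta]=[\eta]$ in $Z^2(G,\caM)$ (i.e. in $H^2$). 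The only obstacle, and it is a mild one, is purely clerical: getting every index shift $a\mapsto a^{(g^{\pm1})}$ aligned with the right $G$-action on $\caM$ so that the scalar identities reassemble into genuine $\caM$-valued (co)cycle identities rather than something off by a shift.
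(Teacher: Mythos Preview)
Your proposal is correct and follows essentially the same approach as the paper: both extract the cocycle condition from the associativity $\Theta(ghk)=\Theta(g)\Theta(h)\Theta(k)$ applied to the unitaries $W_a^{(\cdot)}$, then relabel by $a\mapsto a^{((ghk)^{-1})}$ to land in the $\caM$-valued form, and handle independence via the phase freedom in $W_a^{(g)}$. The only cosmetic difference is that you first isolate the clean scalar identity $\omega^{(a^{(k)})}(g,h)\,\omega^{(a)}(gh,k)=\omega^{(a)}(h,k)\,\omega^{(a)}(g,hk)$ by expanding $W_a^{(ghk)}$ two ways and then substitute, whereas the paper writes out all four $\eta$-factors directly in terms of the $W$'s and $R_g$'s and cancels in one long display.
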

\begin{proof}
By definition
\begin{align}
\begin{split}
&\eta_a(g,h):= \omega^{(a^{(gh)^{-1}})}(g,h)\\
&=\lmk W_{a^{(g^{-1})}}^{(g)}\rmk^* R_g \lmk W_{a^{(gh)^{-1}}}^{(h)}\rmk^* R_g^*  W_{a^{(gh)^{-1}}}^{(gh)} 
\end{split}
\end{align}
we have
\begin{align}
\begin{split}
&\lmk \eta_a(g,h)\rmk^*=
\lmk W_{a^{(gh)^{-1}}}^{(gh)} \rmk^* R_g
W_{a^{(gh)^{-1}}}^{(h)} R_g^*
 W_{a^{(g^{-1})}}^{(g)},
\end{split}
\end{align}
\begin{align}
\begin{split}
&\eta_{a^{(g^{-1}) }}(h, k)=
\lmk W_{(a^{(g^{-1}) })^{(h^{-1})}}^{(h)}\rmk^* R_h \lmk W_{(a^{(g^{-1}) })^{(hk)^{-1}}}^{(k)}\rmk^* R_h^*  W_{(a^{(g^{-1}) })^{(hk)^{-1}}}^{(hk)} \\
&=\lmk W_{(a^{((gh)^{-1}) })}^{(h)}\rmk^* R_h \lmk W_{(a^{((ghk)^{-1}) })}^{(k)}\rmk^* R_h^*  W_{(a^{((ghk)^{-1}) })}^{(hk)} \\
\end{split}
\end{align}

Then we have
\begin{align}
\begin{split}
&\eta_{a^{(g^{-1}) }}(h, k) \eta_a(g,hk)
\eta_a(gh, k)^*\eta_a(g,h)^*\\
&=
\lmk W_{a^{(g^{-1})}}^{(g)}\rmk^* R_g
\eta_{a^{(g^{-1}) }}(h, k) \lmk W_{a^{(ghk)^{-1}}}^{(hk)}\rmk^* R_g^*  W_{a^{(ghk)^{-1}}}^{(ghk)} \\
&\lmk W_{a^{(ghk)^{-1}}}^{(ghk)} \rmk^* R_{gh}
W_{a^{(ghk)^{-1}}}^{(k)} R_{gh}^*
 W_{a^{((gh)^{-1})}}^{(gh)}\\
& \lmk W_{a^{(gh)^{-1}}}^{(gh)} \rmk^* R_g
W_{a^{(gh)^{-1}}}^{(h)} R_g^*
 W_{a^{(g^{-1})}}^{(g)}\\
 &=
 \lmk W_{a^{(g^{-1})}}^{(g)}\rmk^* R_g
\lmk W_{(a^{((gh)^{-1}) })}^{(h)}\rmk^* R_h \lmk W_{(a^{((ghk)^{-1}) })}^{(k)}\rmk^* R_h^*  W_{(a^{((ghk)^{-1}) })}^{(hk)} \\
\\
&\lmk W_{a^{(ghk)^{-1}}}^{(hk)}\rmk^* R_g^*  W_{a^{(ghk)^{-1}}}^{(ghk)} \\
&\lmk W_{a^{(ghk)^{-1}}}^{(ghk)} \rmk^* R_{gh}
W_{a^{(ghk)^{-1}}}^{(k)} R_{gh}^*
 W_{a^{((gh)^{-1})}}^{(gh)}\\
& \lmk W_{a^{(gh)^{-1}}}^{(gh)} \rmk^* R_g
W_{a^{(gh)^{-1}}}^{(h)} R_g^*
 W_{a^{(g^{-1})}}^{(g)}\\
 &=\unit.
\end{split}
\end{align}
Hence $\eta\in Z^2(G,\caM)$.
The cohomology class $[\eta]_{Z^2(G,\caM)}$
is independent of the choice of $W_a^{(g)}$s, because
from the irreducibility of $\rho_a$
there is only a phase freedom of choice for $W_a^{(g)}$,
which ends up with a coboundary.
\end{proof}
\begin{lem}\label{washi}Consider setting in the subsection \ref{sec:qss}.Assume Assumption \ref{a1}.
For $a,b\in [\Irr O_{\lz}]$, and $g\in G$ set 
\begin{align}
Y_{a,b}^{(g)}:= \lmk W_a^{(g)} T_{\rho_{a^{(g)}}}^{\unit}\lmk W_b^{(g)}\rmk\rmk^*. 
\end{align}
Then we have\begin{align}\label{usagi}
\begin{split}
&Y_{a^{(h)},b^{(h)}}^{(g)}R_g Y_{a,b}^{(h)} R_g^*
\lmk Y_{a,b}^{(gh)}\rmk^*=\omega^{(a)}(g,h)\omega^{(b)}(g,h).
\end{split}
\end{align}
(Note that because of the approximate Haag duality, $W_n^{(g)}\in \caB_{(0,\varphi_0)}$.)
\end{lem}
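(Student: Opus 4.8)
The plan is to unwind the definition of $Y_{a,b}^{(g)}$ using the cocycle-type identity (\ref{kaki}) for the $W_a^{(g)}$'s together with the tensor-functoriality of $\Theta$ established in Lemma \ref{mizuumi}, and to reduce the claimed identity (\ref{usagi}) to the definition of $\omega^{(a)}(g,h)$ applied twice (once to $a$, once to $b$). First I would record that, by (\ref{kaki}), for every $g$ and every $a$ we have $\Ad R_g\, T_{\rho_a}^\unit \Ad R_g^* = \Ad(W_a^{(g)})\, T_{\rho_{a^{(g)}}}^\unit$ as endomorphisms of $\caB_{(0,\varphi_0)}$; since the approximate Haag duality places each $W_b^{(g)}$ in $\caB_{(0,\varphi_0)}$ (the parenthetical remark), both sides may be evaluated at $W_b^{(g)}$, $W_b^{(h)}$, etc., so all the manipulations below stay inside $\caB(\caH)$ and are legitimate.

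Next I would compute $R_g\, Y_{a,b}^{(h)}\, R_g^*$. Writing $Y_{a,b}^{(h)} = \lmk W_a^{(h)}\, T_{\rho_{a^{(h)}}}^{\unit}(W_b^{(h)})\rmk^*$ and conjugating by $\Ad R_g$, I use (\ref{kaki}) with $a$ replaced by $a^{(h)}$ to rewrite $\Ad R_g \, T_{\rho_{a^{(h)}}}^\unit \Ad R_g^* = \Ad(W_{a^{(h)}}^{(g)})\, T_{\rho_{(a^{(h)})^{(g)}}}^\unit$, and recall $(a^{(h)})^{(g)} = a^{(gh)}$. This turns $R_g Y_{a,b}^{(h)} R_g^*$ into an expression involving $R_g W_a^{(h)} R_g^*$, $W_{a^{(h)}}^{(g)}$, and $T_{\rho_{a^{(gh)}}}^\unit\big(R_g W_b^{(h)} R_g^*\big)$. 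Now I substitute the two scalar relations coming from the previous Lemma, $R_g W_a^{(h)} R_g^* = W_{a^{(h)}}^{(g)}\, \omega^{(a)}(g,h)\, \lmk W_a^{(gh)}\rmk^*{}^{-1}\cdots$ — more precisely the rearrangement $W_{a^{(h)}}^{(g)} R_g W_a^{(h)} R_g^* = \omega^{(a)}(g,h)\, W_a^{(gh)}$ read off from the definition of $\omega^{(a)}(g,h)$ — and the analogous identity with $a$ replaced by $b$, noting that $T_{\rho_{a^{(gh)}}}^\unit$ is a $*$-homomorphism so it carries the scalar $\omega^{(b)}(g,h)$ through untouched and carries $W_b^{(gh)}$ to $T_{\rho_{a^{(gh)}}}^\unit(W_b^{(gh)})$. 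Collecting terms, the product $Y_{a^{(h)},b^{(h)}}^{(g)}\, R_g Y_{a,b}^{(h)} R_g^*$ should telescope, via $T_{\rho_{a^{(gh)}}}^\unit$ applied to $W_b^{(gh)}$ together with $W_a^{(gh)}$, precisely into $\omega^{(a)}(g,h)\,\omega^{(b)}(g,h)\, \lmk W_a^{(gh)}\, T_{\rho_{a^{(gh)}}}^\unit(W_b^{(gh)})\rmk^*{}^{*} = \omega^{(a)}(g,h)\,\omega^{(b)}(g,h)\, Y_{a,b}^{(gh)}$, which is exactly (\ref{usagi}) after multiplying on the right by $\lmk Y_{a,b}^{(gh)}\rmk^*$.

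The main obstacle is bookkeeping rather than conceptual: one must track the group-element subscripts $a^{(g)}, a^{(h)}, a^{(gh)}$ and the order of the four unitary factors very carefully, and in particular verify that every time a factor $R_g W_\bullet^{(h)} R_g^*$ is pushed past a $T^\unit$-image it is the \emph{same} endomorphism $T_{\rho_{a^{(gh)}}}^\unit$ that appears on both sides, so that the $*$-homomorphism property can be invoked cleanly. A secondary point to check is that the scalars $\omega^{(a)}(g,h)$ and $\omega^{(b)}(g,h)$, being central, can be freely commuted to the front; this is immediate once one knows they lie in $U(1)$, which is the content of the previous Lemma. Given the care taken in Lemma \ref{mizuumi} with the identity (\ref{tori})/(\ref{kaki}), all the pieces are in place and the computation goes through.
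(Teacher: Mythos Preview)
Your approach is exactly the paper's: expand the $Y$'s, use (\ref{kaki}) to convert $\Ad R_g\,T_{\rho_{a^{(h)}}}^\unit\,\Ad R_g^*$ into $\Ad(W_{a^{(h)}}^{(g)})\,T_{\rho_{a^{(gh)}}}^\unit$, and then collapse using the scalar definition of $\omega^{(a)}(g,h)$ once for $a$ and once for $b$ inside $T_{\rho_{a^{(gh)}}}^\unit$. One small correction to your bookkeeping: the rearrangement you quote, $W_{a^{(h)}}^{(g)}\,R_g W_a^{(h)} R_g^* = \omega^{(a)}(g,h)\,W_a^{(gh)}$, has the two left factors in the wrong order (these unitaries need not commute); the identity that actually follows from the definition is $(W_{a^{(h)}}^{(g)})^*\,R_g (W_a^{(h)})^* R_g^*\,W_a^{(gh)} = \omega^{(a)}(g,h)$, and it is this form (and its $b$-analogue applied inside $T_{\rho_{a^{(gh)}}}^\unit$) that the paper uses---precisely the ordering issue you flagged as the main obstacle.
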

\begin{proof}
Using (\ref{kaki}),
\begin{align}\label{usagi}
\begin{split}
&Y_{a^{(h)},b^{(h)}}^{(g)}R_g Y_{a,b}^{(h)} R_g^*
\lmk Y_{a,b}^{(gh)}\rmk^*\\
&=T_{\rho_{a^{(gh)}}}^{\unit}\lmk\lmk  W_{b^{(h)}}^{(g)}\rmk^*\rmk \lmk W_{a^{(h)}}^{(g)} \rmk^*
R_g T_{\rho_{a^{(h)}}}^{\unit}\lmk\lmk  W_b^{(h)}\rmk^*\rmk \lmk W_a^{(h)}\rmk^* R_g^*
W_a^{(gh)} T_{\rho_{a^{(gh)}}}^{\unit}\lmk W_b^{(gh)}\rmk\\
&=T_{\rho_{a^{(gh)}}}^{\unit}\lmk\lmk  W_{b^{(h)}}^{(g)}\rmk^*\rmk \lmk W_{a^{(h)}}^{(g)} \rmk^*
R_g T_{\rho_{a^{(h)}}}^{\unit}\lmk\lmk  W_b^{(h)}\rmk^*\rmk 
R_g^* W_{a^{(h)}}^{(g)}
\lmk W_{a^{(h)}}^{(g)} \rmk^*
R_g
\lmk W_a^{(h)}\rmk^* R_g^*
W_a^{(gh)} T_{\rho_{a^{(gh)}}}^{\unit}\lmk W_b^{(gh)}\rmk\\
&=T_{\rho_{a^{(gh)}}}^{\unit}\lmk\lmk  W_{b^{(h)}}^{(g)}\rmk^*\rmk 
\lmk W_{a^{(h)}}^{(g)} \rmk^* R_g
 T_{\rho_{a^{(h)}}}^{\unit}\lmk\lmk  W_b^{(h)}\rmk^*\rmk 
R_g^* W_{a^{(h)}}^{(g)}
\omega^{(a)}(g,h) T_{\rho_{a^{(gh)}}}^{\unit}\lmk W_b^{(gh)}\rmk\\
&=T_{\rho_{a^{(gh)}}}^{\unit}\lmk\lmk  W_{b^{(h)}}^{(g)}\rmk^*\rmk
T_{\rho_{a^{(gh)}}}^{\unit}\Ad R_g\lmk\lmk  W_b^{(h)}\rmk^*\rmk 
\omega^{(a)}(g,h) T_{\rho_{a^{(gh)}}}^{\unit}\lmk W_b^{(gh)}\rmk\\
&=\omega^{(a)}(g,h)
T_{\rho_{a^{(gh)}}}^{\unit}\lmk
\lmk  W_{b^{(h)}}^{(g)}\rmk^*
\Ad R_g\lmk W_b^{(h)}\rmk^*
W_b^{(gh)}\rmk \\
&=\omega^{(a)}(g,h)\omega^{(b)}(g,h).
\end{split}
\end{align}
\end{proof}

Here is the main theorem of this paper.
\begin{thm}\label{mainthm}
Consider setting in the subsection \ref{sec:qss}.
Assume Assumption \ref{a1}, Assumption \ref{a1r}, Assumption \ref{a1l},  
and Assumption \ref{a5}.
Then 
\begin{enumerate}
\item for any $a\in [\Irr O_{\lz}]$ and $g\in G$, we have
$a^{(g)}=a$, and 
\item we may choose $W_a^{(g)}$ so that 
we have 
\begin{align}
\omega^{(a)}(g,h)\omega^{(b)}(g,h)=\omega^{(c)}(g,h),\quad g,h\in G.
\end{align}
for any $a, b,c\in [\Irr O_{\lz}]$ with $\Mor(\rho_a\otimes\rho_b,\rho_c)\neq\{0\}$.
\end{enumerate}

\end{thm}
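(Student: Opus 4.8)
The plan is to prove the two statements in sequence, with statement (1) (absence of anyon permutation) feeding into statement (2). The key new input beyond Lemma \ref{washi} is Assumption \ref{a5}, which produces, for each $g\in G$ and each left cone $\ld\in\Cl$, a unitary $v_{g,\ld}\in\caU(\caH)$ implementing $\omega\beta_g^\ld=\omega\gamma_{g,\ld}$ on the GNS vector, together with the stability property $v_{g,\ld}\caF v_{g,\ld'}^*\subset\caF$. The point is that $\beta_g^\ld$ restricted to the complement of a fattened edge acts essentially like the full on-site action $\beta_g$ (implemented by $R_g$) on one side and trivially on the other, so $v_{g,\ld}$ should differ from $R_g$ only by something localizable near $\partial\ld$, up to the local correction $u_{g,\ld,\varepsilon}$ from the excitation $\gamma_{g,\ld,\varepsilon}$. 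First I would establish the precise statement: for a left cone $\ld\in\Cl$ and a right cone $\lm r\in\Cr$ chosen so that $\ld\cup\lm r$ ``covers'' a neighborhood of $\partial\ld$ in the appropriate sense, the unitary $R_g^* v_{g,\ld}$ (suitably dressed with $u_{g,\ld,\varepsilon}$) lies in $\pi(\caA_{(\ld\cup\lm r)^c})'\subset\caF'$, or dually implements $\Theta(g)$-type conjugation only in a half-plane. This is where approximate Haag duality for the right cones (Assumption \ref{a1r}) and for left cones (Assumption \ref{a1l}) enters: it lets me push the localization region of $R_g^*v_{g,\ld}$ from a fattened edge down to an honest cone and control it inside $\caF$.

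For statement (1): fix $a\in[\Irr O_{\lz}]$ and $g\in G$, and pick a representative $\rho_a\in O_{\lz}$ supported in the bulk cone $\lz\subset\lhz$. The auto-equivalence $\Theta(g)(\rho_a)=\Ad R_g\,\rho_a\,\beta_{g^{-1}}$ is again supported in $\lhz$. I want to show $[\Theta(g)(\rho_a)]=[\rho_a]$, i.e.\ there is a unitary intertwiner $\rho_a\cong\Ad R_g\rho_a\beta_{g^{-1}}$. The idea is that $\rho_a$ and $\rho_a\beta_{g^{-1}}$ differ, on $\caA_{\ld^c}$ for left cones $\ld$ containing $\lz$, by the half-space twist $\beta_{g^{-1}}^{\ld^c}$ — but by Assumption \ref{a5} the half-space action is implemented, modulo a local boundary excitation, by a unitary $v_{g,\ld}$, and $\Ad R_g$ on the other side cancels the ``bulk'' part of $\beta_{g^{-1}}$. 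Concretely I would show $R_g^* v_{g^{-1},\ld}^*$ (up to the local $u$) intertwines $\rho_a$ and $\Theta(g)(\rho_a)$ outside a cone, then invoke the superselection/transportability structure of $C$ (Theorem \ref{MTC}) — every morphism defined on $\caA_{\ld^c}$ extends to a global intertwiner in $\caB_{(0,\varphi_0)}$ — to conclude $\Theta(g)(\rho_a)\cong\rho_a$ in $C$. Hence $a^{(g)}=a$.

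Once (1) holds, $a^{(g)}=a$ for all $g$, so $W_a^{(g)}\in\Mor(\Theta(g)(\rho_a),\rho_a)$ is a unitary self-intertwiner-type object and $Y_{a,b}^{(g)}\in\Mor(\rho_a\otimes\rho_b,\rho_a\otimes\rho_b)$-adjacent; more importantly, for $c$ with a nonzero morphism $t\in\Mor(\rho_a\otimes\rho_b,\rho_c)$, the element $Y_{a,b}^{(g)}$ relates $W_a^{(g)}\otimes W_b^{(g)}$ to $W_c^{(g)}$ via $t$, up to a scalar. Using naturality of $\Theta(g)$ with respect to $t$ (which is a morphism in $C$, hence fixed by $\Theta(g)$ up to the coherence isomorphisms $\varphi_2=\id$) one gets, after composing the identity \eqref{usagi} of Lemma \ref{washi} with $t$ on both sides and using irreducibility of $\rho_c$,
\begin{align}
\omega^{(a)}(g,h)\,\omega^{(b)}(g,h)=\lambda_c(g,h)\,\omega^{(c)}(g,h)
\end{align}
for some $\lambda_c(g,h)\in U(1)$ coming from the phase mismatch between $W_c^{(g)}$ and the composite $W_a^{(g)}\otimes W_b^{(g)}$ transported through $t$. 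The remaining task is to redefine the $W_a^{(g)}$ (using exactly the phase freedom noted in Lemma \ref{fushimi}) so that all the $\lambda_c\equiv 1$ simultaneously. Since the fusion graph on $[\Irr O_{\lz}]$ generated by $\otimes$ is connected (the bulk category $C$ has a tensor unit $\pi$ and every object has a conjugate, so $\rho_a\otimes\rho_{\bar a}$ contains $\pi$), one can fix the phases inductively along a spanning set of fusion channels; compatibility of the choice across different channels into the same $\rho_c$ is a $2$-cocycle-type consistency that follows from Lemma \ref{fushimi} together with the associativity coherence of $\Theta(g)$.

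The main obstacle I expect is the first step: showing cleanly, inside the $C^*$-algebra $\caF$ and using only the approximate (not exact) Haag duality of Assumptions \ref{a1}, \ref{a1r}, \ref{a1l}, that the half-space symmetry unitary $v_{g,\ld}$ agrees with $R_g$ away from a controlled cone. Approximate Haag duality only gives norm-approximants $\tilde U$ with $f_{\varphi,\varepsilon,\delta}(t)\to 0$, so the localization of $R_g^* v_{g,\ld}$ is achieved only up to arbitrarily small norm error, and one must run a limiting/$\varepsilon$-argument (as in \cite{MTC}) to upgrade ``localized up to $\epsilon$ for every $\epsilon$'' into an honest intertwiner in $C$. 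Managing these approximations — in particular ensuring the corrections $u_{g,\ld,\varepsilon}$ and the $\tilde U^{(r)},\tilde U^{(l)}$ live in the right algebras so that everything stays in $\caF$, which is exactly why Assumptions \ref{a1r} and \ref{a1l} were strengthened — is the technical heart, and I would expect it to occupy the bulk of the proof.
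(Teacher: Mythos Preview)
Your overall strategy --- use Assumption \ref{a5} to build the half-space implementers $v_{g,\ld}$, then extract from them both the intertwiner realizing $a^{(g)}=a$ and the fusion-compatible $W_a^{(g)}$'s --- is in the right spirit, but it diverges from the paper's route in a way that leaves a real gap in part (2).

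The paper does not try to adjust phases of arbitrarily chosen $W_a^{(g)}$ after the fact. Instead it first builds, in Section \ref{gsu}, a family of \emph{$G$-localized} sectors $O^{(g)}_{(\loz,\ltz)}$ together with a $G$-crossed braiding $\epsilon_G$ (Lemma \ref{lem38}--Lemma \ref{tentomushi}). Assumption \ref{a5} is then used only to produce, for every $g\in G$, an explicit invertible object $\sigma_g:=\pi\beta_g^{\ld}\gamma_{g,\ld,\varepsilon}^{-1}\in O^{(g)}_{(\loz,\ltz)}$ (the lemma in Section 3.2), showing $H=G$. With $\sigma_{h^{-1}}$ in hand, Lemma \ref{hituji} \emph{defines}
\[
W_a^{(h)}:=T_{\sigma_{h^{-1}}^{-1}}^{\unit}\bigl(\epsilon_G(\rho_a,\sigma_{h^{-1}})\,\epsilon(\rho_a,\sigma_{h^{-1}})^{*}\bigr),
\]
comparing the ordinary braiding $\epsilon$ with the $G$-crossed braiding $\epsilon_G$. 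This particular choice is what makes Lemma \ref{kamakiri} hold on the nose, $W_c^{(g)}S=\Theta(g)(S)(Y_{ab}^{(g)})^{*}$ for every $S\in\Mor(\rho_a\otimes\rho_b,\rho_c)$, and Theorem \ref{arashiyama} then gives $\omega^{(a)}\omega^{(b)}=\omega^{(c)}$ with no residual phase $\lambda_c(g,h)$ to kill.

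Your proposed endgame for (2) --- ``redefine the $W_a^{(g)}$ inductively along a spanning set of fusion channels'' --- is where the gap sits. Rescaling $W_a^{(g)}\mapsto\mu_a(g)W_a^{(g)}$ changes $\omega^{(a)}(g,h)$ by the $2$-coboundary $\mu_a(g)^{-1}\mu_a(h)^{-1}\mu_a(gh)$, so $\lambda_c(g,h)=\omega^{(a)}\omega^{(b)}/\omega^{(c)}$ only moves by $d(\mu_a\mu_b\mu_c^{-1})$. You would therefore need to show that every $\lambda_c(\cdot,\cdot)$ is simultaneously a coboundary in a way consistent across \emph{all} fusion channels, not merely along a tree; cycles in the fusion graph impose nontrivial compatibility conditions that your sketch does not address. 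The paper sidesteps this obstruction entirely by making a canonical choice of $W_a^{(g)}$ built from $\epsilon_G$, for which the hexagon identity (Lemma \ref{hex}) and naturality (Lemma \ref{egn}) of $\epsilon_G$ force the compatibility automatically.

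A secondary remark: you flag the approximate Haag duality localisation of $R_g^{*}v_{g,\ld}$ as the technical heart. In the paper that step is comparatively short (it is the final lemma of Section \ref{frac}, showing $\sigma_g\in O^{(g)}_{(\loz,\ltz)}$); the real work is the construction and properties of $\epsilon_G$ in Section \ref{gsu} and Appendix \ref{brprf}, which your outline does not use at all.
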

In the next section, as a tool to prove this theorem, we introduce 
$G$-localized superselection sectors, an analog of the framework
in \cite{muger2005conformal}.
Using this, we prove Theorem \ref{mainthm} in section \ref{frac}.
In section \ref{suf}, we introduce a suffient condition for Assumption \ref{a5}
which can be used to analyze concrete models.
In section \ref{ex}, we give an concrete example satisfying all the assumptions.
A G-crossed category is formulated in Appendix \ref{gcross}.

\section{ $G$-localized superselection sectors and their braidings}\label{gsu}
In this section we consider $g$-localized superselection sectors and their braidings,
as was done in \cite{muger2005conformal} in one-dimensional systems.
The argument is a deformation of that of \cite{muger2005conformal},
taking account of ``tails'' due to approximate Haag duality.
It is carried out using the same argument in \cite{MTC}, \cite{BA}.
\begin{defn}Consider the setting in the subsection \ref{sec:qss}.
For a representation $\rho$ of $\caA$ on $\caH$, $g\in G$
and $(\lo,\lt)\in PC$,
set
\begin{align}
\begin{split}
&\Vbk{\rho}{(\lo,\lt)}^{(g)}:=
\left\{
V_{{\rho}{(\lo,\lt)}}^{(g)}\in\caU\lmk \caF\rmk
\mid \left.\Ad\lmk V_{{\rho}{(\lo,\lt)}}^{(g)}\rmk\circ\rho\right\vert_{\caA_{\lo\cup\lt}}
=\left.\pi \circ\beta_g^{\Lambda_1}\right\vert_{\caA_{\lo\cup\lt}}
\right\}.
\end{split}
\end{align}
For $g\in G$, we set
\begin{align}
\begin{split}
&O^{(g)}:=
\left\{
\rho\in O\mid
\Vbk{\rho}{(\lo,\lt)}^{(g)}\neq\emptyset,\quad\text{for all}\quad
(\lo,\lt)\in PC
\right\}.\\
&O^{(g)}_{(\loz,\ltz)}:=
\left\{
\rho\in O^{(g)}\mid
\unit\in \Vbk{\rho}{(\loz,\ltz)}^{(g)}
\right\}.
\end{split}
\end{align}
(Recall that $O$ is the set of all representations of $\caA$ on $\caH$ satisfying the superselection criterion.)
We set
\begin{align}
\begin{split}
O_G:=\cup_{g\in G} O^{(g)}
\end{split}
\end{align}
Note that 
\begin{align}
\begin{split}
\Obj C=O_{\lz}\subset O^{(e)},\quad
O_G\subset \Obj \hat C,
\end{split}
\end{align}
because of 
$\lz\subset (\loz\cup\ltz)^c\subset \lhz$, $\lz\in \Cbk$, and Lemma 2.4 of \cite{MTC}.
For $\rho\in O^{(g)}_{(\loz,\ltz)}$ and 
$\sigma\in O^{(h)}_{(\loz,\ltz)}$ with $g,h\in G$,
we set the morphisms between them as
\begin{align}
\Mor_G\lmk\rho,\sigma\rmk:=
\left\{
X\in \caF\mid
X\rho(A)=\sigma(A)X,\quad A\in \caA 
\right\}.
\end{align}

\end{defn}

\subsection{Tensor product}
By the same argument as Lemma 2.1 of \cite{BA},
we have the following.
\begin{lem}Consider setting in the subsection \ref{sec:qss}.
Let $g\in G$ and $\rho\in O^{(g)}_{(\loz,\ltz)}$.
Then 
there exists a unique $*$-homomorphism $T_\rho^{(l)\unit}: \caB_l\to\caB(\caH)$
such that
\begin{description}
\item[(i)]
$T_\rho^{(l)\unit}\pi\beta_g =\rho$,
\item[(ii)]
$T_\rho^{(l)\unit}$ is $\sigma$-weak continuous on
$\pi(\caA_{\Lambda_l})''$ for all $\Lambda_l\in\Cl$.
\end{description}
It satisfies $T_\rho^{(l)\unit}(\bl)\subset \bl$ and defines an endomorphism
$T_\rho^{(l)\unit} : \bl\to \bl$.
Furthermore we have $T_\rho^{(l)\unit}(\fbd)\subset \fbd$.
\end{lem}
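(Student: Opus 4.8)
The plan is to mimic the construction of $T_\rho^{\unit}$ from Lemma 2.13 of \cite{MTC} (and the analogous Lemma 2.1 of \cite{BA}), but with left cones and the twisted intertwining condition coming from $O^{(g)}_{(\loz,\ltz)}$. First I would record the key consequence of $\rho\in O^{(g)}_{(\loz,\ltz)}$: for every left cone $\Lambda_l\in\Cl$ there is a unitary $V^{(g)}_{\rho,(\lo,\lt)}\in\caU(\caF)$ (for a suitable $(\lo,\lt)\in PC$ with $\Lambda_l\subset(\lo\cup\lt)^c$, which one can arrange since $PC$ is upward filtering and $\Cl$ is cofinal among complements of such pairs) such that $\Ad(V^{(g)}_{\rho,(\lo,\lt)})\circ\rho$ agrees with $\pi\circ\beta_g^{\Lambda_1}$ on $\caA_{\lo\cup\lt}$, hence in particular $\rho$ and $\pi\beta_g$ are unitarily equivalent as representations restricted to $\caA_{\Lambda_l^c}$ via an element of $\caF$. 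The natural candidate for the homomorphism is then, on $\pi(\caA_{\Lambda_l})''$,
\begin{align*}
T_\rho^{(l)\unit}(x):=\Ad\bigl(V^{(g)}_{\rho,(\lo,\lt)}\bigr)^*\circ\rho\circ\beta_{g^{-1}}\bigl(\text{(a lift of }x\text{ in }\caA_{\Lambda_l'}\text{)}\bigr),
\end{align*}
and one checks this is independent of the choices and extends to all of $\bl$.

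The argument would proceed in the following order. (1) \emph{Well-definedness on each $\pi(\caA_{\Lambda_l})''$:} use that $\pi(\caA_{\Lambda_l})''$ is the $\sigma$-weak closure of $\pi(\caA_{\Lambda_l})$, that $\Ad(V^{(g)}_{\rho,\cdot})\circ\rho\circ\beta_{g^{-1}}$ coincides with $\pi$ on $\caA_{\Lambda_l'}$ for $\Lambda_l\subset\Lambda_l'$ in $\Cl$ by the twisted intertwining relation, and invoke Assumption \ref{a2} (proper infiniteness) exactly as in \cite{MTC}/\cite{BA} to upgrade this to a normal $*$-homomorphism on the von Neumann algebra; here approximate Haag duality (Assumption \ref{a1}) is what lets one push $\pi(\caA_{\Lambda_l^c})'$ into a slightly fattened $\pi(\caA_{\Lambda_l'})''$ and thereby control the image. (2) \emph{Consistency across cones:} for $\Lambda_l\subset\Lambda_l'$ the two definitions agree because the relevant $V^{(g)}$'s differ by an element of $\caF$ commuting with the appropriate $\rho(\caA_{\cdot})$; this gives a single $*$-homomorphism on $\cup_{\Lambda_l}\pi(\caA_{\Lambda_l})''$, which extends by norm continuity to $\bl$. (3) \emph{Properties (i) and (ii):} property (i), $T_\rho^{(l)\unit}\pi\beta_g=\rho$, is immediate from the formula on $\pi(\caA_{\Lambda_l})$ and density; property (ii) is built in. (4) \emph{Uniqueness:} any two such homomorphisms agree on each $\pi(\caA_{\Lambda_l})$ by (i), hence on $\pi(\caA_{\Lambda_l})''$ by $\sigma$-weak continuity (ii), hence on $\bl$.

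For the endomorphism property $T_\rho^{(l)\unit}(\bl)\subset\bl$: show $T_\rho^{(l)\unit}\bigl(\pi(\caA_{\Lambda_l})''\bigr)\subset\pi(\caA_{(\Lambda_l)_\varepsilon})''$ for a slightly fattened left cone, using the approximate Haag duality estimate (ii) of Assumption \ref{a1} to approximate $V^{(g)}_{\rho,\cdot}$ by unitaries localized in a fattened cone and taking the limit — this is exactly the ``tails'' argument referenced in the section preamble. Finally, $T_\rho^{(l)\unit}(\fbd)\subset\fbd$ follows from the requirement $V^{(g)}_{\rho,(\lo,\lt)}\in\caU(\caF)$ together with Assumption \ref{a5}(ii)-type stability of $\caF$ under the relevant conjugations, plus the fact that $\rho$, being in $O^{(g)}\subset\Obj\hat C$, maps $\caA$ into operators compatible with $\caF$ in the precise sense used in \cite{MTC}. \emph{The main obstacle} I anticipate is step (1)–(2): controlling the interplay between the approximate (rather than exact) Haag duality tails and the twist by $\beta_g^{\Lambda_1}$, i.e.\ making sure the fattening $\varepsilon$'s and the shift parameters $R_{\varphi,\varepsilon}$ from Assumptions \ref{a1}, \ref{a1l} can be chosen uniformly enough that the limit defining $T_\rho^{(l)\unit}$ exists and lands in $\bl$ (and in $\fbd$); this is where one must be careful that $\Cl$ is built from cones pointing in the $\pi$ direction so that the left-version Haag duality (Assumption \ref{a1l}) applies and keeps everything inside $\caF$.
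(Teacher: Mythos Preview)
Your geometric setup is backwards, and this is a genuine error. You propose choosing $(\lo,\lt)\in PC$ with $\Lambda_l\subset(\lo\cup\lt)^c$, but this is impossible: $(\lo\cup\lt)^c$ is a strip-like region squeezed between a left cone and a right cone, while $\Lambda_l\in\Cl$ extends infinitely to the left and cannot fit inside it. Even if it could, the defining intertwining relation for $V^{(g)}_{\rho,(\lo,\lt)}$ gives information on $\caA_{\lo\cup\lt}$, not on $\caA_{(\lo\cup\lt)^c}$, so your conclusion that ``$\rho$ and $\pi\beta_g$ are unitarily equivalent on $\caA_{\Lambda_l^c}$'' does not follow. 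The paper instead selects, for each $\Lambda_l=\Lambda_{(a,0),\pi,\varphi}\in\Cl$, a pair $(\Gamma_l,\Gamma_r)\in PC$ with $\Lambda_l\subset\Gamma_l$ (concretely $\Gamma_l=\Lambda_{(b,0),\pi,\varphi_1}$ with $a<b$ and $\varphi<\varphi_1$), and simply sets
\[
T_\rho^{(0)}(x):=\Ad\bigl((V^{(g)}_{\rho,(\Gamma_l,\Gamma_r)})^*\bigr)(x),\qquad x\in\pi(\caA_{\Lambda_l})''.
\]
This is conjugation by a fixed unitary, hence automatically a normal $*$-homomorphism on each $\pi(\caA_{\Lambda_l})''$; there is no need to ``lift $x$ back to $\caA_{\Lambda_l'}$'', and proper infiniteness (Assumption~\ref{a2}) plays no role. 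Independence of the choices and the extension to $\caB_l$ are then handled exactly as in Lemmas~2.11 and~2.13 of \cite{MTC}. Property~(i) is immediate from this formula: for $A\in\caA_{\Lambda_l}\subset\caA_{\Gamma_l}$ one has $\beta_g(A)=\beta_g^{\Gamma_l}(A)$, so $T_\rho^{(0)}\pi\beta_g(A)=\Ad(V^*)\pi\beta_g^{\Gamma_l}(A)=\rho(A)$.

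Your treatment of the remaining claims also overcomplicates matters. This lemma assumes neither approximate Haag duality nor Assumption~\ref{a5}, and neither is needed. The endomorphism property $T_\rho^{(l)\unit}(\caB_l)\subset\caB_l$ holds because $V\in\caF$ is norm-approximated by elements of $\pi(\caA_{(\lo'\cup\lt')^c})''$ for suitable $(\lo',\lt')\in PC$, and $\Lambda_l\cup(\lo'\cup\lt')^c$ sits inside a larger left cone; no $\varepsilon$-fattening or tail estimates enter. Likewise $T_\rho^{(l)\unit}(\caF)\subset\caF$ is immediate from $V^{(g)}_{\rho,(\Gamma_l,\Gamma_r)}\in\caU(\caF)$ alone. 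The ``main obstacle'' you anticipate concerning uniform control of Haag-duality parameters simply does not arise in this lemma.
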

\begin{proof}
Each $\lm l\in\Cl$ can be written as 
$\lm l=\Lambda_{ (a,0), \pi, \varphi}\in\Cl$.
Set
\[
\kappa_{\lm l}:=\left\{
(\Gamma_l, \Gamma_r)\in PC \mid\Gamma_l=\Lambda_{ (b,0), \pi, \varphi_1},\;
a<b,\; 0<\varphi<\varphi_1<\pi
\right\}.
\]
Then, as in Lemma 2.11 of \cite{MTC},
for any $\rho\in O^{(g)}_{(\loz,\ltz)}$, 
\begin{align}
T_\rho^{(0)}(x):=\Ad\lmk {V_{\rho,(\Gamma_l, \Gamma_r)}^{(g)}}^*\rmk(x),\quad\text{if}\;
x\in \pi\lmk\caA_{\lm l }\rmk'',\;\lm l\in\Cl
\end{align}
defines an isometric $*$-homomorphism $T_\rho^{(0)}: \caB_l^{(0)}:=
\cup_{\Lambda_l\in \Cl}\; \pi\lmk\caA_{\Lambda_l}\rmk''\to\caB_l$,
independent of the choice of $(\Gamma_l, \Gamma_r)\in \kappa_{\lm l}$, ${V_{\rho,(\Gamma_l, \Gamma_r)}^{(g)}}
\in \Vbk{\rho}{(\Gamma_l,\Gamma_r)}^{(g)}$, $\lm l$.
As in Lemma 2.13 of \cite{MTC}, this $T_\rho^{(0)}$ extends to
the $T_\rho^{(l)\unit}$ with the desired property.
In particular, for $A\in \lm l$,
\begin{align}
T_\rho^{(l)\unit}\pi\beta_g(A)
=\Ad\lmk {V_{\rho,(\Gamma_l, \Gamma_r)}^{(g)}}^*\rmk\pi\beta_g(A)
=\rho(A),
\end{align}
with $(\Gamma_l, \Gamma_r)\in \kappa_{\lm l}$, ${V_{\rho,(\Gamma_l, \Gamma_r)}^{(g)}}
\in \Vbk{\rho}{(\Gamma_l,\Gamma_r)}^{(g)}$
because $\lm l\subset\Gamma_l$ by definition.
The last statement is trivial from the definition above, because
$V_{\rho,(\Gamma_l, \Gamma_r)}^{(g)}\in \caF$.
\end{proof}
From this, we obtain the following.
\begin{lem}\label{kiri}
Consider setting in the subsection \ref{sec:qss}.
Let $g\in G$ and $\rho\in O^{(g)}_{(\loz,\ltz)}$.
Then 
there exists a unique $*$-homomorphism $S_\rho^{(l)\unit}: \caB_l\to\caB(\caH)$
such that
\begin{description}
\item[(i)]
$S_\rho^{(l)\unit}\pi=\rho$,
\item[(ii)]
$S_\rho^{(l)\unit}$ is $\sigma$-weak continuous on
$\pi(\caA_{\Lambda_l})''$ for all $\Lambda_l\in\Cl$.
\end{description}
It satisfies $S_\rho^{(l)\unit}(\bl)\subset \bl$ and defines an endomorphism
$S_\rho^{(l)\unit} : \bl\to \bl$.
Furthermore we have $S_\rho^{(l)\unit}(\fbd)\subset \fbd$.
\end{lem}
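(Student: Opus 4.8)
The plan is to obtain $S_\rho^{(l)\unit}$ from the $*$-homomorphism $T_\rho^{(l)\unit}$ of the previous lemma by pre-composing with the spatial automorphism $\Ad R_g$; since $R_g$ implements $\pi\beta_g$, this converts the intertwining relation $T_\rho^{(l)\unit}\pi\beta_g=\rho$ into $S_\rho^{(l)\unit}\pi=\rho$. Concretely, I would set
\begin{align*}
S_\rho^{(l)\unit}:=T_\rho^{(l)\unit}\circ\left.\Ad R_g\right\vert_{\bl}.
\end{align*}
The first thing to check is that $\Ad R_g$ restricts to a $*$-automorphism of $\bl$ (and of $\fbd$). This is immediate from the on-site nature of $\beta_g$: for every $\Lambda_l\in\Cl$ one has $\beta_g(\caA_{\Lambda_l})=\caA_{\Lambda_l}$, hence $\Ad R_g\lmk\pi(\caA_{\Lambda_l})''\rmk=\pi\lmk\beta_g(\caA_{\Lambda_l})\rmk''=\pi(\caA_{\Lambda_l})''$; taking the union over $\Cl$ and the norm closure gives $\Ad R_g(\bl)=\bl$, and the same computation with the regions $\lmk\lo\cup\lt\rmk^c$, $(\lo,\lt)\in PC$, gives $\Ad R_g(\fbd)=\fbd$. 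Moreover $\Ad R_g$ is $\sigma$-weakly continuous on $\caB(\caH)$, so by the above it restricts on each $\pi(\caA_{\Lambda_l})''$ to a normal $*$-automorphism of $\pi(\caA_{\Lambda_l})''$.

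Granting this, $S_\rho^{(l)\unit}$ is a composition of $*$-homomorphisms, hence a $*$-homomorphism $\bl\to\caB(\caH)$ (that $\pi(\caA)\subset\bl$ is already implicit in the previous lemma). Property (i) is $S_\rho^{(l)\unit}\pi=T_\rho^{(l)\unit}\Ad R_g\,\pi=T_\rho^{(l)\unit}\pi\beta_g=\rho$, using $\Ad R_g\circ\pi=\pi\circ\beta_g$ and part (i) of the previous lemma. Property (ii) holds because on $\pi(\caA_{\Lambda_l})''$ the map $S_\rho^{(l)\unit}$ is the composition of $\left.\Ad R_g\right\vert_{\pi(\caA_{\Lambda_l})''}$ (normal, with range $\pi(\caA_{\Lambda_l})''$) with $\left.T_\rho^{(l)\unit}\right\vert_{\pi(\caA_{\Lambda_l})''}$, which is $\sigma$-weakly continuous by part (ii) of the previous lemma. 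Finally $S_\rho^{(l)\unit}(\bl)=T_\rho^{(l)\unit}\lmk\Ad R_g(\bl)\rmk=T_\rho^{(l)\unit}(\bl)\subset\bl$, and likewise $S_\rho^{(l)\unit}(\fbd)\subset\fbd$, from the corresponding inclusions for $T_\rho^{(l)\unit}$; so $S_\rho^{(l)\unit}$ is an endomorphism of $\bl$ preserving $\fbd$.

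For uniqueness, let $S,S'$ be $*$-homomorphisms $\bl\to\caB(\caH)$ satisfying (i) and (ii). Then $S\pi=\rho=S'\pi$, so $S$ and $S'$ agree on $\pi(\caA_{\Lambda_l})$ for every $\Lambda_l\in\Cl$; since $\pi(\caA_{\Lambda_l})$ is $\sigma$-weakly dense in $\pi(\caA_{\Lambda_l})''$ and both maps are $\sigma$-weakly continuous there, they agree on $\pi(\caA_{\Lambda_l})''$, and then on $\bl$ because $*$-homomorphisms are norm-decreasing and $\cup_{\Lambda_l\in\Cl}\pi(\caA_{\Lambda_l})''$ is norm-dense in $\bl$. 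I do not expect a genuine obstacle; the only point needing care is the compatibility of $\Ad R_g$ with clause (ii), namely that $\Ad R_g$ carries each $\pi(\caA_{\Lambda_l})''$ onto itself in a normal way — and this is exactly where the on-site property of $\beta_g$ enters.
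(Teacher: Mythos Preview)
Your proof is correct and follows exactly the paper's approach: define $S_\rho^{(l)\unit}:=T_\rho^{(l)\unit}\circ\Ad R_g$, use that $\Ad R_g$ preserves $\bl$ and $\fbd$ (because $\beta_g$ is on-site), and read off all the properties from those of $T_\rho^{(l)\unit}$. Your write-up is in fact more detailed than the paper's, which simply asserts that the composition ``is well defined and satisfies the desired properties''; your explicit uniqueness argument via $\sigma$-weak density is a welcome addition.
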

\begin{proof}
Note that $\Ad R_g(\caF)=\caF$ and $\Ad R_g(\caB_l)=\caB_l$.
Therefore,
\begin{align}
S_\rho^{(l)\unit}:=T_\rho^{(l)\unit}\Ad R_g : \caB_l\to \caB_l
\end{align}
is well defined and satisfies the desired properties.
\end{proof}
We have the right version of this.
\begin{lem}\label{kaze}
Consider setting in the subsection \ref{sec:qss}.
Let $g\in G$ and $\rho\in O^{(g)}_{(\loz,\ltz)}$.
Then 
there exists a unique $*$-homomorphism $S_\rho^{(r)\unit}: \caB_r\to\caB(\caH)$
such that
\begin{description}
\item[(i)]
$S_\rho^{(r)\unit}\pi=\rho$,
\item[(ii)]
$S_\rho^{(r)\unit}$ is $\sigma$-weak continuous on
$\pi(\caA_{\Lambda_r})''$ for all $\Lambda_r\in\Cr$.
\end{description}
It satisfies $S_\rho^{(r)\unit}(\brr)\subset \brr$ and defines an endomorphism
$S_\rho^{(r)\unit} : \brr\to \brr$.
Furthermore we have $S_\rho^{(l)\unit}(\fbd)\subset \fbd$.
\end{lem}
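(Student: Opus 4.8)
The plan is to run the construction behind the first Lemma of this subsection (the one producing $T_\rho^{(l)\unit}$) and behind Lemma \ref{kiri} verbatim, with the family of left cones $\Cl$ replaced by the family of right cones $\Cr$. For a right cone $\Lambda_r=\Lambda_{(a,0),0,\varphi}\in\Cr$ I would set
\[
\kappa_{\Lambda_r}:=\left\{(\Gamma_l,\Gamma_r)\in PC\ \middle|\ \Gamma_r=\Lambda_{(c,0),0,\varphi_1},\ c<a,\ 0<\varphi<\varphi_1<\pi\right\},
\]
a non-empty directed family with $\Lambda_r\subset\Gamma_r$ for every $(\Gamma_l,\Gamma_r)\in\kappa_{\Lambda_r}$. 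Since $\rho\in O^{(g)}_{(\loz,\ltz)}$, each $\Vbk{\rho}{(\Gamma_l,\Gamma_r)}^{(g)}$ is non-empty; picking $V_{\rho,(\Gamma_l,\Gamma_r)}^{(g)}\in\Vbk{\rho}{(\Gamma_l,\Gamma_r)}^{(g)}\subset\caU(\caF)$, define
\[
S_\rho^{(0)}(x):=\Ad\Bigl(\bigl(V_{\rho,(\Gamma_l,\Gamma_r)}^{(g)}\bigr)^{*}\Bigr)(x),\qquad x\in\pi\lmk\caA_{\Lambda_r}\rmk'',\quad\Lambda_r\in\Cr .
\]
As in Lemma 2.11 of \cite{MTC}, this is independent of the choice of $(\Gamma_l,\Gamma_r)\in\kappa_{\Lambda_r}$ and of the chosen intertwiner, and the definitions for nested right cones are compatible, so $S_\rho^{(0)}$ is a well-defined isometric $*$-homomorphism on $\caB_r^{(0)}:=\cup_{\Lambda_r\in\Cr}\pi\lmk\caA_{\Lambda_r}\rmk''$; as in Lemma 2.13 of \cite{MTC}, it extends to a $*$-homomorphism $S_\rho^{(r)\unit}$ on the norm closure $\brr$ which is $\sigma$-weakly continuous on each $\pi(\caA_{\Lambda_r})''$, and uniqueness is automatic (property (i) determines the map on $\pi(\caA)$, hence on each $\pi(\caA_{\Lambda_r})''$ by (ii), hence on $\brr$ by norm continuity).

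The only point that is genuinely different from the left-cone case is the verification of (i). Fix $A\in\caA_{\Lambda_r}$ and $(\Gamma_l,\Gamma_r)\in\kappa_{\Lambda_r}$. Because $\Lambda_r\subset\Gamma_r$ and $\Gamma_l\cap\Gamma_r=\emptyset$, the on-site action $\beta_g^{\Gamma_l}$ is trivial on $\caA_{\Gamma_r}\ni A$, so the defining relation $\Ad\bigl(V_{\rho,(\Gamma_l,\Gamma_r)}^{(g)}\bigr)\circ\rho\big|_{\caA_{\Gamma_l\cup\Gamma_r}}=\pi\circ\beta_g^{\Gamma_l}\big|_{\caA_{\Gamma_l\cup\Gamma_r}}$ reduces to $\Ad\bigl((V_{\rho,(\Gamma_l,\Gamma_r)}^{(g)})^{*}\bigr)\pi(A)=\rho(A)$, i.e. $S_\rho^{(r)\unit}\pi=\rho$. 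This is why, in contrast with Lemma \ref{kiri}---where the twist is visible on the left cones and one must post-compose with $\Ad R_g$---the present construction directly yields the ``$S$''-version, with no intermediate ``$T$''. The remaining claims are obtained as in \cite{MTC}: from $S_\rho^{(r)\unit}\lmk\pi(\caA_{\Lambda_r})''\rmk=\rho(\caA_{\Lambda_r})''$, the superselection criterion for $\rho$, and Assumption \ref{a1r} (the right-cone approximate Haag duality, which in particular keeps the relevant unitaries in $\caF$) one gets $S_\rho^{(r)\unit}(\brr)\subset\brr$ and the endomorphism property, while $V_{\rho,(\Gamma_l,\Gamma_r)}^{(g)}\in\caF$ gives $S_\rho^{(r)\unit}(\fbd)\subset\fbd$ exactly as for $T_\rho^{(l)\unit}$.

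I expect the main obstacle to be purely bookkeeping: verifying that the patching arguments of \cite{MTC}---independence of $(\Gamma_l,\Gamma_r)$, compatibility across nested right cones, and the estimate that places $\rho(\caA_{\Lambda_r})''$ inside $\brr$---survive when the plain intertwiners of \cite{MTC} are replaced by the $g$-twisted $V_{\rho,(\Gamma_l,\Gamma_r)}^{(g)}$ and Assumption \ref{a1r} is used in place of Assumption \ref{a1}. Since $\beta_g^{\Gamma_l}$ is supported on the left cone $\Gamma_l$ and is invisible on the right cones (and on all the von Neumann algebras appearing in these arguments), the adaptation should be routine; everything else (the $*$-homomorphism property, $\sigma$-weak continuity, uniqueness) is formal.
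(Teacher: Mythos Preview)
Your proposal is correct and is exactly the ``right version'' the paper has in mind (the paper omits the proof entirely, writing only ``We have the right version of this''). Your observation that on right cones $\beta_g^{\Gamma_l}$ acts trivially, so the construction lands directly on the $S$-map without an intermediate $T$ and a post-composition by $\Ad R_g$, is precisely the asymmetry between the left and right cases. One small remark: the inclusion $S_\rho^{(r)\unit}(\brr)\subset\brr$ does not actually require Assumption~\ref{a1r}; it follows simply because $V_{\rho,(\Gamma_l,\Gamma_r)}^{(g)}\in\caF\subset\brr$ and $\brr$ is a $C^*$-algebra, so you can drop that hypothesis from your argument.
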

\begin{lem}\label{slsr}Consider setting in the subsection \ref{sec:qss}.
Let $g\in G$ and $\rho\in O^{(g)}_{(\loz,\ltz)}$.
The following holds.
\begin{enumerate}
\item $\left. S_\rho^{(l)\unit}\right\vert_\caF =\left. S_\rho^{(r)\unit}\right\vert_\caF$.
\item $\left. S_\rho^{(l)\unit}\right\vert_{\caB_{(0,\varphi_0)}} 
=\left. T_\rho^{\unit}\right\vert_{\caB_{(0,\varphi_0)}} $
\end{enumerate}
\end{lem}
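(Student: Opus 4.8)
The plan is to prove both identities by the same two-step scheme: first check that the two $*$-homomorphisms being compared agree on the norm-dense $*$-subalgebra $\pi(\caA)$ (hence on each $\pi(\caA_\Gamma)$), then upgrade this to agreement on the von Neumann algebras $\pi(\caA_\Gamma)''$ by means of joint $\sigma$-weak continuity, and finally pass to the norm closure. Throughout I use that a $*$-homomorphism between $C^*$-algebras is norm contractive (hence norm continuous), that $\pi(\caA_\Gamma)$ is $\sigma$-weakly dense in $\pi(\caA_\Gamma)''$, and that the locus where two $\sigma$-weakly continuous maps into $\caB(\caH)$ agree is $\sigma$-weakly closed. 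Note also $\rho\in O^{(g)}_{(\loz,\ltz)}\subset O_G\subset\Obj\hat C=O_{\lhz}$, so $T_\rho^\unit$ is indeed available.

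For (1): since $\caF=\overline{\bigcup_{(\lo,\lt)\in PC}\pi(\caA_{(\lo\cup\lt)^c})''}^n$ and both $S_\rho^{(l)\unit}$, $S_\rho^{(r)\unit}$ are norm continuous, it suffices to show that $S_\rho^{(l)\unit}$ and $S_\rho^{(r)\unit}$ coincide on $\pi(\caA_{(\lo\cup\lt)^c})''$ for each $(\lo,\lt)\in PC$. Here the geometric point is immediate from the definitions: $(\lo\cup\lt)^c=\lo^c\cap\lt^c$ with $\lt^c\in\Cl$ and $\lo^c\in\Cr$. Hence $\pi(\caA_{(\lo\cup\lt)^c})''$ is a von Neumann subalgebra of both $\pi(\caA_{\lt^c})''\subset\caB_l$ and $\pi(\caA_{\lo^c})''\subset\caB_r$; by property (ii) of Lemma~\ref{kiri} the map $S_\rho^{(l)\unit}$ is $\sigma$-weakly continuous on the first, by property (ii) of Lemma~\ref{kaze} the map $S_\rho^{(r)\unit}$ is $\sigma$-weakly continuous on the second, so both are $\sigma$-weakly continuous on $\pi(\caA_{(\lo\cup\lt)^c})''$. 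By property (i) of Lemmas~\ref{kiri} and~\ref{kaze} we have $S_\rho^{(l)\unit}\pi=\rho=S_\rho^{(r)\unit}\pi$, so the two maps agree on the $\sigma$-weakly dense subalgebra $\pi(\caA_{(\lo\cup\lt)^c})$, hence on all of $\pi(\caA_{(\lo\cup\lt)^c})''$.

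For (2): recall $\caB_{(0,\varphi_0)}=\overline{\bigcup_{\Lambda\in\caC_{(0,\varphi_0)}}\pi(\caA_\Lambda)''}^n$ and that $T_\rho^\unit$ satisfies $T_\rho^\unit\pi=\rho$ and is $\sigma$-weakly continuous on each $\pi(\caA_\Lambda)''$, $\Lambda\in\caC_{(0,\varphi_0)}$. The geometric input now is the claim that every cone $\Lambda\in\caC_{(0,\varphi_0)}$ is contained in some left cone of $\Cl$: since $\arg\Lambda$ is a closed arc disjoint from $\bbA_{[-\varphi_0,\varphi_0]}$, the directions occurring in $\Lambda$ stay an angle $\eta>0$ away from the positive $x$-axis, and an elementary estimate shows that for $M$ large enough $\Lambda\subset\Lambda_{(M,0),\pi,\pi-\varphi_0}\in\Cl\cap\caC_{(0,\varphi_0)}$ (the only points of $\Lambda$ that could be seen from $(M,0)$ in a direction close to the positive $x$-axis lie far out along rays of $\Lambda$, where the $y$-coordinate grows at a definite positive rate, so once $M$ is large they are still seen at angle greater than $\varphi_0$ from that axis). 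Consequently $\pi(\caA_\Lambda)''\subset\pi(\caA_{\Lambda_l})''\subset\caB_l$ for a suitable $\Lambda_l\in\Cl$, so $\caB_{(0,\varphi_0)}\subset\caB_l$ (the domain of $S_\rho^{(l)\unit}$), and moreover $S_\rho^{(l)\unit}$ is $\sigma$-weakly continuous on the subalgebra $\pi(\caA_\Lambda)''$. Since $S_\rho^{(l)\unit}\pi=\rho=T_\rho^\unit\pi$ by property (i) of Lemma~\ref{kiri}, the two maps coincide on the $\sigma$-weakly dense subalgebra $\pi(\caA_\Lambda)$, hence on $\pi(\caA_\Lambda)''$ for every $\Lambda\in\caC_{(0,\varphi_0)}$, and therefore on $\caB_{(0,\varphi_0)}$ by norm continuity.

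The only genuinely nontrivial step is the geometric lemma used in (2): that every cone of $\caC_{(0,\varphi_0)}$—including thin cones that point into the open right half-plane and thus extend to $x=+\infty$—is nonetheless contained in a left cone, which works because the opening angle of the left cone may be taken arbitrarily close to $\pi$ while its excluded sliver around the positive $x$-axis is pushed far to the right, past where $\arg\Lambda$ can reach it. The corresponding containment in (1) is trivial, and the remainder of both arguments is the routine ``agree on a dense $*$-subalgebra plus joint $\sigma$-weak continuity'' pattern.
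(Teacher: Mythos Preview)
Your proof is correct and follows the same approach as the paper: both parts reduce to the observation that the relevant local von Neumann algebras sit simultaneously inside a $\pi(\caA_{\Lambda_l})''$ with $\Lambda_l\in\Cl$ (and, for part~1, also inside a $\pi(\caA_{\Lambda_r})''$ with $\Lambda_r\in\Cr$), after which $\sigma$-weak continuity and the common value $\rho$ on $\pi(\caA)$ force agreement. Your explicit choice $\tilde\Lambda_l=\lt^c$, $\tilde\Lambda_r=\lo^c$ in part~1 and your sketch of the geometric containment in part~2 simply flesh out what the paper states in one line.
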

\begin{proof}
{\it 1. }follows from the fact that for each $(\lo,\lt)\in PC$, there exist
$\tilde \lo\in \Cl$ and $\tilde \lt\in\Cr$ such that 
\begin{align}
\lmk \lo \cup \lt \rmk^c\subset \tilde \lo\cap\tilde \lt
\end{align}
and the property (i), (ii) Lemma \ref{kiri}, Lemma \ref{kaze} of each map.
{\it 2.} follows from the fact that for each $\ld\in \caC_{(0,\varphi_0)}$
there exists a $\lo\in \Cl$ such that $\ld\subset \lo $.
\end{proof}
\begin{nota}Consider setting in the subsection \ref{sec:qss}.
For each $\rho\in O^{(g)}_{(\loz,\ltz)}$, $(\lo,\lt)\in PC$ and
$V_{{\rho}{(\lo,\lt)}}^{(g)} \in \Vbk{\rho}{(\lo,\lt)}^{(g)}$,
we set
\begin{align}
\begin{split}
 S_\rho^{(l)V_{{\rho}{(\lo,\lt)}}^{(g)}}:=\Ad\lmk V_{{\rho}{(\lo,\lt)}}^{(g)}\rmk\circ S_\rho^{(l)\unit},\\
 S_\rho^{(r)V_{{\rho}{(\lo,\lt)}}^{(g)}}:=\Ad\lmk V_{{\rho}{(\lo,\lt)}}^{(g)}\rmk\circ S_\rho^{(r)\unit}.
\end{split}
\end{align}
They are endomorphisms on $\caB_l$, $\caB_r$ respectively, preserving $\caF$.
\end{nota}
\begin{lem}\label{ookami}Consider setting in the subsection \ref{sec:qss}.
For each $\rho\in O^{(g)}_{(\loz,\ltz)}$, $(\lo,\lt)\in PC$ and
$V_{{\rho}{(\lo,\lt)}}^{(g)} \in \Vbk{\rho}{(\lo,\lt)}^{(g)}$,
we have
\begin{align}
\begin{split}
\left. S_\rho^{(l)V_{{\rho}{(\lo,\lt)}}^{(g)}}\right\vert_{\pdc{\lo}}
=\left. \Ad(R_g)\right\vert_{\pdc{\lo}},\\
\left. S_\rho^{(r)V_{{\rho}{(\lo,\lt)}}^{(g)}}\right\vert_{\pdc{\lt}}
=\left. \id\right\vert_{\pdc{\lt}},\\
\left. S_\rho^{(l)V_{{\rho}{(\lo,\lt)}}^{(g)}}\right\vert_{\pdc{\lt}\cap \caF}
=\left. \id\right\vert_{\pdc{\lt}\cap\caF},\\
\left. S_\rho^{(r)V_{{\rho}{(\lo,\lt)}}^{(g)}}\right\vert_{\pdc{\lo}\cap\caF}
=\left. \Ad R_g\right\vert_{\pdc{\lo}\cap\caF}.
\end{split}
\end{align}
\end{lem}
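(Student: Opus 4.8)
The plan is to prove the first two equalities directly from the defining properties of the $*$-homomorphisms $S_\rho^{(l)\unit}$, $S_\rho^{(r)\unit}$ (Lemmas \ref{kiri}, \ref{kaze}) and of the unitary $V:=V_{{\rho}{(\lo,\lt)}}^{(g)}$, and then to deduce the last two ``for free'' by restricting to $\caF$ and using Lemma \ref{slsr}(1): from $S_\rho^{(l)\unit}=S_\rho^{(r)\unit}$ on $\caF$ one gets $S_\rho^{(l)V}=\Ad(V)S_\rho^{(l)\unit}=\Ad(V)S_\rho^{(r)\unit}=S_\rho^{(r)V}$ on $\caF$.

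For the equality $S_\rho^{(l)V}=\Ad(R_g)$ on $\pdc{\lo}$, take $A\in\caA_{\lo}$. Property (i) of $S_\rho^{(l)\unit}$ gives $S_\rho^{(l)\unit}\pi(A)=\rho(A)$, and since $A\in\caA_{\lo}\subset\caA_{\lo\cup\lt}$ the definition of $\Vbk{\rho}{(\lo,\lt)}^{(g)}$ gives $\Ad(V)\rho(A)=\pi\beta_g^{\lo}(A)$. Because $\lo\cap\lt=\emptyset$, the automorphism $\beta_g^{\lo}$ restricted to $\caA_{\lo}$ is the restriction of $\beta_g=\beta_g^{\bbZ^2}$, so $\pi\beta_g^{\lo}(A)=\pi\beta_g(A)=\Ad(R_g)\pi(A)$. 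Thus $S_\rho^{(l)V}$ and $\Ad(R_g)$ agree on $\pi(\caA_{\lo})$, which is $\sigma$-weakly dense in $\pdc{\lo}$. Since $\lo\in\Cl$, property (ii) of $S_\rho^{(l)\unit}$ together with $\sigma$-weak continuity of $\Ad(V)$ makes $S_\rho^{(l)V}$ $\sigma$-weakly continuous on $\pdc{\lo}$, and $\Ad(R_g)$ is $\sigma$-weakly continuous as well; two $\sigma$-weakly continuous maps agreeing on a $\sigma$-weakly dense subset coincide, which proves the claim. The equality $S_\rho^{(r)V}=\id$ on $\pdc{\lt}$ is proved the same way with $l$ replaced by $r$ and $\lo$ by $\lt$: for $A\in\caA_{\lt}$, property (i) of $S_\rho^{(r)\unit}$ gives $S_\rho^{(r)\unit}\pi(A)=\rho(A)$, and since $\beta_g^{\lo}$ acts trivially on $\caA_{\lt}$ the defining relation of $V$ gives $\Ad(V)\rho(A)=\pi(A)$; $\sigma$-weak continuity on $\pdc{\lt}$ (using $\lt\in\Cr$) upgrades this from $\pi(\caA_{\lt})$ to all of $\pdc{\lt}$.

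Finally, restricting the identity $S_\rho^{(l)V}=S_\rho^{(r)V}$ on $\caF$ to $\pdc{\lt}\cap\caF$ and combining with the second equality yields the third, and restricting it to $\pdc{\lo}\cap\caF$ and combining with the first yields the fourth (here $\pdc{\lo}\cap\caF,\ \pdc{\lt}\cap\caF\subset\caF$, which lies in both $\caB_l$ and $\caB_r$, so all restrictions make sense). None of the steps is a genuine obstacle; the only points deserving a word of care are the identification $\beta_g^{\lo}=\beta_g\vert_{\caA_{\lo}}$ and its triviality on $\caA_{\lt}$ (both because $\lo\cap\lt=\emptyset$), and checking that the $\sigma$-weak-continuity clauses of the $S$-maps apply on $\pdc{\lo}$ and $\pdc{\lt}$, which they do since $\lo\in\Cl$ and $\lt\in\Cr$.
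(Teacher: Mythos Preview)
Your proof is correct and follows essentially the same approach as the paper's: the first two equalities are obtained from the defining relation $\Ad(V)\circ\rho\vert_{\caA_{\lo\cup\lt}}=\pi\circ\beta_g^{\lo}\vert_{\caA_{\lo\cup\lt}}$ together with $\sigma$-weak continuity, and the last two follow from the first two via Lemma~\ref{slsr}(1). You have simply spelled out the details that the paper leaves implicit.
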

\begin{proof}
The first half is immediate from the $\sigma$weak continuity of the maps
and 
\begin{align}
\begin{split}
\left.\Ad\lmk V_{{\rho}{(\lo,\lt)}}^{(g)}\rmk\circ\rho\right\vert_{\caA_{\lo\cup\lt}}
=\left.\pi \circ\beta_g^{\Lambda_1}\right\vert_{\caA_{\lo\cup\lt}}.
\end{split}
\end{align}
The second half follows from the first half and Lemma \ref{slsr}.

\end{proof}
Now we consider tensor product.
\begin{lem}\label{framingo}Consider setting in the subsection \ref{sec:qss}.
For $\rho\in O^{(g)}_{(\loz,\ltz)}$ and $\sigma\in O^{(h)}_{(\loz,\ltz)}$
with $g,h\in G$, we have
\begin{align}
\rho\otimes\sigma= S_\rho^{(l)\unit}  S_\sigma^{(l)\unit}\pi\in O^{(gh)}_{(\loz,\ltz)}
\end{align}
Here, $\otimes$ is restriction of the tensor product given in Theorem \ref{MTC}.
\end{lem}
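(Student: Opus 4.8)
The plan is to verify the two assertions of Lemma \ref{framingo} separately: first that the composition $S_\rho^{(l)\unit}S_\sigma^{(l)\unit}\pi$ coincides with the tensor product $\rho\otimes\sigma=T_\rho^\unit T_\sigma^\unit\pi$ of Theorem \ref{MTC} on all of $\caA$, and second that the resulting representation lies in $O^{(gh)}_{(\loz,\ltz)}$. For the first point, I would use Lemma \ref{slsr}: since $\lz,\lhz\in\caC_{(0,\varphi_0)}$ and the relevant cones all lie in that family, the restriction of $S_\sigma^{(l)\unit}$ to $\caB_{(0,\varphi_0)}$ agrees with $T_\sigma^\unit$ there, and likewise for $\rho$. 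One then checks that for $A\in\caA$, $\pi(A)$ and $S_\sigma^{(l)\unit}\pi(A)=\sigma(A)$ both sit inside a von Neumann algebra on which $S_\rho^{(l)\unit}$ and $T_\rho^\unit$ agree; more carefully, $S_\sigma^{(l)\unit}\pi$ maps $\caA$ into $\caB_l$, and since $\sigma\in O^{(h)}_{(\loz,\ltz)}\subset O_{\lhz}$ one can localize the extra tail so that the images land in $\pi(\caA_{\Lambda})''$ for $\Lambda\in\caC_{(0,\varphi_0)}$ up to the usual approximation, forcing $S_\rho^{(l)\unit}S_\sigma^{(l)\unit}\pi=T_\rho^\unit T_\sigma^\unit\pi$ on $\caA$.

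The second point is where the real content lies. Fix $(\lo,\lt)\in PC$; I must produce a unitary in $\caU(\caF)$ implementing $\pi\circ\beta_{gh}^{\lo}$ on $\caA_{\lo\cup\lt}$ starting from $\rho\otimes\sigma$. The natural candidate is built from intertwiners $V^{(g)}_{\rho,(\lo,\lt)}\in\Vbk{\rho}{(\lo,\lt)}^{(g)}$ and $V^{(h)}_{\sigma,(\lo,\lt)}\in\Vbk{\sigma}{(\lo,\lt)}^{(h)}$; concretely one expects
\begin{align}
V^{(g)}_{\rho,(\lo,\lt)}\,S_\rho^{(l)\unit}\bigl(V^{(h)}_{\sigma,(\lo,\lt)}\bigr)\in \Vbk{\rho\otimes\sigma}{(\lo,\lt)}^{(gh)},
\end{align}
using that $S_\rho^{(l)\unit}$ preserves $\caF$ so this element is indeed in $\caU(\caF)$. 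One then computes, for $A\in\caA_{\lo\cup\lt}$, that conjugating $\rho\otimes\sigma=S_\rho^{(l)\unit}S_\sigma^{(l)\unit}\pi$ by this unitary gives $\pi\beta_g^{\lo}\beta_h^{\lo}(A)=\pi\beta_{gh}^{\lo}(A)$: the inner factor $V^{(h)}_{\sigma}$ turns $S_\sigma^{(l)\unit}\pi$ into $\pi\beta_h^{\lo}$ on $\caA_{\lo\cup\lt}$, then applying $S_\rho^{(l)\unit}$ and conjugating by $V^{(g)}_\rho$ uses Lemma \ref{ookami} — precisely the identity $\left.S_\rho^{(l)V^{(g)}_{\rho}}\right\vert_{\pdc{\lo}}=\left.\Ad R_g\right\vert_{\pdc{\lo}}$ together with $\left.S_\rho^{(r)V^{(g)}_\rho}\right\vert_{\pdc{\lt}\cap\caF}=\Ad R_g$ or the identity map as appropriate — to convert $\pi\beta_h^{\lo}(A)=\Ad R_h\,\pi(A)$ and the group relation $R_g\beta_h^{\lo}R_g^*=\beta_{ghg^{-1}}^{\lo}$... actually one must be careful: the on-site action conjugated by $R_g$ behaves simply, and the composition of the two half-space twists on the region $\lo\cup\lt$ is $\beta_{gh}^{\lo}$ up to elements of $\caF$. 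Finally, to get $\rho\otimes\sigma\in O^{(gh)}_{(\loz,\ltz)}$ rather than merely $O^{(gh)}$, I use that $\unit\in\Vbk{\rho}{(\loz,\ltz)}^{(g)}$ and $\unit\in\Vbk{\sigma}{(\loz,\ltz)}^{(h)}$, so the above candidate intertwiner for $(\loz,\ltz)$ is $\unit\cdot S_\rho^{(l)\unit}(\unit)=\unit$.

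The main obstacle I anticipate is bookkeeping the interaction between the two half-space twists $\beta_g^{\Lambda_1}$ and the conjugation by $R_g$ hidden inside $S_\rho^{(l)\unit}=T_\rho^{(l)\unit}\Ad R_g$, making sure the composite really produces $\beta_{gh}^{\lo}$ on $\caA_{\lo\cup\lt}$ and not some conjugate like $\beta_{g}^{\lo}\beta_h^{\lo}$ with a mismatched localization region — this is exactly the kind of subtlety the ``tails'' from approximate Haag duality introduce, and it is handled in \cite{BA}, Lemma 2.1, so I would follow that template closely, invoking Lemma \ref{ookami} at each place where one needs to know how $S_\rho^{(l)V^{(g)}_\rho}$ and $S_\rho^{(r)V^{(g)}_\rho}$ act on algebras localized in $\lo$ versus $\lt$. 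A secondary point requiring care is that the relevant intertwiner must genuinely be in $\caF$ (not just $\caB(\caH)$), which is why Assumptions \ref{a1r} and \ref{a1l} were strengthened; but since those are not yet in force in this lemma, the statement as written only asserts membership in $O^{(gh)}_{(\loz,\ltz)}$ using unitaries in $\caU(\caF)$, and the preservation-of-$\caF$ clauses in the previous lemmas supply exactly what is needed.
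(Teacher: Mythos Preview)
Your approach is correct and is essentially the paper's proof: Lemma \ref{slsr} part 2 gives the identification $S_\rho^{(l)\unit}S_\sigma^{(l)\unit}\pi=T_\rho^\unit T_\sigma^\unit\pi$, and the intertwiner $V^{(g)}_{\rho,(\lo,\lt)}\,S_\rho^{(l)\unit}\bigl(V^{(h)}_{\sigma,(\lo,\lt)}\bigr)$ together with Lemma \ref{ookami} yields membership in $O^{(gh)}_{(\loz,\ltz)}$, with the choice $\unit$ at $(\loz,\ltz)$.

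One clarification: your discussion of why $S_\rho^{(l)V^{(g)}_\rho}S_\sigma^{(l)V^{(h)}_\sigma}\pi=\pi\beta_{gh}^{\lo}$ on $\caA_{\lo\cup\lt}$ is more tangled than necessary. The conjugation relation $R_g\beta_h^{\lo}R_g^*=\beta_{ghg^{-1}}^{\lo}$ is a red herring here. The paper simply treats $\caA_{\lo}$ and $\caA_{\lt}$ separately: on $\pdc{\lo}$ the first identity of Lemma \ref{ookami} gives $S_\rho^{(l)V^{(g)}_\rho}\circ S_\sigma^{(l)V^{(h)}_\sigma}=\Ad R_g\circ\Ad R_h=\Ad R_{gh}$, hence $\pi\mapsto\pi\beta_{gh}$; on $\pdc{\lt}\cap\caF$ (which contains $\pi(A)$ for local $A\in\caA_{\lt}$) the third identity makes both maps the identity, hence $\pi\mapsto\pi$. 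Since $\beta_{gh}^{\lo}$ agrees with $\beta_{gh}$ on $\caA_{\lo}$ and with $\id$ on $\caA_{\lt}$, this is exactly $\pi\beta_{gh}^{\lo}$, with no ``up to $\caF$'' correction needed.
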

\begin{proof}
Set $\rho\otimes_G\sigma= S_\rho^{(l)\unit}  S_\sigma^{(l)\unit}\pi$. Because of Lemma \ref{slsr}  2, we have
$ S_\sigma^{(l)\unit}\vert_{\caB_{(0,\varphi_0)}}
=T_\sigma^{\unit}\vert_{\caB_{(0,\varphi_0)}}$
and 
$ S_\rho^{(l)\unit}\vert_{\caB_{(0,\varphi_0)}}
=T_\rho^{\unit}\vert_{\caB_{(0,\varphi_0)}}$.
Therefore, we have
\begin{align}
\begin{split}
\rho\otimes_G\sigma= S_\rho^{(l)\unit}  S_\sigma^{(l)\unit}\pi
= S_\rho^{(l)\unit}  T_\sigma^{\unit}\pi= T_\rho^{\unit}T_\sigma^{\unit}\pi
=\rho\otimes \sigma\in O_{\lz}.
\end{split}
\end{align}
For any $(\lo,\lt)\in PC$, from Lemma \ref{ookami}, we have 
\begin{align}\begin{split}
&\left.\rho\otimes_G\sigma\right\vert_{\caA_{\lo\cup \lt}}
=\left. S_\rho^{(l)\unit}  S_\sigma^{(l)\unit}\pi\right\vert_{\caA_{\lo\cup \lt}}\\
&=\left. \Ad\lmk \lmk V_{{\rho},{(\lo,\lt )}}^{(g)}\rmk^*\rmk S_\rho^{(l)V_{{\rho},{(\lo,\lt)}}^{(g)}}
 \circ \Ad\lmk \lmk V_{{\sigma},{(\lo,\lt )}}^{(h)}\rmk^*\rmk
S_\sigma^{(l)V_{{\sigma},{(\lo,\lt)}}^{(h)}}
  \pi\right\vert_{\caA_{\lo\cup \lt}}\\
&=\left. 
\Ad\lmk 
\lmk V_{{\rho},{(\lo,\lt )}}^{(g)}\rmk^*
S_\rho^{(l)V_{{\rho},{(\lo,\lt)}}^{(g)}}
\lmk\lmk V_{{\sigma},{(\lo,\lt )}}^{(h)}\rmk^*\rmk
\rmk 
S_\rho^{(l)V_{{\rho},{(\lo,\lt)}}^{(g)}}
S_\sigma^{(l)V_{{\sigma},{(\lo,\lt)}}^{(h)}}
  \pi\right\vert_{\caA_{\lo\cup \lt}}\\
  &=\left. 
\Ad\lmk 
\lmk V_{{\rho},{(\lo,\lt )}}^{(g)}\rmk^*
S_\rho^{(l)V_{{\rho},{(\lo,\lt)}}^{(g)}}
\lmk\lmk V_{{\sigma},{(\lo,\lt )}}^{(h)}\rmk^*\rmk
\rmk 
  \pi\beta_{gh}^{\lo}
  \right\vert_{\caA_{\lo\cup \lt}}\\
\end{split}
\end{align}
with 
$V_{{\rho},{(\lo,\lt)}}^{(g)} \in \Vbk{\rho}{(\lo,\lt)}^{(g)}$,
$V_{{\sigma},{(\lo,\lt)}}^{(h)} \in \Vbk{\sigma}{(\lo,\lt)}^{(h)}$.
Note that $\lmk V_{{\rho},{(\lo,\lt )}}^{(g)}\rmk^*
S_\rho^{(l)V_{{\rho},{(\lo,\lt)}}^{(g)}}
\lmk\lmk V_{{\sigma},{(\lo,\lt )}}^{(h)}\rmk^*\rmk
$ belongs to $\caU(\caF)$ and 
we may take $V_{{\rho},{(\loz,\ltz)}}^{(g)}:=\unit  $,
$V_{{\sigma},{(\loz,\ltz)}}^{(h)}=\unit$.
This proves $\rho\otimes_G\sigma\in O^{(gh)}_{(\loz,\ltz)}$.
\end{proof}

\begin{lem}
Consider setting in the subsection \ref{sec:qss}.
For any $\rho,\rho'\in O^{(g)}_{(\loz,\ltz)}$ and $\sigma,\sigma'\in O^{(h)}_{(\loz,\ltz)}$,
$X\in\Mor_G(\rho,\rho')$, $Y\in \Mor_G(\sigma,\sigma')$,
we have
\begin{align}\label{ame}
\begin{split}
&X\otimes Y=X S_\rho^{(l)\unit}(Y)\in \Mor_G(\rho\otimes\sigma,\rho'\otimes\sigma').
\end{split}
\end{align}
Here, $\otimes$ is restriction of the tensor product given in Theorem \ref{MTC}.
\end{lem}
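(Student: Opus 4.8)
The plan is to mimic the proof of the analogous statement for the bulk tensor category (the map $X\otimes Y := X T_\rho^\unit(Y)$ from Theorem \ref{MTC}), replacing $T_\rho^\unit$ by $S_\rho^{(l)\unit}$ throughout and checking that everything stays inside $\caF$. First I would verify that $S_\rho^{(l)\unit}(Y)$ is well-defined: since $Y\in\Mor_G(\sigma,\sigma')\subset\caF$ and $S_\rho^{(l)\unit}$ is an endomorphism of $\bl$ that preserves $\caF$ (Lemma \ref{kiri}), we have $S_\rho^{(l)\unit}(Y)\in\caF$, hence $X S_\rho^{(l)\unit}(Y)\in\caF$.

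Next I would check the intertwining property. For any $A\in\caA$, using $S_\rho^{(l)\unit}\pi=\rho$ (so that $S_\rho^{(l)\unit}$ intertwines $\pi$ and $\rho$ in the appropriate sense) and the fact that $Y\sigma(A)=\sigma'(A)Y$, together with the homomorphism property of $S_\rho^{(l)\unit}$, one computes
\begin{align}
\begin{split}
X S_\rho^{(l)\unit}(Y)\cdot (\rho\otimes\sigma)(A)
&= X S_\rho^{(l)\unit}(Y)\, S_\rho^{(l)\unit}S_\sigma^{(l)\unit}\pi(A)
= X S_\rho^{(l)\unit}\lmk Y\, S_\sigma^{(l)\unit}\pi(A)\rmk\\
&= X S_\rho^{(l)\unit}\lmk S_{\sigma'}^{(l)\unit}\pi(A)\, Y\rmk
= X S_\rho^{(l)\unit}S_{\sigma'}^{(l)\unit}\pi(A)\, S_\rho^{(l)\unit}(Y).
\end{split}
\end{align}
Now I would use $X\in\Mor_G(\rho,\rho')$, i.e. $X\rho(B)=\rho'(B)X$ for all $B\in\caA$; the subtlety is that $S_{\sigma'}^{(l)\unit}\pi(A)$ is not of the form $\pi(B)$ but lies in $\bl$, so I must promote the intertwining relation from $\caA$ to $\bl$. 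This is the main obstacle: one needs that $X S_\rho^{(l)\unit}(z) = S_{\rho'}^{(l)\unit}(z) X$ for all $z\in\bl$, which should follow by $\sigma$-weak continuity of both sides on each $\pi(\caA_{\Lambda_l})''$ (property (ii) of Lemma \ref{kiri}) together with the relation on the $\sigma$-weakly dense subalgebra $\pi(\caA_{\mathrm{loc}})$, exactly as in the bulk case in \cite{MTC}. Applying this with $z=S_{\sigma'}^{(l)\unit}\pi(A)$ gives $X S_\rho^{(l)\unit}S_{\sigma'}^{(l)\unit}\pi(A) = S_{\rho'}^{(l)\unit}S_{\sigma'}^{(l)\unit}\pi(A)\, X = (\rho'\otimes\sigma')(A)\, X$, and combined with the display above, $X S_\rho^{(l)\unit}(Y)\cdot(\rho\otimes\sigma)(A) = (\rho'\otimes\sigma')(A)\cdot X S_\rho^{(l)\unit}(Y)$.

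Finally I would note consistency with Theorem \ref{MTC}: by Lemma \ref{slsr} part 2, $S_\rho^{(l)\unit}$ and $T_\rho^\unit$ agree on $\caB_{(0,\varphi_0)}$, and since $Y\in\caF$ with $\caF\subset\caB_{(0,\varphi_0)}$ up to the relevant identification (or at least $S_\rho^{(l)\unit}(Y)$ agrees with $T_\rho^\unit(Y)$ on the overlap), the formula $X S_\rho^{(l)\unit}(Y)$ coincides with the bulk tensor product $X T_\rho^\unit(Y) = X\otimes Y$ of Theorem \ref{MTC}, so the notation is unambiguous. I expect the bulk of the write-up to be the $\sigma$-weak continuity argument extending the intertwiner relation from $\caA$ to $\bl$; everything else is formal manipulation with the homomorphism and intertwining properties already established.
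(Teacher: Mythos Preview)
The paper does not give a proof of this lemma; it is stated and immediately followed by ``Hence we obtain the following proposition''. Your sketch is exactly the routine verification the authors are implicitly invoking (and which parallels the argument in \cite{MTC}, \cite{BA}): use that $S_\rho^{(l)\unit}$ preserves $\caF$ to get $X S_\rho^{(l)\unit}(Y)\in\caF$, check the intertwining relation by the homomorphism property of $S_\rho^{(l)\unit}$, and extend $X\rho(B)=\rho'(B)X$ to $X S_\rho^{(l)\unit}(z)=S_{\rho'}^{(l)\unit}(z)X$ for $z\in\caB_l$ via $\sigma$-weak continuity on each $\pi(\caA_{\Lambda_l})''$ and then norm continuity on $\caB_l$. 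This is correct.

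One small cleanup in your last paragraph: rather than arguing that $\caF\subset\caB_{(0,\varphi_0)}$, it is cleaner to note that $\sigma,\sigma'\in O_{\lhz}$ implies $Y\in\Mor(\sigma,\sigma')\subset\caB_{(0,\varphi_0)}$ (this is Lemma 2.4 of \cite{MTC}, which the paper already cites for this purpose), and then Lemma \ref{slsr} part 2 gives $S_\rho^{(l)\unit}(Y)=T_\rho^\unit(Y)$ directly.
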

Hence we obtain the following proposition
\begin{prop}
Consider setting in subsection \ref{sec:qss}.Assume Assumption \ref{a1} and Assumption \ref{a2}.
The  the full subcategory $\caE_G$ of $\hat C$
with objects $O_G$ is a full sub $C^*$-tensor category of
$\hat C$.
\end{prop}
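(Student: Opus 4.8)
The statement asks us to check three things: that every member of $O_G$ is an object of $\hat C$, that $O_G$ contains the tensor unit, and that $O_G$ is closed under the tensor product $\otimes$ of $\hat C$. The first is already recorded above ($\lz\subset(\loz\cup\ltz)^c\subset\lhz$, $\lz\in\Cbk$, and Lemma 2.4 of \cite{MTC}), so in particular $\rho\otimes\sigma=T_\rho^\unit T_\sigma^\unit\pi$ makes sense for $\rho,\sigma\in O_G$. The second is immediate from $\pi\in O_{\lz}=\Obj C\subset O^{(e)}\subset O_G$. Granting the third, there is nothing else to do: $\caE_G$ is a \emph{full} subcategory of $\hat C$, so its hom-spaces, composition, adjoints and operator norms are simply inherited, hence $\caE_G$ is a $C^*$-category, and the associativity and unit isomorphisms of $\hat C$ restrict unchanged; for $X\in\Mor(\rho,\rho')$, $Y\in\Mor(\sigma,\sigma')$ with all four objects in $O_G$, the tensor $X\otimes Y=X\,T_\rho^\unit(Y)$ is an element of $\Mor(\rho\otimes\sigma,\rho'\otimes\sigma')$, which is a hom-space of $\caE_G$ exactly because $\rho\otimes\sigma,\rho'\otimes\sigma'\in O_G$.

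So the plan is to prove that $O_G$ is $\otimes$-closed. First I would treat the localized case: if $\rho\in O^{(g)}_{(\loz,\ltz)}$ and $\sigma\in O^{(h)}_{(\loz,\ltz)}$, then Lemma \ref{framingo} already gives $\rho\otimes\sigma=S_\rho^{(l)\unit}S_\sigma^{(l)\unit}\pi\in O^{(gh)}_{(\loz,\ltz)}\subset O^{(gh)}\subset O_G$. Then I would reduce the general case to this one. Given $\rho\in O^{(g)}$, pick $V_\rho\in\Vbk{\rho}{(\loz,\ltz)}^{(g)}\subset\caU(\caF)$ and set $\tilde\rho:=\Ad(V_\rho)\circ\rho$; since $\Vbk{\rho}{(\lo,\lt)}^{(g)}\neq\emptyset$ for every $(\lo,\lt)\in PC$, composing each such transporter with $V_\rho^*$ shows $\tilde\rho\in O^{(g)}_{(\loz,\ltz)}$, and $V_\rho$ is a unitary element of $\Mor(\rho,\tilde\rho)$. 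Doing the same for $\sigma\in O^{(h)}$, with some $V_\sigma\in\caU(\caF)$, and applying the localized case, we get $\tilde\rho\otimes\tilde\sigma\in O^{(gh)}\subset O_G$.

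It remains to carry $\tilde\rho\otimes\tilde\sigma\in O_G$ back to $\rho\otimes\sigma$. By bifunctoriality of $\otimes$ in $\hat C$, $V_\rho\otimes V_\sigma$ is a unitary in $\Mor(\rho\otimes\sigma,\tilde\rho\otimes\tilde\sigma)$, and it lies in $\caU(\caF)$: this is the content of the $\caF$-preservation of the endomorphisms $S^{(l)\unit}$ from Lemmas \ref{kiri} and \ref{slsr} (together with $\caF\subset\bl$), exactly as in the morphism-tensor lemma following Lemma \ref{framingo}. Hence for each $(\lo,\lt)\in PC$, choosing $V'\in\Vbk{\tilde\rho\otimes\tilde\sigma}{(\lo,\lt)}^{(gh)}\subset\caU(\caF)$, the product $V'(V_\rho\otimes V_\sigma)\in\caU(\caF)$ satisfies $\Ad(V'(V_\rho\otimes V_\sigma))\circ(\rho\otimes\sigma)=\pi\circ\beta_{gh}^{\lo}$ on $\caA_{\lo\cup\lt}$, so it belongs to $\Vbk{\rho\otimes\sigma}{(\lo,\lt)}^{(gh)}$. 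Therefore $\rho\otimes\sigma\in O^{(gh)}\subset O_G$, which establishes $\otimes$-closure and, together with the two easy points, shows that $\caE_G$ is a full sub $C^*$-tensor category of $\hat C$.

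The relocalization step is routine, so the one place that needs real care is keeping track of membership in $\caF$ rather than merely in $\caU(\caH)$: one must check that the isomorphism $V_\rho\otimes V_\sigma$, and hence each composite $V'(V_\rho\otimes V_\sigma)$, genuinely lies in $\caF$. This is precisely what the ``$(\cdot)(\caF)\subset\caF$'' clauses of Lemmas \ref{kiri}, \ref{slsr} and \ref{framingo} are designed to deliver --- an endomorphism of $\caB(\caH)$ need not preserve $\caF$ a priori --- so no new approximate-Haag-duality estimate is required at this stage, and the difficulty is purely organizational.
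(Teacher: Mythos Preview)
Your argument is correct and, in fact, more careful than the paper's, which simply writes ``Hence we obtain the following proposition'' after the lemmas and gives no separate proof. All of those preparatory lemmas (in particular Lemma \ref{framingo} and the morphism lemma following it) are stated only for representations in $O^{(g)}_{(\loz,\ltz)}$, so the paper is tacitly relying on exactly the relocalization step that you spell out: pass from $\rho\in O^{(g)}$ to $\tilde\rho:=\Ad(V_\rho)\rho\in O^{(g)}_{(\loz,\ltz)}$ via $V_\rho\in\Vbk{\rho}{(\loz,\ltz)}^{(g)}\subset\caU(\caF)$, apply Lemma \ref{framingo}, and transport back. Your observation that the only nontrivial point is keeping the intertwiner in $\caF$ (rather than just in $\caU(\caH)$) is precisely the reason the paper insists on $V^{(g)}_{\rho,(\lo,\lt)}\in\caU(\caF)$ and records $S^{(l)\unit}(\caF)\subset\caF$ in Lemma \ref{kiri}.

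One small sharpening is worth noting. When you argue $V_\rho\otimes V_\sigma\in\caF$, you invoke the $\caF$-preservation of $S^{(l)\unit}_\rho$; but Lemmas \ref{kiri} and \ref{slsr} are stated for $\rho\in O^{(g)}_{(\loz,\ltz)}$, whereas here $\rho\in O^{(g)}$ is not yet localized. The clean fix is to first rewrite
\[
V_\rho\otimes V_\sigma \;=\; V_\rho\,T_\rho^{\unit}(V_\sigma)
\;=\; T_{\tilde\rho}^{\unit}(V_\sigma)\,V_\rho,
\]
using $T_{\tilde\rho}^{\unit}=\Ad(V_\rho)\,T_\rho^{\unit}$ (which follows from the uniqueness characterization of $T^{\unit}$ in Theorem \ref{MTC}, since $\Ad(V_\rho)\,T_\rho^{\unit}$ is $\sigma$-weakly continuous on each $\pi(\caA_\Lambda)''$ and extends $\tilde\rho$). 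Now $\tilde\rho\in O^{(g)}_{(\loz,\ltz)}$, so Lemma \ref{slsr}\,2 gives $T_{\tilde\rho}^{\unit}(V_\sigma)=S_{\tilde\rho}^{(l)\unit}(V_\sigma)\in\caF$ by Lemma \ref{kiri}, and $V_\rho\in\caF$ finishes it. This is only a cosmetic adjustment of your citation; the substance of your proof stands.
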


\subsection{$\Theta(g)$ on $O_G$}
We extend the definition of $\Theta(g)$ to $O_G$. 
\begin{defn}
For each $g\in G$, 
\begin{align}\label{ninnaji}
\begin{split}
&\Theta(g)(\rho):=\Ad R_g \rho\beta_{g^{-1}},\quad \rho\in O_G,\\
&\Theta(g)(S):=\Ad R_g(S),\quad S\in \Mor_G(\rho,\sigma),\quad \rho,\sigma\in O_G.
\end{split}
\end{align}

\end{defn}
\begin{lem}\label{tokage}Consider setting in the subsection \ref{sec:qss}.
If $\rho\in O^{(h)}_{(\loz,\ltz)}$,
then we have
\begin{enumerate}
\item
$\Theta(g)(\rho)\in O^{(ghg^{-1})}_{(\loz,\ltz)}$
\item
$\Theta(g)\lmk \Vbk{\rho}{(\lo,\lt)}^{(h)}\rmk 
=\Vbk{\Theta(g)\lmk \rho\rmk}{(\lo,\lt)}^{(ghg^{-1})}$
\item
$S_{\Theta(g)(\rho)}^{(l),\Theta(g)\lmk V_{{\rho},{(\lo,\lt)}}^{(h)}\rmk}
=\Ad R_g\circ S_{\rho}^{(l), V_{{\rho},{(\lo,\lt)}}^{(h)}}\circ \Ad R_g^*$
\end{enumerate}
\end{lem}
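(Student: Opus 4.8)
The plan is to prove the three statements in the order (2), (1), (3): item (1) falls out of (2), and (3) rests on (1) together with the uniqueness clause of Lemma \ref{kiri}. For (2) I would fix $V:=V_{\rho,(\lo,\lt)}^{(h)}\in\Vbk{\rho}{(\lo,\lt)}^{(h)}$ and show directly that $\Theta(g)(V)=\Ad R_g(V)=R_gVR_g^*$ lies in $\Vbk{\Theta(g)(\rho)}{(\lo,\lt)}^{(ghg^{-1})}$. It belongs to $\caU(\caF)$ because $\Ad R_g$ preserves $\caF$, as noted in the proof of Lemma \ref{kiri}. The one genuinely new ingredient is the identity of automorphisms of $\caA$
\begin{align*}
\beta_h^{\lo}\circ\beta_{g^{-1}}\;=\;\beta_{g^{-1}}\circ\beta_{ghg^{-1}}^{\lo},
\end{align*}
which one verifies site by site: on a site of $\lo$ both sides act by $\Ad U(hg^{-1})$, and on a site outside $\lo$ both act by $\Ad U(g^{-1})$. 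Granting this, for $A\in\caA_{\lo\cup\lt}$ one uses $\beta_{g^{-1}}(A)\in\caA_{\lo\cup\lt}$, the defining intertwining property of $V$, and $\Ad R_g^*\circ\pi=\pi\circ\beta_{g^{-1}}$ to compute $\Ad(R_gVR_g^*)\circ\Theta(g)(\rho)(A)=\Ad R_g(\pi\beta_h^{\lo}\beta_{g^{-1}}(A))=\Ad R_g(\pi\beta_{g^{-1}}\beta_{ghg^{-1}}^{\lo}(A))=\pi\beta_{ghg^{-1}}^{\lo}(A)$, which is exactly the membership condition. This gives the inclusion $\Theta(g)(\Vbk{\rho}{(\lo,\lt)}^{(h)})\subset\Vbk{\Theta(g)(\rho)}{(\lo,\lt)}^{(ghg^{-1})}$; the reverse inclusion follows by applying this statement with $(\rho,h,g)$ replaced by $(\Theta(g)(\rho),ghg^{-1},g^{-1})$ and using that $\Theta$ is a group homomorphism, so $\Theta(g^{-1})\Theta(g)=\id$ and $g^{-1}(ghg^{-1})g=h$.

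Item (1) is then a corollary. That $\Theta(g)(\rho)\in O$, i.e. satisfies the superselection criterion, is proved exactly as in the proof of Lemma \ref{mizuumi}, namely $R_g\Vbk{\rho}{\Lambda}R_g^*\subset\Vbk{\Theta(g)(\rho)}{\Lambda}$ for every cone $\Lambda$. By (2), $\Vbk{\Theta(g)(\rho)}{(\lo,\lt)}^{(ghg^{-1})}\neq\emptyset$ for all $(\lo,\lt)\in PC$ since $\rho\in O^{(h)}$, so $\Theta(g)(\rho)\in O^{(ghg^{-1})}$; and $\unit\in\Vbk{\rho}{(\loz,\ltz)}^{(h)}$ gives, via (2), $\Theta(g)(\unit)=\unit\in\Vbk{\Theta(g)(\rho)}{(\loz,\ltz)}^{(ghg^{-1})}$, hence $\Theta(g)(\rho)\in O^{(ghg^{-1})}_{(\loz,\ltz)}$.

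For (3), by (1) Lemma \ref{kiri} applies to $\Theta(g)(\rho)$ and produces the endomorphism $S_{\Theta(g)(\rho)}^{(l)\unit}$, characterized among $*$-homomorphisms $\caB_l\to\caB(\caH)$ by $\sigma$-weak continuity on each $\pi(\caA_{\Lambda_l})''$ and $S_{\Theta(g)(\rho)}^{(l)\unit}\pi=\Theta(g)(\rho)$. Now $\Ad R_g$ and $\Ad R_g^*$ map $\caB_l$ onto itself and map each $\pi(\caA_{\Lambda_l})''$ onto itself (the on-site $\beta_{g^{-1}}$ fixes $\caA_{\Lambda_l}$) and are $\sigma$-weakly continuous; hence $\Ad R_g\circ S_\rho^{(l)\unit}\circ\Ad R_g^*$ is an endomorphism of $\caB_l$ that is $\sigma$-weakly continuous on each $\pi(\caA_{\Lambda_l})''$, and on $\pi$ it equals $\Ad R_g\,S_\rho^{(l)\unit}\pi\beta_{g^{-1}}=\Ad R_g\,\rho\,\beta_{g^{-1}}=\Theta(g)(\rho)$. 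By the uniqueness in Lemma \ref{kiri}, $S_{\Theta(g)(\rho)}^{(l)\unit}=\Ad R_g\circ S_\rho^{(l)\unit}\circ\Ad R_g^*$. Composing on the left with $\Ad(R_gVR_g^*)$ and commuting it past $\Ad R_g$ yields $S_{\Theta(g)(\rho)}^{(l),\Theta(g)(V)}=\Ad(R_gVR_g^*)\circ S_{\Theta(g)(\rho)}^{(l)\unit}=\Ad R_g\circ\Ad(V)\circ S_\rho^{(l)\unit}\circ\Ad R_g^*=\Ad R_g\circ S_\rho^{(l)V}\circ\Ad R_g^*$, which is the assertion.

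The only real obstacle is getting the conjugation twist $h\mapsto ghg^{-1}$ right, which is encapsulated in the automorphism identity above; everything else is routine bookkeeping with $\sigma$-weak continuity and with the $\Ad R_g$-invariance of $\caF$ and $\caB_l$, all of which is already available in the preceding lemmas.
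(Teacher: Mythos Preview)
Your proof is correct and follows essentially the same approach as the paper: both establish the inclusion in (2) by the direct computation using $\beta_g\beta_h^{\lo}\beta_{g^{-1}}=\beta_{ghg^{-1}}^{\lo}$ (your formulation $\beta_h^{\lo}\beta_{g^{-1}}=\beta_{g^{-1}}\beta_{ghg^{-1}}^{\lo}$ is equivalent), obtain the reverse inclusion by swapping $(\rho,h,g)\mapsto(\Theta(g)(\rho),ghg^{-1},g^{-1})$, and deduce (3) from the uniqueness clause of Lemma~\ref{kiri}. Your presentation is slightly more careful in two places---you explicitly check $\Theta(g)(\rho)\in O$ via the argument of Lemma~\ref{mizuumi}, and you first establish the $\unit$-version of (3) before composing with $\Ad(R_gVR_g^*)$---but these are cosmetic differences, not a different route.
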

\begin{proof}
For any $(\lo,\lt)\in PC$
and $A\in \caA_{\lo\cup \lt}$,
we have
\begin{align}
\begin{split}
&\Theta(g)(\rho)(A)
=\Ad R_g\rho\beta_g^{-1}(A)
=\Ad R_g\Ad \lmk \lmk V_{{\rho},{(\lo,\lt)}}^{(h)}\rmk^*\rmk
\pi\lmk\beta_h^{\lo}\rmk\beta_g^{-1}(A)\\
&=\Ad \lmk R_g \lmk V_{{\rho},{(\lo,\lt)}}^{(h)}\rmk^* R_g^*\rmk
\Ad R_g \pi\lmk\beta_h^{\lo}\rmk\beta_g^{-1}(A)\\
&=\Ad \lmk R_g \lmk V_{{\rho},{(\lo,\lt)}}^{(h)}\rmk^* R_g^*\rmk
\pi\lmk\beta_g\beta_h^{\lo}\beta_g^{-1}(A)\rmk
=\Ad \lmk R_g \lmk V_{{\rho},{(\lo,\lt)}}^{(h)}\rmk^* R_g^*\rmk
\pi\lmk\beta_{ghg^{-1}}^{\lo}(A)\rmk\\
\end{split}
\end{align}
This means {\it 1.} holds because $R_g \lmk V_{{\rho},{(\lo,\lt)}}^{(h)}\rmk^* R_g^*\in\caF$
and that $\Theta(g)\lmk \Vbk{\rho}{(\lo,\lt)}^{(h)}\rmk 
\subset \Vbk{\Theta(g)\lmk \rho\rmk}{(\lo,\lt)}^{(ghg^{-1})}$ holds.
Applying the same argument with $\rho$, $g$, $h$ replaced by
$\Theta(g)\lmk \rho\rmk$, $g^{-1}$, $ghg^{-1}$, we obtain
{\it 2.}.
Because $\Ad R_g$ preserves $\caB_l$, the right hand side of
{\it 3.} is well-defined and
$\sigma$weak-continuous on each $\pi(\caA_{\lm l})''$ with $\lm l\in \Cl$.
With
\begin{align}
\begin{split}
\Ad R_g\circ S_{\rho}^{(l), V_{{\rho},{(\lo,\lt)}}^{(h)}}\circ \Ad R_g^*\pi
=\Ad\lmk \Theta (g)\lmk V_{{\rho},{(\lo,\lt)}}^{(h)} \rmk \rmk\Theta(g)(\rho),
\end{split}
\end{align}
the uniqueness implies {\it 3.}.
\end{proof}
We extend our definition of tensor product and $\Theta(g)$ to general endomorphisms of $\caB_l$.
For any endomorphisms $S_1, S_2$ on $\caB_l$,
we set 
\begin{align}
\begin{split}
&S_1\otimes_{\caB_l} S_2:=S_1S_2,\\
&\Mor_{\caB_l}\lmk S_1,  S_2 \rmk 
:=\left\{ Y\in \caF\mid 
Y S_1(x)
=S_2(x) Y,\quad
x\in \caB_l\right\},\\
&\Theta(g)(X):=R_g X R_g^*
\end{split}
\end{align}
Furthermore, for any endomorphisms $S_1, S_2, S_1', S_2'$ on $\caB_l$
and $X\in \Mor_{\caB_l}\lmk S_1,  S_1' \rmk $, 
$Y\in \Mor_{\caB_l}\lmk S_2,  S_2' \rmk $,
\begin{align}
\begin{split}
X\otimes_{\caB_l} Y:=X S_1(Y).
\end{split}
\end{align}
With this notation, we have the following.
\begin{lem}Consider setting in the subsection \ref{sec:qss}.
For any $g,h\in G$ and $\rho,\sigma\in  O^{(h)}_{(\loz,\ltz)}$
, we have the following.
\begin{enumerate}
\item For any $X\in \Mor_G(\rho,\sigma)$,
 $\Theta(g)(X)\in \Mor_G\lmk \Theta(g)(\rho),\Theta(g)(\sigma))\rmk$.
\item For any $V_{{\rho},{(\lo,\lt)}}^{(h)}\in \Vbk{\rho}{(\lo,\lt)}^{(h)}$ ,
$V_{{\sigma},{(\lo,\lt)}}^{(h)}\in \Vbk{\sigma}{(\lo,\lt)}^{(h)}$,
$Y\in \Mor_{\caB_l}\lmk S_{\rho}^{(l), V_{{\rho},{(\lo,\lt)}}^{(h)}},  S_{\sigma}^{(l), V_{{\sigma},{(\lo,\lt)}}^{(h)}} \rmk $
we have
\begin{align}
\begin{split}
\Theta(g)(Y)\in \Mor_{\caB_l}\lmk S_{\Theta(g)(\rho)}^{(l),\Theta(g)\lmk V_{{\rho},{(\lo,\lt)}}^{(h)}\rmk},
S_{\Theta(g)(\sigma)}^{(l),\Theta(g)\lmk V_{{\sigma},{(\lo,\lt)}}^{(h)}\rmk}\rmk.
\end{split}
\end{align}
\end{enumerate}
\end{lem}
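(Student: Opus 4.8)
The plan is to prove both parts by straightforward conjugation arguments, exactly as in the proof of Lemma~\ref{mizuumi} for $\Theta(g)$ on $C$, using that $\Ad R_g$ preserves $\caF$ and $\caB_l$ together with Lemma~\ref{tokage}.

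For {\it 1.}, I would first invoke Lemma~\ref{tokage}~{\it 1.} to see that $\Theta(g)(\rho),\Theta(g)(\sigma)\in O^{(ghg^{-1})}_{(\loz,\ltz)}$, so that the target $\Mor_G\lmk \Theta(g)(\rho),\Theta(g)(\sigma)\rmk$ is meaningful. Then I would verify the two defining conditions of $\Mor_G$: membership in $\caF$, which is immediate from $\Ad R_g(\caF)=\caF$; and the intertwining relation, which follows by writing $\Theta(g)(\rho)=\Ad R_g\,\rho\,\beta_{g^{-1}}$, $\Theta(g)(\sigma)=\Ad R_g\,\sigma\,\beta_{g^{-1}}$, pulling the common $\Ad R_g$ outside, and applying $X\rho(B)=\sigma(B)X$ at $B=\beta_{g^{-1}}(A)\in\caA$.

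For {\it 2.}, the key input is Lemma~\ref{tokage}~{\it 3.}, which gives $S_{\Theta(g)(\rho)}^{(l),\Theta(g)\lmk V_{{\rho},{(\lo,\lt)}}^{(h)}\rmk}=\Ad R_g\circ S_{\rho}^{(l), V_{{\rho},{(\lo,\lt)}}^{(h)}}\circ\Ad R_g^*$ and the analogous identity for $\sigma$. Given this, I would check $\Theta(g)(Y)\in\caF$ (again from $\Ad R_g(\caF)=\caF$), and then for arbitrary $x\in\caB_l$ substitute $\Ad R_g^*(x)$ — which still lies in $\caB_l$ since $\Ad R_g$ preserves $\caB_l$ — into the intertwining identity satisfied by $Y$, conjugate everything by $R_g$, and read off $\Theta(g)(Y)\,S_{\Theta(g)(\rho)}^{(l),\Theta(g)\lmk V_{{\rho},{(\lo,\lt)}}^{(h)}\rmk}(x)=S_{\Theta(g)(\sigma)}^{(l),\Theta(g)\lmk V_{{\sigma},{(\lo,\lt)}}^{(h)}\rmk}(x)\,\Theta(g)(Y)$.

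There is no genuine obstacle here; the substantive content sits in Lemma~\ref{tokage}, and what remains is bookkeeping of the $R_g$-conjugations and of the invariance of $\caF$ and $\caB_l$ under $\Ad R_g$. The only point deserving a moment of care is the position of $\Ad R_g^*$ inside the $S^{(l)}$-endomorphisms in part~{\it 2.}, so that $Y$'s intertwining property is applied to an element genuinely in $\caB_l$ rather than to a larger algebra.
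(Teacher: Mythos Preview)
Your proposal is correct and follows exactly the approach the paper intends: the paper's own proof reads ``Follows immediately from the definition and Lemma~\ref{tokage},'' and you have simply spelled out those immediate steps---the $\Ad R_g$-invariance of $\caF$ and $\caB_l$ for membership, and the conjugation identities (Lemma~\ref{tokage}~{\it 3.}) for the intertwining relations.
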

\begin{proof}
Follows immediately from the definition and Lemma \ref{tokage}.
\end{proof}

\subsection{Braiding of $\caE_G$}\label{gato}
We introduce a braiding on our $\caA_G$.
The proof is the same as that in \cite{BA}, and  will be given in Appendix \ref{brprf}.

 
\begin{lem}\label{lem38}Consider setting in the subsection \ref{sec:qss}.
Assume Assumption \ref{a1l} and Assumption \ref{a1r}.
Let $g,h\in G$ and $\rho\in O^{(g)}_{(\loz,\ltz)}$, $\sigma\in O^{(h)}_{(\loz,\ltz)}$.
Let $(\lm {1\rho},\lm {2\rho}), 
(\lm {1\sigma},\lm {2\sigma})\in PC$ such that
$\{(\lm {1\rho},\lm {2\rho})\}
\leftarrow 
\{
(\lm {1\sigma},\lm {2\sigma})
\}$.
We set
\begin{align}
\begin{split}
&\lm{ i\rho}(t):=\lm {i\rho}-t\bm e_0,\quad 
\lm {i\sigma}(s):=\lm {i\sigma}+s\bm e_0,
\end{split}
\end{align}
with $i=1,2$, $t,s\ge 0$.
Let
\begin{align}
\begin{split}
\Vrl\rho{\lm {1\rho}(t)}{\lm {2\rho}(t)}g\in \Vbu\rho{\lm {1\rho}(t)}{\lm {2\rho}(t)}g,\quad
\Vrl\sigma{\lm {1\sigma}(s)}{\lm {2\sigma}(s)}h\in \Vbu\sigma{\lm {1\sigma}(s)}{\lm {2\sigma}(s)}h
\end{split}
\end{align}
for $t,s\ge 0$.
Then the limit 
\begin{align}\label{fune}
\begin{split}
&\epsilon_G\lmk \rho,\sigma\rmk\\
&:=\lim_{t,s\to\infty}
\lmk
\Vrl\sigma{\lm {1\sigma}(s)}{\lm {2\sigma}(s)}h\otimes_{\caB_l}
\Theta(h^{-1})\lmk \Vrl\rho{\lm {1\rho}(t)}{\lm {2\rho}(t)}g\rmk\rmk^*
\lmk
\Vrl\rho{\lm {1\rho}(t)}{\lm {2\rho}(t)}g
\otimes_{\caB_l}
\Vrl\sigma{\lm {1\sigma}(s)}{\lm {2\sigma}(s)}h
\rmk\\
&\in \caU(\caF)
\end{split}
\end{align}
exits and it is independent of the choices of
$(\lm {1\rho},\lm {2\rho})$, 
$(\lm {1\sigma},\lm {2\sigma})$,
$\Vrl\rho{\lm {1\rho}(t)}{\lm {2\rho}(t)}g$, $\Vrl\sigma{\lm {1\sigma}(s)}{\lm {2\sigma}(s)}h$.
Here the tensor product is taken for
\begin{align}
\begin{split}
\Vrl\rho{\lm {1\rho}(t)}{\lm {2\rho}(t)}g\in
\Mor_{\caB_l}\lmk
S_\rho^{(l)\unit}, S_{\rho}^{(l)\Vrl\rho{\lm {1\rho}(t)}{\lm {2\rho}(t)}g}
\rmk,\quad
\Vrl\sigma{\lm {1\sigma}(s)}{\lm {2\sigma}(s)}h\in
\Mor_{\caB_l}\lmk
S_\sigma^{(l)\unit}, S_{\rho}^{(l), \Vrl\sigma{\lm {1\sigma}(s)}{\lm {2\sigma}(s)}h}
\rmk
\end{split}
\end{align}

\end{lem}
It can be rewritten as follows.
\begin{lem}\label{tako}Consider setting in the subsection \ref{sec:qss}.
Assume Assumption \ref{a1l} and Assumption \ref{a1r}.
Let $g,h\in G$ and $\rho\in O^{(g)}_{(\loz,\ltz)}$, $\sigma\in O^{(h)}_{(\loz,\ltz)}$.
Let $(\lm{1\rho},\lm{2\rho}), (\lm{1\rho},\lm{2\rho})\in PC$.
Then we have the following.
\begin{description}
\item[(i)]
If $\{(\loz,\ltz)\}\leftarrow \{ (\lm {1\sigma},\lm {2\sigma})\}$,
then
\begin{align}
\begin{split}
\epsilon_G(\rho,\sigma)=
\lim_{s\to\infty} 
\lmk
\Vrl\sigma{\lm {1\sigma}(s)}{\lm {2\sigma}(s)}h
\rmk^*
S_\rho^{(l) \unit}
\lmk
\Vrl\sigma{\lm {1\sigma}(s)}{\lm {2\sigma}(s)}h
\rmk
\end{split}
\end{align}
for $\lm {i\sigma}(s):=\lm {i\sigma}+s\bm e_0$, $i=1,2$, $s\ge 0$ and
any 
$\Vrl\sigma{\lm {1\sigma}(s)}{\lm {2\sigma}(s)}h\in \Vbu\sigma{\lm {1\sigma}(s)}{\lm {2\sigma}(s)}h$.
\item[(ii)]
If $\{(\lm {1\rho},\lm {2\rho})\}\leftarrow \{  (\loz,\ltz)\}$,
then
\begin{align}
\begin{split}
\epsilon_G(\rho,\sigma)=
\lim_{t\to\infty} 
S_\sigma^{(l) \unit}\lmk\Theta(h^{-1})\lmk
\lmk
\Vrl\rho{\lm {1\rho}(t)}{\lm {2\rho}(t)}g
\rmk^*\rmk
\rmk
\Vrl\rho{\lm {1\rho}(t)}{\lm {2\rho}(t)}g
\end{split}
\end{align}
for $\lm {i\sigma}(s):=\lm {i\sigma}+s\bm e_0$, $i=1,2$, $s\ge 0$ and
any 
$\Vrl\rho{\lm {1\rho}(t)}{\lm {2\rho}(t)}g\in \Vbu\rho{\lm {1\rho}(t)}{\lm {2\rho}(t)}g$.

\end{description}
\end{lem}
It satisfies the axioms of braidings:
\begin{lem}\label{intw}Consider setting in the subsection \ref{sec:qss}.
Assume Assumption \ref{a1l} and Assumption \ref{a1r}.
Let $g,h\in G$ and $\rho\in O^{(g)}_{(\loz,\ltz)}$, $\sigma\in O^{(h)}_{(\loz,\ltz)}$.
Then we have 
\begin{align}
\begin{split}
\epsilon_G(\rho,\sigma)\in \Mor_G\lmk \rho\otimes \sigma,
\sigma\otimes\Theta(h^{-1})(\rho)
\rmk.
\end{split}
\end{align}
\end{lem}

\begin{lem}\label{hex}Consider setting in the subsection \ref{sec:qss}.
Assume Assumption \ref{a1l} and Assumption \ref{a1r}.
For any
$g,h,k\in G$ and $\rho\in O^{(g)}_{(\loz,\ltz)}$, $\sigma\in O^{(h)}_{(\loz,\ltz)}$,
$\gamma\in O^{(k)}_{(\loz,\ltz)}$, we have
\begin{align}
\begin{split}
\epsilon_G\lmk\rho\otimes \sigma,\gamma\rmk
=\lmk
\epsilon_G(\rho,\gamma)\otimes  \id_{\Theta(k^{-1})(\sigma)}
\rmk
\lmk
\id_\rho\otimes\epsilon_G( \sigma,\gamma)
\rmk.
\end{split}
\end{align}
\end{lem}
\begin{lem}Consider setting in the subsection \ref{sec:qss}.
Assume Assumption \ref{a1l} and Assumption \ref{a1r}.
For any
$g,h,k\in G$ and $\rho\in O^{(g)}_{(\loz,\ltz)}$, $\sigma\in O^{(h)}_{(\loz,\ltz)}$,
$\gamma\in O^{(k)}_{(\loz,\ltz)}$, we have
\begin{align}
\begin{split}
\epsilon_G\lmk\rho, \sigma\otimes \gamma\rmk
=\lmk
\id_{\sigma} \otimes \epsilon_G\lmk \Theta(h^{-1})(\rho),\gamma\rmk
\rmk
\lmk
\epsilon_G( \rho, \sigma) \otimes \id_\gamma
\rmk.
\end{split}
\end{align}
\end{lem}

\begin{lem}\label{egn}Consider setting in the subsection \ref{sec:qss}.
Assume Assumption \ref{a1l} and Assumption \ref{a1r}.
For any
$g,h\in G$ and $\rho,\rho'\in O^{(g)}_{(\loz,\ltz)}$, $\sigma,\sigma'\in O^{(h)}_{(\loz,\ltz)}$,
and
$X\in \Mor_G(\rho,\rho')$, $Y\in \Mor_G(\sigma,\sigma')$, we have
\begin{align}
\begin{split}
\epsilon_G(\rho',\sigma')\lmk X\otimes Y\rmk=
\lmk Y\otimes\Theta(h^{-1})(X)\rmk\epsilon_G(\rho,\sigma)
\end{split}
\end{align}
\end{lem}
We further have the following:
\begin{lem}\label{tentomushi}Consider setting in the subsection \ref{sec:qss}.
Assume Assumption \ref{a1l} and Assumption \ref{a1r}.
For $\rho\in O^{(g)}_{(\loz,\ltz)}$ and $\sigma\in O^{(h)}_{(\loz,\ltz)}$,
we have
\begin{align}
\begin{split}
\Theta(k)\lmk\epsilon_G(\rho,\sigma)\rmk
=\epsilon_G(\Theta(k)(\rho),\Theta(k)(\sigma))
\end{split}
\end{align}
\end{lem}

\section{Fractionalization}\label{frac}
Consider setting in the subsection \ref{sec:qss}.
In this section we show Theorem \ref{mainthm}

\subsection{Subgroup $H$ of $G$ associated with  anyons}
Consider setting in the subsection \ref{sec:qss}.
We consider the following subgroup associated with anyons.
\begin{defn}\label{ag}Consider setting in the subsection \ref{sec:qss}.
We denote by $H$ the set of all
$g\in G$ which allows the existence of irreducible
$\sigma_g\in O^{(g)}_{(\loz,\ltz)}$
and some $\sigma_g^{-1}\in  O^{(g^{-1})}_{(\loz,\ltz)}$
satisfying
$\sigma_g^{-1}\otimes\sigma_g=\pi$.
\end{defn}
\begin{lem}\label{hnor}Consider setting in the subsection \ref{sec:qss}.
Assume Assumption \ref{a1} Assumption \ref{a1r} and Assumption \ref{a1l}.
The set $H$ is a normal subgroup of $G$.
\end{lem}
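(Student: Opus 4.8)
The plan is to verify the three group axioms for $H$: that $e\in H$, that $H$ is closed under products, and that $H$ is closed under inverses, and then to check normality by conjugation. The identity $e$ lies in $H$ because the vacuum $\pi$ itself is irreducible and lies in $O^{(e)}_{(\loz,\ltz)}$ (taking $\sigma_e=\sigma_e^{-1}=\pi$ and using $\pi\otimes\pi=\pi$); strictly one should check $\pi\in O^{(e)}_{(\loz,\ltz)}$, which holds since $\unit\in\Vbk{\pi}{(\loz,\ltz)}^{(e)}$ trivially. Closure under inverses is immediate from the definition: if $g\in H$ is witnessed by the pair $(\sigma_g,\sigma_g^{-1})$, then $g^{-1}\in H$ is witnessed by $(\sigma_g^{-1},\sigma_g)$, provided one knows $\sigma_g^{-1}$ is irreducible and $\sigma_g\otimes\sigma_g^{-1}=\pi$. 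Irreducibility of $\sigma_g^{-1}$ follows from the standard rigidity argument (in a $C^*$-tensor category, if $\sigma^{-1}\otimes\sigma\cong\pi$ with $\sigma$ irreducible, then $\sigma^{-1}$ is the conjugate object, hence irreducible), and $\sigma_g\otimes\sigma_g^{-1}\cong\pi$ follows because conjugation is symmetric; here one may invoke the braiding $\epsilon_G$ from Lemma \ref{lem38} and Lemma \ref{intw} to relate $\sigma_g\otimes\sigma_g^{-1}$ to $\sigma_g^{-1}\otimes\Theta(g)(\sigma_g)$ or similar, but more cleanly one uses that $\caE_G$ is a rigid $C^*$-tensor category.

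For closure under products, suppose $g,h\in H$ with witnesses $(\sigma_g,\sigma_g^{-1})$ and $(\sigma_h,\sigma_h^{-1})$. By Lemma \ref{framingo}, $\sigma_g\otimes\sigma_h\in O^{(gh)}_{(\loz,\ltz)}$ and $\sigma_h^{-1}\otimes\sigma_g^{-1}\in O^{((gh)^{-1})}_{(\loz,\ltz)}$. The natural candidate inverse of $\sigma_g\otimes\sigma_h$ is $\sigma_h^{-1}\otimes\sigma_g^{-1}$, and one computes
\begin{align*}
(\sigma_h^{-1}\otimes\sigma_g^{-1})\otimes(\sigma_g\otimes\sigma_h)
\cong \sigma_h^{-1}\otimes(\sigma_g^{-1}\otimes\sigma_g)\otimes\sigma_h
\cong \sigma_h^{-1}\otimes\pi\otimes\sigma_h
\cong \sigma_h^{-1}\otimes\sigma_h\cong\pi,
\end{align*}
using associativity of $\otimes$ in $\caE_G$. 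One then decomposes $\sigma_g\otimes\sigma_h$ into irreducibles; since it has an inverse it must itself be irreducible (an invertible object is automatically simple), so it provides the required irreducible element of $O^{(gh)}_{(\loz,\ltz)}$, with $\sigma_h^{-1}\otimes\sigma_g^{-1}$ furnishing the inverse in $O^{((gh)^{-1})}_{(\loz,\ltz)}$. A subtle point is that Definition \ref{ag} demands the inverse lie precisely in $O^{(\cdot)}_{(\loz,\ltz)}$ (with $\unit$ in the relevant intertwiner space), not merely in $O^{(\cdot)}$; this is exactly what Lemma \ref{framingo}'s conclusion ``$\in O^{(gh)}_{(\loz,\ltz)}$'' guarantees, so the bookkeeping goes through.

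For normality, take $g\in H$ with witness $(\sigma_g,\sigma_g^{-1})$ and any $k\in G$. By Lemma \ref{tokage}(1), $\Theta(k)(\sigma_g)\in O^{(kgk^{-1})}_{(\loz,\ltz)}$ and $\Theta(k)(\sigma_g^{-1})\in O^{(kg^{-1}k^{-1})}_{(\loz,\ltz)}$. Since $\Theta(k)$ is a tensor $*$-autofunctor of $\caE_G$ (it preserves $\otimes$, as established in the proof of Lemma \ref{mizuumi} and its $O_G$-extension), it sends $\pi$ to $\pi$ and intertwiners to intertwiners, hence
\begin{align*}
\Theta(k)(\sigma_g^{-1})\otimes\Theta(k)(\sigma_g)\cong\Theta(k)(\sigma_g^{-1}\otimes\sigma_g)=\Theta(k)(\pi)=\pi.
\end{align*}
Moreover $\Theta(k)(\sigma_g)$ is irreducible because $\Theta(k)$ is an equivalence of categories, so it preserves irreducibility (equivalently, $\Mor_G$ is preserved, so the endomorphism algebra stays $\bbC$). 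Thus $kgk^{-1}\in H$, proving $H\trianglelefteq G$. I expect the main obstacle to be the careful handling of the localization bookkeeping — ensuring at each step that tensor products and images under $\Theta(k)$ land in the $(\loz,\ltz)$-based categories $O^{(\cdot)}_{(\loz,\ltz)}$ rather than just $O^{(\cdot)}$, and that ``irreducible with an inverse'' is used correctly (an invertible object in a $C^*$-tensor category is automatically simple) — rather than any deep new estimate, since all the analytic work is already packaged in Lemmas \ref{framingo} and \ref{tokage}.
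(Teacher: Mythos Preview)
Your overall structure---identity, products, inverses, then normality via $\Theta(k)$---matches the paper's, and your normality argument is essentially the same (note that $\Theta(k)(\rho)\otimes\Theta(k)(\sigma)=\Theta(k)(\rho\otimes\sigma)$ holds with genuine equality since the tensorator is $\id_{\caH}$, so your $\cong$ there is actually $=$). However two steps have real gaps.

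First, Definition~\ref{ag} requires $\sigma_g^{-1}\otimes\sigma_g=\pi$ as an \emph{equality} of representations, not merely an isomorphism. For $g^{-1}\in H$ you therefore need some $\tau\in O^{(g)}_{(\loz,\ltz)}$ with $\tau\otimes\sigma_g^{-1}=\pi$ exactly. Your rigidity sketch gives only $\sigma_g\otimes\sigma_g^{-1}\cong\pi$; upgrading to equality by conjugating with the isomorphism requires checking that the twisted object still lies in $O^{(g)}_{(\loz,\ltz)}$, which you do not do (and rigidity of $\caE_G$ is never established in the paper anyway). The paper avoids this by constructing $\tau:=\Ad\bigl(\epsilon_G(\Theta(g^{-1})(\sigma_g),\sigma_g^{-1})\bigr)\circ\Theta(g^{-1})(\sigma_g)$ and verifying directly, using $\epsilon_G\in\caU(\caF)\cap\pi(\caA_{\loz\cup\ltz})'$, that $\tau\in O^{(g)}_{(\loz,\ltz)}$ and $\tau\otimes\sigma_g^{-1}=\pi$. (Irreducibility of $\sigma_g^{-1}$ is in fact immediate without any rigidity: any $X\in\Mor(\sigma_g^{-1},\sigma_g^{-1})$ commutes with $\sigma_g^{-1}\otimes\sigma_g=\pi$, hence $X\in\pi(\caA)'=\bbC\unit$.)

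Second, ``invertible implies simple'' needs two-sided invertibility, but the hypothesis gives only a \emph{left} inverse $(\sigma_h^{-1}\otimes\sigma_g^{-1})\otimes(\sigma_g\otimes\sigma_h)=\pi$; you have not established $\sigma_g\otimes\sigma_g^{-1}=\pi$ or $\sigma_h\otimes\sigma_h^{-1}=\pi$ at this stage. The paper proves irreducibility of $\sigma_g\otimes\sigma_h$ by hand: apply $T_{\sigma_g^{-1}}^\unit$ to any $X\in\Mor(\sigma_g\otimes\sigma_h,\sigma_g\otimes\sigma_h)$ to land in $\Mor(\sigma_h,\sigma_h)=\bbC$, then use the bulk braiding $\epsilon(\sigma_g^{-1},\sigma_g)$ to convert $T_{\sigma_g^{-1}}^\unit T_{\sigma_g}^\unit=\id$ into $T_{\sigma_g}^\unit T_{\sigma_g^{-1}}^\unit$ and conclude $X\in\bbC$. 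In both gaps the braiding is the missing ingredient; you mention it only in passing, whereas the paper uses it essentially at exactly these two points.
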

\begin{proof}
Clearly, the identity $e$ of $G$ belongs to $H$ with
$\sigma_e=\sigma_e^{-1}=\pi$.\\
Next we show if $g,h\in H$, then $gh\in H$.
If $g, h\in H$, then
$\sigma_g\otimes \sigma_h\in O^{(gh)}_{(\loz,\ltz)}$
and $\sigma_h^{-1}\otimes\sigma_g^{-1}\in  O^{((gh)^{-1})}_{(\loz,\ltz)}$.
Furthremore, we have
\begin{align}
\begin{split}
&\lmk \sigma_h^{-1}\otimes \sigma_g^{-1}\rmk \otimes\lmk \sigma_g\otimes\sigma_h\rmk
=\lmk \sigma_h^{-1}\otimes \lmk \lmk  \sigma_g^{-1}\rmk \otimes\lmk \sigma_g\rmk\rmk\rmk\otimes\sigma_h\\
&=\lmk \sigma_h^{-1}\otimes \pi \rmk\otimes\sigma_h
=\pi,
\end{split}
\end{align}
because the associators are identities.
Next we note that $\sigma_g\otimes\sigma_h$ is irreducible:
Let $X\in \Mor\lmk \sigma_g\otimes\sigma_h, \sigma_g\otimes\sigma_h\rmk$.
Then we have
\begin{align}
\begin{split}
XT_{\sigma_g}^\unit T_{\sigma_h}^\unit\pi (A)=X \sigma_g\otimes\sigma_h(A) = \sigma_g\otimes\sigma_h(A)\cdot X
=T_{\sigma_g}^\unit T_{\sigma_h}^\unit\pi (A)\cdot X,\quad A\in \caA.
\end{split}
\end{align}
Acting on this by $T_{\sigma_g^{-1}}^\unit$, we obtain
\begin{align}
T_{\sigma_g^{-1}}^\unit(X)\cdot\sigma_h(A)
=T_{\sigma_g^{-1}}^\unit(X)\cdot T_{\sigma_h}^\unit\pi (A)
=T_{\sigma_h}^\unit\pi (A)\cdot T_{\sigma_g^{-1}}^\unit(X)
=\sigma_h(A)\cdot T_{\sigma_g^{-1}}^\unit(X),\quad A\in \caA.
\end{align}
Because $\sigma_h$ is irreducible, 
this means $T_{\sigma_g^{-1}}^\unit(X)\in\bbC$.
Hence we have
\begin{align}
\begin{split}
\Ad\epsilon\lmk\sigma_{g}^{-1}, \sigma_g\rmk(X)
=\Ad\epsilon\lmk\sigma_{g}^{-1}, \sigma_g\rmk
T_{\sigma_g^{-1}}^\unit T_{\sigma_g}^\unit (X)
=T_{\sigma_g}^\unit T_{\sigma_g^{-1}}^\unit (X)\in\bbC
\end{split}
\end{align}
This means $X\in \bbC$, and $\sigma_g\otimes\sigma_h$ is irreducible.

Next we show $g^{-1}\in H$ if $g\in H$.
Let $g\in H$.
First note that $\sigma_g^{-1}\in  O^{(g^{-1})}_{(\loz,\ltz)}$ is irreducible.
In fact if $X\in\Mor(\sigma_g^{-1},\sigma_g^{-1})$
then 
\begin{align}
\begin{split}
X\pi(A)=X\sigma_g^{-1}\otimes \sigma_g(A)
=X T_{\sigma_g^{-1}}^\unit T_{\sigma_g}^\unit \pi(A)
=T_{\sigma_g^{-1}}^\unit T_{\sigma_g}^\unit \pi(A) X
=\sigma_g^{-1}\otimes\sigma_g(A) X=\pi(A) X,\quad
A\in\caA.
\end{split}
\end{align}
Hence we have $X\in\bbC$ and $\sigma_g^{-1}$ is irreducible.
Next note that 
\begin{align}
\begin{split}
\lmk \Ad\lmk \epsilon_G\lmk  \Theta(g^{-1})(\sigma_g), \sigma_g^{-1}\rmk\rmk \Theta(g^{-1})(\sigma_g)\rmk\otimes\sigma_g^{-1}
=
\lmk \sigma_g^{-1}\otimes\sigma_g\rmk
=\pi.
\end{split}
\end{align}
Because $\epsilon_G\lmk  \Theta(g^{-1})(\sigma_g), \sigma_g^{-1}\rmk\in \caU(\caF)$, we have
$\Ad\lmk \epsilon_G\lmk  \Theta(g^{-1})(\sigma_g), \sigma_g^{-1}\rmk\rmk \Theta(g^{-1})(\sigma_g)\in O^{(g)}$.
Since $\sigma_g\in O^{(g)}_{(\loz,\ltz)}$
and $\sigma_g^{-1}\in  O^{(g^{-1})}_{(\loz,\ltz)}$,
we have 
\begin{align}
\epsilon_G\lmk \Theta(g^{-1})(\sigma_g), \sigma_g^{-1}\rmk
\in \pi\lmk\caA_{\loz\cup\ltz}\rmk'.
\end{align}
Therefore, 
we have
\begin{align}
\begin{split}
\left. \Ad\lmk \epsilon_G\lmk  \Theta(g^{-1})(\sigma_g), \sigma_g^{-1}\rmk\rmk \Theta(g^{-1})(\sigma_g)\right\vert_{\caA_{\loz\cup\ltz}}
=\left.\pi\right\vert_{\caA_{\loz\cup\ltz}}\beta_g^{\loz},
\end{split}
\end{align}
and we have 
$ \Ad\lmk \epsilon_G\lmk  \Theta(g^{-1})(\sigma_g), \sigma_g^{-1}\rmk\rmk \Theta(g^{-1})(\sigma_g)\in O^{(g)}_{(\loz,\ltz)}$.
Hence we get $g^{-1}\in H$.
This completes the proof that $H$ is a group.

Next let $g\in H$ and $h\in G$.
We show that $hgh^{-1}\in H$.
From Lemma \ref{tokage}, we have
\begin{align}
\begin{split}
\Theta(h)\lmk \sigma_g\rmk \in O^{(hgh^{-1})}_{(\loz,\ltz)},\quad
\Theta(h)\lmk\sigma_g^{-1}\rmk\in  O^{(\lmk hgh^{-1}\rmk^{-1})}_{(\loz,\ltz)}
\end{split}
\end{align}
and 
\begin{align}
\begin{split}
\Theta(h)\lmk\sigma_g^{-1}\rmk\otimes\Theta(h)\lmk \sigma_g\rmk
=\Theta(h)\lmk\sigma_g^{-1}\otimes \sigma_g\rmk
=\Theta(h)\pi=\pi.
\end{split}
\end{align}
Clearly $\Theta(h)\lmk \sigma_g\rmk $ is irreducible.
Hence we get $hgh^{-1}\in H$, proving $H$ is a normal subgroup.
\end{proof}
\begin{lem}\label{gengorou}
Consider setting in the subsection \ref{sec:qss}.
Assume Assumption \ref{a1} Assumption \ref{a1r} and Assumption \ref{a1l}.
Suppose $g\in H$ and $\sigma_g\in O^{(g)}_{(\loz,\ltz)}$
and $\sigma_g^{-1}\in  O^{(g^{-1})}_{(\loz,\ltz)}$
satisfying
$\sigma_g^{-1}\otimes\sigma_g=\pi$.
Suppose that $\sigma_g$ is irreducible.
Then
$\sigma_g\otimes\sigma_g^{-1}=\pi$.
\end{lem}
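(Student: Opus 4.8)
The plan is to manufacture a \emph{right} inverse of $\sigma_g$ out of the braiding $\epsilon_G$, conclude from this that $\sigma_g$ is an invertible object (so $\sigma_g\otimes\sigma_g^{-1}\cong\pi$), and then upgrade this isomorphism to an equality using the irreducibility of $\sigma_g$. Throughout I use, exactly as in the proof of Lemma \ref{hnor}, that in $\hat C$ the tensor product is strictly associative with strict tensor unit $\pi$; in particular $T_\pi^\unit=\id_\caH$ and $X\otimes\id_{\tau}=X$ for any morphism $X$ and object $\tau$.

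First I would apply the braiding. By Lemma \ref{intw}, applied with $(\rho,\sigma)=(\sigma_g^{-1},\sigma_g)$ (so the roles of ``$g,h$'' there are played by $g^{-1},g$),
\[
\epsilon_G\lmk\sigma_g^{-1},\sigma_g\rmk\in\Mor_G\lmk\sigma_g^{-1}\otimes\sigma_g,\ \sigma_g\otimes\Theta(g^{-1})(\sigma_g^{-1})\rmk ,
\]
and by Lemma \ref{lem38} this element is a unitary of $\caF$. Since $\sigma_g^{-1}\otimes\sigma_g=\pi$ by hypothesis, conjugation by this unitary gives $\sigma_g\otimes\Theta(g^{-1})(\sigma_g^{-1})=\Ad\lmk\epsilon_G(\sigma_g^{-1},\sigma_g)\rmk\pi\cong\pi$, i.e.\ $\Theta(g^{-1})(\sigma_g^{-1})$ is a one-sided (right) inverse of $\sigma_g$ up to isomorphism. (Note $\sigma_g\otimes\sigma_g^{-1}\in O^{(e)}_{(\loz,\ltz)}$ by Lemma \ref{framingo} and $\Theta(g^{-1})(\sigma_g^{-1})\in O^{(g^{-1})}_{(\loz,\ltz)}$ by Lemma \ref{tokage}, so everything stays inside $O_G$.)

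Next comes the purely monoidal step: in any strict monoidal category a left inverse and a right inverse of an object coincide. Using strictness and $\sigma_g^{-1}\otimes\sigma_g=\pi$,
\[
\sigma_g^{-1}=\sigma_g^{-1}\otimes\pi\cong\sigma_g^{-1}\otimes\lmk\sigma_g\otimes\Theta(g^{-1})(\sigma_g^{-1})\rmk=\lmk\sigma_g^{-1}\otimes\sigma_g\rmk\otimes\Theta(g^{-1})(\sigma_g^{-1})=\Theta(g^{-1})(\sigma_g^{-1}),
\]
whence $\sigma_g\otimes\sigma_g^{-1}\cong\sigma_g\otimes\Theta(g^{-1})(\sigma_g^{-1})\cong\pi$. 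Finally I would promote $\cong$ to $=$: choose a unitary $U\in\Mor(\pi,\sigma_g\otimes\sigma_g^{-1})$; then by functoriality of $\otimes$ on morphisms (Theorem \ref{MTC}) together with strictness and $\sigma_g^{-1}\otimes\sigma_g=\pi$,
\[
U=U\otimes\id_{\sigma_g}\in\Mor\lmk\pi\otimes\sigma_g,\ \lmk\sigma_g\otimes\sigma_g^{-1}\rmk\otimes\sigma_g\rmk=\Mor\lmk\sigma_g,\ \sigma_g\rmk=\bbC\,\id_\caH
\]
by irreducibility of $\sigma_g$, so $U$ is a scalar and $\sigma_g\otimes\sigma_g^{-1}=\Ad(U)\pi=\pi$.

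The only point that needs care is bookkeeping of the $\Theta$-twist produced by the $G$-crossed braiding — the right inverse one obtains is $\Theta(g^{-1})(\sigma_g^{-1})$ rather than $\sigma_g^{-1}$ — but this is absorbed harmlessly by the uniqueness-of-inverse argument, so I do not expect a genuine obstacle. As an alternative to the braiding argument for the isomorphism $\sigma_g\otimes\sigma_g^{-1}\cong\pi$, one could invoke the existence of conjugates and Frobenius reciprocity in this $C^*$-tensor category (the equation $\sigma_g^{-1}\otimes\sigma_g=\pi$ forces $d(\sigma_g)=1$, hence $\overline{\sigma_g}\cong\sigma_g^{-1}$), but the route above is shorter and parallels the preceding lemmas.
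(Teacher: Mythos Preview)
Your proof is correct but takes a different route from the paper's. The paper avoids the $\Theta$-twist entirely by using the \emph{plain} braiding $\epsilon$ of $\hat C$ (Theorem~\ref{MTC}) rather than the crossed braiding $\epsilon_G$: since $\epsilon(\sigma_g^{-1},\sigma_g)\in\Mor(\sigma_g^{-1}\otimes\sigma_g,\,\sigma_g\otimes\sigma_g^{-1})$ with no twist, one needs only show this braiding is a scalar. The paper does this via the hexagon axiom
\[
\unit=\epsilon(\sigma_g^{-1}\otimes\sigma_g,\sigma_g)=\lmk\epsilon(\sigma_g^{-1},\sigma_g)\otimes\id_{\sigma_g}\rmk\,T_{\sigma_g^{-1}}^\unit\lmk\epsilon(\sigma_g,\sigma_g)\rmk
\]
combined with the irreducibility of $\sigma_g\otimes\sigma_g$ (established in the proof of Lemma~\ref{hnor}), which forces $\epsilon(\sigma_g,\sigma_g)\in\bbT$ and hence $\epsilon(\sigma_g^{-1},\sigma_g)\in U(1)$; then $\sigma_g\otimes\sigma_g^{-1}=\Ad(\epsilon(\sigma_g^{-1},\sigma_g))(\sigma_g^{-1}\otimes\sigma_g)=\pi$ on the nose. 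Your route via $\epsilon_G$ stays within the $G$-crossed machinery, at the cost of the uniqueness-of-inverse detour to absorb the twist $\Theta(g^{-1})(\sigma_g^{-1})$ and the final step upgrading $\cong$ to $=$; the paper's route is shorter but leans on having the untwisted braiding of the ambient category $\hat C$ available.
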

\begin{proof}
Note from the proof of Lemma \ref{hnor} that
$\sigma_g\otimes\sigma_g$ is irreducible.
Therefore, 
\begin{align}
\begin{split}
\epsilon(\sigma_g,\sigma_g)\in\Mor\lmk\sigma_g\otimes\sigma_g, \sigma_g\otimes\sigma_g\rmk
=\bbC
\end{split}
\end{align}
and we have $\epsilon(\sigma_g,\sigma_g)\in \bbT$.
We have
\begin{align}
\begin{split}
\unit=\epsilon\lmk\sigma_g^{-1}\otimes\sigma_g,\sigma_g\rmk
=\lmk
\epsilon(\sigma_g^{-1},\sigma_g)\otimes  \id_{\sigma_g}
\rmk
\lmk
\id_{\sigma_g^{-1}}\otimes\epsilon( \sigma_g,\sigma_g)
\rmk\\
=\lmk
\epsilon(\sigma_g^{-1},\sigma_g)\otimes  \id_{\sigma_g}
\rmk
T_{\sigma_g^{-1}}^\unit \lmk\epsilon( \sigma_g,\sigma_g)
\rmk.
\end{split}
\end{align}
Combining this with $\epsilon(\sigma_g,\sigma_g)\in \bbT$, we get $\epsilon(\sigma_g^{-1},\sigma_g)\in U(1)$.
From this, we have
\begin{align}
\begin{split}
\sigma_g\otimes\sigma_g^{-1}
=\Ad\lmk \epsilon_G(\sigma_g^{-1},\sigma_g)\rmk\sigma_g^{-1}\otimes\sigma_g
=\sigma_g^{-1}\otimes\sigma_g=\pi.
\end{split}
\end{align}
\end{proof}

\begin{lem}\label{hituji}
Consider setting in the subsection \ref{sec:qss}.
Assume Assumption \ref{a1} Assumption \ref{a1r} and Assumption \ref{a1l}.
For any $a\in [\Irr O_{\lz}]$ and $h\in H$, we have 
$a^{(h)}=a$, and we may take $W_a^{(h)}$ in (\ref{wa}) as
\begin{align}\label{hotaru}
\begin{split}
W_a^{(h)}=T_{\sigma_{h^{-1}}^{-1}}^{\unit}\lmk
\epsilon_G\lmk \rho_a, \sigma_{h^{-1}}\rmk\lmk \epsilon\lmk \rho_a, \sigma_{h^{-1}}\rmk\rmk^* \rmk
\end{split}
\end{align}
\end{lem}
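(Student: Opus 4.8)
The plan is to produce, for a given $h\in H$, an explicit unitary implementing the equivalence $\Theta(h)(\rho_a)\cong\rho_a$, and to check that the candidate in (\ref{hotaru}) does the job. First I would record the ingredients: since $h\in H$, we have $\sigma_{h^{-1}}\in O^{(h^{-1})}_{(\loz,\ltz)}$, $\sigma_h:=\sigma_{h^{-1}}^{-1}\in O^{(h)}_{(\loz,\ltz)}$ with $\sigma_{h^{-1}}^{-1}\otimes\sigma_{h^{-1}}=\pi$, and by Lemma~\ref{gengorou} also $\sigma_{h^{-1}}\otimes\sigma_{h^{-1}}^{-1}=\pi$. The braiding $\epsilon_G(\rho_a,\sigma_{h^{-1}})$ lies in $\Mor_G(\rho_a\otimes\sigma_{h^{-1}},\ \sigma_{h^{-1}}\otimes\Theta(h)(\rho_a))$ by Lemma~\ref{intw} (taking $g=e$ there and $h^{-1}$ in place of $h$, so $\Theta((h^{-1})^{-1})=\Theta(h)$), while $\epsilon(\rho_a,\sigma_{h^{-1}})$ is the ordinary braiding in $\hat C$, which lies in $\Mor(\rho_a\otimes\sigma_{h^{-1}},\ \sigma_{h^{-1}}\otimes\rho_a)$. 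Hence the product $\epsilon_G(\rho_a,\sigma_{h^{-1}})\,\epsilon(\rho_a,\sigma_{h^{-1}})^*$ is an intertwiner from $\sigma_{h^{-1}}\otimes\rho_a$ to $\sigma_{h^{-1}}\otimes\Theta(h)(\rho_a)$; i.e.\ it lies in $\Mor\bigl(\sigma_{h^{-1}}\otimes\rho_a,\ \sigma_{h^{-1}}\otimes\Theta(h)(\rho_a)\bigr)$.

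Next I would ``cancel the $\sigma_{h^{-1}}$ on the left'' by tensoring on the left with $\sigma_{h^{-1}}^{-1}$ and using associativity (all associators are $\id_\caH$ in this category, as used repeatedly in Lemma~\ref{hnor}). Concretely, applying $T^{\unit}_{\sigma_{h^{-1}}^{-1}}$ to the intertwiner $\epsilon_G(\rho_a,\sigma_{h^{-1}})\,\epsilon(\rho_a,\sigma_{h^{-1}})^*$ produces, via the tensor-product rule $X\otimes Y = X\,T^\unit_{\,\cdot}(Y)$ of Theorem~\ref{MTC}, an element of
\[
\Mor\bigl(\sigma_{h^{-1}}^{-1}\otimes\sigma_{h^{-1}}\otimes\rho_a,\ \sigma_{h^{-1}}^{-1}\otimes\sigma_{h^{-1}}\otimes\Theta(h)(\rho_a)\bigr)
=\Mor\bigl(\rho_a,\ \Theta(h)(\rho_a)\bigr),
\]
using $\sigma_{h^{-1}}^{-1}\otimes\sigma_{h^{-1}}=\pi$ and that $\pi$ is the tensor unit. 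So $W_a^{(h)}:=T^{\unit}_{\sigma_{h^{-1}}^{-1}}\bigl(\epsilon_G(\rho_a,\sigma_{h^{-1}})\,\epsilon(\rho_a,\sigma_{h^{-1}})^*\bigr)^{?}$ —- here I must be careful about which side and whether an adjoint is needed so that it intertwines in the direction $\Theta(h)(\rho_a)=\Ad(W_a^{(h)})\rho_a$ as in (\ref{wa}); matching the formula (\ref{hotaru}) tells me to take the adjoint of the bracket before applying $T^\unit_{\sigma_{h^{-1}}^{-1}}$, or equivalently read off the intertwiner in the opposite direction. Once $W_a^{(h)}\in\Mor(\rho_a,\Theta(h)(\rho_a))$ is established and one checks it is unitary (it is a product of unitaries—$\epsilon_G$ is in $\caU(\caF)$, $\epsilon$ is unitary in $\hat C$, and $T^\unit_{\sigma_{h^{-1}}^{-1}}$ is a $*$-homomorphism—and $\rho_a$ is irreducible so any invertible intertwiner can be rescaled to a unitary), it follows that $\Theta(h)(\rho_a)\cong\rho_a$, hence $a^{(h)}=[\Theta(h)(\rho_a)]=a$. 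This simultaneously proves the first assertion and exhibits the specific choice (\ref{hotaru}).

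The main obstacle I anticipate is bookkeeping rather than conceptual: getting the handedness/direction conventions consistent. One must verify (a) that the relevant instance of Lemma~\ref{intw} gives precisely $\epsilon_G(\rho_a,\sigma_{h^{-1}})\in\Mor_G(\rho_a\otimes\sigma_{h^{-1}},\,\sigma_{h^{-1}}\otimes\Theta(h)(\rho_a))$ with $\Theta(h)$ and not $\Theta(h^{-1})$—which requires reading Lemma~\ref{intw} with the substitution $h\rightsquigarrow h^{-1}$ and noting $(h^{-1})^{-1}=h$; (b) that $\epsilon(\rho_a,\sigma_{h^{-1}})$, the braiding in $\hat C$ restricted appropriately, indeed lands in the stated $\Mor$-space (this is the content of Theorem~\ref{MTC} plus Notation for $\epsilon$); (c) that $\epsilon_G(\rho_a,\sigma_{h^{-1}})$ and $\epsilon(\rho_a,\sigma_{h^{-1}})$ have the \emph{same} source object $\rho_a\otimes\sigma_{h^{-1}}$, so that the composite $\epsilon_G\cdot\epsilon^*$ is well-defined—this uses Lemma~\ref{framingo} identifying $\rho\otimes_G\sigma$ with $\rho\otimes\sigma$; and (d) that applying $T^\unit_{\sigma_{h^{-1}}^{-1}}$ and then using $\sigma_{h^{-1}}^{-1}\otimes\sigma_{h^{-1}}=\pi$ (Definition~\ref{ag}) together with $T^\unit_\pi=\id$ collapses the three-fold tensor product as claimed, with all associators being $\id_\caH$. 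With these compatibilities in hand, unitarity and the intertwining property of $W_a^{(h)}$ follow formally, and $a^{(h)}=a$ is immediate.
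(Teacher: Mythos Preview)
Your plan is essentially the paper's own proof: form $\epsilon_G(\rho_a,\sigma_{h^{-1}})\,\epsilon(\rho_a,\sigma_{h^{-1}})^*\in\Mor(\sigma_{h^{-1}}\otimes\rho_a,\ \sigma_{h^{-1}}\otimes\Theta(h)(\rho_a))$, apply $T^{\unit}_{\sigma_{h^{-1}}^{-1}}$, and use $\sigma_{h^{-1}}^{-1}\otimes\sigma_{h^{-1}}=\pi$ to obtain a unitary in $\Mor(\rho_a,\Theta(h)(\rho_a))$, which is exactly (\ref{wa}). Two small cleanups: no extra adjoint is needed---(\ref{hotaru}) already lands in $\Mor(\rho_a,\Theta(h)(\rho_a))$ and hence gives $\Theta(h)(\rho_a)=\Ad(W_a^{(h)})\rho_a$ directly---and Lemma~\ref{gengorou} is not used here (only the defining relation $\sigma_{h^{-1}}^{-1}\otimes\sigma_{h^{-1}}=\pi$ is), nor is any rescaling, since $\epsilon_G$, $\epsilon$ are unitary and $T^{\unit}$ is a $*$-homomorphism.
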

\begin{proof}
Let $\epsilon\lmk \rho_a, \sigma_{h^{-1}}\rmk\in\Mor\lmk \rho_a\otimes \sigma_{h^{-1}},
\sigma_{h^{-1}}\otimes  \rho_a\rmk$ 
be the braiding given by Theorem \ref{MTC}.
Let $\epsilon_G\lmk \rho_a, \sigma_{h^{-1}}\rmk\in\Mor\lmk \rho_a\otimes \sigma_{h^{-1}},
\sigma_{h^{-1}}\otimes  \Theta(h)\lmk \rho_a\rmk\rmk$ 
be the braiding given by Lemma \ref{intw}.
Then we have
\begin{align}
\begin{split}
&\sigma_{h^{-1}}\otimes  \Theta(h)\lmk \rho_a\rmk\\
&=
\Ad\lmk \epsilon_G\lmk \rho_a, \sigma_{h^{-1}}\rmk\lmk \epsilon\lmk \rho_a, \sigma_{h^{-1}}\rmk\rmk^* \rmk
\lmk \sigma_{h^{-1}}\otimes  \rho_a\rmk.
\end{split}
\end{align}
Applying $\sigma_{h^{-1}}^{-1}\otimes$, we obtain
\begin{align}
\begin{split}
&\Theta(h)\lmk \rho_a\rmk\\
&=
\Ad\lmk T_{\sigma_{h^{-1}}^{-1}}^{\unit}\lmk
\epsilon_G\lmk \rho_a, \sigma_{h^{-1}}\rmk\lmk \epsilon\lmk \rho_a, \sigma_{h^{-1}}\rmk\rmk^* \rmk\rmk
\lmk \sigma_{h^{-1}}^{-1}\otimes \lmk \sigma_{h^{-1}}\otimes  \rho_a\rmk\rmk\\
&=
\Ad\lmk T_{\sigma_{h^{-1}}^{-1}}^{\unit}\lmk
\epsilon_G\lmk \rho_a, \sigma_{h^{-1}}\rmk\lmk \epsilon\lmk \rho_a, \sigma_{h^{-1}}\rmk\rmk^* \rmk\rmk
\lmk  \rho_a\rmk.
\end{split}
\end{align}
Hence we have $a^{(h)}=a$ and we may take $W_a^{(h)}$ as (\ref{hotaru}).
This proves the claim.
\end{proof}

\begin{lem}\label{kamakiri}
Consider setting in the subsection \ref{sec:qss}.
Assume Assumption \ref{a1} Assumption \ref{a1r} and Assumption \ref{a1l}.
Let $a,b, c\in [\Irr O_{\lz}]$
and $S\in \Mor\lmk\rho_a\otimes\rho_b,\rho_c\rmk$.
With the choice of $W$s in Lemma \ref{hituji}, we have
\begin{align}
\begin{split}
W_c^{(g)} S=\Theta(g)(S)
\lmk Y_{ab}^{(g)}\rmk^*,\quad g\in H.
\end{split}
\end{align}
\end{lem}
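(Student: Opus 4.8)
The plan is to track how the intertwiner $S\in\Mor(\rho_a\otimes\rho_b,\rho_c)$ transforms under $\Theta(g)$ for $g\in H$, using the explicit formula (\ref{hotaru}) for $W_a^{(g)}$ from Lemma \ref{hituji}. First I would write out what $\Theta(g)(S)$ intertwines: since $\Theta(g)$ is a tensor functor (Lemma \ref{mizuumi}, and its extension), $\Theta(g)(S)\in\Mor\lmk\Theta(g)(\rho_a)\otimes\Theta(g)(\rho_b),\Theta(g)(\rho_c)\rmk$. By (\ref{wa}) we have $\Theta(g)(\rho_a)=\Ad(W_a^{(g)})\rho_{a^{(g)}}$, and by Lemma \ref{hituji} with $g\in H$ we have $a^{(g)}=a$ (and similarly for $b,c$), so $\Theta(g)(\rho_a)=\Ad(W_a^{(g)})\rho_a$, etc. Thus $\Theta(g)(S)$ intertwines $\Ad(W_a^{(g)})\rho_a\otimes\Ad(W_b^{(g)})\rho_b$ with $\Ad(W_c^{(g)})\rho_c$. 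Unwinding the tensor product via $\rho_a\otimes\rho_b = T_{\rho_a}^{\unit}T_{\rho_b}^{\unit}\pi$ and the formula $X\otimes Y = XT_\rho^\unit(Y)$, one finds $\Ad(W_a^{(g)})\rho_a\otimes\Ad(W_b^{(g)})\rho_b = \Ad\lmk W_a^{(g)}T_{\rho_a}^{\unit}(W_b^{(g)})\rmk\lmk\rho_a\otimes\rho_b\rmk = \Ad\lmk (Y_{ab}^{(g)})^*\rmk\lmk\rho_a\otimes\rho_b\rmk$, using the definition of $Y_{ab}^{(g)}$ from Lemma \ref{washi} (recall $Y_{ab}^{(g)} = \lmk W_a^{(g)}T_{\rho_{a^{(g)}}}^\unit(W_b^{(g)})\rmk^*$, and $a^{(g)}=a$).

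Combining these observations, $\Theta(g)(S)$ is a morphism from $\Ad\lmk (Y_{ab}^{(g)})^*\rmk(\rho_a\otimes\rho_b)$ to $\Ad(W_c^{(g)})\rho_c$; equivalently, $\lmk W_c^{(g)}\rmk^*\Theta(g)(S) Y_{ab}^{(g)}\,^{*}{}^{-1}$... more carefully, $\lmk W_c^{(g)}\rmk^*\,\Theta(g)(S)\,\lmk Y_{ab}^{(g)}\rmk^{*}$ lies in $\Mor(\rho_a\otimes\rho_b,\rho_c)$. The next step is to argue that this element equals a scalar multiple of $S$: since $\rho_c$ is irreducible and $\rho_a\otimes\rho_b$ decomposes into irreducibles, the space $\Mor(\rho_a\otimes\rho_b,\rho_c)$ may be higher-dimensional, so irreducibility alone does not pin down the scalar — instead I would identify the transformed morphism precisely by tracking the specific construction. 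The cleanest route is to observe that $\Theta(g)$ acts by $\Ad R_g$ on all morphisms, and to use naturality together with the explicit form (\ref{hotaru}): $W_a^{(g)}$ is built from braidings $\epsilon_G$ and $\epsilon$, and the key identity in Lemma \ref{egn} (naturality of $\epsilon_G$) and the analogous naturality of $\epsilon$ say exactly that braidings commute past morphisms. So applying Lemma \ref{egn} with the morphism $S$ (and its pieces), one gets that conjugating $S$ by the $W$'s produces precisely $\Theta(g)(S)$ up to the factor $Y_{ab}^{(g)}$, with no extra scalar.

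Concretely, I would proceed as follows. (1) Expand $\Theta(g)(S) = \Ad R_g(S)$ and $\Theta(g)(\rho_a\otimes\rho_b) = \Theta(g)(\rho_a)\otimes\Theta(g)(\rho_b)$ using the tensor-functoriality established in Lemma \ref{mizuumi} and its extension to $O_G$; here the coherence morphism $\varphi_2$ is $\id_{\caH}$, which keeps things clean. (2) Substitute $\Theta(g)(\rho_x) = \Ad(W_x^{(g)})\rho_x$ for $x\in\{a,b,c\}$ (valid since $g\in H\Rightarrow x^{(g)}=x$ by Lemma \ref{hituji}), and rewrite $\Theta(g)(\rho_a)\otimes\Theta(g)(\rho_b)$ as $\Ad\lmk (Y_{ab}^{(g)})^*\rmk(\rho_a\otimes\rho_b)$. (3) Conclude that $X:=\lmk W_c^{(g)}\rmk^*\Theta(g)(S)\lmk Y_{ab}^{(g)}\rmk^*{}^{-1}$ — wait, I must be careful with which side $Y_{ab}^{(g)}$ sits on; the correct statement is that $\lmk W_c^{(g)}\rmk^*\,\Theta(g)(S)\in\Mor\lmk\Ad((Y_{ab}^{(g)})^*)(\rho_a\otimes\rho_b),\rho_c\rmk$, so $\lmk W_c^{(g)}\rmk^*\,\Theta(g)(S)\,\lmk Y_{ab}^{(g)}\rmk^*\in\Mor(\rho_a\otimes\rho_b,\rho_c)$, giving after rearrangement $W_c^{(g)}S' = \Theta(g)(S)\lmk Y_{ab}^{(g)}\rmk^*$ for some $S'\in\Mor(\rho_a\otimes\rho_b,\rho_c)$. (4) Finally identify $S'=S$: this is where I would invoke the explicit braiding formula (\ref{hotaru}) for $W$ together with the naturality lemmas (Lemma \ref{egn} and naturality of $\epsilon$ in $\hat C$), checking that the braiding unitaries cancel against their images under $\Theta(g)$ precisely because $\Theta(k)(\epsilon_G(\cdot,\cdot)) = \epsilon_G(\Theta(k)(\cdot),\Theta(k)(\cdot))$ (Lemma \ref{tentomushi}) and the analogous equivariance of $\epsilon$.

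The main obstacle I anticipate is step (4): showing that the ambiguous scalar/morphism $S'$ is literally $S$ and not merely a scalar multiple of it. Because $\Mor(\rho_a\otimes\rho_b,\rho_c)$ need not be one-dimensional, this cannot be dispatched by irreducibility alone; it requires genuinely using the compatibility of the specific choice (\ref{hotaru}) of $W_a^{(g)}$ with the $\Theta$-equivariance of the braidings, i.e. chasing the naturality squares for $\epsilon_G$ and $\epsilon$ through the definition. Everything else is bookkeeping with the tensor-product formulas $\rho\otimes\sigma = T_\rho^\unit T_\sigma^\unit\pi$, $X\otimes Y = XT_\rho^\unit(Y)$, and the definition of $Y_{ab}^{(g)}$.
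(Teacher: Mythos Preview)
Your steps (1)--(3) are correct and set up the same identity the paper proves; the paper's proof is precisely the direct computation you defer to step (4), starting from $W_c^{(g)}S$ with the explicit formula (\ref{hotaru}) and pushing $S$ through the braidings. So the overall strategy matches.

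The gap is in your plan for step (4). Naturality of $\epsilon$ and $\epsilon_G$ (Lemma \ref{egn}) is indeed the mechanism that converts the braiding with $\rho_c$ into a braiding with $\rho_a\otimes\rho_b$ while spitting out $\Theta(g)(S)$; but after that you must still pass from $\epsilon_G(\rho_a\otimes\rho_b,\sigma_{g^{-1}})$ and $\epsilon(\rho_a\otimes\rho_b,\sigma_{g^{-1}})$ to the separate pieces involving $\rho_a$ and $\rho_b$, and this is exactly the hexagon relation, Lemma \ref{hex} (and its analogue for $\epsilon$). Without it there is no way to produce the factor $T_{\rho_a}^{\unit}(W_b^{(g)})$ that appears in $(Y_{ab}^{(g)})^*$. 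A second ingredient you omit is Lemma \ref{gengorou}: the identity $\sigma_{g^{-1}}\otimes\sigma_{g^{-1}}^{-1}=\pi$ is what allows the cancellation $T_{\sigma_{g^{-1}}^{-1}}^{\unit}T_{\sigma_{g^{-1}}}^{\unit}=\id$, which the paper uses twice to collapse the expression to $\Theta(g)(S)\,W_a^{(g)}T_{\rho_a}^{\unit}(W_b^{(g)})$. By contrast, Lemma \ref{tentomushi} (the $\Theta$-equivariance of $\epsilon_G$) is not used at all in this argument and is a red herring for this lemma. Once you add hexagon and Lemma \ref{gengorou} to your toolkit, step (4) goes through as a straight computation and no appeal to irreducibility or to ``some $S'$'' is needed: the equality $W_c^{(g)}S=\Theta(g)(S)(Y_{ab}^{(g)})^*$ drops out exactly.
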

\begin{proof}
Note from Lemma 2.4 of \cite{MTC} that $S\in\caF$
hence $S\in \Mor_G(\rho_a\otimes\rho_b,\rho_c)$.
Applying naturality of $\epsilon$ to 
\begin{align}
\begin{split}
&W_c^{(g)} S
=T_{\sigma_{g^{-1}}^{-1}}^{\unit}\lmk
\epsilon_G\lmk \rho_c, \sigma_{g^{-1}}\rmk\lmk \epsilon\lmk \rho_c, \sigma_{g^{-1}}\rmk\rmk^* \rmk
S\\
&=T_{\sigma_{g^{-1}}^{-1}}^{\unit}\lmk
\epsilon_G\lmk \rho_c, \sigma_{g^{-1}}\rmk\lmk \epsilon\lmk \rho_c, \sigma_{g^{-1}}\rmk\rmk^* 
T_{\sigma_{g^{-1}}}^{\unit}(S)
\rmk,
\end{split}
\end{align}
we obtain
\begin{align}
\begin{split}
&W_c^{(g)} S
=T_{\sigma_{g^{-1}}^{-1}}^{\unit}
\lmk
\epsilon_G\lmk \rho_c, \sigma_{g^{-1}}\rmk
S
\lmk \epsilon\lmk \rho_a\otimes\rho_b, \sigma_{g^{-1}}\rmk\rmk^* \rmk.
\end{split}
\end{align}
Applying naturality of $\epsilon_G$ Lemma \ref{egn} to this, we obtain
\begin{align}
\begin{split}
&W_c^{(g)} S
=T_{\sigma_{g^{-1}}^{-1}}^{\unit}
\lmk
T_{\sigma_{g^{-1}}}^{\unit}\lmk \Theta(g)(S)\rmk 
\epsilon_G\lmk \rho_a\otimes \rho_b, \sigma_{g^{-1}}\rmk
\lmk \epsilon\lmk \rho_a\otimes\rho_b, \sigma_{g^{-1}}\rmk\rmk^* \rmk.
\end{split}
\end{align}
Substituting Lemma \ref{hex} of $\epsilon_G$ and analogous relation for
$\epsilon$, we obtain
\begin{align}
\begin{split}
&W_c^{(g)} S
=\Theta(g)(S) T_{\sigma_{g^{-1}}^{-1}}^{\unit}\lmk
\lmk
\epsilon_G\lmk \rho_a, \sigma_{g^{-1}}\rmk
T_{\rho_a}^\unit\lmk \epsilon_G\lmk \rho_b,\sigma_{g^{-1}}\rmk\rmk
\rmk
\lmk
\epsilon\lmk \rho_a, \sigma_{g^{-1}}\rmk
T_{\rho_a}^\unit\epsilon \lmk \rho_b,\sigma_{g^{-1}}\rmk
\rmk^*\rmk\\
&=\Theta(g)(S)\cdot  T_{\sigma_{g^{-1}}^{-1}}^{\unit}\lmk
\epsilon_G\lmk \rho_a, \sigma_{g^{-1}}\rmk\rmk
T_{\sigma_{g^{-1}}^{-1}}^{\unit}T_{\rho_a}^\unit\lmk 
\epsilon_G\lmk \rho_b,\sigma_{g^{-1}}\rmk
\epsilon \lmk \rho_b,\sigma_{g^{-1}}\rmk^*\rmk
T_{\sigma_{g^{-1}}^{-1}}^{\unit}
\lmk
\epsilon\lmk \rho_a, \sigma_{g^{-1}}\rmk^*
\rmk.
\end{split}
\end{align}
Now from Lemma \ref{gengorou}, we have
\begin{align}
\begin{split}
\epsilon_G\lmk \rho_b,\sigma_{g^{-1}}\rmk
\epsilon \lmk \rho_b,\sigma_{g^{-1}}\rmk^*
=T_{\sigma_{g^{-1}}}^{\unit}T_{\sigma_{g^{-1}}^{-1}}^{\unit}
\lmk
\epsilon_G\lmk \rho_b,\sigma_{g^{-1}}\rmk
\epsilon \lmk \rho_b,\sigma_{g^{-1}}\rmk^*
\rmk
=T_{\sigma_{g^{-1}}}^{\unit}\lmk W_b^{(g)}\rmk.
\end{split}
\end{align}
Substituting this, we obtain
\begin{align}
\begin{split}
&W_c^{(g)} S
=\Theta(g)(S)\cdot  T_{\sigma_{g^{-1}}^{-1}}^{\unit}\lmk
\epsilon_G\lmk \rho_a, \sigma_{g^{-1}}\rmk\rmk
T_{\sigma_{g^{-1}}^{-1}}^{\unit}T_{\rho_a}^\unit
\lmk 
T_{\sigma_{g^{-1}}}^{\unit}\lmk W_b^{(g)}\rmk
\rmk
T_{\sigma_{g^{-1}}^{-1}}^{\unit}
\lmk
\epsilon\lmk \rho_a, \sigma_{g^{-1}}\rmk^*
\rmk\\
&=
\Theta(g)(S)\cdot  T_{\sigma_{g^{-1}}^{-1}}^{\unit}\lmk
\epsilon_G\lmk \rho_a, \sigma_{g^{-1}}\rmk\rmk
T_{\sigma_{g^{-1}}^{-1}}^{\unit}
\lmk \Ad\lmk \epsilon\lmk \rho_a, \sigma_{g^{-1}}\rmk^*\rmk
T_{\sigma_{g^{-1}}}^{\unit}
T_{\rho_a}^\unit
 \lmk W_b^{(g)}\rmk\rmk
T_{\sigma_{g^{-1}}^{-1}}^{\unit}
\lmk
\epsilon\lmk \rho_a, \sigma_{g^{-1}}\rmk^*
\rmk\\
&=
\Theta(g)(S)\cdot  T_{\sigma_{g^{-1}}^{-1}}^{\unit}\lmk
\lmk \epsilon_G\lmk \rho_a, \sigma_{g^{-1}}\rmk\rmk\rmk
T_{\sigma_{g^{-1}}^{-1}}^{\unit}
\lmk \lmk \epsilon\lmk \rho_a, \sigma_{g^{-1}}\rmk^*\rmk\rmk
\cdot 
T_{\sigma_{g^{-1}}^{-1}}^{\unit}
T_{\sigma_{g^{-1}}}^{\unit}
T_{\rho_a}^\unit
 \lmk W_b^{(g)}\rmk\\
 &=
\Theta(g)(S)\cdot 
 T_{\sigma_{g^{-1}}^{-1}}^{\unit}\lmk
\epsilon_G\lmk \rho_a, \sigma_{g^{-1}}\rmk
\lmk \epsilon\lmk \rho_a, \sigma_{g^{-1}}\rmk^*\rmk
\rmk
\cdot 
T_{\sigma_{g^{-1}}^{-1}}^{\unit}
T_{\sigma_{g^{-1}}}^{\unit}
T_{\rho_a}^\unit
 \lmk W_b^{(g)}\rmk\\
   &=
\Theta(g)(S)\cdot 
 T_{\sigma_{g^{-1}}^{-1}}^{\unit}\lmk
\epsilon_G\lmk \rho_a, \sigma_{g^{-1}}\rmk
\lmk \epsilon\lmk \rho_a, \sigma_{g^{-1}}\rmk^*\rmk
\rmk
\cdot 
T_{\rho_a}^\unit
 \lmk W_b^{(g)}\rmk\\
 &= \Theta(g)(S)\cdot W_a^{(g)}T_{\rho_a}^\unit
 \lmk W_b^{(g)}\rmk
 =\Theta(g)(S)
\lmk Y_{ab}^{(g)}\rmk^*.
 \end{split}
\end{align}

\end{proof}
\begin{thm}\label{arashiyama}
Consider setting in the subsection \ref{sec:qss}.
Assume Assumption \ref{a1} Assumption \ref{a1r} and Assumption \ref{a1l}.
For each $a,b, c\in [\Irr O_{\lz}]$ with $\Mor(\rho_a\otimes\rho_b,\rho_c)\neq 0$,
with the choice of $W$s in Lemma \ref{hituji},
we have 
\begin{align}
\begin{split}
\frac{\omega^{(a)}(g,h) \omega^{(b)}(g,h)}{\omega^{(c)} (g,h)}=1,\quad g,h\in H.
\end{split}
\end{align}

\end{thm}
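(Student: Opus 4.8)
The plan is to derive the statement as a direct corollary of Lemma~\ref{washi} and Lemma~\ref{kamakiri}, with no new analytic input. Fix a nonzero $S\in\Mor(\rho_a\otimes\rho_b,\rho_c)$, which exists by hypothesis. Since $g,h\in H$ and $H$ is a subgroup (Lemma~\ref{hnor}), Lemma~\ref{hituji} gives $a^{(h)}=a$, $b^{(h)}=b$, $c^{(h)}=c$ and the same for $gh$; hence, with the choice of $W$'s from Lemma~\ref{hituji}, the relevant phases simplify to
\begin{align*}
\omega^{(x)}(g,h)=\lmk W_x^{(g)}\rmk^* R_g \lmk W_x^{(h)}\rmk^* R_g^* W_x^{(gh)},\quad x\in\{a,b,c\},
\end{align*}
and Lemma~\ref{washi} reads
\begin{align*}
Y_{ab}^{(g)}\,R_g\,Y_{ab}^{(h)}\,R_g^*\,\lmk Y_{ab}^{(gh)}\rmk^*=\omega^{(a)}(g,h)\,\omega^{(b)}(g,h).
\end{align*}

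The key observation is that, since $W_c^{(g)}$ and $Y_{ab}^{(g)}$ are unitaries, the identity $W_c^{(g)}S=\Theta(g)(S)\lmk Y_{ab}^{(g)}\rmk^*$ of Lemma~\ref{kamakiri} can be rewritten as
\begin{align*}
\lmk W_c^{(g)}\rmk^*\Theta(g)(S)=S\,Y_{ab}^{(g)},
\end{align*}
and similarly with $g$ replaced by $h$ and by $gh$. I would then compute $\omega^{(c)}(g,h)\cdot S$ step by step. Starting from $\omega^{(c)}(g,h)S=\lmk W_c^{(g)}\rmk^* R_g \lmk W_c^{(h)}\rmk^* R_g^* W_c^{(gh)}S$, substitute $W_c^{(gh)}S=\Theta(gh)(S)\lmk Y_{ab}^{(gh)}\rmk^*$; move $R_g^*$ past $\Theta(gh)(S)$ using that $R$ is a homomorphism, so $R_g^*\Theta(gh)(S)=\Theta(h)(S)R_g^*$; apply $\lmk W_c^{(h)}\rmk^*\Theta(h)(S)=S\,Y_{ab}^{(h)}$; use $R_g\,S\,Y_{ab}^{(h)}\,R_g^*=\Theta(g)(S)\cdot R_g Y_{ab}^{(h)} R_g^*$; and finally apply $\lmk W_c^{(g)}\rmk^*\Theta(g)(S)=S\,Y_{ab}^{(g)}$. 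This produces
\begin{align*}
\omega^{(c)}(g,h)\,S=S\,Y_{ab}^{(g)}\,R_g\,Y_{ab}^{(h)}\,R_g^*\,\lmk Y_{ab}^{(gh)}\rmk^*=S\,\omega^{(a)}(g,h)\,\omega^{(b)}(g,h)
\end{align*}
by Lemma~\ref{washi}. Since $\omega^{(a)}(g,h),\omega^{(b)}(g,h),\omega^{(c)}(g,h)\in U(1)$ are scalars and $S\neq 0$, we conclude $\omega^{(c)}(g,h)=\omega^{(a)}(g,h)\omega^{(b)}(g,h)$, i.e.\ the asserted ratio is $1$.

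I do not expect a genuine obstacle here: the theorem is essentially a corollary, and all the hard work (the naturality and hexagon identities for both $\epsilon$ and $\epsilon_G$, and the comparison $\epsilon_G$ versus $\epsilon$ in Lemma~\ref{gengorou}) has already been absorbed into Lemmas~\ref{washi} and \ref{kamakiri}. The only points needing care are checking that $W_c^{(g)}$ and $Y_{ab}^{(g)}$ are indeed unitary, so the rewriting of Lemma~\ref{kamakiri} is legitimate, and keeping the several $\Ad R_g$-conjugations in the correct order during the computation.
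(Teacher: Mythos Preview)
Your proof is correct and follows essentially the same approach as the paper: both arguments derive the identity by applying Lemma~\ref{kamakiri} repeatedly to a nonzero $S\in\Mor(\rho_a\otimes\rho_b,\rho_c)$ and then invoking Lemma~\ref{washi}. The only cosmetic difference is that the paper starts from $\Theta(g)(W_c^{(h)})W_c^{(g)}S$ and works forward to $\overline{\omega^{(a)}\omega^{(b)}}\,W_c^{(gh)}S$, whereas you start from $\omega^{(c)}(g,h)S$ and unwind it; the algebra and the lemmas used are the same.
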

\begin{proof}
Consider $\sigma_{h^{-1}}$ associated to $h\in H$.
Let $S\in \Mor\lmk\rho_a\otimes\rho_b,\rho_c\rmk$ be non-zero.
With the choice of $W$s in Lemma \ref{hituji}, we have
\begin{align}
\begin{split}
&\Theta(g)\lmk W_c^{(h)}\rmk W_c^{(g)} S
=\Theta(g)\lmk W_c^{(h)}\rmk \Theta(g)(S)
\lmk Y_{ab}^{(g)}\rmk^*
=\Theta(g)\lmk W_c^{(h)}S \rmk 
\lmk Y_{ab}^{(g)}\rmk^*\\
&=\Theta(g)\lmk \Theta(h)(S)
\lmk Y_{ab}^{(h)}\rmk^*\rmk \lmk Y_{ab}^{(g)}\rmk^*
=\Theta(gh)\lmk S\rmk \Theta(g)\lmk \lmk Y_{ab}^{(h)}\rmk^*\rmk \lmk Y_{ab}^{(g)}\rmk^*
\end{split}
\end{align}
using Lemma \ref{kamakiri}.
By Lemma \ref{washi}, Lemma \ref{kamakiri}
we have
\begin{align}
\begin{split}
&\Theta(g)\lmk W_c^{(h)}\rmk W_c^{(g)} S
=\Theta(gh)\lmk S\rmk \Theta(g)\lmk \lmk Y_{ab}^{(h)}\rmk^*\rmk \lmk Y_{ab}^{(g)}\rmk^*\\
&=\overline{\omega^{(a)}(g,h)\omega^{(b)}(g,h)} \Theta(gh)\lmk S\rmk\lmk Y_{a,b}^{(gh)}\rmk^*
=\overline{\omega^{(a)}(g,h)\omega^{(b)}(g,h)} W_c^{(gh)}S,\quad g,h\in H.
\end{split}
\end{align}
Hence we have
\begin{align}
\begin{split}
S=\omega^{(c)}(g,h)\overline{\omega^{(a)}(g,h)\omega^{(b)}(g,h)} S.
\end{split}
\end{align}
Because $S$ is not zero, this proves the Lemma.
\end{proof}

\subsection{Proof of Theorem \ref{mainthm}}
From Lemma \ref{fushimi}, Lemma \ref{hituji} and Theorem \ref{arashiyama},
in order to show  Theorem \ref{mainthm}, it suffices to show that
$H=G$ under Assumption \ref{a5}.
Namely, the following Lemma completes the proof.
\begin{lem}
Consider setting in the subsection \ref{sec:qss}.
Assume Assumption \ref{a1} Assumption \ref{a1r} Assumption \ref{a1l}, and Assumption \ref{a5}.
For any $\ld\in \Cl$, $\varepsilon>0$ such that 
$\lmk \partial \ld\rmk_\varepsilon\subset \lmk \loz\cup\ltz\rmk^c$,
$\loz\subset \ld\subset \lmk \ltz\rmk^c$
we have
\begin{align}
\begin{split}
\sigma_g:=\pi\beta_g^\ld\gamma_{g,\ld,\varepsilon}^{-1}
\in O^{(g)}_{(\loz,\ltz)},\quad
\sigma_g^{-1}:=\pi\gamma_{g,\ld,\varepsilon}{\beta_g^\ld}^{-1}
\in O^{(g^{-1})}_{(\loz,\ltz)},\quad g\in G
\end{split}
\end{align}
and $\sigma_g$ is irreducible for all $g\in G$.
Here we used notation in Assumption \ref{a5}.
Furthermore, we have
\begin{align}
\begin{split}
\sigma_g^{-1}\otimes \sigma_g=\pi,\quad g\in G.
\end{split}
\end{align}
\end{lem}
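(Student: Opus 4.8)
The plan is: first observe that $\sigma_g$ and $\sigma_g^{-1}$ are obtained from $\pi$ by conjugation with a single unitary on $\caH$; deduce from this the superselection criterion and irreducibility; then exhibit the $g$-twisted intertwiners, with care that they actually lie in $\caF$; and finally compute the fusion directly.

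\textbf{$\sigma_g,\sigma_g^{-1}$ as unitary perturbations of $\pi$.} From $\gamma_{g,\ld}=\Ad(u_{g,\ld,\varepsilon})\gamma_{g,\ld,\varepsilon}$ one gets $\gamma_{g,\ld,\varepsilon}^{-1}=\Ad\lmk\gamma_{g,\ld}^{-1}(u_{g,\ld,\varepsilon})\rmk\circ\gamma_{g,\ld}^{-1}$ and $\gamma_{g,\ld,\varepsilon}=\Ad(u_{g,\ld,\varepsilon}^{-1})\circ\gamma_{g,\ld}$. Since $\omega\beta_g^\ld\gamma_{g,\ld}^{-1}=\omega$ by (\ref{takao}), uniqueness of the GNS triple shows that $v_{g,\ld}$ of Assumption \ref{a5} satisfies $\Ad(v_{g,\ld})\circ\pi=\pi\beta_g^\ld\gamma_{g,\ld}^{-1}$, hence also $\Ad(v_{g,\ld}^*)\circ\pi=\pi\gamma_{g,\ld}\beta_{g^{-1}}^\ld$. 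Substituting the displays, $\sigma_g=\Ad(w_g)\pi$ and $\sigma_g^{-1}=\Ad(w_g')\pi$ with $w_g=\pi(c_g)v_{g,\ld}$, $w_g'=\pi(c_g')v_{g,\ld}^*$ for suitable $c_g,c_g'\in\caU(\caA)$. Taking $w_g^*$, resp.\ $(w_g')^*$, as intertwiner for every cone shows $\sigma_g,\sigma_g^{-1}\in O$; and $\pi$ is irreducible (as $\omega$ is pure), so $\sigma_g$ and $\sigma_g^{-1}$ are irreducible too.

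\textbf{Membership in $O^{(g)}_{(\loz,\ltz)}$ and $O^{(g^{-1})}_{(\loz,\ltz)}$.} Since $\gamma_{g,\ld,\varepsilon}$ is supported in $\lmk\partial\ld\rmk_\varepsilon\subset\lmk\loz\cup\ltz\rmk^c$ it is trivial on $\caA_{\loz\cup\ltz}$, and $\loz\subset\ld\subset\lmk\ltz\rmk^c$ makes $\beta_g^\ld$ and $\beta_{g^{-1}}^\ld$ preserve $\caA_{\loz\cup\ltz}$ and agree there with $\beta_g^{\loz}$, $\beta_{g^{-1}}^{\loz}$. Hence $\sigma_g|_{\caA_{\loz\cup\ltz}}=\pi\beta_g^{\loz}|_{\caA_{\loz\cup\ltz}}$ and $\sigma_g^{-1}|_{\caA_{\loz\cup\ltz}}=\pi\beta_{g^{-1}}^{\loz}|_{\caA_{\loz\cup\ltz}}$, i.e.\ $\unit\in\Vbk{\sigma_g}{(\loz,\ltz)}^{(g)}$ and $\unit\in\Vbk{\sigma_g^{-1}}{(\loz,\ltz)}^{(g^{-1})}$. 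For an arbitrary $(\lo,\lt)\in PC$ I would use the angular-separation property in the definition of $PC$ to choose $\ld''\in\Cl$ (a slight fattening of $\lo$) and small $\varepsilon''>0$ with $\lo\subset\ld''$, $\ld''\cap\lt=\emptyset$, $\lmk\partial\ld''\rmk_{\varepsilon''}\cap(\lo\cup\lt)=\emptyset$ and $\varepsilon''<\frac12\min\{|\arg\ld''|,2\pi-|\arg\ld''|\}$, run the previous step with $\ld$ replaced by $\ld''$ to get $w_{g,\ld''},w_{g,\ld''}'$, and set $V:=w_{g,\ld''}w_g^*$, $V':=w_{g,\ld''}'(w_g')^*$; the same computation as above gives $\Ad(V)\sigma_g|_{\caA_{\lo\cup\lt}}=\pi\beta_g^{\lo}|_{\caA_{\lo\cup\lt}}$ and $\Ad(V')\sigma_g^{-1}|_{\caA_{\lo\cup\lt}}=\pi\beta_{g^{-1}}^{\lo}|_{\caA_{\lo\cup\lt}}$. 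That $V,V'\in\caU(\caF)$ is the crucial point: $V$ is the product of $v_{g,\ld''}v_{g,\ld}^*$ with two factors in $\pi(\caA)$, and $v_{g,\ld''}v_{g,\ld}^*\in v_{g,\ld''}\caF v_{g,\ld}^*\subset\caF$ by Assumption \ref{a5}(ii) (applied to $\unit\in\caF$); similarly for $V'$ via $v_{g,\ld''}^*v_{g,\ld}\in v_{g,\ld''}^*\caF v_{g,\ld}\subset\caF$; and $\pi(\caA)\subset\caF$ because every finite subset of $\bbZ^2$ lies in some $\lmk\lo\cup\lt\rmk^c$ with $(\lo,\lt)\in PC$. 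Thus $\sigma_g\in O^{(g)}_{(\loz,\ltz)}$ and $\sigma_g^{-1}\in O^{(g^{-1})}_{(\loz,\ltz)}$.

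\textbf{Fusion.} By Lemma \ref{framingo}, $\sigma_g^{-1}\otimes\sigma_g=S_{\sigma_g^{-1}}^{(l)\unit}S_{\sigma_g}^{(l)\unit}\pi$; using $S_\rho^{(l)\unit}\circ\pi=\rho$ on $\caA$ (Lemma \ref{kiri}) and $\pi(\caA)\subset\caB_l$, for $A\in\caA$
\begin{align*}
\lmk\sigma_g^{-1}\otimes\sigma_g\rmk(A)
&=S_{\sigma_g^{-1}}^{(l)\unit}\lmk S_{\sigma_g}^{(l)\unit}\pi(A)\rmk
=S_{\sigma_g^{-1}}^{(l)\unit}\pi\lmk\beta_g^\ld\gamma_{g,\ld,\varepsilon}^{-1}(A)\rmk\\
&=\sigma_g^{-1}\lmk\beta_g^\ld\gamma_{g,\ld,\varepsilon}^{-1}(A)\rmk
=\pi\,\gamma_{g,\ld,\varepsilon}\,\beta_{g^{-1}}^\ld\,\beta_g^\ld\,\gamma_{g,\ld,\varepsilon}^{-1}(A)=\pi(A),
\end{align*}
whence $\sigma_g^{-1}\otimes\sigma_g=\pi$.

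The real obstacle is the second block: the geometric choice of $\ld''$ for a general pair in $PC$, and especially the verification that the twisted intertwiners $V,V'$ belong to $\caF$ and not merely to $\caU(\caH)$ — this is exactly where Assumption \ref{a5}(ii), together with $\pi(\caA)\subset\caF$, is indispensable. The first and third blocks are essentially formal once one notices that $\sigma_g$ and $\sigma_g^{-1}$ are conjugates of $\pi$ by a single unitary.
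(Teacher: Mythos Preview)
Your proof is correct and follows essentially the same route as the paper's: write $\sigma_g$ (and $\sigma_g^{-1}$) as $\Ad(w_g)\pi$ with $w_g=\pi(c_g)v_{g,\ld}$ to get the superselection criterion and irreducibility; for a general $(\lo,\lt)\in PC$ pick an auxiliary left cone $\ld''$ adapted to $(\lo,\lt)$, compare with the analogous $w_{g,\ld''}$, and use Assumption~\ref{a5}(ii) together with $\pi(\caA)\subset\caF$ to see that the resulting intertwiner lies in $\caF$; finally compute the fusion directly. The paper organizes the intertwiner as $v_{g,\tilde\ld}v_{g,\ld}^*$ times two $\pi(\caA)$-factors and uses $T_\rho^{\unit}$ rather than $S_\rho^{(l)\unit}$ for the fusion, but by Lemma~\ref{slsr} and Lemma~\ref{framingo} these are the same computation.
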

\begin{proof}
Because 
\begin{align}
\sigma_g=\pi\beta_g^\ld\gamma_{g,\ld,\varepsilon}^{-1}
=\pi\beta_g^\ld\gamma_{g,\ld}^{-1}\Ad\lmk u_{g\ld\varepsilon}\rmk
=\Ad\lmk v_{g\ld } \pi\lmk   u_{g\ld\varepsilon}\rmk \rmk\pi,
\end{align}
we have $\sigma_g\in \caO$.
In fact this is a trivial anyon.

Now we see that $\sigma_g:=\pi\beta_g^\ld\gamma_{g,\ld,\varepsilon}^{-1}
\in O^{(g)}_{(\loz,\ltz)}$.
For any $(\lm 1,\lm 2)\in PC$, choose 
$\tilde \ld\in \Cl$, $\tilde \varepsilon>0$ such that 
$\lmk \partial \tilde \ld\rmk_{\tilde \varepsilon}
\subset \lmk \lo\cup\lt\rmk^c$,
$\lo\subset \tilde \ld\subset \lmk \lt\rmk^c$.
Then $\tilde \sigma_g:=\pi\beta_g^{\tilde \ld}\gamma_{g,{\tilde \ld},\varepsilon}^{-1}$
satisfies
\begin{align}\label{kurama}
\begin{split}
\tilde \sigma_g\vert_{\caA_{\lo\cup\lt}}=
\pi\beta_g^{\tilde \ld}\gamma_{g,{\tilde \ld},\varepsilon}^{-1}\vert_{\caA_{\lo\cup\lt}}
=\pi\beta_g^{\tilde \ld}\vert_{\caA_{\lo\cup\lt}}
=\pi\beta_g^{ \ld_1}\vert_{\caA_{\lo\cup\lt}}.
\end{split}
\end{align}
Note in case $(\lo,\lt)=(\loz,\ltz)$, we may choose $\tilde\ld=\ld$
and obtain $\sigma_g\vert_{\caA_{\loz\cup\ltz}}=\pi\beta_g^{\ld}\vert_{\caA_{\loz\cup\ltz}}$.
With the notation in Assumption \ref{a5}, we have
\begin{align}
\begin{split}
&\tilde \sigma_g=\pi\beta_g^{\tilde \ld}\gamma_{g,{\tilde \ld},{\tilde\varepsilon}}^{-1}
=\pi\beta_g^{\tilde \ld}\gamma_{g,{\tilde \ld}}^{-1}\Ad u_{g\tilde\ld{\tilde\varepsilon}}
=\Ad v_{g\tilde\ld}\pi\Ad u_{g\tilde\ld{\tilde\varepsilon}}
=\Ad\lmk v_{g\tilde\ld} v_{g\ld}^*\rmk
\pi\beta_g^{\ld}\gamma_{g,{\ld}}^{-1}\Ad u_{g\tilde\ld{\tilde\varepsilon}}\\
&=\Ad\lmk v_{g\tilde\ld} v_{g\ld}^*
\pi\beta_g^{\ld}\gamma_{g,{\ld}}^{-1}\lmk u_{g\tilde\ld{\tilde\varepsilon}}\rmk
\rmk
\pi\beta_g^{\ld}\gamma_{g,{\ld}}^{-1}
=\Ad\lmk v_{g\tilde\ld} v_{g\ld}^*
\pi\beta_g^{\ld}\gamma_{g,{\ld}}^{-1}\lmk u_{g\tilde\ld\tilde\varepsilon}\rmk
\rmk
\pi\beta_g^{\ld}\gamma_{g,{\ld},\varepsilon}^{-1}\Ad u_{g\ld\varepsilon}^*\\
&=\Ad\lmk
 v_{g\tilde\ld} v_{g\ld}^*
 \pi\beta_g^{\ld}\gamma_{g,{\ld}}^{-1}\lmk u_{g\tilde\ld\tilde\varepsilon}\rmk
 \pi\beta_g^{\ld}\gamma_{g,{\ld},\varepsilon}^{-1}\lmk u_{g\ld\varepsilon}^*\rmk
\rmk\pi\beta_g^{\ld}\gamma_{g,{\ld},\varepsilon}^{-1}\\
&=:\Ad \lmk V_{\sigma_g(\lo, \lt)}^{g}\rmk
\sigma_g,
\end{split}
\end{align}
where 
\begin{align}
\begin{split}
 V_{\sigma_g(\lo, \lt)}^{g}
 := v_{g\tilde\ld} v_{g\ld}^*
 \pi\lmk \beta_g^{\ld}\gamma_{g,{\ld}}^{-1}\lmk u_{g\tilde\ld\tilde\varepsilon}\rmk\rmk
 \pi\lmk \beta_g^{\ld}\gamma_{g,{\ld},\varepsilon}^{-1}\lmk u_{g\ld\varepsilon}^*\rmk\rmk\in\caU(\caF),
\end{split}
\end{align}
by the assumption.
Combining this with (\ref{kurama}) gives
$\sigma_g
\in O^{(g)}_{(\loz,\ltz)}$.

Next we see $\sigma_g^{-1}:=\pi\gamma_{g,\ld,\varepsilon}{\beta_g^\ld}^{-1}
\in O^{(g^{-1})}_{(\loz,\ltz)}$.
Note that
\begin{align}\label{kiyomizu}
\begin{split}
\sigma_g^{-1}=\pi\gamma_{g,\ld,\varepsilon}{\beta_g^\ld}^{-1}=\Ad 
\pi\lmk u_{g\ld\varepsilon}^*\rmk v_{g\ld}^* \pi,
\end{split}
\end{align}
hence $\sigma_g^{-1}\in \caO$.

For any $(\lm 1,\lm 2)\in PC$, we have
\begin{align}
\begin{split}
&\pi\beta_g^{\lo}\vert_{\caA_{\lo\cup\lt}}
=\Ad V_{\sigma_g (\lo,\lt)}^{(g)}\sigma_g\vert_{\caA_{\lo\cup\lt}}
=\Ad V_{\sigma_g (\lo,\lt)}^{(g)}\pi\beta_g^\ld \gamma_{g\ld\varepsilon}^{-1}\vert_{\caA_{\lo\cup\lt}}\\
&=\Ad V_{\sigma_g (\lo,\lt)}^{(g)}\pi\beta_g^\ld \gamma_{g\ld}^{-1}\Ad u_{g\ld\varepsilon}\vert_{\caA_{\lo\cup\lt}}\\
&=\Ad \lmk V_{\sigma_g (\lo,\lt)}^{(g)} 
\pi\beta_g^\ld \gamma_{g\ld}^{-1}\lmk u_{g\ld\varepsilon}\rmk
\rmk\pi\beta_g^\ld \gamma_{g\ld}^{-1}\vert_{\caA_{\lo\cup\lt}}\\
&=\Ad\lmk
V_{\sigma_g (\lo,\lt)}^{(g)} 
\pi\beta_g^\ld \gamma_{g\ld}^{-1}\lmk u_{g\ld\varepsilon}\rmk v_{g\ld}
\rmk\pi \vert_{\caA_{\lo\cup\lt}}.
\end{split}
\end{align}
Hence from (\ref{kiyomizu}) we have
\begin{align}
\begin{split}
&\sigma_g^{-1}\vert_{\caA_{\lo\cup\lt}}
=\Ad \pi\lmk u_{g\ld\varepsilon}^*\rmk  v_{g\ld}^*\pi\vert_{\caA_{\lo\cup\lt}}
=\Ad\lmk  \pi\lmk u_{g\ld\varepsilon}^*\rmk v_{g\ld}^*
V_{\sigma_g (\lo,\lt)}^{(g)} 
\pi\beta_g^\ld \gamma_{g\ld}^{-1}\lmk u_{g\ld\varepsilon}\rmk v_{g\ld}
\rmk\pi {\beta_{g^{-1}}^{\lo}}\vert_{\caA_{\lo\cup\lt}}.
\end{split}
\end{align}
Because
\begin{align}
\begin{split}
 \pi\lmk u_{g\ld\varepsilon}^*\rmk v_{g\ld}^*
V_{\sigma_g (\lo,\lt)}^{(g)} 
\pi\beta_g^\ld \gamma_{g\ld}^{-1}\lmk u_{g\ld\varepsilon}\rmk v_{g\ld}\in \caU(\caF)
\end{split}
\end{align}
by the assumption,
we get
$\sigma_g^{-1}\in O^{(g^{-1})}_{(\loz,\ltz)}$.
Clearly, 
$\sigma_g$ is irreducible.
Furthermore, we have
\begin{align}
\begin{split}
\sigma_g^{-1}\otimes \sigma_g=
T_{\sigma_g^{-1}}^\unit T_{\sigma_g}^\unit \pi
=T_{\sigma_g^{-1}}^\unit \sigma_g
=T_{\sigma_g^{-1}}^\unit\pi\beta_g^\ld\gamma_{g,\ld,\varepsilon}^{-1}
=\pi \gamma_{g,\ld,\varepsilon}{\beta_g^\ld}^{-1}
\beta_g^\ld\gamma_{g,\ld,\varepsilon}^{-1}
=\pi.
\end{split}
\end{align}
\end{proof}
\section{A sufficient condition of Assumption \ref{a5}}\label{suf}
In order to think of examples, it is convinient to have the following sufficient condition of Assumption \ref{a5}.
\begin{assum}\label{a7}
\begin{enumerate}
\item There exits $l\in\bbN$ satisfying the following :
for any cone $\ld$, $g\in G$,
and $N\in\bbN$, there exists a unitary 
$W_{ \partial\ld,N}^g\in\caA_{\lmk \partial \lmk \ld\cap[-N,N]^2\rmk\rmk^{(l)}}$
such that 
\begin{align}
\begin{split}
\pi\lmk \bigotimes_{x\in \ld\cap[-N,N]^2} U_g\rmk^*\Omega
=\pi\lmk W_{\partial \ld,N}^g\rmk^*\Omega.
\end{split}
\end{align}
\item
For any cones $\ld\subset \ld'$, $g\in G$, $N\in\bbN$,
we have
\begin{align}
\begin{split}
\lmk W_{\partial \ld,N}^g\rmk^* W_{\partial \ld',N}^g, \quad \lmk W_{\partial \ld,N}^g\rmk
 \caA_{\ld'\setminus \ld\cap [-N,N]^2}.
\lmk  W_{\partial \ld',N}^g\rmk^*
\in \caA_{
\lmk \lmk
\ld'\setminus \ld\cap [-N,N]^2
\rmk\rmk^{(l)}
}.
\end{split}
\end{align}
\item
For any cone $\ld$, $g\in G$, and $A\in\caA_{\rm loc}$,
\begin{align}
\begin{split}
\Ad\lmk W_{\partial \ld,N}^g \rmk(A)
=\Ad \lmk W_{\partial \ld,M}^g \rmk (A),\quad
\Ad \lmk {W_{\partial \ld,N}^g}^*\rmk (A)=\Ad \lmk{W_{\partial \ld,M}^g}^* \rmk(A)
\end{split}
\end{align}
for $M, N$ large enough.
\item 
For any $\ld$ cone, $0<\varepsilon<\frac 12\min \{|\arg\ld|, 2\pi-|\arg\ld|\}$,
and $g\in G$,
there exists a unitary $u_{g\ld\varepsilon}\in \caU(\caA)$
such that $u_{g\ld\varepsilon}^*W_{\partial\ld, N}^g\in \caA_{(\partial\ld)_\varepsilon\cup
\lmk\partial [-N,N]^2\rmk^{(l)}}$, for all $N\in\bbN$.
\end{enumerate}
\end{assum}
\begin{lem}\label{amayadori}
Consider the setting in subsection \ref{sec:qss}.
Then Assumption \ref{a7} implies
Assumption \ref{a5}.
\end{lem}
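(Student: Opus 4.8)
The goal is to manufacture, out of the data in Assumption~\ref{a7}, the automorphisms $\gamma_{g,\ld}$, $\gamma_{g,\ld,\varepsilon}$, the unitaries $u_{g,\ld,\varepsilon}$, and the Hilbert-space unitaries $v_{g,\ld}$ required by Assumption~\ref{a5}, and to verify the two relations (\ref{takao}) and $\gamma_{g,\ld}=\Ad(u_{g,\ld,\varepsilon})\gamma_{g,\ld,\varepsilon}$ together with part~(ii). The natural candidate for $\gamma_{g,\ld}$ is a thermodynamic-limit version of conjugation by the boundary unitaries: for $A\in\caA_{\mathrm{loc}}$ set $\gamma_{g,\ld}(A):=\lim_{N\to\infty}\Ad\bigl(W_{\partial\ld,N}^g\bigr)(A)$, which is well-defined and eventually constant in $N$ by Assumption~\ref{a7}(3), hence extends to an endomorphism of $\caA$; one checks it is an automorphism because $\Ad\bigl((W_{\partial\ld,N}^g)^*\bigr)$ provides an eventually-constant inverse, again by~(3). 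Similarly $\gamma_{g,\ld,\varepsilon}(A):=\lim_{N\to\infty}\Ad\bigl(u_{g\ld\varepsilon}^* W_{\partial\ld,N}^g\bigr)(A)$; by Assumption~\ref{a7}(4) the operator $u_{g\ld\varepsilon}^* W_{\partial\ld,N}^g$ lies in $\caA_{(\partial\ld)_\varepsilon\cup(\partial[-N,N]^2)^{(l)}}$, and since $[-N,N]^2$ exhausts $\bbZ^2$ the ``box boundary'' part escapes to infinity, so in the limit $\gamma_{g,\ld,\varepsilon}$ is localized in $(\partial\ld)_\varepsilon$, i.e.\ $\gamma_{g,\ld,\varepsilon}\in\Aut\caA_{(\partial\ld)_\varepsilon}$. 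Setting $u_{g,\ld,\varepsilon}:=u_{g\ld\varepsilon}$ gives $\gamma_{g,\ld}=\Ad(u_{g,\ld,\varepsilon})\gamma_{g,\ld,\varepsilon}$ by construction.

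\medskip
First I would check (\ref{takao}), $\omega\beta_g^\ld=\omega\gamma_{g,\ld}$. By the GNS description it suffices to compare the two states on $\caA_{\mathrm{loc}}$. For $A\in\caA_{\Lambda}$ with $\Lambda\subset[-N,N]^2$, $\beta_g^\ld(A)=\Ad\bigl(\bigotimes_{x\in\ld\cap[-N,N]^2}U_g\bigr)(A)$ for $N$ large, so
\begin{align*}
\omega\beta_g^\ld(A)
=\bigl\langle\Omega,\pi\bigl(\textstyle\bigotimes_{x\in\ld\cap[-N,N]^2}U_g\bigr)\pi(A)\pi\bigl(\textstyle\bigotimes_{x\in\ld\cap[-N,N]^2}U_g\bigr)^*\Omega\bigr\rangle,
\end{align*}
and Assumption~\ref{a7}(1) lets me replace $\pi\bigl(\bigotimes U_g\bigr)^*\Omega$ by $\pi\bigl(W_{\partial\ld,N}^g\bigr)^*\Omega$ on the right and its adjoint on the left, yielding $\langle\Omega,\pi(W_{\partial\ld,N}^g)\pi(A)\pi(W_{\partial\ld,N}^g)^*\Omega\rangle=\omega\bigl(\Ad(W_{\partial\ld,N}^g)(A)\bigr)=\omega\gamma_{g,\ld}(A)$ for $N$ large. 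This simultaneously identifies the implementing unitary: $v_{g,\ld}:=\pi\bigl(W_{\partial\ld,N}^g\bigr)\pi\bigl(\bigotimes_{x\in\ld\cap[-N,N]^2}U_g\bigr)^*$ does not depend on large $N$ when applied to $\pi(A)\Omega$ — this is precisely the content of Assumption~\ref{a7}(1) combined with (3) — and it satisfies $v_{g,\ld}\pi(A)\Omega=\pi\bigl(\beta_g^\ld\gamma_{g,\ld}^{-1}(A)\bigr)\Omega$ as required. One should note $v_{g,\ld}$ is a well-defined unitary on all of $\caH$ since $\pi(\caA)\Omega$ is dense and the map is isometric there.

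\medskip
The main work is part~(ii) of Assumption~\ref{a5}: for $\ld,\ld'\in\Cl$, the inclusions $v_{g,\ld}\caF v_{g,\ld'}^*,\ v_{g,\ld}^*\caF v_{g,\ld'}\subset\caF$. The idea is that although each $v_{g,\ld}$ individually is a global object, the \emph{difference} between $v_{g,\ld}$ and $v_{g,\ld'}$ is localized near the symmetric difference of the two cones, which for two elements of $\Cl$ is contained in a cone of $PC$-type, hence its adjoint action preserves $\caF$. Concretely I would write, for $N$ large,
\begin{align*}
v_{g,\ld}v_{g,\ld'}^*
=\pi\bigl(W_{\partial\ld,N}^g\bigr)\,\pi\bigl(\textstyle\bigotimes_{x\in(\ld\setminus\ld')\cap[-N,N]^2}U_g^{\pm1}\bigr)\,\pi\bigl(W_{\partial\ld',N}^g\bigr)^*,
\end{align*}
and then use Assumption~\ref{a7}(2) — which controls exactly $\bigl(W_{\partial\ld,N}^g\bigr)^*W_{\partial\ld',N}^g$ and the conjugate $W_{\partial\ld,N}^g\,\caA_{(\ld'\setminus\ld)\cap[-N,N]^2}\,\bigl(W_{\partial\ld',N}^g\bigr)^*$ — to show the whole expression lies in $\pi\bigl(\caA_{((\ld\,\triangle\,\ld')\cap[-N,N]^2)^{(l)}}\bigr)''$ up to the escaping-to-infinity box-boundary piece, which is absorbed because $\caF$ is a norm-closed $C^*$-algebra containing $\pi(\caA_X)''$ for every $X$ of the relevant shape. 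Since $\ld,\ld'\in\Cl$, $(\ld\,\triangle\,\ld')^{(l)}$ is contained in $(\lo\cup\lt)^c$ for a suitable $(\lo,\lt)\in PC$, so the operator lies in $\caF$; multiplying by an element of $\caF$ on the left or right stays in $\caF$. The adjoint and the other combination are handled the same way (the WLOG reduction for which cone contains which is a bookkeeping matter using upward filtering of $\Cl$). This localization-of-the-difference step, and in particular checking carefully that the box-boundary tails at $\partial[-N,N]^2$ genuinely drop out in the limit inside $\caF$, is the step I expect to be the main obstacle; everything else is bookkeeping with Assumption~\ref{a7}(1)–(4).
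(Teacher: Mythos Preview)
Your overall strategy matches the paper's: define $\gamma_{g,\ld}$ as the thermodynamic limit of $\Ad(W_{\partial\ld,N}^g)$, verify $\omega\beta_g^\ld=\omega\gamma_{g,\ld}$ via Assumption~\ref{a7}(1), identify $v_{g,\ld}$ as a limit of finite-volume unitaries, and use Assumption~\ref{a7}(2) to localize the products $v_{g,\ld}v_{g,\ld'}^*$ and $v_{g,\ld}^*v_{g,\ld'}$. Two points need repair.

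First, a minor one: your formula for $v_{g,\ld}$ has the factors in the wrong order. With your $v_{g,\ld}=\pi(W_{\partial\ld,N}^g)\,\pi(\bigotimes U_g)^*$, applying to $\pi(A)\Omega$ and using Assumption~\ref{a7}(1) gives $\pi\bigl(\gamma_{g,\ld}\beta_{g^{-1}}^\ld(A)\bigr)\Omega$, not $\pi\bigl(\beta_g^\ld\gamma_{g,\ld}^{-1}(A)\bigr)\Omega$; you have produced $v_{g,\ld}^*$. The paper takes $v_{g,\ld}$ to be the strong-$*$ limit of $\pi\bigl((\bigotimes U_g)(W_{\partial\ld,N}^g)^*\bigr)$, and one must keep in mind this is only a strong-$*$ limit, not an eventual equality of operators for large $N$.

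Second, and more substantively, your argument for part~(ii) has a gap. Showing $v_{g,\ld}v_{g,\ld'}^*\in\caF$ and then saying ``multiplying by an element of $\caF$ on the left or right stays in $\caF$'' does not yield $v_{g,\ld}\,\caF\, v_{g,\ld'}^*\subset\caF$: for $x\in\caF$, the product $v_{g,\ld}\,x\,v_{g,\ld'}^*$ is not $(v_{g,\ld}v_{g,\ld'}^*)\cdot(\text{something in }\caF)$ unless you already know $\Ad(v_{g,\ld'})(x)\in\caF$. The paper supplies exactly this missing ingredient: it first proves that $\gamma_{g,\ld}$ is localized in $(\partial\ld)^{(l)}$, so that for any $(\lo,\lt)\in PC$ one can find $(\lo',\lt')\in PC$ with $(\lo\cup\lt)^c\cup(\partial\ld)^{(l)}\subset(\lo'\cup\lt')^c$, whence $\Ad(v_{g,\ld})\bigl(\pi(\caA_{(\lo\cup\lt)^c})''\bigr)\subset\pi(\caA_{(\lo'\cup\lt')^c})''$ and hence $\Ad(v_{g,\ld})(\caF)\subset\caF$. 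Then one writes $v_{g,\ld}\,\caF\, v_{g,\ld'}^*=\Ad(v_{g,\ld})(\caF)\cdot v_{g,\ld}v_{g,\ld'}^*\subset\caF$. You already have the localization of $\gamma_{g,\ld,\varepsilon}$, from which the needed localization of $\gamma_{g,\ld}$ follows immediately, so the fix is short---but it is a separate logical step without which the argument does not close.
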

\begin{proof}
First we prove (i) of Assumption \ref{a5}.
Let  $\ld$ be a cone, $0<\varepsilon<\frac 12\min \{|\arg\ld|, 2\pi-|\arg\ld|\}$,
and $g\in G$.
By {\it 3.} of Assumption \ref{a7}, we get endomorphisms of $\caA$ by
\begin{align}
\begin{split}
&\gamma_{g\ld}(A)=\lim_{N}\Ad\lmk W_{\partial \ld,N}^g \rmk(A),\\
&\tilde \gamma_{g\ld}(A)=\lim_{N}\Ad\lmk W_{\partial \ld,N}^g \rmk^*(A),\quad A\in \caA.
\end{split}
\end{align}
Because $\tilde \gamma_{g\ld}=\lmk \gamma_{g\ld}\rmk^{-1}$,
$\gamma_{g\ld}$ is an automorphism.

We claim that $\gamma_{g\ld}$ is localized in 
$\caA_{\lmk\partial \ld \rmk^{(l)}}$.
To see this, let $A\in \caA_{\lmk\partial \ld \rmk^{(l)}}$ be an arbitrary local element.
Because $A$ is local, there exists an $N_0\in\bbN$ such that
$A\in \caA_{[-N_0,N_0]^2}$ and 
$\gamma_{g\ld}(A)=\Ad\lmk W_{\partial \ld,N}^g \rmk(A)$,
for all $N\ge N_0$.
Then we have
\begin{align}
\begin{split}
&\gamma_{g\ld}(A)
\in \cap_{N\ge N_0} 
\Ad\lmk W_{\partial \ld,N}^g \rmk\lmk
\caA_{\lmk\partial \ld \rmk^{(l)}\cap [-N_0,N_0]^2}
\rmk\\
&\subset
 \cap_{N\ge N_0} 
\lmk
\caA_{\lmk \lmk\partial \ld \rmk^{(l)}\cap [-N_0,N_0]^2\rmk\cup 
\lmk \partial \lmk \ld\cap[-N,N]^2\rmk\rmk^{(l)}}
\rmk\subset \caA_{\lmk\partial \ld \rmk^{(l)}}.
\end{split}
\end{align}
Hence we have $\gamma_{g\ld}\lmk \caA_{\lmk\partial \ld \rmk^{(l)}}\rmk \subset
\caA_{\lmk\partial \ld \rmk^{(l)}}$.
Similarly, we have 
$\gamma_{g\ld}^{-1}\lmk \caA_{\lmk\partial \ld \rmk^{(l)}}\rmk \subset
\caA_{\lmk\partial \ld \rmk^{(l)}}$, and we get
$\gamma_{g\ld}\lmk \caA_{\lmk\partial \ld \rmk^{(l)}}\rmk =
\caA_{\lmk\partial \ld \rmk^{(l)}}$.
On the other hand, for all local
$A\in \caA_{\lmk \lmk\partial \ld\rmk^{(l)}\rmk^c}$,
we have
$\gamma_{g\ld}(A)=\Ad\lmk W_{\partial \ld,N}^g \rmk(A)=A$
for $N$ large enough.
Hence $\gamma_{g\ld}$ is localized in $\lmk\partial \ld\rmk^{(l)}$.
Similarly, by {\it 4.}$, \gamma_{g\ld\varepsilon}:=\Ad \lmk u_{g,\ld,\varepsilon}^*\rmk \gamma_{g\ld }\in 
\Aut\caA_{\lmk \partial\ld\rmk_\varepsilon}$.
For any local $A\in \caA_{\rm loc}$, we have
\begin{align}
\begin{split}
\omega\beta_g^\ld(A)
=\omega\Ad\lmk \bigotimes_{x\in \ld\cap[-N,N]^2} U_g\rmk(A)
=\omega \Ad\lmk W_{\partial \ld,N}^g \rmk(A)
=\omega\gamma_{g\ld}(A),
\end{split}
\end{align}
for $N$ large enough.
This completes the proof of (i).

Next we show (ii) of Assumption \ref{a5}.
First we show that for any cone $\ld=\ld_{\bm a, \pi,\varphi}\in\Cl$,
\begin{align}\label{takos}
\begin{split}
v_{g\ld}=s*-\lim\pi
\lmk
\lmk \bigotimes_{\ld\cap[-N,N]^2} U_g\rmk  \lmk W_{\partial \ld,N}^g\rmk^*
\rmk
\end{split}
\end{align}
In fact, for any local $A\in \caA$, 
there exists an $N_0\in\bbN$ such that
\begin{align}
\begin{split}
\lmk W_{\partial \ld,N}^g\rmk^* A W_{\partial \ld,N}^g
=\lmk W_{\partial \ld,N_0}^g\rmk^* A W_{\partial \ld,N_0}^g
\in \caA_{[-N_0-l,N_0+l]},\quad N\ge N_0.
\end{split}
\end{align}
Therefore, we have
\begin{align}
\begin{split}
\beta_{g}^\ld\lmk \lmk W_{\partial \ld,N}^g\rmk^* A W_{\partial \ld,N}^g\rmk 
=\beta_{g}^{\ld\cap[-N,N]^2}\lmk \lmk W_{\partial \ld,N}^g\rmk^* A W_{\partial \ld,N}^g\rmk,\quad
N\ge N_0+2l. 
\end{split}
\end{align}
Substituting this, we obtain
we have
\begin{align}
\begin{split}
&v_{g\ld}\pi(A)\Omega
=\pi\lmk \beta_{g}^\ld\gamma_{g\ld}^{-1}(A)\rmk\Omega
=\pi  \beta_{g}^\ld\lmk  \lmk W_{\partial \ld,N}^g\rmk^* A W_{\partial \ld,N}^g  \rmk\Omega\\
&=\pi\lmk
\lmk \bigotimes_{\ld\cap[-N,N]^2} U_g\rmk\lmk W_{\partial \ld,N}^g\rmk^*\rmk
\pi(A)
\pi\lmk W_{\partial \ld,N}^g
\lmk \bigotimes_{\ld\cap[-N,N]^2} U_g\rmk^*
\rmk\Omega\\
&=\pi\lmk
\lmk \bigotimes_{\ld\cap[-N,N]^2} U_g\rmk\lmk W_{\partial \ld,N}^g\rmk^*\rmk
\pi(A)\Omega
\end{split}
\end{align}
from {\it 1.} of Assumption \ref{a7}. Because the strong convergence of unitaries to a unitary implies
that of strong $*$-convergence, this proves the claim.

Now we prove (ii).
Note that for any $\ld,\ld'\in\Cl$, we may find $\tilde\ld\in \Cl$
such that $\ld,\ld'\subset \tilde\ld$.
Therefore, it suffices to consider the case $\ld\subset\ld'$.
Let $\ld=\ld_{(a,0), \pi,\varphi}, \ld'=\ld_{(a',0), \pi,\varphi'}\in \Cl$
with $a\le a'$, $\varphi\le \varphi'$.
By (\ref{takos}), we have
\begin{align}
\begin{split}
&v_{g\ld}v_{g\ld'}^*
=w-\lim_N
\pi\lmk
\lmk \bigotimes_{\ld\cap[-N,N]^2} U_g\rmk\lmk W_{\partial \ld,N}^g\rmk^*
W_{\partial \ld',N}^g
 \lmk \bigotimes_{\ld'\cap[-N,N]^2} U_g\rmk^* \rmk\\
 &=w-\lim_N
\pi\lmk
\beta_g^{\ld}
\lmk \lmk W_{\partial \ld,N}^g\rmk^*W_{\partial \ld',N}^g\rmk
\lmk \bigotimes_{\ld'\setminus \ld\cap[-N,N]^2} U_g\rmk^*\rmk
\\
\end{split}
\end{align}
Note by {\it 2.} of Assumption \ref{a7} that
\begin{align}
\begin{split}
\beta_g^{\ld}
\lmk \lmk W_{\partial \ld,N}^g\rmk^*W_{\partial \ld',N}^g\rmk
\lmk \bigotimes_{\ld'\setminus \ld\cap[-N,N]^2} U_g\rmk^*
\in \caA_{
\lmk
\ld'\setminus \ld\cap [-N,N]^2
\rmk^{(l)}
}\subset \caA_{
\lmk
\ld'\setminus \ld
\rmk^{(l)}
}
.
\end{split}
\end{align}
Therefore, we have
\begin{align}
\begin{split}
v_{g\ld}v_{g\ld'}^*\in 
\pi\lmk  \caA_{
\lmk\ld'\setminus \ld \rmk^{(l)}}\rmk''
\subset \caF.
\end{split}
\end{align}
Similarly we have
\begin{align}
\begin{split}
v_{g\ld}^*v_{g\ld'}
=w-\lim
\pi\lmk \lmk W_{\partial \ld,N}^g\rmk^*
\lmk \bigotimes_{\ld'\setminus \ld\cap[-N,N]^2} U_g\rmk
W_{\partial \ld',N}^g
\rmk\in\caF.
\end{split}
\end{align}

Note for any $(\lo,\lt)\in PC$, there exists a $(\lo',\lt')\in PC$
such that $(\lo\cup \lt)^c\cup \lmk \partial \ld\rmk^{(l)}\subset (\lo'\cup \lt')^c$.
Therefore, we have 
\begin{align}
\begin{split}
\Ad \lmk v_{g\ld}\rmk\lmk\pi\lmk\caA_{(\lo\cup \lt)^c}\rmk''\rmk\subset
\pi\lmk\caA_{(\lo'\cup \lt')^c}\rmk'',
\end{split}
\end{align}
because $\gamma_{g\ld}$ is localized in $\lmk \partial \ld\rmk^{(l)}$.
This proves $\Ad\lmk v_{g,\ld}\rmk \lmk \caF\rmk\subset\caF$.
Hence we get 
$v_{g,\ld}\caF v_{g,\ld'}^*= \Ad\lmk v_{g,\ld}\rmk (\caF)\cdot v_{g,\ld}v_{g,\ld'}^*\subset \caF$.
Similarly we have $v_{g,\ld}^*\caF v_{g,\ld'}\subset \caF$
proving (ii).

\end{proof}

\section{Example}\label{ex}
In this section we provide a concrete example, the infinite version of  the
model in \cite{GIS}.
We denote by $\Gamma$ a honeycomb lattice.
The result in the previous sections for a square goes through in this honeycomb lattice as well.
We denote by $\bbV$ the set of all vertices of $\Gamma$
and by $\bbE$ the set of all edges of $\Gamma$.
The vertex part $\bbV$  is bipartite : we split it into $\bbV_{\bbA}$ and $\bbV_{\bbB}$.
Because of this bipartite picture, each edge $e$ has one
vertex $v_{eA}\in\bbA$ at its end and
another vertex $v_{eB}\in\bbB$ at the other end.
For each vertex $v\in\bbV$, $s(v)$
denotes the set of the three edges which have $v$ at their ends.
We put qubits on each vertex and edge and consider
$C^*$-algebras
\begin{align}
\begin{split}
\caA_{\bbE}:= \bigotimes_{e\in \bbE}\Mat_2,\quad
\caA_{\bbV}:=\bigotimes_{v\in \bbV}\Mat_2.
\end{split}
\end{align}
Our quantum spin system is then given by 
\begin{align}
\begin{split}
\caA:=\caA_{\bbE}\otimes \caA_{\bbV}.
\end{split}
\end{align}
We consider a toric code on edges $\bbE$.
For each vertex $v\in \bbV$, and a hexagonal plaquette $p$ of $\Gamma$, we set
\begin{align}
\begin{split}
A_v:=\bigotimes_{e\in s(v)}\sigma_z^{(e)},\quad  B_p:=\bigotimes_{e\in\bbE: e\subset\partial p}\sigma_x^{(e)},
\end{split}
\end{align}
with Pauli matrices $\sigma_x^{(e)}$, $\sigma_z^{(e)}$
associated to the edge $e$.
Here, 
${e\subset \partial p}$ means the edge $e$ is part of the boundary of $p$.
By the same argument as in \cite{Na1}, we can show that there exists a unique state
$\omega_{\bbE}$ on $\caA_{\bbE}$ such that
\begin{align}
\begin{split}
\omega_{\bbE}(A_v)=\omega_{\bbE}(B_p)=1
\end{split}
\end{align}
for all vertex $v$ and hexagon $p$.
Because of this uniqueness, $\omega_{\bbE}$  is pure.
By the same argument as in \cite{Na2}, $\omega_{\bbE}$  satisfies the Haag duality.
In particular, the hexagonal version of Assumption \ref{a1}, Assumption \ref{a1r}, 
Assumption \ref{a1l} hold. 
On $\caA_{\bbV}$, we consider a product state
$\psi_{\bbV}:=\psi_+^\otimes$, where $\psi_+$ is a pure state on
$\Mat_2$ such that $\psi_+(A)=\frac12\braket{\lmk \bm e_{-1}+\bm e_{+1}\rmk}{A\lmk \bm e_{-1}+\bm e_{+1}\rmk}$, $A\in\Mat_2$.
Here, $\bm e_{+1}$, $\bm e_{-1}$ are eigen vectors of $\sigma_z$ with eigenvaules
$+1$, $-1$ respectively.
Let $(\caH_{\bbE},\pi_{\bbE}, \Omega_{\bbE})$, $(\caH_{\bbV},\pi_{\bbV}, \Omega_{\bbV})$
be the GNS triples of $\omega_\bbE$, $\psi_{\bbV}$ respectively.
Because $\omega_{\bbE}$ and clearly $\psi_{\bbV}$ satisfy the 
 hexagonal version of Assumption \ref{a1}, Assumption \ref{a1r}, 
Assumption \ref{a1l}, so as $\omega_{\bbE}\otimes \psi_{\bbV}$.

Next we entangle the two systems.
For each finite simple loop $L$ of edges in $\Gamma$,
and the closed area $\ld_L$ surrounded by $L$,
we denote by $\bbV_{int}(L)$,  $\bbV_{ext}(L)$ the set of all vertices
in the interior, exterior of $\ld_L$, respectively.
We also denote by $\bbV_{bd}(L)$ the set of all vertices on $L$.
The set of all edges $e$ such that 
$v_{eA}\in \bbV_{int}(L)$ or  $v_{eB}\in \bbV_{int}(L)$
is denoted by $\bbE_{int}(L)$.
The set of all edges $e$ such that 
$v_{eA}\in \bbV_{ext}(L)$ or  $v_{eB}\in \bbV_{ext}(L)$
is denoted by $\bbE_{ext}(L)$.
We denote by $\bbE_{bd}(L)$ the set of all edges on $L$.
Then we obtain the partition $\bbE=\bbE_{int}(L)\cup \bbE_{ext}(L)\cup \bbE_{bd}(L)$.
Set
\begin{align}
\begin{split}
U_{\CCZ}^{\ld(L)}:=
\prod_{e\in \bbE_{int}(L)\cup  \bbE_{bd}(L)} \CCZ_{v_{eA} e v_{eB}}.
\end{split}
\end{align}
Here $ \CCZ_{v_{eA} e v_{eB}}$ is the CCZ-operator on $\bbC^2_{v_{eA}}\otimes \bbC_{e}^2\otimes
\bbC^2_{v_{eB}}$
given by
\begin{align}
\begin{split}
\CCZ_{v_{eA} e v_{eB}}\ket{x,y,z}
=(-1)^{xyz}\ket{x,y,z},\quad x,y,z\in \{0,1\},
\end{split}
\end{align}
where $\{\ket{x,y,z}\}_{ x,y,z\in \{0,1\}}$ is the orthogogonal basis of 
$\bbC^2_{v_{eA}}\otimes \bbC_{e}^2\otimes\bbC^2_{v_{eB}}$ 
consisting of simultatnious eigenvectors of
$\sigma_z^{(v_{eA})}\otimes\unit_e\otimes \unit_{v_{eB}}$,
$\unit_{v_{eA}}\otimes\sigma_z^{(e)}\otimes\unit_{v_{eB}}$,
$\unit_{v_{eA}}\otimes\unit_e\otimes\sigma_z^{(v_{eB})}$,
with eigenvalues $(-1)^x,(-1)^y,(-1)^z$ respectively.
Note that
\begin{align}\label{sada}
\begin{split}
\Ad\lmk \sigma_x^{(v_{eA})}\otimes \unit\rmk\lmk \CCZ_{v_{eA} e v_{eB}}\rmk
=\CCZ_{v_{eA} e v_{eB}} \CZ_{v_{eB}e},
\end{split}
\end{align}
where 
\begin{align}
\begin{split}
\CZ_{e v_{eB}}\ket{y,z}
=(-1)^{yz}\ket{y,z},\quad y,z\in \{0,1\},
\end{split}
\end{align}
where $\{\ket{y,z}\}_{ y,z\in \{0,1\}}$ is the orthogogonal basis of 
$ \bbC_{e}^2\otimes\bbC^2_{v_{eB}}$ 
consisting of simultatnious eigenvectors of
$\sigma_z^{(e)}
\otimes\unit_{v_{eB}}$,
$\unit_e\otimes\sigma_z^{(v_{eB})}$.
For the latter use, we also note that
\begin{align}\label{shio}
\begin{split}
\Ad\sigma_x^{(v_{eA})}\lmk\CZ_{v_{eB}v_{eA}}\rmk
=\CZ_{v_{eB}v_{eA}}\sigma_Z^{(v_{eB})}.
\end{split}
\end{align}

Because $\CCZ_{v_{eA} e v_{eB}}$s commute,
for any local $A\in\caA_{\rm loc}$,
we have
\begin{align}
\begin{split}
\Ad\lmk U_{\CCZ}^{\ld(L)}\rmk(A)
=\Ad\lmk U_{\CCZ}^{\ld(\tilde L)}\rmk(A)
\end{split}
\end{align}
for $\ld(L)$, $\ld(\tilde L)$ large enough.
From this, and the fact that ${U_{\CCZ}^{\ld(L)}}^{-1}=U_{\CCZ}^{\ld(L)}$,
there exists an automorphism $\alpha$ on $\caA$
such that
\begin{align}
\begin{split}
\alpha(A)=\lim_{\ld(L)\uparrow \Gamma}\Ad\lmk U_{\CCZ}^{\ld(L)}\rmk(A),\quad
A\in\caA.
\end{split}
\end{align}
Note that $\alpha=\alpha^{-1}$.
We set
\begin{align}
\begin{split}
\varphi:=\lmk \omega_{\bbE}\otimes \psi_{\bbV}\rmk\circ\alpha.
\end{split}
\end{align}
Because $\alpha$ is a finite depth quantum circuit and $\omega_{\bbE}\otimes \psi_{\bbV}$
satisfies the
hexagonal version of Assumption \ref{a1}, Assumption \ref{a1r}, 
Assumption \ref{a1l}, 
as in \cite{MTC}, $\varphi$ also satisfies the
hexagonal version of Assumption \ref{a1}, Assumption \ref{a1r}, 
Assumption \ref{a1l}.The triple 
\begin{align}
\begin{split}
\lmk \caH,\pi,\Omega\rmk
=\lmk\caH_{\bbE}\otimes \caH_{\bbV},\lmk\pi_{\bbE}\otimes \pi_{\bbV}\rmk\circ\alpha,
\Omega_{\bbE}\otimes\Omega_{\bbV}\rmk
\end{split}
\end{align}
is a GNS triple of $\varphi$.

Next we introduce the $\bbZ_2\times\bbZ_2$-action
given by
\begin{align}
\begin{split}
\beta_{\bbA}:=\bigotimes_{v\in V_{\bbA}} \Ad \sigma_x^{(v)},\quad
\beta_{\bbB}:=\bigotimes_{v\in V_{\bbB}} \Ad \sigma_x^{(v)},
\end{split}
\end{align}
with Pauli matrices $\sigma_x^{(v)}$ at $v\in\bbV$.
To show the invariance of $\varphi$ under this action and the Assumption \ref{a5},
let
\begin{align}
\begin{split}
u_A(L):=\bigotimes_{v\in \lmk \bbV_{int}(L)\cup \bbV_{bd}(L)\rmk\cap \bbV_A}\sigma_x^{(v)}
\end{split}
\end{align}
for a finite simple loop $L$ of edges in $\Gamma$.
Set
\begin{align}
W_L:=\lmk \prod_{\substack{e\in\bbE: \\ v_{eA}\in \bbV_{bd}(L)\cup \bbV_{int}(L)}}
\CZ_{v_{eB} e}\rmk.
\end{align}
Choose $4\le N\in \bbN$ large enough so that
$\ld(L)\subset [-\frac N2, \frac N 2]^2$.
Choose another finite simple loop $L_N$ such that
$ [-N, N ]^2\subset \ld(L_N)$.
We claim
\begin{align}\label{piero}
\begin{split}
u_A(L) U_{\CCZ}^{\ld(L_N) }
=W_L U_{\CCZ}^{\ld(L_N) }  u_A(L).
\end{split}
\end{align}
To see this, first note from (\ref{sada}) that
\begin{align}
\begin{split}
&\Ad\lmk u_A(L)\rmk\lmk
\prod_{e\in \lmk \bbE_{int}(L)\cup \bbE_{bd}(L)\rmk }\CCZ_{v_{eB} e v_{eA}}\rmk 
\\
&=\Ad\lmk
\bigotimes_{v\in \lmk \bbV_{int}(L)\cup \bbV_{bd}(L)\rmk\cap \bbV_A}\sigma_x^{(v)}
\rmk\lmk
\prod_{e\in \lmk \bbE_{int}(L)\cup \bbE_{bd}(L)\rmk }\CCZ_{v_{eB} e v_{eA}}\rmk\\
&=\prod_{e\in \lmk \bbE_{int}(L)\cup \bbE_{bd}(L)\rmk }
\Ad\lmk \sigma_x^{(v_{eA})}\rmk \lmk \CCZ_{v_{eB} e v_{eA}}\rmk\\
&=\prod_{e\in \lmk \bbE_{int}(L)\cup \bbE_{bd}(L)\rmk }
\CCZ_{v_{eB} e v_{eA}} \CZ_{v_{eB} e}.
\end{split}
\end{align}
We also have
\begin{align}
\begin{split}
&\Ad u_A(L)\lmk
\prod_{\substack{e\in \lmk \bbE_{ext}(L)\rmk\\
e\subset \ld(L_N)
} }\CCZ_{v_{eB} e v_{eA}}\rmk
\\
&=\Ad\lmk
\bigotimes_{v\in \lmk \bbV_{int}(L)\cup \bbV_{bd}(L)\rmk\cap \bbV_A}\sigma_x^{(v)}
\rmk\lmk
\prod_{\substack{e\in \lmk \bbE_{ext}(L)\rmk\\
e\subset \ld(L_N)
} }\CCZ_{v_{eB} e v_{eA}}\rmk\\
&=
\prod_{\substack{e\in \lmk \bbE_{ext}(L)\rmk\\
e\subset \ld(L_N)\\
v_{eA}\in \bbV_{ext}(L)
} }\CCZ_{v_{eB} e v_{eA}}\\
&
\prod_{\substack{e\in \lmk \bbE_{ext}(L)\rmk\\
e\subset \ld(L_N)\\
v_{eA}\in \bbV_{bd}(L)
} }
\Ad\lmk
\sigma_x^{(v_{eA})}
\rmk\lmk \CCZ_{v_{eB} e v_{eA}}\rmk\\
&=
\prod_{\substack{e\in \lmk \bbE_{ext}(L)\rmk\\
e\subset \ld(L_N)\\
v_{eA}\in \bbV_{ext}(L)
} }\CCZ_{v_{eB} e v_{eA}}
\prod_{\substack{e\in \lmk \bbE_{ext}(L)\rmk\\
e\subset \ld(L_N)\\
v_{eA}\in \bbV_{bd}(L)
} }
\CCZ_{v_{eB} e v_{eA}} \CZ_{v_{eB} e}
\\
&=
\prod_{\substack{e\in \lmk \bbE_{ext}(L)\rmk\\
e\subset \ld(L_N)
} }\CCZ_{v_{eB} e v_{eA}}
\prod_{\substack{e\in \lmk \bbE_{ext}(L)\rmk\\
e\subset \ld(L_N)\\
v_{eA}\in \bbV_{bd}(L)
} }
\CZ_{v_{eB} e},
\end{split}
\end{align}
using (\ref{sada}).
Therefore,
\begin{align}
\begin{split}
&u_A(L) U_{\CCZ}^{\ld(L_N) }
=U_{\CCZ}^{\ld(L_N) }
\lmk \prod_{e\in \lmk \bbE_{int}(L)\cup \bbE_{bd}(L)\rmk }
\CZ_{v_{eB} e}\rmk
\lmk \prod_{\substack{e\in \lmk \bbE_{ext}(L)\rmk\\
e\subset \ld(L_N)\\
v_{eA}\in \bbV_{bd}(L)
} }\CZ_{v_{eB} e}\rmk
u_A(L)\\
&=U_{\CCZ}^{\ld(L_N) }
\lmk \prod_{\substack{e\in\bbE: \\ v_{eA}\in \bbV_{bd}(L)\cup \bbV_{int}(L)}}
\CZ_{v_{eB} e}\rmk u_A(L)
=U_{\CCZ}^{\ld(L_N) } W_L u_A(L),
\end{split}
\end{align}
proving the claim (\ref{piero}).
Note that
\begin{align}
\begin{split}
&W_L=\prod_{\substack{e\in\bbE: \\ v_{eA}\in \bbV_{bd}(L)\cup \bbV_{int}(L)}}
\CZ_{v_{eB} e}\\
&=\lmk \prod_{\substack{v\in\bbV_{\bbB}\\ s(v)\subset \bbE_{int}(L)\cup \bbE_{bd}(L)}}
\prod_{e\in s(v)}\CZ_{v e}\rmk\\
&\lmk \prod_{\substack{v\in \bbV_{\bbB}\\ s(v)\nsubseteq \lmk \bbE_{int}(L)\cup \bbE_{bd}(L) \rmk}}
\prod_{\substack{
e\in s(v)\\
v_{eA}\in \bbV_{bd}(L)\cup \bbV_{int}(L)}}\CZ_{v e}\rmk\\
&=: W_L^{(1)}W_L^{(2)}
\end{split}
\end{align}
Hence for any finite simple loop $L$,
$4\le N\in \bbN$ large enough so that
$\ld(L)\subset [-\frac N2, \frac N 2]^2$.
and another finite simple loop $L_N$ such that
$ [-N, N ]^2\subset \ld(L_N)$,
we get
\begin{align}\label{keage}
\begin{split}
u_A(L) U_{\CCZ}^{\ld(L_N) }
=W_L^{(2)} U_{\CCZ}^{\ld(L_N) }  W_L^{(1)} u_A(L).
\end{split}
\end{align}

Note 
\begin{align}
\begin{split}
&\prod_{e\in s(v)}\CZ_{v e}
= \lmk \prod_{e\in s(v)}\CZ_{v e}\rmk
\lmk\unit_{s(v)}\otimes \lmk \ket{0}\bra{0}+\ket{1}\bra{1}\rmk_v\rmk
=\prod_{e\in s(v)}\unit_{e}\otimes \ket{0}_v\bra{0}
+ \prod_{e\in s(v)}\sigma_z\otimes \ket{1}_v\bra{1}\\
&=\unit_{s(v)}\otimes  \ket{0}_v\bra{0}
+A_v\otimes \ket{1}_v\bra{1}
\end{split}
\end{align}
for each $v\in \bbV_{\bbB}$.
Here $\ket{0}$, $\ket{1}$ are standard basis of $\bbC_v^2$.
Therefore,
we have
\begin{align}\label{toric}
\begin{split}
&W_L^{(1)}
=\prod_{\substack{v\in\bbV_{\bbB}\\ s(v)\subset \bbE_{int}(L)\cup \bbE_{bd}(L)}}
\prod_{e\in s(v)}\CZ_{v e}\\
&=\prod_{\substack{v\in\bbV_{\bbB}\\ s(v)\subset \bbE_{int}(L)\cup \bbE_{bd}(L)}}
\lmk
\unit_{s(v)}\otimes  \ket{0}_v\bra{0}
+A_v\otimes \ket{1}_v\bra{1}
\rmk.
\end{split}
\end{align}

We claim that  for each finite simple loop $L$ of edges in $\Gamma$,
and any local $Y\in\caA$, 
\begin{align}\label{kanpaku}
\begin{split}
\varphi\lmk Yu_A(L) \rmk
=\varphi\lmk Y W_L^{(2)}\rmk.
\end{split}
\end{align}
To see this, choose $4\le N\in\bbN$ large enough so that
$L\subset [-\frac N2, \frac N 2]^2$
and $Y\in \caA_{[-\frac N2, \frac N 2]^2}$.
Choose another finite simple loop $L_N$ such that
$ [-N, N ]^2\subset \ld(L_N)$.
Then using (\ref{keage}), we have
\begin{align}
\begin{split}
&\varphi\lmk
 Yu_A(L)
\rmk
=\lmk \omega_{\bbE}\otimes \psi_{\bbV}\rmk 
\Ad\lmk U_{CCZ}^{\ld(L_N)}\rmk
\lmk
 Yu_A(L)
\rmk\\
&=\lmk \omega_{\bbE}\otimes \psi_{\bbV}\rmk 
\lmk
U_{CCZ}^{\ld(L_N)} Y
W_L^{(2)} U_{\CCZ}^{\ld(L_N) }  W_L^{(1)} u_A(L)
\rmk\\
&=
\lmk \omega_{\bbE}\otimes \psi_{\bbV}\rmk 
\lmk
U_{CCZ}^{\ld(L_N)} Y
W_L^{(2)} U_{\CCZ}^{\ld(L_N) }  
\rmk
=\varphi\lmk
 Y
W_L^{(2)}
\rmk.
\end{split}
\end{align}
In the last line, we used the definition of $\psi_{\bbV}=\psi_+^\otimes$ as a product state,
and (\ref{toric}).
This proves the claim (\ref{kanpaku}).
Because this holds for any local $Y\in\caA$,
we obtain
\begin{align}
\begin{split}
\pi\lmk
u_A(L)
\rmk\Omega
=\pi\lmk W_L^{(2)}\rmk\Omega,
\end{split}
\end{align}
for each finite simple loop $L$ of edges in $\Gamma$.
Because 
 \begin{align}
 \begin{split}
  W_L^{(2)}:=
  \lmk \prod_{\substack{v\in \bbV_{\bbB}\\ s(v)\nsubseteq \lmk \bbE_{int}(L)\cup \bbE_{bd}(L) \rmk}}
\prod_{\substack{
e\in s(v)\\
v_{eA}\in \bbV_{bd}(L)\cup \bbV_{int}(L)}}\CZ_{v e}\rmk
 \end{split}
 \end{align}
 is a product of commuting terms, this proves
 {\it 1., 2., 3.,4.} of Assumption \ref{a7} for $\beta_{\bbA}$.
 The same argument proves Assumption \ref{a7} for $\beta_{\bbB}$.

Furthermore, choose a sequence of  finite simple loops $L_n$ of edges in $\Gamma$
whose distance from the origin goes to infinity as $n\to \infty$.
Then for each $A\in \caA_{\rm loc}$, for $n$ large enough,
we have
\begin{align}
\begin{split}
\varphi\beta_{\bbA}(A)
=\varphi\Ad u_A(L_n)(A)
=\varphi \Ad  W_{L_n}^{(2)}(A)
=\varphi(A).
\end{split}
\end{align}
This implies $\varphi\beta_{\bbA}=\varphi$.
Similarly, we have  $\varphi\beta_{\bbB}=\varphi$.
Hence we are in the hexagon version of the setting \ref{sec:qss}.
By Lemma \ref{amayadori}, Assumption \ref{a5} is satisfied for this model.

Next we consider an anyon of $\varphi$.
Let $\gamma$ be a half-infinite line of edges starting from a vertex $v_0\in\bbV_{\bbB}$
going to infinity
and set
\begin{align}
\begin{split}
&\tilde\rho_{X,\gamma}(A):=
\lim_{N\to\infty}
\Ad\lmk
\bigotimes_{e\in \gamma\cap[-N,N]^2} \sigma_x^{(e)}\rmk(A)
,\\
&\rho_{X,\gamma}(A)=\alpha^{-1}\tilde\rho_{X,\gamma}\alpha(A), \quad A\in \caA.
\end{split}
\end{align}
Recall from \cite{Na1} that $\pi_{\bbE}\tilde \rho_{X,\gamma}$
satisfies the superselection criterion for $\pi_{\bbE}$.
From this, $\pi\rho_{X,\gamma}$ satisfies the superselection criterion of $\pi$.
To see this, let $\ld$ be a cone. Note that $\alpha\lmk\caA_{\ld^c} \rmk\subset 
\caA_{\lmk \ld_{-\varepsilon}+t\bm e_{\ld}\rmk^c}$ for some $\varepsilon>0$ and $t>0$.
Therefore, we have
\begin{align}
\begin{split}
&\left. \pi\rho_{X,\gamma}\right\vert_{\caA_{\ld^c}}
=\left.  \lmk\pi_{\bbE}\otimes \pi_{\bbV}\rmk\circ\alpha\circ \alpha^{-1}\tilde\rho_{X,\gamma}\alpha\right\vert_{\caA_{\ld^c}}\\
&=\left.  \lmk\pi_{\bbE}\otimes \pi_{\bbV}\rmk\circ
\tilde\rho_{X,\gamma}\alpha\right\vert_{\caA_{\ld^c}}\\
&=\left. \Ad\lmk
V_{\pi_{\bbE}\tilde\rho_{X,\gamma},  \ld_{-\varepsilon}+t\bm e_{\ld}}^*\otimes\unit
\rmk\lmk\pi_{\bbE}\otimes \pi_{\bbV}\rmk\alpha\right\vert_{\caA_{\ld^c}}\\
&=\left. \Ad\lmk
V_{\pi_{\bbE}\tilde\rho_{X,\gamma},  \ld_{-\varepsilon}+t\bm e_{\ld}}^*\otimes\unit
\rmk\pi \right\vert_{\caA_{\ld^c}},
\end{split}
\end{align}
with a unitary $V_{\pi_{\bbE}\tilde\rho_{X,\gamma},  \ld_{-\varepsilon}+t\bm e_{\ld}}\in \caV_{\pi_{\bbE}\tilde\rho_{X,\gamma},  \ld_{-\varepsilon}+t\bm e_{\ld}}$.

For an edge $e$ of $\Gamma$,  $\alpha^{-1}$ acts on
the Pauli $x$ matrix at $e$,
$\sigma_x^{(e)}$ 
as
\begin{align}
\alpha^{-1}\lmk\sigma_x^{(e)}\rmk
=\Ad\lmk \CCZ_{v_{eA} e v_{eB}}\rmk\lmk\sigma_x^{(e)}\rmk
=\sigma_x^{(e)}\CZ_{v_{eA} v_{eB}}.
\end{align}
From this, we obtain
\begin{align}
\begin{split}
&\rho_{X,\gamma}(A)=\alpha^{-1}\tilde\rho_{X,\gamma}\alpha(A)
=\lim_{N\to\infty}
\Ad\lmk
\alpha^{-1}
\lmk \bigotimes_{e\in \gamma\cap[-N,N]^2} \sigma_x^{(e)}\rmk\rmk
(A)\\
&=\lim_{N\to\infty}
\Ad\lmk
\prod_{e\in \gamma\cap[-N,N]^2} \sigma_x^{(e)}\CZ_{v_{eA} v_{eB}}\rmk
(A),\quad A\in \caA.
\end{split}
\end{align}

Now we consider the $\bbZ_2\times \bbZ_2$-action
$\Theta$ on 
$\pi\rho_{X,\gamma}$ given by $\beta_{\bbA}$, $\beta_{\bbB}$.
We have
\begin{align}
\begin{split}
&\Theta\lmk1, 0\rmk\lmk \pi\rho_{X,\gamma}\rmk(A)
=\pi\lmk\beta_{\bbA}\rho_{X,\gamma}  \beta_{\bbA}^{-1}\rmk(A)\\
&=\lim_{N\to\infty}
\pi\lmk
\Ad\lmk
\prod_{e\in \gamma\cap[-N,N]^2} 
\beta_{\bbA}\lmk \sigma_x^{(e)}\CZ_{v_{eA} v_{eB}}\rmk\rmk
(A)
\rmk\\
&=\lim_{N\to\infty}
\pi\lmk
\Ad\lmk
\prod_{e\in \gamma\cap[-N,N]^2} 
 \sigma_x^{(e)}\CZ_{v_{eA} v_{eB}} \sigma_z^{(v_{eB})}\rmk
(A)
\rmk\\
&=\Ad\lmk \pi\lmk \sigma_z^{(v_0)}\rmk\rmk\circ\pi\rho_{X,\gamma}(A)
\end{split}
\end{align}
because $\gamma$ is starting from $v_0\in \bbV_{\bbB}$.
Here we used (\ref{shio}).
Similarly, we have 
\begin{align}
\begin{split}
&\Theta\lmk 0,1\rmk\lmk \pi\rho_{X,\gamma}\rmk(A)
=\pi\beta_{\bbB} \rho_{X,\gamma} \beta_{\bbB}^{-1}(A)\\
&=\lim_{N\to\infty}
\pi\lmk
\Ad\lmk
\prod_{e\in \gamma\cap[-N,N]^2} 
 \sigma_x^{(e)}\CZ_{v_{eA} v_{eB}} \sigma_z^{(v_{eA})}\rmk
(A)
\rmk\\
&=\pi \rho_{X,\gamma}(A).
\end{split}
\end{align}
We also have 
\begin{align}
\begin{split}
&\Theta\lmk 1,1\rmk\lmk \pi\rho_{X,\gamma}\rmk(A)
=\pi\beta_{\bbB}\beta_{\bbA} \rho_{X,\gamma} \beta_{\bbA}^{-1}\beta_{\bbB}^{-1}(A)\\
&=\pi \beta_{\bbB} \Ad\lmk \sigma_z^{(v_0)}\rmk \rho_{X,\gamma}\beta_{\bbB}^{-1}(A)
=\Ad\lmk \pi \beta_{\bbB} \lmk \sigma_z^{(v_0)}\rmk\rmk
\pi \beta_{\bbB} \rho_{X,\gamma}\beta_{\bbB}^{-1}(A)\\
&=\Ad\lmk \pi \beta_{\bbB} \lmk \sigma_z^{(v_0)}\rmk\rmk\pi \rho_{X,\gamma}(A).
\end{split}
\end{align}
Therefore, taking $\pi\rho_{X,\gamma}$ the representative,
 we may set $W_{\left[\pi\rho_{X,\gamma}\right]}^{(g)}$ in (\ref{wa}) as 
\begin{align}
\begin{split}
\begin{gathered}
W_{\left[\pi\rho_{X,\gamma}\right]}^{(0,0)}:=\unit,\quad
W_{\left[\pi\rho_{X,\gamma}\right]}^{(1,0)}:= \pi\lmk \sigma_z^{(v_0)}\rmk,\quad
W_{\left[\pi\rho_{X,\gamma}\right]}^{(0,1)}:=\unit,\quad
W_{\left[\pi\rho_{X,\gamma}\right]}^{(1,1)}:= \pi \beta_{\bbB} \lmk \sigma_z^{(v_0)}\rmk.
\end{gathered}
\end{split}
\end{align}
With this, we obtain
\begin{align}
\begin{split}
\omega^{({\left[\pi\rho_{X,\gamma}\right]})}\lmk g,h\rmk
=\begin{pmatrix}
1&1&1&1\\
1&1&-1&-1\\
1&1&1&1\\
1&1&-1&-1
\end{pmatrix}
\end{split}
\end{align}
with $g,h$ ordered as $(0,0)$,$(1,0)$,$(0,1)$,$(1,1)$.

\appendix
\section{$G$-crossed Category}\label{gcross}
In this section, under the split property (Assumption \ref{split}),
we derive braided crossed $G$-categories in the sense of
\cite{muger2005conformal}.
Here is the additional assumption we require.
\begin{assum}\label{split}
Consider the setting in subsection \ref{sec:qss}.
For any $(\lo,\lt)\in PC$, there exists a type I factor $F$ such that
\begin{align}
\pi\lmk\caA_{\lo}\rmk''\subset F\subset \pi\lmk\caA_{\lt}\rmk'.
\end{align}
\end{assum}
The main reason to assume this is the following Lemma.
\begin{lem}\label{gnh}
Consider the setting in subsection \ref{sec:qss}.
Assume Assumption \ref{split}.
Let $\rho\in O^{(g)}_{(\loz,\ltz)}$ and $\sigma\in O^{(h)}_{(\loz,\ltz)}$ with $g,h\in G$.
If $\Mor_G\lmk\rho,\sigma\rmk\neq \emptyset$, then $g=h$.
\end{lem}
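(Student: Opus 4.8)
The plan is to extract from a non-zero $\caF$-intertwiner the equality $\beta_g^{\loz}=\beta_h^{\loz}$ on $\caA_{\loz}$, and then to read off $g=h$ from the faithfulness of the on-site action \eqref{ffl}.

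\emph{Step 1 (reduction to a commutation relation).} Let $X\in\Mor_G(\rho,\sigma)$ be non-zero. Since $\rho\in O^{(g)}_{(\loz,\ltz)}$ and $\sigma\in O^{(h)}_{(\loz,\ltz)}$ we have $\unit\in\Vbk{\rho}{(\loz,\ltz)}^{(g)}$ and $\unit\in\Vbk{\sigma}{(\loz,\ltz)}^{(h)}$, i.e.\ $\rho=\pi\beta_g^{\loz}$ and $\sigma=\pi\beta_h^{\loz}$ on $\caA_{\loz\cup\ltz}$. Restricting $X\rho(A)=\sigma(A)X$ to $A\in\caA_{\ltz}$, where both twists act trivially (recall $\loz\cap\ltz=\emptyset$), gives $X\in\pi(\caA_{\ltz})'$; restricting it to $A\in\caA_{\loz}$ and using $\Ad R_g\circ\pi=\pi\circ\beta_g$ together with $\beta_g^{\loz}|_{\caA_{\loz}}=\beta_g|_{\caA_{\loz}}$ gives $R_h^{*}XR_g\in\pi(\caA_{\loz})'$. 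By normality both memberships pass to the von Neumann closures.

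\emph{Step 2 (the crucial point, via the split property).} Apply Assumption~\ref{split} to $(\loz,\ltz)$: there is a type~I factor $F$ with $\pi(\caA_{\loz})''\subseteq F\subseteq\pi(\caA_{\ltz})'$. I would use $F$ --- together with the fact that $X$ acts as $\pi$ on $\caA_{\ltz}$, that $X\in\caF$, and that $\rho\in O^{(g)}_{(\loz,\ltz)}$ is trivialised (up to unitaries in $\caU(\caF)$) on a cofinal family of pairs in $PC$ --- to show that \emph{$X$ in fact commutes with $\pi(\caA_{\loz})''$}; heuristically, an $\caF$-morphism cannot straddle the $g$-defect carried by $\rho$ along the left cone $\loz$. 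Concretely, writing $\caH\cong\caK\otimes\caL$ with $F=\caB(\caK)\otimes\bbC$ one has $\pi|_{\caA_{\loz}}=\mu\otimes\unit$ with $\mu$ faithful on the simple algebra $\caA_{\loz}$, and $\pi(\caA_{\ltz})''\subseteq\bbC\otimes\caB(\caL)$; since $X\in\pi(\caA_{\ltz})'=\caB(\caK)\,\bar\otimes\,\pi(\caA_{\ltz})'$, a normal slice in the $\caL$-leg reduces matters to a non-zero $X_0\in\caB(\caK)$ with $X_0\,\mu\beta_g^{\loz}(A)=\mu\beta_h^{\loz}(A)\,X_0$ for $A\in\caA_{\loz}$, and the point to be established is that such $X_0$ must commute with $\mu(\caA_{\loz})''$.

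\emph{Step 3 (conclusion, assuming Step 2).} If $X$ commutes with $\pi(\caA_{\loz})''$, then for $A\in\caA_{\loz}$ the element $\pi\beta_g(A)=R_g\pi(A)R_g^{*}$ again lies in $\pi(\caA_{\loz})''$ (as $\beta_g(\caA_{\loz})=\caA_{\loz}$), hence $X\pi\beta_g(A)=\pi\beta_g(A)X$, while $R_h^{*}XR_g\in\pi(\caA_{\loz})'$ gives $X\pi\beta_g(A)=\pi\beta_h(A)X$; therefore $\bigl(\pi\beta_g(A)-\pi\beta_h(A)\bigr)X=0$ for all $A\in\caA_{\loz}$. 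Specialising to a single site $A\in\caA_{\{z\}}\cong\Mat_d$ with $z\in\loz$ and decomposing $\caH$ as a $\caA_{\{z\}}$-module, an elementary computation in $\Mat_d$ shows that the operators $\pi\bigl(\Ad U_g(A)\bigr)-\pi\bigl(\Ad U_h(A)\bigr)$, $A\in\Mat_d$, have trivial joint kernel whenever $\Ad U_g\neq\Ad U_h$; since $X\neq0$ this forces $\Ad U_g=\Ad U_h$ on $\Mat_d$, whence $g=h$ by faithfulness of \eqref{ffl}. The main obstacle is precisely the italicised claim in Step~2 --- that a non-zero intertwiner between a $g$-localized and an $h$-localized sector must commute with $\pi(\caA_{\loz})''$. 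This carries all the geometric content and is where the split property does its work; I expect the argument to run along the lines of the Haag-duality/additivity proofs for superselection sectors (cf.\ the type~I factor techniques used elsewhere in the paper), combining Assumption~\ref{split} with the defining property of $O^{(g)}_{(\loz,\ltz)}$ over a cofinal family of pairs in $PC$. Steps~1 and 3 are routine.
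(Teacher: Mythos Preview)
Your Step~2 is the gap, and the formulation you give is not on the right track. After slicing with a normal functional in the $\caL$-leg you obtain an operator $X_0\in\caB(\caK)$ satisfying $X_0\,\mu\beta_g^{\loz}(A)=\mu\beta_h^{\loz}(A)\,X_0$, but in doing so you have discarded the one constraint that carries the geometry, namely $X\in\caF$. Without that information there is no reason for $X_0$ to commute with $\mu(\caA_{\loz})$: a faithful representation $\mu$ of the UHF algebra $\caA_{\loz}$ can perfectly well admit non-zero intertwiners between $\mu$ and $\mu\circ\alpha$ for a non-trivial product automorphism $\alpha$ (think of the universal representation). So the reduced problem you pose is as hard as the original one, and your heuristic ``an $\caF$-morphism cannot straddle the $g$-defect'' has not been turned into an argument.

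The paper does not try to prove that $X$ commutes with $\pi(\caA_{\loz})''$. Instead it exploits $X\in\caF$ \emph{quantitatively}: pick $A_0\in\caA_{\{\bm 0\}}$ with $\delta:=\lV\beta_g(A_0)-\beta_h(A_0)\rV>0$ (possible by faithfulness of \eqref{ffl}); approximate $X$ in norm by some $x\in\pi(\caA_{(\lo\cup\lt)^c})''$ with $\lV X-x\rV<\delta/4$ for a suitable $(\lo,\lt)\in PC$; choose $\ld_3\in\Cl$ with $(\ld_3,(\lo)^c)\in PC$ and apply Assumption~\ref{split} \emph{there}, not at $(\loz,\ltz)$, to get a type~I factor $F$ with $\pi(\caA_{\ld_3})''\subset F\subset\pi(\caA_{(\lo)^c})'$. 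Now take $\bm z\in\ld_3\cap\loz$ and let $A_z$ be the copy of $A_0$ at $\bm z$. Then $x\in\pi(\caA_{(\lo)^c})''\subset F'$ and $\pi(\beta_g(A_z)-\beta_h(A_z))\in\pi(\caA_{\ld_3})''\subset F$, so type~I gives $\lV x\cdot\pi(\beta_g(A_z)-\beta_h(A_z))\rV=\lV x\rV\cdot\delta=\delta$. Since $\bm z\in\loz$, the intertwining relation reads $X\pi\beta_g(A_z)=\pi\beta_h(A_z)X$, and a direct estimate
\[
0=\lV X\pi\beta_g(A_z)-\pi\beta_h(A_z)X\rV\ge\lV x\cdot\pi(\beta_g(A_z)-\beta_h(A_z))\rV-2\lV X-x\rV\ge\delta-\tfrac{\delta}{2}>0
\]
yields the contradiction. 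The point is that the split property is applied at a pair of cones adapted to the \emph{approximant} $x$, so that $x$ lands exactly in $F'$; your attempt to apply it at $(\loz,\ltz)$ puts $X$ only in $\pi(\caA_{\ltz})'$, which is far too large to conclude anything.
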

\begin{proof}
Suppose that $g\neq h$.
Because (\ref{ffl}) is a faithful action, 
there exists $A_0\in \caA_{\{\bm 0\}}$ 
with $\lV A_0\rV=1$ such that 
\[
\delta:=\lV \beta_g(A_0)-\beta_h(A_0)\rV>0.
\]
Suppose that there exists an element
$X\in \Mor_G\lmk\rho,\sigma\rmk$ with $\lV X\rV=1$.
We derive a contradiction out of this.
Because $X\in \caF$ by the definition, there exists  $(\lo,\lt)\in PC$
and $x\in \pi\lmk \caA_{(\lo\cup\lt )^c}\rmk''$ with $\lV x\rV=1$
such that $\lV X-x\rV<\frac \delta 4$.
Choose $\ld_3\in \Cl$ such that $(\ld_3,\lmk \lo \rmk^c)\in PC$.
By Assumption \ref{split}, there exists a type I factor such that
\begin{align}
\begin{split}
\pi\lmk\caA_{\ld_3}\rmk''\subset F\subset \pi\lmk\caA_{\lmk \lo \rmk^c}\rmk'.
\end{split}
\end{align}
Choose a point $\bm z\in \ld_3\cap \loz$, and let  $A_z\in \caA_{\{\bm z\}}$ be
the copy of $A_0$ in $\caA_{\{\bm z\}}$.
Then we have
\begin{align}
\begin{split}
&x\in \pi\lmk \caA_{(\lo\cup\lt )^c}\rmk''\subset \pi\lmk \caA_{(\lo)^c}\rmk''\subset F'\\
& \pi\lmk \beta_g(A_z)-\beta_h(A_z)\rmk\in \pi\lmk\caA_{\ld_3}\rmk''
\subset F.
\end{split}
\end{align}
Therefore we have
\begin{align}
\begin{split}
\lV x\cdot \lmk \pi\lmk \beta_g(A_z)-\beta_h(A_z)\rmk \rmk\rV
=\lV x\rV \lV \beta_g(A_z)-\beta_h(A_z)\rV
=\delta.
\end{split}
\end{align}
Substituting this, 
we have
\begin{align}
\begin{split}
&0=\lV X \rho(A_z)-\sigma(A_z)X\rV
=\lV X\cdot \pi\beta_g(A_z)-\pi\beta_h(A_z)\cdot X\rV\\
&\ge \lV x\cdot \pi\beta_g(A_z)-\pi\beta_h(A_z)\cdot x\rV
-2\lV X-x\rV \\
&\ge \lV x\cdot \lmk \pi\lmk \beta_g(A_z)-\beta_h(A_z)\rmk \rmk\rV-\frac12 \delta\\
&=\delta-\frac12 \delta>0,
\end{split}
\end{align}
which is a contradiction.

\end{proof}
To show the existence of subobjects, we will need the following Lemma.
\begin{lem}\label{lem69}Consider the setting in subsection \ref{sec:qss}.
Assume Assumption \ref{a2}.
Let $(\lo,\lt), (\go,\gt)\in PC$,
$D\in \CUbk$ with
$\lo\subset \go$, $\lt\subset \gt$, 
$D\cap\lmk \lo\cup \lt\rmk=\emptyset$, and $D\subset \go$.
Then for any projection $p\in  \pi\lmk\caA_{\go\cup\gt}\rmk'\cap \caF$,
there exists an isometry
$w\in \pi\lmk \caA_{\lo\cup\lt }\rmk'\cap\caF$
such that $ww^*=p$.
\end{lem}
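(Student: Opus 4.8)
The plan is to produce $w$ by showing that the identity $\unit$ and $p$ are Murray--von Neumann equivalent inside the von Neumann algebra $\pi(\caA_{\lo\cup\lt})'$ and, more delicately, that the implementing partial isometry can be kept inside the norm-closed algebra $\caF$; the proper infiniteness supplied by the ``spare'' cone $D$ is the engine.

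First I would line up the inclusions. Since $D\cap(\lo\cup\lt)=\emptyset$ we have $\pi(\caA_D)''\subseteq\pi(\caA_{\lo\cup\lt})'$, while $D\subseteq\go\subseteq\go\cup\gt$ gives $\pi(\caA_D)''\subseteq\pi(\caA_{\go\cup\gt})''$ and hence $p\in\pi(\caA_{\go\cup\gt})'\subseteq\pi(\caA_D)'$, i.e., $p$ commutes with $\pi(\caA_D)''$. From $\lo\cup\lt\subseteq\go\cup\gt$ we also get $p\in\pi(\caA_{\go\cup\gt})'\subseteq\pi(\caA_{\lo\cup\lt})'$, so $p$ already lies in the target von Neumann algebra. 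Finally, since $D\in\CUbk$ is a cone whose directions stay bounded away from the $x$-axis, one can choose $(\lo_1,\lt_1)\in PC$ with $D\subseteq(\lo_1\cup\lt_1)^c$, so that $\pi(\caA_D)''\subseteq\pi(\caA_{(\lo_1\cup\lt_1)^c})''\subseteq\caF$; with $p\in\caF$ this places both $\pi(\caA_D)''$ and $p$ in the C*-algebra $\pi(\caA_{\lo\cup\lt})'\cap\caF$.

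Next, Assumption~\ref{a2} makes $\pi(\caA_D)''$ properly infinite, so it contains isometries $u_1,u_2,\dots$ with $u_m^*u_n=\delta_{mn}\unit$ and $\sum_n u_nu_n^*=\unit$. Since $p$ commutes with every $u_n$, the partial isometries $w_n:=u_np$ satisfy $w_n^*w_n=p$ and $w_nw_n^*=u_nu_n^*p$, which are mutually orthogonal subprojections of $p$ summing to $p$, and each $w_n$ lies in $\pi(\caA_{\lo\cup\lt})'\cap\caF$. This exhibits $p\sim p\oplus p\oplus\cdots$ inside $\pi(\caA_{\lo\cup\lt})'$, so both $p$ and $\unit$ are properly infinite there. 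A properly infinite projection with full central support in a $\sigma$-finite von Neumann algebra is equivalent to the unit, and $p\le\unit$; so it remains to check that $p$ (which we take $\neq0$) has full central support in $\pi(\caA_{\lo\cup\lt})'$ and to arrange the equivalence inside $\caF$. For the $\caF$ part I would proceed Cantor--Schr\"oder--Bernstein style, combining the trivial subequivalence $p\preceq\unit$ with an isometry $v\in\pi(\caA_{\lo\cup\lt})'\cap\caF$ satisfying $vv^*\le p$ --- the latter assembled from the $u_n$ and from a projection in some $\pi(\caA_{(\lo_1'\cup\lt_1')^c})''$ norm-close to $p$ --- and then verify that the alternating products building $w$ converge in norm, so that $w\in\caF$, while $w\in\pi(\caA_{\lo\cup\lt})'$ holds by construction.

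I expect the last two points to be the real work, given that only Assumption~\ref{a2} is available here: showing $p$ is ``large enough'' relative to $\unit$ in $\pi(\caA_{\lo\cup\lt})'$ --- the disconnectedness of $\lo\cup\lt$ blocks a one-line appeal to factoriality, so one must use the structure of the cone and cone-complement algebras as in \cite{MTC} --- and keeping the implementing isometry inside the norm-closed $\caF$ rather than merely its weak closure. This is precisely where the argument deforms that of \cite{muger2005conformal} and follows \cite{BA}.
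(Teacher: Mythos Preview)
Your setup in the first two paragraphs is correct, and you have honestly isolated the two genuine obstacles: (i) that $p$ has full central support in $\pi(\caA_{\lo\cup\lt})'$ (which need not be a factor), and (ii) that the implementing isometry for $\unit\sim p$ can be kept inside the \emph{norm}-closed algebra $\caF$ rather than only its weak closure. The gap is that your sketched route through these obstacles does not work. The Cantor--Schr\"oder--Bernstein construction builds the implementing partial isometry as a strong-operator sum of countably many orthogonal pieces indexed by the back-and-forth decomposition; these sums converge strongly in the ambient von Neumann algebra but there is no mechanism here forcing them to converge in norm, so you do not land in $\caF$. Likewise, ``use the structure of the cone and cone-complement algebras as in \cite{MTC}'' is not a proof of full central support in $\pi(\caA_{\lo\cup\lt})'$; nothing in the hypotheses even tells you this commutant is a factor.

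The paper's argument avoids both issues simultaneously by \emph{never} working in $\pi(\caA_{\lo\cup\lt})'$ directly. It first uses $p\in\caF$ to approximate $p$ in norm by a self-adjoint element of some $\pi(\caA_{(\tilde\go\cup\tilde\gt)^c})''$ with $\tilde\go\subset\go$, $\tilde\gt\subset\gt$, $D\subset(\tilde\go\cup\tilde\gt)^c$, and then upgrades this (via the ``approximate projection $\Rightarrow$ nearby projection $\Rightarrow$ implementing unitary close to $\unit$'' lemmas of \cite{BA}) to a genuine projection $q\in\caM:=\pi(\caA_{\lo\cup\lt})'\cap\pi(\caA_{(\tilde\go\cup\tilde\gt)^c})''$ together with a unitary $u\in\pi(\caA_{\go\cup\gt})'\cap\caF$ satisfying $p=uqu^*$ and $\lV u-\unit\rV$ small. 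The point is that $\caM$ \emph{is} a factor (Lemma~2.6 of \cite{BA}) and contains the properly infinite $\pi(\caA_D)''$, so it is an infinite factor; Lemma~5.10 of \cite{MTC} then yields $q\sim\unit$ via an isometry $v\in\caM$. Since $\caM\subset\pi(\caA_{(\tilde\go\cup\tilde\gt)^c})''\subset\caF$, one has $v\in\caF$ for free, and $w:=uv\in\pi(\caA_{\lo\cup\lt})'\cap\caF$ with $ww^*=uqu^*=p$. Both of your obstacles are thus dissolved by \emph{shrinking the arena} to a factor already contained in $\caF$, rather than by trying to control a CSB-type limit in norm.
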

\begin{proof}
We apply Lemma 5.10 of \cite{MTC}.
Let $\delta>0$ be the number given in  Lemma 5.10 of \cite{MTC}.
We also use Lemma B.2,Lemma A.3, Lemma A.2 of \cite{BA}.
Consider the number $\delta_3(\delta_2(\delta))$ given for 
 $\delta>0$(the number given in  Lemma 5.10 of \cite{MTC}) with the functions
 $\delta_2$, $\delta_3$ given in Lemma A.2, Lemma A.3 \cite{BA}.

Let $p\in \pi\lmk \caA_{\go\cup\gt}\rmk'\cap\caF$ be a projection.
Then, because $p\in\caF$, there exists $\lmk\tilde\go,\tilde \gt\rmk\in PC$, 
and a self-adjoint $x\in \pi(\caA_{\lmk \tilde\go\cup \tilde \gt\rmk^c})''$ such that
$\lV
p-x
\rV\le \delta_3(\delta_2(\delta))$.
We may assume that $\tilde\go\subset \go$ and $\tilde\gt\subset \gt$,
$D\subset \lmk \tilde\go\cup \tilde \gt\rmk^c$.

Apply Lemma B.2  of \cite{BA} with
$\lm 1$, $\lm 2$, $\Gamma$, $(\caH,\pi)$,
$x,y,\varepsilon$
replaced by 
$\lmk \go\cup\gt\rmk^c $, $\lmk \tilde\go\cup \tilde \gt\rmk^c$, $\bbZ^2$, $(\caH,\pi)$
$p,x,\delta_3(\delta_2(\delta))$.
Then we obtain self-adjoint
$z\in \pi\lmk\caA_{\go\cup\gt}\rmk'\cap\pi(\caA_{\lmk \tilde\go\cup \tilde \gt\rmk^c})''$
such that $\lV p-z\rV\le \delta_3(\delta_2(\delta))$.

Now apply Lemma A.3 \cite{BA} with
$\caH$, $\caA$, $p$, $x$
replaced by $\caH$, $ \pi\lmk\caA_{\go\cup\gt}\rmk'\cap\pi(\caA_{\lmk \tilde\go\cup \tilde \gt\rmk^c})''$,
$p$, $z$ respectively.
Then, we obtain
a projection $q\in  \pi\lmk\caA_{\go\cup\gt}\rmk'\cap\pi(\caA_{\lmk \tilde\go\cup \tilde \gt\rmk^c})''$
such that $\lV q-p\rV<\delta_2(\delta)$.
%
%
Next we apply Lemma A.2 of \cite{BA} with
$\caA$, $p$, $q$ replaced by
$ \pi\lmk\caA_{\go\cup\gt}\rmk'\cap\caF$, $p$, $q$ respectively.
Then from Lemma A.2 of \cite{BA}, we obtain a unitary
$u\in \pi\lmk\caA_{\go\cup\gt}\rmk'\cap\caF$
such that $p=uqu^*$, $\lV u-\unit\rV<\delta$.

As in Lemma 2.6 \cite{BA}, 
$\caM:=\pi\lmk\caA_{\lo\cup\lt }\rmk'\cap \pi\lmk\caA_{\lmk \tilde\go\cup \tilde \gt\rmk^c}\rmk''$
is a factor.
There is a cone $\ld$ such that $\ld\subset \lmk \go\cup\gt\rmk^c\subset \lmk \lo\cup\lt \rmk^c\cap \lmk \tilde\go\cup \tilde \gt\rmk^c$.
Because $\caM$ includes $\pi(\caA_{\Lambda})''$, it means
$\caM$ is an infinite factor.

Now we apply Lemma 5.10 of \cite{MTC}, with
$\caH$, $\caN$, $\caM$, $p$, $u$
replaced by $\caH$, $\pi(\caA_{D})''$, $\caM$, $p$, $u^*$.
Then from Lemma 5.10 of \cite{MTC}, we have
$q\sim \unit$ in $\caM$.
Namely, there exists an isometry $v\in\caM$
 such that
 $vv^*=q$.
 
 Set $w:=uv\in \pi\lmk\caA_{\lo\cup\lt }\rmk'\cap \caF$.
 Then this $w$ is isometry with
 $ww^*=uqu^*=p$, proving the claim.

\end{proof}
From this Lemma, we obtain the following.
\begin{lem}\label{sg}Consider the setting in subsection \ref{sec:qss}.
Assume Assumption \ref{a2}.
For any $\rho\in O^{(g)}_{(\loz,\ltz)}$ and a projection
$p\in \Mor_G\lmk\rho,\rho\rmk$,
there exists a $\gamma\in O^{(g)}_{(\loz,\ltz)}$
and an isometry $v\in \Mor_G\lmk\gamma,\rho\rmk$
such that $vv^*=p$.
\end{lem}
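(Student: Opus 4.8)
The plan is to realise $\gamma$ as a cutdown $\gamma(A):=v^{*}\rho(A)v$ of $\rho$ by an isometry $v\in\caF$ with range projection $p$, the isometry being produced from Lemma \ref{lem69}. I would first record that $p\in\Mor_G(\rho,\rho)$ means $p\in\caF$ and $p$ commutes with all of $\rho(\caA)$; since $\rho\in O^{(g)}_{(\loz,\ltz)}$ gives $\rho|_{\caA_{\loz\cup\ltz}}=\pi\beta_g^{\loz}|_{\caA_{\loz\cup\ltz}}$ and $\beta_g^{\loz}$ maps $\caA_{\loz\cup\ltz}$ onto itself (it acts by the on-site action on $\caA_{\loz}$ and trivially on $\caA_{\ltz}$), the projection $p$ lies in $\pi(\caA_{\loz\cup\ltz})'\cap\caF$.

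Next I would fix $(\lo,\lt),(\go,\gt)\in PC$ and a cone $D\in\CUbk$ with $\loz\subseteq\lo\subseteq\go$, $\ltz\subseteq\lt\subseteq\gt$, $D\subseteq\go$ and $D\cap(\lo\cup\lt)=\emptyset$, and put $V:=V^{(g)}_{\rho,(\go,\gt)}\in\caU(\caF)$. Since $\Ad(V)\rho|_{\caA_{\go\cup\gt}}=\pi\beta_g^{\go}|_{\caA_{\go\cup\gt}}$ and $\beta_g^{\go}$ preserves $\caA_{\go\cup\gt}$, the projection $q:=VpV^{*}$ lies in $\pi(\caA_{\go\cup\gt})'\cap\caF$, so Lemma \ref{lem69} yields an isometry $w\in\pi(\caA_{\lo\cup\lt})'\cap\caF$ with $ww^{*}=q$. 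Set $v:=V^{*}w\in\caF$ and $\gamma(A):=v^{*}\rho(A)v$. Then $v^{*}v=\unit$ and $vv^{*}=p$, so $\gamma$ is a unital $*$-homomorphism, hence a representation of $\caA$ on $\caH$ (multiplicativity uses $p\in\rho(\caA)'$), and $v\gamma(A)=vv^{*}\rho(A)v=p\rho(A)v=\rho(A)v$, i.e. $v\in\Mor_G(\gamma,\rho)$ with $vv^{*}=p$. Moreover, for $A\in\caA_{\loz\cup\ltz}\subseteq\caA_{\go\cup\gt}$ one has $V\rho(A)V^{*}=\pi\beta_g^{\go}(A)=\pi\beta_g^{\loz}(A)$ — the last equality because $\loz\subseteq\go$ while $\ltz\cap\go\subseteq\gt\cap\go=\emptyset$, so $\beta_g^{\go}$ and $\beta_g^{\loz}$ agree on $\caA_{\loz\cup\ltz}$ — and hence $\gamma(A)=w^{*}\pi\beta_g^{\loz}(A)w=\pi\beta_g^{\loz}(A)$, since $\pi\beta_g^{\loz}(\caA_{\loz\cup\ltz})=\pi(\caA_{\loz\cup\ltz})\subseteq\pi(\caA_{\lo\cup\lt})$ commutes with $w$. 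Thus $\unit\in\Vbk{\gamma}{(\loz,\ltz)}^{(g)}$.

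It remains to see $\gamma\in O^{(g)}$, i.e. that $\Vbk{\gamma}{(\lo',\lt')}^{(g)}\ne\emptyset$ for every $(\lo',\lt')\in PC$ and that $\gamma$ satisfies the superselection criterion for $\pi$. I would obtain these by the same device applied to the region at hand: conjugate $p$ by $V^{(g)}_{\rho,(\cdot)}$ (resp. by an element of $\Vbk{\rho}{\Lambda}$) into the relative commutant of a suitably enlarged cone pair (resp. of $\pi(\caA_{\Lambda^{c}})$), use Lemma \ref{lem69} (resp. the analogous comparison) to write the transported projection as $ww^{*}$ for an isometry $w$ in $\caF$ (resp. in $\caU(\caH)$), and then transport $w$ back and combine it with $v$ and the relevant intertwiner of $\rho$ to produce the required implementer. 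The main obstacle is exactly this matching of localisation regions: Lemma \ref{lem69} always produces an isometry localised in the complement of a pair \emph{strictly smaller} than the one controlling the input projection, so it cannot be applied to $p$ directly when one wants $\gamma$ localised at $(\loz,\ltz)$; the conjugations by the intertwiners $V^{(g)}_{\rho,(\cdot)}$ are precisely what first move $p$ into the commutant of a larger pair and then move the resulting isometry back. Once the cones are chosen, the remaining verifications — that every unitary and isometry produced lies in $\caF$, and that the compared projections are properly infinite, so that the Murray--von Neumann comparisons inside the proof of Lemma \ref{lem69} go through (using Assumption \ref{a2}) — are routine.
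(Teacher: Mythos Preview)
Your argument is correct and follows essentially the same route as the paper. Both proofs transport $p$ by an intertwiner $V^{(g)}_{\rho,(\Gamma_1,\Gamma_2)}$ into $\pi(\caA_{\Gamma_1\cup\Gamma_2})'\cap\caF$, invoke Lemma~\ref{lem69} to realise the transported projection as $ww^{*}$ with $w\in\pi(\caA_{\lo\cup\lt})'\cap\caF$, set $v=V^{*}w$ and $\gamma=\Ad(v^{*})\circ\rho$, and then repeat the construction at every $(\lo',\lt')\in PC$ (respectively at every cone $\Lambda$, using the single-cone analogue of Lemma~\ref{lem69}) to produce the unitary implementers $X_{(\lo',\lt')}=w_{(\lo',\lt')}^{*}V^{(g)}_{\rho,(\Gamma_1',\Gamma_2')}v^{*}$ witnessing $\gamma\in O^{(g)}_{(\loz,\ltz)}\cap O$. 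Your introduction of an intermediate pair $(\lo,\lt)$ between $(\loz,\ltz)$ and $(\go,\gt)$ is harmless; the paper simply takes $(\lo,\lt)=(\loz,\ltz)$.
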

\begin{proof}
For any $(\lo,\lt)\in PC$, choose $V_{{\rho},{(\lo,\lt)}}^{(g)} \in \Vbk{\rho}{(\lo,\lt)}^{(g)}$.
Then we have
\begin{align}
\begin{split}
p_{(\lo,\lt)}:=\Ad\lmk V_{{\rho},{(\lo,\lt)}}^{(g)} \rmk(p)
\in \pi\lmk\caA_{\lo\cup\lt}\rmk'\cap \caF.
\end{split}
\end{align}
In fact, for any $A\in \caA_{\lo\cup\lt}$,
we have
\begin{align}
\begin{split}
&p_{(\lo,\lt)}\pi(A)
= V_{{\rho},{(\lo,\lt)}}^{(g)} p \lmk V_{{\rho},{(\lo,\lt)}}^{(g)}\rmk^*\pi(A)
V_{{\rho},{(\lo,\lt)}}^{(g)} \lmk V_{{\rho},{(\lo,\lt)}}^{(g)}\rmk^*\\
&=V_{{\rho},{(\lo,\lt)}}^{(g)} p \rho\beta_{g^{-1}}^{\lo}
(A)
 \lmk V_{{\rho},{(\lo,\lt)}}^{(g)}\rmk^*
 = V_{{\rho},{(\lo,\lt)}}^{(g)} \rho\beta_{g^{-1}}^{\lo}
(A) p
 \lmk V_{{\rho},{(\lo,\lt)}}^{(g)}\rmk^*\\
 &= V_{{\rho},{(\lo,\lt)}}^{(g)} \rho\beta_{g^{-1}}^{\lo}
(A) \lmk V_{{\rho},{(\lo,\lt)}}^{(g)}\rmk^*  V_{{\rho},{(\lo,\lt)}}^{(g)}p
 \lmk V_{{\rho},{(\lo,\lt)}}^{(g)}\rmk^*
 =\pi(A)p_{(\lo,\lt)}.
\end{split}
\end{align}

Next, for each $(\lo,\lt)\in PC$, choose 
$\lmk\Gamma_{1}^{(\lo,\lt)}, \Gamma_{2}^{(\lo,\lt)}\rmk\in PC$
with $\lo\subset \Gamma_{1}^{(\lo,\lt)}$, $\lt\subset \Gamma_{2}^{(\lo,\lt)}$, 
so that there exists 
$D\in \CUbk$ with
$D\cap\lmk\lo\cup \lt\rmk=\emptyset$, and $D\subset \Gamma_{1}^{(\lo,\lt)}$.
Applying the previous Lemma to 
\begin{align}
\begin{split}
p_{\lmk\Gamma_{1}^{(\lo,\lt)}, \Gamma_{2}^{(\lo,\lt)}\rmk}
\in \pi\lmk\caA_{\Gamma_{1}^{(\lo,\lt)}\cup \Gamma_{2}^{(\lo,\lt)}}\rmk'\cap \caF,
\end{split}
\end{align}
there exists an isometry 
$w_{(\lo,\lt)}\in \pi\lmk \caA_{\lo\cup\lt }\rmk'\cap\caF$
such that $w_{(\lo,\lt)}w_{(\lo,\lt)}^*=p_{\lmk\Gamma_{1}^{(\lo,\lt)}, \Gamma_{2}^{(\lo,\lt)}\rmk}$.
Then we have
\begin{align}
\begin{split}
\gamma:=\Ad\lmk w_{(\loz,\ltz)}^* V_{{\rho},{\lmk \Gamma_{1}^{(\loz,\ltz)},\Gamma_{2}^{(\loz,\ltz)}\rmk }}^{(g)} \rmk
\rho\in O^{(g)}_{(\loz,\ltz)},
\end{split}
\end{align}
and
\begin{align}
\begin{split}
v:=\lmk V_{{\rho},{\lmk \Gamma_{1}^{(\loz,\ltz)} ,\Gamma_{2}^{(\loz,\ltz)}\rmk }}^{(g)}\rmk^*  w_{(\loz,\ltz)}\in \Mor_G(\gamma, \rho).
\end{split}
\end{align}
Here $v$ is an isometry such that $vv^*=p$.
In fact, for any $(\lo,\lt)\in PC$, 
set
\begin{align}
\begin{split}
X_{(\lo,\lt)}:=w_{(\lo,\lt)}^* V_{{\rho},{\lmk \Gamma_{1}^{(\lo,\lt)},\Gamma_{2}^{(\lo,\lt)}\rmk }}^{(g)} 
\lmk
w_{(\loz,\ltz)}^* V_{{\rho},{\lmk \Gamma_{1}^{(\loz,\ltz)},\Gamma_{2}^{(\loz,\ltz)}\rmk }}^{(g)} 
\rmk^*.
\end{split}
\end{align}
This $X_{(\lo,\lt)}$ is a unitary in $\caF$
and
\begin{align}
\begin{split}
&\left.\Ad X_{(\lo,\lt)}\circ\gamma\right\vert_{\caA_{\lo\cup\lt}}
=\left.\Ad \lmk w_{(\lo,\lt)}^* V_{{\rho},{\lmk \Gamma_{1}^{(\lo,\lt)},\Gamma_{2}^{(\lo,\lt)}\rmk }}^{(g)} \rmk
\rho\right\vert_{\caA_{\lo\cup\lt}}\\
&=\left.\Ad \lmk w_{(\lo,\lt)}^*\rmk
\pi\beta_g^{\Gamma_1^{(\lo,\lt)}}
\right\vert_{\caA_{\lo\cup\lt}}
=\left.\Ad \lmk w_{(\lo,\lt)}^*\rmk
\pi\beta_g^{\Lambda_1}
\right\vert_{\caA_{\lo\cup\lt}}
=\left.
\pi\beta_g^{\Lambda_1}
\right\vert_{\caA_{\lo\cup\lt}},
\end{split}
\end{align}
because $\lo\subset \Gamma_{1}^{(\lo,\lt)}$, $\lt\subset \Gamma_{2}^{(\lo,\lt)}$.
Similarly, 
for each cone $\Gamma$, we choose 
$V_{\rho\Gamma}\in \caV_{\rho\Gamma}$ (see \cite{MTC}).
Then we have $\Ad V_{\rho\Gamma} (p)\in \pi\lmk\caA_{\Gamma^c}\rmk'$.
For each cone $\ld$, choose cones $\Gamma_\ld, D_\ld$
such that $D_\Lambda\cap\Gamma_\Lambda=\emptyset$
and $\Gamma_\ld\subset \ld$.
Then there exists an isometry $w_\ld\in \pi\lmk\caA_{\ld^c}\rmk'$
with $w_\ld w_\ld^*=\Ad V_{\rho\Gamma_\ld} (p)$.
Setting 
\[Y_\ld:=w_\ld^* V_{\rho\Gamma_\ld} \lmk
w_{(\loz,\ltz)}^* V_{{\rho},{\lmk \Gamma_{1}^{(\loz,\ltz)},\Gamma_{2}^{(\loz,\ltz)}\rmk }}^{(g)} 
\rmk^
*,\]
we have $\Ad Y_\ld\gamma\vert_{\caA_{\ld^c}}=\pi\vert_{\caA_{\ld^c}}$.
Hence we also have $\gamma\in \caO$.
\end{proof}
The following Lemma gives the direct sums.
\begin{lem}\label{gds}
Consider the setting in subsection \ref{sec:qss}.
Assume Assumption \ref{a2}.
For any $g\in G$ and
 $\rho,\sigma\in O^{(g)}_{(\loz,\ltz)}$,
there exits $\gamma\in O^{(g)}_{(\loz,\ltz)}$
and isometries 
$u\in \Mor_G(\rho,\gamma)$, $v\in \Mor_G(\sigma,\gamma)$
such that $uu^*+vv^*=\unit$.
\end{lem}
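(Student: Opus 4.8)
The plan is to realize $\gamma$ as the direct sum $\rho\oplus\sigma$ in the standard way for superselection sectors, the only nontrivial point being that the isometries and intertwiners must lie in $\caF$, exactly as in the proofs of Lemma \ref{sg} and Lemma \ref{lem69}. First I would fix an auxiliary cone $D_0$ with $D_0\cap(\loz\cup\ltz)=\emptyset$ and $\pdc{D_0}\subset\caF$ (such a cone exists, e.g.\ a cone inside $\lz$, since $\lz\subset(\loz\cup\ltz)^c$ and $\lz$ is disjoint from a thin pair in $PC$, whence $\pdc{\lz}\subset\caF$). By Assumption \ref{a2} the algebra $\pdc{D_0}$ is properly infinite, so it contains isometries $t_1,t_2$ with $t_i^*t_j=\delta_{ij}\unit$ and $t_1t_1^*+t_2t_2^*=\unit$; note $t_1,t_2\in\pi(\caA_{\loz\cup\ltz})'\cap\caF$. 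I then put
\begin{align*}
\gamma(A):=t_1\rho(A)t_1^*+t_2\sigma(A)t_2^*\quad (A\in\caA),\qquad u:=t_1,\quad v:=t_2.
\end{align*}
Using the Cuntz relations it is routine that $\gamma$ is a unital $*$-representation of $\caA$ on $\caH$, that $u,v\in\caF$ are isometries with $uu^*+vv^*=\unit$, and that $u\rho(A)=\gamma(A)u$, $v\sigma(A)=\gamma(A)v$, so $u\in\Mor_G(\rho,\gamma)$ and $v\in\Mor_G(\sigma,\gamma)$. Also, for $A\in\caA_{\loz\cup\ltz}$ we have $\rho(A)=\sigma(A)=\pi\beta_g^{\loz}(A)\in\pi(\caA_{\loz\cup\ltz})$, which commutes with $t_1,t_2$, hence $\gamma(A)=\pi\beta_g^{\loz}(A)$; that is, $\unit\in\Vbk{\gamma}{(\loz,\ltz)}^{(g)}$.

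What remains is to verify $\gamma\in O$ and $\gamma\in O^{(g)}$, which I would do uniformly. For a cone $\Lambda$, pick $V_\rho\in\Vbk{\rho}{\Lambda}$, $V_\sigma\in\Vbk{\sigma}{\Lambda}$ (nonempty since $\rho,\sigma\in O$) and isometries $q_1,q_2\in\pi(\caA_{\Lambda^c})'$ with $q_1q_1^*+q_2q_2^*=\unit$, which exist because $\pi(\caA_{\Lambda^c})'\supset\pdc{\Lambda}$ is properly infinite by Assumption \ref{a2}. Then $V:=q_1V_\rho t_1^*+q_2V_\sigma t_2^*$ is unitary — the cross terms vanish since $t_1^*t_2=0$ and $V_\rho t_1^*t_1V_\rho^*=\unit$ — and, since each $q_i$ commutes with $\pi(A)$ for $A\in\caA_{\Lambda^c}$, one computes $\Ad V\circ\gamma|_{\caA_{\Lambda^c}}=\pi|_{\caA_{\Lambda^c}}$, giving $V\in\Vbk{\gamma}{\Lambda}$ and hence $\gamma\in O$. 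For $\gamma\in O^{(g)}$ I would run the same computation with $\Lambda$ replaced by a pair $(\lo,\lt)\in PC$, with $V_\rho\in\Vbk{\rho}{(\lo,\lt)}^{(g)}$, $V_\sigma\in\Vbk{\sigma}{(\lo,\lt)}^{(g)}\subset\caU(\caF)$, and with $q_1,q_2$ now isometries in $\caF$ commuting with $\pi(\caA_{\lo\cup\lt})$ and summing to $\unit$; since $\beta_g^{\lo}$ preserves $\caA_{\lo\cup\lt}$, $\pi\beta_g^{\lo}(A)$ again commutes with the $q_i$ for $A\in\caA_{\lo\cup\lt}$, and one obtains $V:=q_1V_\rho t_1^*+q_2V_\sigma t_2^*\in\caU(\caF)$ with $\Ad V\circ\gamma|_{\caA_{\lo\cup\lt}}=\pi\beta_g^{\lo}|_{\caA_{\lo\cup\lt}}$, i.e.\ $V\in\Vbk{\gamma}{(\lo,\lt)}^{(g)}$. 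Thus $\gamma\in O^{(g)}_{(\loz,\ltz)}$, and $u,v$ are the required isometries.

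The conceptual content is just the standard $\rho\oplus\sigma$ construction, so the main obstacle is the last step: producing the pair $q_1,q_2$ of Cuntz isometries \emph{inside} $\caF$ (rather than merely in $\caB(\caH)$) that commute with $\pi(\caA_{\lo\cup\lt})$ and sum to $\unit$. This is precisely the technical difficulty dealt with by Lemma \ref{lem69} — pass to slightly larger cones and an auxiliary cone $D\in\CUbk$ disjoint from $\lo\cup\lt$, then use proper infiniteness there — and everything else reduces to mechanical bookkeeping with the Cuntz relations.
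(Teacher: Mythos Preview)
Your proposal is correct and follows the same construction as the paper: choose a cone $D_0\subset(\loz\cup\ltz)^c$ (the paper takes $\ld_3^{(0)}\in\CUbk$), use Assumption \ref{a2} to obtain Cuntz isometries $u,v\in\pi(\caA_{D_0})''\subset\caF\cap\pi(\caA_{\loz\cup\ltz})'$, and set $\gamma=\Ad u\circ\rho+\Ad v\circ\sigma$; the paper then simply asserts that this $\gamma$ satisfies the condition, whereas you spell out the verification that $\gamma\in O^{(g)}_{(\loz,\ltz)}$. One minor remark: your last step does not need the full strength of Lemma \ref{lem69} (which is about isometries with a \emph{prescribed} range projection) --- for any $(\lo,\lt)\in PC$ you can simply pick a cone $D\subset(\lo\cup\lt)^c$ and take $q_1,q_2\in\pi(\caA_D)''\subset\pi(\caA_{\lo\cup\lt})'\cap\caF$ directly from proper infiniteness, exactly as you did for $(\loz,\ltz)$.
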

\begin{proof}
Choose $\ld_3^{(0)}\in \CUbk$ with $\ld_3^{(0)}\subset \lmk \loz\cup\ltz\rmk^c$.
Because $\pi\lmk\caA_{\ld_3^{(0)}}\rmk''$ is properly infinite,
there exist isometries 
\begin{align}
u,v\in \pi\lmk\caA_{\ld_3^{(0)}}\rmk''\subset \caF
\end{align}
such that $uu^*+vv^*=\unit$.
These isometries $u,v$ and
\begin{align}
\gamma:=\Ad u\cdot \rho+\Ad v \cdot \sigma
\end{align}
satisfy the condition.

\end{proof}
Recall the definition of a strict crossed $G$-category from Definition 2.9 of \cite{muger2005conformal}.
\begin{thm}Consider the setting in subsection \ref{sec:qss}.
Assume Assumption \ref{a1}, Assumption \ref{a1l}, Assumption \ref{a1r},
Assumption \ref{a2}, Assumption \ref{split}.
We define a category $C_G$ as follows.
The objects $\Obj C_G$ of $C_G$ are finite direct sums of
elements in $O_G$. Namely,
$\rho\in \Obj C_G$ is of the finite sum form
\begin{align}\label{yama}
\begin{split}
\rho=\sum_{g\in G}\Ad u_g\circ \rho_g.
\end{split}
\end{align}
Here $u_g$ is either an 
isometry $u_g\in \caF$ or $0$ such that 
$\sum_{g\in G} u_gu_g^*=\unit$,
and for $g\in G$ with 
nonzero 
$u_g$, $\rho_g$ belongs to $O^{(g)}_{(\loz,\ltz)}$.
Morphisms between $\rho,\sigma\in\Obj C_G$ are
\begin{align}
\begin{split}
\Mor_{C_G}\lmk \rho,\sigma\rmk:=
\left\{
S\in\caF\mid S\rho(A)=\sigma(A)S,\quad A\in \caA
\right\}.
\end{split}
\end{align}
For each $\rho\in \Obj C_G$, the identity morphism $\id_\rho$ is $\id_{\caH}$,
and composition of morphisms are just multiplication of $\caB(\caH)$.

For  $\rho\in \Obj C_G$ of the form (\ref{yama}),
we set
\begin{align}\label{shat}
\begin{split}
\hat S_\rho^{(l)}:=\sum_{g\in G}\Ad u_g\circ S_{\rho_g}^{(l)\unit},
\end{split}
\end{align}
which defines an endomorphism on $\caB_l$ satisfying
$\hat S_\rho^{(l)}(\caF)\subset \caF$.

The category $C_G$ is a 
strict $C^*$-tensor category
and 
crossed $G$-category with a homogeneous objects $O_G$,
 with respect to the tensor product
\begin{align}\label{tencg}
\begin{split}
\rho\otimes_{C_G} \sigma:=\hat S_\rho^{(l)\unit}\hat S_\sigma^{(l)\unit}\pi,\quad
\rho,\sigma\in \Obj C_G,
\end{split}
\end{align}
and 
\begin{align}\label{mortencg}
\begin{split}
X\otimes_{C_G} Y:=X \hat S_\rho^{(l)\unit}(Y),\quad X\in \Mor_{C_G}(\rho,\rho'),\quad
Y\in \Mor_{C_G}(\sigma,\sigma'),\quad \rho,\sigma, \rho',\sigma'\in \Obj C_G.
\end{split}
\end{align}
The tensor unit is the representation $\pi$.
\end{thm}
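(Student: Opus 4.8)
The plan is to assemble $C_G$ from the $G$-localized pieces of Section \ref{gsu}, using Lemma \ref{sg} and Lemma \ref{gds} to pass to finite direct sums, and then to read each axiom of a strict $C^*$-tensor category and of a braided crossed $G$-category (Definition 2.9 of \cite{muger2005conformal}) off the lemmas already proved. I would first check that $\hat S_\rho^{(l)}$ of (\ref{shat}) is a well-defined endomorphism of $\caB_l$ with $\hat S_\rho^{(l)}(\caF)\subset\caF$ and $\hat S_\rho^{(l)}\pi=\rho$: since the $u_g$ are isometries with $\sum_g u_gu_g^*=\unit$, the ranges $u_gu_g^*$ are mutually orthogonal projections and $u_g^*u_h=\delta_{g,h}\unit$, so together with Lemma \ref{kiri} applied to each $\rho_g$ the map $\hat S_\rho^{(l)}$ is multiplicative and $*$-preserving, $\sigma$-weakly continuous on each $\pi(\caA_{\Lambda_l})''$, $\Lambda_l\in\Cl$, and $\caF$-preserving. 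The density/continuity argument proving uniqueness in Lemma \ref{kiri} applies verbatim to finite direct sums, so $\hat S_\rho^{(l)}$ is the \emph{unique} homomorphism $\caB_l\to\caB(\caH)$ with those two properties; in particular it depends only on $\rho$, not on the chosen decomposition (\ref{yama}). From this uniqueness one gets the functoriality identity
\begin{align}\label{funid}
\hat S_{\rho\otimes_{C_G}\sigma}^{(l)}=\hat S_\rho^{(l)}\,\hat S_\sigma^{(l)}\qquad\text{on }\caB_l,
\end{align}
since the right side is again such a homomorphism and restricts on $\pi(\caA)$ to $\hat S_\rho^{(l)}\circ\sigma=\rho\otimes_{C_G}\sigma$ (here one uses $\caF\subset\caB_l$, so that $\sigma(\caA)\subset\caB_l$). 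Identity (\ref{funid}) together with $\hat S_\pi^{(l)}=\id_{\caB_l}$ is the engine of strictness.

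Next I would check $\otimes_{C_G}$ is well defined. Writing $\rho=\sum_g\Ad u_g\circ\rho_g$ and $\sigma=\sum_h\Ad v_h\circ\sigma_h$ one computes $\rho\otimes_{C_G}\sigma=\sum_{g,h}\Ad\!\bigl(u_g\,S_{\rho_g}^{(l)\unit}(v_h)\bigr)\circ(\rho_g\otimes_G\sigma_h)$, where $\rho_g\otimes_G\sigma_h\in O^{(gh)}_{(\loz,\ltz)}$ by Lemma \ref{framingo}, the operators $u_g S_{\rho_g}^{(l)\unit}(v_h)$ are isometries in $\caF$, and their ranges sum to $\unit$; collecting for each $k\in G$ the summands with $gh=k$ into a single element of $O^{(k)}_{(\loz,\ltz)}$ by repeated use of Lemma \ref{gds} exhibits $\rho\otimes_{C_G}\sigma$ in the normal form (\ref{yama}). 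That $X\otimes_{C_G}Y\in\Mor_{C_G}(\rho\otimes\sigma,\rho'\otimes\sigma')$ and that $\otimes_{C_G}$ is a bifunctor is a direct computation using that $\hat S_\rho^{(l)}$ is a homomorphism carrying $\pi$-intertwiners to $\pi$-intertwiners. Strict associativity, the unit constraints and interchange then hold on the nose: by (\ref{funid}) both $(\rho\otimes\sigma)\otimes\tau$ and $\rho\otimes(\sigma\otimes\tau)$ equal $\hat S_\rho^{(l)}\hat S_\sigma^{(l)}\hat S_\tau^{(l)}\pi$, $\pi\otimes\rho=\rho\otimes\pi=\rho$, and all coherence morphisms may be taken to be $\id_{\caH}$. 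The $C^*$-structure is immediate: $\Mor_{C_G}$ are norm-closed $*$-closed operator spaces, composition is multiplication, the unit $\pi$ is simple since $\Mor_{C_G}(\pi,\pi)=\caF\cap\pi(\caA)'=\bbC\unit$; existence of subobjects and of direct sums in $C_G$ is exactly Lemma \ref{sg} and Lemma \ref{gds} (one checks the isometries produced there lie in the relevant $\Mor_{C_G}$). Hence $C_G$ is a strict $C^*$-tensor category with tensor unit $\pi$.

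It remains to equip $C_G$ with the crossed $G$-structure. The grading sends a homogeneous $\rho\in O^{(g)}_{(\loz,\ltz)}$ to $g$; this is well defined on $\Obj C_G$ precisely because, by Lemma \ref{gnh} (this is where Assumption \ref{split} enters), $\Mor_G(\rho,\sigma)\neq\{0\}$ forces equal degrees, so the decomposition into homogeneous summands is canonical, $C_g\otimes C_h\subset C_{gh}$ and $\pi\in C_e$ by Lemma \ref{framingo}, and $\bigoplus_g C_g=C_G$ by construction. The $G$-action is the additive extension of $\Theta$ in (\ref{ninnaji}): $\Theta(k)\bigl(\sum_g\Ad u_g\circ\rho_g\bigr):=\sum_g\Ad(R_ku_gR_k^*)\circ\Theta(k)(\rho_g)$; as in Lemma \ref{mizuumi} (using that $\Ad R_k$ preserves $\caF$, $\caB_l$ and commutes with the relevant extensions, cf.\ (\ref{tori}) and Lemma \ref{tokage}) each $\Theta(k)$ is a strict tensor auto-equivalence, $\Theta$ is a group homomorphism $G\to\Aut C_G$, and $\Theta(k)$ carries $C_g$ into $C_{kgk^{-1}}$ by Lemma \ref{tokage}. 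The crossed braiding is $\epsilon_G$ of Lemma \ref{lem38}, extended over direct sums by cutting each argument into its homogeneous pieces (legitimate by Lemma \ref{gnh}); Lemma \ref{intw} gives $\epsilon_G(\rho,\sigma)\in\Mor_G(\rho\otimes\sigma,\sigma\otimes\Theta(h^{-1})(\rho))$ for $\sigma$ of degree $h$, Lemma \ref{hex} and its companion give the two crossed hexagon identities, Lemma \ref{egn} gives naturality, and Lemma \ref{tentomushi} gives the equivariance $\Theta(k)(\epsilon_G(\rho,\sigma))=\epsilon_G(\Theta(k)\rho,\Theta(k)\sigma)$. Collecting these verifies Müger's axioms, so $C_G$ is a braided crossed $G$-category with homogeneous objects $O_G$.

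I expect the main obstacle to be the bookkeeping for direct sums: showing $\hat S^{(l)}$, $\otimes$, $\Theta$ and above all $\epsilon_G$ descend to finite direct sums independently of the decomposition, and in particular that the crossed braiding is \emph{additive with the $\Theta$-twist correctly tracked by the grading} — since each homogeneous summand of the second argument contributes its own $\Theta(h_i^{-1})$-twist, one must first split that argument into graded pieces (possible only thanks to Lemma \ref{gnh}, hence to Assumption \ref{split}) before invoking Lemma \ref{lem38}. Everything else rests on the uniqueness statement behind (\ref{funid}): once that is in place, strictness and the coherence of the $G$-action and of the braiding with the tensor product follow formally from the lemmas of Section \ref{gsu}.
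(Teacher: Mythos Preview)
Your approach is essentially the same as the paper's, and your emphasis on the \emph{uniqueness} of $\hat S_\rho^{(l)}$ (hence its independence of the decomposition (\ref{yama}) and the identity (\ref{funid})) is a clean touch that the paper does not spell out explicitly.

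There is, however, a genuine gap in your treatment of subobjects. You write that ``existence of subobjects and of direct sums in $C_G$ is exactly Lemma \ref{sg} and Lemma \ref{gds}''. But Lemma \ref{sg} is stated only for a \emph{homogeneous} object $\rho\in O^{(g)}_{(\loz,\ltz)}$ and a projection $p\in\Mor_G(\rho,\rho)$; for a general $\rho=\sum_g\Ad u_g\circ\rho_g\in\Obj C_G$ and $p\in\Mor_{C_G}(\rho,\rho)$ one must first reduce. The paper does this: one checks $u_h^*pu_g\in\Mor_{C_G}(\rho_g,\rho_h)$, invokes Lemma \ref{gnh} to conclude $u_h^*pu_g=0$ for $g\neq h$, hence $p=\sum_g u_g(u_g^*pu_g)u_g^*$ with each $u_g^*pu_g$ a projection in $\Mor_G(\rho_g,\rho_g)$, and only then applies Lemma \ref{sg} blockwise. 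You do cite Lemma \ref{gnh} for the well-definedness of the grading, but its role here --- forcing block-diagonality of self-intertwiners --- is what actually makes the subobject argument go through, and it is not ``exactly Lemma \ref{sg}''. The analogous reduction for direct sums (combining degree by degree via Lemma \ref{gds} and then assembling with a fresh family of isometries $W_g\in\caF$) is also missing from your sketch, though it is more routine.

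A minor point: the theorem as stated only claims that $C_G$ is a strict $C^*$-tensor category and a \emph{crossed $G$-category} (grading plus $G$-action, as in the paper's proof). Your last paragraph goes on to extend $\epsilon_G$ over direct sums and verify the braided axioms; that is not required here and the paper does not do it in this proof.
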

\begin{proof}
First we prove that $C_G$ is a $C^*$-tensor category.
That $C_G$ is a $C^*$-category is trivial.
That (\ref{shat}) defines an endomorphism on $\caB_l$ preserving $\caF$ follows 
from the properties
of $u_g$ and the fact that $u_g\in\caF$.
Note that for $\rho,\sigma\in \Obj C_G$, of the form
\begin{align}
\begin{split}
\rho=\sum_{g\in G} \Ad u_g\circ\rho_g,\quad
\sigma=\sum_{g\in G}\Ad v_g\circ \sigma_g\in \Obj C_G,
\end{split}
\end{align}
we have
\begin{align}
\begin{split}
&\rho\otimes_{C_G} \sigma:=\hat S_\rho^{(l)\unit}\hat S_\sigma^{(l)\unit}\pi
=\sum_{g,h\in G}\Ad\lmk u_g S_{\rho_g}^{(l)\unit}(v_h)\rmk
S_{\rho_g}^{(l)\unit}S_{\sigma_h}^{(l)\unit}\pi\\
&=\sum_{g,h\in G}\Ad\lmk u_g S_{\rho_g}^{(l)\unit}(v_h)\rmk
\rho_g\otimes\sigma_h
=\sum_{g\in G} \sum_{k\in G} \Ad\lmk u_g S_{\rho_g}^{(l)\unit}(v_{g^{-1}k})\rmk
\rho_g\otimes\sigma_{g^{-1}k}
\end{split}
\end{align}
Let $k\in G$ be an element such that there exist $g\in G$ with
 $u_g\neq 0$ and $v_{g^{-1}k}\neq 0$.
For such $k\in G$,
by Lemma \ref{gds}, we obtain
a $\gamma_k\in O^{(k)}_{(\loz,\ltz)}$
and $V_{k,g}$,
$g\in G$ such that
\begin{description}
\item[(i)] for $g\in G$ with  $u_g=0$ or $v_{g^{-1}k}=0$, $V_{k,g}=0$, and 
\item[(ii)] for $g\in G$ with $u_g\neq 0$ and $v_{g^{-1}k}\neq 0$, 
$V_{k,g}\in \Mor_G\lmk\rho_g\otimes\sigma_{g^{-1}k},\gamma_k\rmk$ is an isometry
\end{description}
such that $\sum_{g\in G} V_{k,g}V_{k,g}^*=\unit$.
For $k\in G$ with $u_g=0$ or $u_{g^{-1}k}=0$ for all $g\in G$,
we set $V_{k,g}=0$, $g\in G$.
Set
\begin{align}
\begin{split}
U_k:=
\sum_{g\in G} u_g S_{\rho_g}^{(l)\unit}(v_{g^{-1} k}) V_{k,g}^* \in \caF.
\end{split}
\end{align}
Then $U_k=0$ or $U_k$ is an isometry.
Furthermore, we have $\sum_k U_k U_k^*=\unit$.
Note that for $k$ with $U_k\neq 0$, we have
\begin{align}
\begin{split}
\gamma_k=\sum_{g\in G}\Ad V_{k,g} \lmk \rho_g\otimes\sigma_{g^{-1}k}\rmk\in O^{(k)}_{(\loz,\ltz)}.
\end{split}
\end{align}
Then we have
\begin{align}
\begin{split}
&\sum_k \Ad U_k\gamma_k
=\sum_{k,g}\Ad U_k \Ad V_{k,g}\lmk \rho_g\otimes\sigma_{g^{-1}k}\rmk
=\sum_{k,g}\Ad \lmk u_g S_{\rho_g}^{(l)\unit}(v_{g^{-1} k})  \rmk\lmk \rho_g\otimes\sigma_{g^{-1}k}\rmk\\
&=\rho\otimes_{C_G} \sigma.
\end{split}
\end{align}
Hence we conclude $\rho\otimes_{C_G} \sigma\in \Obj C_G$.
That (\ref{mortencg}) defines a tensor product of morphisms
follows from the fact $\hat S_\rho^{(l)}$ preserves $\caF$ and by the standard argument (see \cite{BA}).
That it makes $C_G$ a strict tensor category follows as in \cite{BA}.

Next we show the existence of direct sums.
Let us consider objects 
\begin{align}
\begin{split}
\rho=\sum_{g\in G} \Ad u_g\circ\rho_g,\quad
\sigma=\sum_{g\in G}\Ad v_g\circ \sigma_g\in \Obj C_G,
\end{split}
\end{align}
If $u_g=v_g=\unit$, we set $W_g=0$.
For each $g\in G$ with $u_g\neq 0$ or $v_g\neq0$, from Lemma \ref{gds}, there are
a endomorphism $\gamma_g\in O^{(g)}_{(\loz,\ltz)}$ and isometries 
$\hat u_g\in \Mor_G(\rho_g,\gamma_g),\hat v_g\in\Mor_G(\sigma_g, \gamma_g)$
with $\hat u_g \hat u_g^*+\hat v_g \hat v_g^*=\unit$.
We also choose isometries $W_g\in\caF$ for such $g$ so that $\sum_{g\in G}W_g W_g^*=\unit$.
(See from the proof of Lemma \ref{gds} it is possible.)
We set 
\begin{align}
\begin{split}
&\gamma:=\sum_{g\in G}\Ad W_g\circ \gamma_g,\\
&U:=\sum_{g\in G}W_g\lmk\hat u_g u_g^*\rmk,\\
&V:=\sum_{g\in G}W_g\lmk\hat v_g v_g^*\rmk.
\end{split}
\end{align}
Then $\gamma\in \Obj C_G$ by definition and
$U\in \Mor_{C_G}(\rho,\gamma)$,
$V\in \Mor_{C_G}(\sigma,\gamma)$
are isometries such that $UU^*+VV^*=\unit$.
This proves the existence of the direct sum.

Finally, we prove the existence of subobjects.
Let us consider
\begin{align}
\begin{split}
\rho=\sum_{g\in G} \Ad u_g\circ\rho_g\in \Obj C_G,
\end{split}
\end{align}
and a projection $p\in\Mor_{C_G}(\rho,\rho)$.
By definition, we have
\begin{align}
\begin{split}
p \sum_{g\in G} \Ad u_g\circ\rho_g(A)=p\rho(A)=\rho(A)p
=\sum_{h\in G} \Ad u_h\circ\rho_h(A)\cdot p,\quad A\in\caA.
\end{split}
\end{align}
Mutliplying $u_h^*$, $u_g$ from left and right side of this equation,
we obtain
\begin{align}
\begin{split}
u_h^*p u_g \rho_g(A)=\rho_h(A)u_h^* pu_g,\quad  A\in\caA.
\end{split}
\end{align}
Hence we have
\begin{align}
u_h^* p u_g\in \Mor_{C_G}(\rho_g,\rho_h).
\end{align}
By Lemma \ref{gnh},
$u_h^* p u_g=0$ unless $h=g$.
Hence we have
\begin{align}
\begin{split}
p=\sum_{g,h\in G}u_g u_g^* p u_h u_h^* 
=\sum_{g\in G}u_g u_g^* p u_g u_g^*. 
\end{split}
\end{align}
Because $p$ is a projection,
\begin{align}
\begin{split}
\sum_{g\in G}u_g u_g^* p u_g u_g^*
=p=p^2=\sum_{g\in G}u_g u_g^* p u_g u_g^*u_g u_g^* p u_g u_g^*
=\sum_{g\in G}u_g u_g^* p u_g u_g^* p u_g u_g^*.
\end{split}
\end{align}
Multipluing $u_g^*$, $u_g$ from left and right respectively, we obtain
\begin{align}
\begin{split}
 u_g^* p u_g 
=u_g^* p u_g u_g^* p u_g,
\end{split}
\end{align}
i.e., $u_g^* p u_g \in \Mor_{G}(\rho_g,\rho_g)$ is a projection.
Therefore, by Lemma \ref{sg}, if $u_g\neq 0$,
there exists a $\gamma_g\in O^{(g)}_{(\loz,\ltz)}$
and an isometry $v_g\in \Mor_{G}(\gamma_g, \rho_g)$  such that
$v_gv_g^*=u_g^* p u_g$.
Set
\begin{align}
\begin{split}
&\gamma:=\sum_{g\in G}\Ad u_g\circ\gamma_g\in \Obj C_G,\\
&v:=\sum_{g\in G}u_gv_gu_g^*.
\end{split}
\end{align}
Then $v$ is an isometry such that
$v\in \Mor_{C_G}(\gamma,\rho)$ and $vv^*=p$.
This proves the existence of subobjects.
Hence $C_G$ is a strict $C^*$-tensor category.

Next we show that $C_G$ is a strict crossed $G$-category.
If we consider all the objects in $O_G$ and all the morphisms between them, they form a full tensor subcategory of $C_G$
by Lemma \ref{framingo}.
For any $\rho\in O_G$, there exists a unique $g\in G$ such that
$\rho\in O^{(g)}_{(\loz,\ltz)}$:
if $\rho\in O^{(h)}_{(\loz,\ltz)}$ as well, then 
\begin{align}
0\neq \id_\rho=\id_{\caH}\in \Mor_G(\rho,\rho),
\end{align}
hence $h=g$ by Lemma \ref{gnh}.
This defines a map $\partial : O_G\to G$, and it is constant on isomorphism classes because of
 Lemma \ref{gnh}.
 
 We extend the $G $-action on $O_G$ and its morphisms to $C_G$ by the same formula (\ref{ninnaji}).
 Let us check 
  that $\Theta(g)$ is a functor from $C_G$ to $C_G$.
  For any $\rho=\sum_{h\in G} \Ad u_h\circ\rho_h,\in \Obj C_G$,
  we have
  \begin{align}
  \begin{split}
  \Theta(g)(\rho)=\sum_{h\in G} \Ad\lmk R_g u_h R_g^*\rmk \circ\Theta(g)\lmk \rho_h\rmk.
  \end{split}
  \end{align} 
  Because $R_g \caF R_g^*=\caF$ and $\Theta(g)\lmk \rho_h\rmk\in O_G$ by Lemma \ref{tokage},
  this proves $ \Theta(g)(\rho)\in \Obj C_G$.
  It is clear that $\Theta(g)(X)\in \Mor_{C_G}\lmk\Theta(g)(\rho),\Theta(g)(\sigma)\rmk$
  for any $X\in\Mor_{C_G}(\rho,\sigma)$, $\rho,\sigma\in\Obj C_G$ by definition.
  It is also clear that $\Theta(g)(YX)=\Theta(g)(Y)\Theta(g)(X)$ for
   $X\in \Mor_{C_G}(\rho,\rho')$, $Y\in \Mor_{C_G}(\rho',\rho'')$, and $\Theta(g)(\id_\rho)=\id_{\caH}=\id_{\Theta(g)(\rho)}$.
   Hence $\Theta(g)$ is a functor from $C_G$ to $C_G$.

  Next, we see that $\Theta(g)$ is a tensor functor.
Note for the tensor unit $\pi$ of $C_G$ that
\begin{align}
\Theta(g)(\pi)=\Ad R_g\pi\beta_{g^{-1}}=\pi.
\end{align}
We also note that
\begin{align}\label{torio}
\begin{split}
\hat S_{\Theta(g)(\rho)}^{(l)\unit}
=\Ad R_g \hat S_{\rho}^{(l)\unit}\Ad R_g^*,\quad \rho\in \Obj C_G.
\end{split}
\end{align}
In fact, because $\Ad R_g$ preserves $\caB_{l}$,
\begin{align}
\begin{split}
\Ad R_g \hat S_{\rho}^{(l)\unit}\Ad R_g^*
\end{split}
\end{align}
is a well-defined endomorphism of $\caB_{l}$
$\sigma$weak continuous on each $\pi(\caA_{\lm l})''$ with $\lm l\in \Cl$
such that
\begin{align}
\begin{split}
\Ad R_g \hat S_{\rho}^{(l)\unit}\Ad R_g^*\pi
=\Ad R_g\rho\beta_g^{-1}=\Theta(g)(\rho).
\end{split}
\end{align}
Because $\hat S_{\Theta(g)\rho}^{(l)\unit}$ also satisfies the same property, 
we get (\ref{torio}).

From (\ref{torio}), we obtain
\begin{align}
\begin{split}
&\Theta(g)(\rho)\otimes_{C_G}\Theta(g)(\sigma)
=\hat S_{\Theta(g)(\rho)}^{(l)\unit}\hat S_{\Theta(g)(\sigma)}^{(l)\unit}\pi
=\Ad R_g \hat S_{\rho}^{(l)\unit}\Ad R_g^*\Ad R_g \hat S_{\sigma}^{(l)\unit}\Ad R_g^*\pi\\
&=\Ad R_g\hat S_{\rho}^{(l)\unit} \hat S_{\sigma}^{(l)\unit} \Ad R_g^*\pi
=\Ad R_g\lmk \rho\otimes_{C_G} \sigma \rmk\beta_{g^{-1}}
=\Theta(g)\lmk \rho\otimes_{C_G} \sigma\rmk,
\end{split}
\end{align}
for all $\rho,\sigma\in\Obj C_G$.
Then we have
\begin{align}
\begin{split}
&\varphi_0:=\id_{\caH}\in \Mor_{C_G}\lmk \pi,\Theta(g)(\pi)\rmk,\\
&\varphi_2(\rho,\sigma):=\id_{\caH}
\in \Mor_{C_G}\lmk \Theta(g)(\rho)\otimes_{C_G}\Theta(g)(\sigma), \Theta(g)\lmk \rho\otimes_{C_G} \sigma\rmk\rmk,\quad
\rho,\sigma\in \Obj C_G.
\end{split}
\end{align}
We claim that $(\Theta(g), \varphi_0, \varphi_2)$ gives a tensor functor from $C_G$ to $C_G$.
Note that $\varphi_2$ is natural because
\begin{align}
\begin{split}
&\Theta(g)(X)\otimes_{C_G} \Theta(g)(Y)
=\Theta(g)(X)\hat S_{\Theta(g)(\rho)}^{(l)\unit}\lmk \Theta(g)(Y)\rmk\\
&=\Ad R_g\lmk X\rmk
\Ad R_g \hat S_{\rho}^{(l)\unit} \Ad R_g^*
\Ad R_g\lmk Y\rmk\\
&=\Ad R_g\lmk X \hat S_{\rho}^{(l)\unit}  (Y)\rmk
=\Theta(g) \lmk X\otimes_{C_G} Y\rmk
\end{split}
\end{align}
$\rho,\rho',\sigma,\sigma'\in \Obj C$, $X\in \Mor(\rho,\rho')$,
$Y\in \Mor(\sigma,\sigma')$.
That $\varphi_0$, $\varphi_2$ are consistent with associativity morphisms and left/right constraint is
trivial because all the involved morphisms are $\id_{\caH}$.
Hence $\Theta(g)$ is a tensor functor from $C_G$ to $C_G$.

From the definition, it is clear that
the composition of tensor functors $(\Theta(g), \id_{\caH}, \id_{\caH})$,
$(\Theta(h), \id_{\caH}, \id_{\caH})$
is $(\Theta(gh), \id_{\caH}, \id_{\caH})$.
It in particular tells us that $(\Theta(g), \id_{\caH}, \id_{\caH})$ is an auto-equivalence of
$C$, and 
$\Theta : G\ni g\mapsto \Theta(g)\in \Aut C_G$ is
a group homomorphism. 
 
 By Lemma \ref{framingo}, we have
 $\partial \lmk \rho\otimes\sigma\rmk=\partial\rho\partial \sigma$ for
 $\rho,\sigma\in O_G$.
 By Lemma \ref{tokage}, we have 
 $\Theta(g)\lmk O^{(h)}_{(\loz,\ltz)}\rmk\subset O^{(ghg^{-1})}_{(\loz,\ltz)}$.
 Hence $C_G$ is a strict crossed $G$-category.
\end{proof}

\section{Proof of Lemmas in section \ref{gato}}\label{brprf}
\begin{lem}\label{lem35}Consider setting in the subsection \ref{sec:qss}
and assume Assumption \ref{a1l}.
Let $(\lo,\lt)\in PC$, $\lm r\in \Cr$
such that $\arg \lmk \lmk \lm r\rmk^c\rmk _\varepsilon \subset \arg\lo$.
Set
\begin{align}
\begin{split}
\ld_i(s):=\ld_i+s\bm e_0,\quad i=1,2,\quad s\ge 0.
\end{split}
\end{align}
Then for any $\sigma\in O^{(k)}_{(\loz,\ltz)}$ and
$V_{{\sigma},{(\lo(s),\lt(s))}}^{(k)} \in \Vbk{\sigma}{(\lo(s),\lt(s))}^{(k)}$,
we have
\begin{align}
\lim_{s\to\infty}
\lV\left.
\lmk S_\sigma^{(l)V_{{\sigma},{(\lo(s),\lt(s))}}^{(k)}}
-\Ad R_k
\rmk\right\vert_{\pi\lmk\caA_{\ld_r } \rmk'}
\rV=0.
\end{align}
\end{lem}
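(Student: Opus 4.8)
The plan is to reduce the statement to the approximate Haag duality for right cones (Assumption \ref{a1r}) together with the strong-continuity/localization properties of $S_\sigma^{(l)\unit}$ already established. Fix $\ld_r\in\Cr$ and an element $x\in \pi(\caA_{\ld_r})'$. We want to show $S_\sigma^{(l),V_{\sigma,(\lo(s),\lt(s))}^{(k)}}(x)$ converges to $\Ad R_k(x)$ as $s\to\infty$. First I would recall that by Lemma \ref{slsr} (1) and the notation preceding it, on $\caF$ the map $S_\sigma^{(l),V}$ agrees with $S_\sigma^{(r),V}$, and by Lemma \ref{ookami} the right version acts as $\id$ on $\pi(\caA_{\lt(s)})''\cap\caF$ and as $\Ad R_k$ on $\pi(\caA_{\lo(s)})''\cap\caF$. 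So the strategy is: approximate $x$ well by an element that is ``mostly inside $\lo(s)$'' so that $S_\sigma^{(l),V}(x)\approx \Ad R_k(x)$ up to an error controlled by the Haag-duality tails.

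The key steps, in order, are as follows. (1) Since $\ld_r=\ld_{(a,0),0,\varphi}\in\Cr$, apply Assumption \ref{a1r}: there is a unitary $U^{(r)}_{\ld_r,\varepsilon}\in\caU(\caH)$ (in fact in $\caF$) with $\pi(\caA_{\ld_r^c})'\subset \Ad(U^{(r)}_{\ld_r,\varepsilon})(\pi(\caA_{\ld_{(a-R^{(r)},0),0,\varphi+\varepsilon}})'')$. Note $\pi(\caA_{\ld_r})'\subset\pi(\caA_{\ld_r^c})'$ vacuously? No — rather $\pi(\caA_{\ld_r})'\supset \pi(\caA_{\ld_r^c})''$, so one should instead use the \emph{complementary} duality: for $x\in\pi(\caA_{\ld_r})'$ one writes $x = \Ad(U^{(r)}_{\ld_r,\varepsilon})(y)$ with $y\in\pi(\caA_{\Gamma})''$ for a slightly fattened cone $\Gamma$ — using that $\pi(\caA_{\ld_r})'$ is the commutant of a cone algebra and Haag duality identifies it with a von Neumann algebra of the complementary (left) cone up to tails. (2) Decompose $x$ using the approximating unitaries $\tilde U^{(r)}_{\ld_r,\varepsilon,\delta,t}\in\pi(\caA_{\ld_{(a-t,0),0,\varphi+\varepsilon+\delta}})''\cap\caF$: up to norm error $f^{(r)}_{\varphi,\varepsilon,\delta}(t)$, $x$ is conjugate by this local unitary, so modulo a controllable error $x$ lies in the algebra of a left half-space-like region which, for $s$ large, is contained in $\lo(s)$ (since $\lo(s)=\lo-s\bm e_0$ marches off to $-\infty$ along $\bm e_0$ and the right cone $\ld_r$ opens towards $+\infty$; by the hypothesis $\arg((\ld_r)^c)_\varepsilon\subset\arg\lo$ the relevant regions nest). (3) On that algebra $S_\sigma^{(l),V}$ coincides with $\Ad R_k$ by Lemma \ref{ookami} (third/fourth line), and $\Ad R_k$ is isometric, so the two errors ($f^{(r)}$-tails and the displacement) combine to give $\|S_\sigma^{(l),V}(x)-\Ad R_k(x)\|\le C\,f^{(r)}_{\varphi,\varepsilon,\delta}(t(s))$ with $t(s)\to\infty$ as $s\to\infty$. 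Taking $\delta\to 0$ afterwards and using that the left-hand side is independent of the choice of $V$ (Lemma \ref{slsr}, \ref{ookami}) finishes the argument. (4) Since this holds for all $x$ in the (dense) local part and all maps involved are isometric on the relevant domain, pass to the norm closure to conclude the operator-norm limit on $\pi(\caA_{\ld_r})'$ is zero.

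The main obstacle I expect is step (2): making precise that ``$x$ approximately lives in a left cone that eventually sits inside $\lo(s)$''. This requires carefully tracking the geometry — the fattening parameters $\varepsilon,\delta$, the tail-length parameter $t$, and the displacement $s$ — and invoking the right-cone Haag duality in its \emph{commutant} form rather than the form literally displayed in Assumption \ref{a1r}. One must verify that $\pi(\caA_{\ld_r})'$, being the commutant, is generated (up to the $f^{(r)}$-tails) by $\Ad(U^{(r)}_{\ld_r,\varepsilon})$ applied to $\pi$ of the \emph{complementary} cone, and that this complementary cone, after translation by $-s\bm e_0$ of the reference pair $(\lo,\lt)$, is swallowed by $\lo(s)$ for large $s$. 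This is exactly the ``tails due to approximate Haag duality'' bookkeeping that the section's preamble warns about, and it is handled by the same kind of estimate as in \cite{MTC}, \cite{BA}; once the geometry is set up correctly, the convergence is immediate from $\lim_{t\to\infty}f^{(r)}_{\varphi,\varepsilon,\delta}(t)=0$.
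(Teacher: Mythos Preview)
Your high-level strategy is right: approximate $x\in\pi(\caA_{\ld_r})'$ by something living in a cone algebra on which Lemma \ref{ookami} forces $S_\sigma^{(l),V}=\Ad R_k$, with the error controlled by Haag-duality tails. But two concrete things go wrong in your execution.

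First, you invoke the wrong duality assumption. The lemma assumes only Assumption \ref{a1l}, yet you reach for Assumption \ref{a1r}. More importantly, \ref{a1r} cannot do the job even if assumed: it controls $\pi(\caA_{\ld_r^c})'$, not $\pi(\caA_{\ld_r})'$, and the approximating unitaries $\tilde U^{(r)}_{\ld_r,\varepsilon,\delta,t}$ live in \emph{right}-cone algebras, so they will never squeeze $x$ into a left-cone algebra. The correct move is to observe that $\ld_r^c=\overline{\ld_{(a,0),\pi,\pi-\varphi}}$ is a left cone, and apply Assumption \ref{a1l} to it: this gives precisely $\pi(\caA_{\ld_r})'=\pi(\caA_{(\ld_r^c)^c})'\subset_{2f}\pi(\caA_{\text{slightly bigger left cone}})''$, which is what you need. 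The ``commutant form'' of \ref{a1r} you allude to does not exist in a usable direction.

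Second, you have the translation backwards: the lemma sets $\ld_i(s)=\ld_i+s\bm e_0$, so the apex of the left cone $\lo$ moves to the \emph{right}, making $\lo(s)$ grow. The paper's argument is then: the left cone $\overline{\ld_{(a+u,0),\pi,\pi-\varphi+\varepsilon/2}}$ produced by \ref{a1l} sits inside $\lo(s)$ once $s-u$ is large, and on $\pi(\caA_{\lo(s)})''$ one has $S_\sigma^{(l),V}=\Ad R_k$ by the \emph{first} line of Lemma \ref{ookami} (no $\cap\,\caF$ needed). Combining these gives $\lV(S_\sigma^{(l),V}-\Ad R_k)|_{\pi(\caA_{\ld_r})'}\rV\le 4f_{\varphi,\varepsilon/4,\varepsilon/4}(u)\to 0$. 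Fixing the assumption and the sign makes your sketch match the paper's proof almost verbatim.
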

\begin{proof}
Note that $\pi\lmk\caA_{\ld_r } \rmk'\subset \caB_l$ by the approximate Haag duality.
If $\lm r=\ld_{(a,0),0, \varphi}\in \Cr$, then $\lm r^c=\overline{\ld_{(a,0),\pi , \pi-\varphi}}$.
Then there exists a $s_1\in\bbR_+$ such that
$\overline{ \ld_{({a+u},0), \pi,  {\pi-\varphi+\frac \varepsilon 2}} }\subset \lo(s)$ for all $s, u\in\bbR_+$ with
$s-u\ge s_1$.

By Assumption \ref{a1l}, for any $u\ge R_{\varphi,\frac \varepsilon 4}$,
we have
\begin{align}
{\pi\lmk\caA_{\lm r} \rmk'}\subset_{2 f_{\varphi,\frac{\varepsilon}4,\frac \varepsilon 4}(u)} \pi\lmk 
 \caA_{
  \overline{ \ld_{({a+u},0), \pi,  {\pi-\varphi+\frac \varepsilon 2}} }
 }\rmk''.
\end{align}
By Lemma \ref{ookami}, 
if $\overline{ \ld_{({a+u},0), \pi,  {\pi-\varphi+\frac \varepsilon 2}} }\subset \lo(s)$,
(hence if $s-u\ge s_1$),
we have 
\begin{align}
\left.
\lmk
 S_\sigma^{(l)V_{{\sigma},{(\lo(s),\lt(s))}}^{(k)}}-\Ad R_k
\rmk
\right\vert_{\pi\lmk 
 \caA_{
  \overline{ \ld_{({a+u},0), \pi,  {\pi-\varphi+\frac \varepsilon 2}} }
 }\rmk''}=0.
\end{align}
Therefore, if $s-u\ge s_1$, we have
\begin{align}
\lV
\left.
\lmk
S_\sigma^{(l)V_{{\sigma},{(\lo(s),\lt(s))}}^{(k)}}-\Ad R_k
\rmk
\right\vert_{{\pi\lmk\caA_{\lm r} \rmk'}}\rV
\le 4 f_{\varphi,\frac{\varepsilon}4,\frac \varepsilon 4}(u).
\end{align}
This completes the proof.
\end{proof}
\begin{lem}\label{ou}Consider setting in the subsection \ref{sec:qss}.
Assume Assumption \ref{a1r}, Assumption \ref{a1l}.
Let $(\lo,\lt )\in PC$, $\lm {l}\in \Cl$
with $\arg\lmk \lmk \ld_l\rmk ^c \rmk_\varepsilon \subset\arg\lt$
for some $\varepsilon>0$.
Then the following hold.
\begin{description}
\item[(i)]
Set
\begin{align}
\begin{split}
\ld_i(t):=\ld_i-t\bm e_0,\quad i=1,2,\quad t\ge 0.
\end{split}
\end{align}
For any $\rho\in O^{(g)}_{(\loz,\ltz)}$ and
$V_{{\rho},{(\lo(t),\lt(t))}}^{(g)} \in \Vbk{\rho}{(\lo(t),\lt(t))}^{(g)}$,
we have
\begin{align}
\lim_{t\to\infty}
\lV\left.
\lmk S_\rho^{(l)V_{{\rho},{(\lo(t),\lt(t))}}^{(g)}}
-\id
\rmk\right\vert_{\pi\lmk\caA_{\ld_l } \rmk'\cap \caF}
\rV=0.
\end{align}
\item[(ii)] 
Set $\lm l(t):=\lm l+t \bm e_0$
For any $\rho\in O^{(g)}_{(\loz,\ltz)}$ and
$V_{{\rho},{(\lo,\lt)}}^{(g)} \in \Vbk{\rho}{(\lo,\lt)}^{(g)}$,
we have
\begin{align}
\lim_{t\to\infty}
\lV\left.
\lmk S_\rho^{(l)V_{{\rho},{(\lo,\lt )}}^{(g)}}
-\id
\rmk\right\vert_{\pi\lmk\caA_{\lm l(t)} \rmk'\cap \caF}
\rV=0.
\end{align}

\end{description}
\end{lem}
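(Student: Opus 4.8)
The plan is to run the proof of Lemma~\ref{lem35} almost verbatim, with three substitutions: the left cone there is replaced by the right cone $\lm l^{c}\in\Cr$, Assumption~\ref{a1l} is replaced by Assumption~\ref{a1r}, and the target $\Ad R_k$ is replaced by $\id$. The reason $\id$ appears is Lemma~\ref{ookami}: for any $(\lo,\lt)\in PC$ and $V\in\Vbk{\rho}{(\lo,\lt)}^{(g)}$ one has $S_{\rho}^{(l)V}=\id$ on $\pi(\caA_{\lt})''\cap\caF$. So in both parts it will suffice to show that every element $x$ of the relevant $\caF$-commutant with $\lV x\rV\le1$ is approximated by an element of $\pi(\caA_{\lt(t)})''\cap\caF$ in part~(i) (resp.\ of $\pi(\caA_{\lt})''\cap\caF$ in part~(ii)), with an error uniform in $x$ that tends to $0$ as $t\to\infty$; this is exactly the $\caF$-refined approximate Haag duality of Assumption~\ref{a1r} applied to $\lm l^{c}\in\Cr$. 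Note $\pi(\caA_{\lm l})'\cap\caF\subseteq\caF\subseteq\caB_l$ (the latter as in the proof of Lemma~\ref{slsr}), so $S_{\rho}^{(l)V}$ and all the approximants below lie in its domain.

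For (i) I would put $V=V_{\rho,(\lo(t),\lt(t))}^{(g)}$. Since $\lm l\in\Cl$, its complement has the form $\lm l^{c}=\ld_{(a,0),0,\varphi}\in\Cr$, and from $\arg(\lm l^{c})_\varepsilon\subset\arg\lt$ one may fix $\varepsilon'\in(0,\varepsilon]$ and $\delta>0$ with $2\varphi+2\varepsilon'<\pi$ and $\varphi+\varepsilon'+\delta\le\tfrac12\lV\arg\lt\rV=:\psi$. Let $x\in\pi(\caA_{\lm l})'\cap\caF$ with $\lV x\rV\le1$. Assumption~\ref{a1r}(i) applied to $\lm l^{c}$ with $\varepsilon'$ gives $y\in\pi(\caA_{\ld_{(a-R,0),0,\varphi+\varepsilon'}})''$ with $x=U^{(r)}_{\lm l^{c},\varepsilon'}\,y\,(U^{(r)}_{\lm l^{c},\varepsilon'})^{*}$, where $R:=R^{(r)}_{\varphi,\varepsilon'}$; since $U^{(r)}_{\lm l^{c},\varepsilon'}\in\caF$, also $y\in\caF$. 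For $s\ge R$, Assumption~\ref{a1r}(ii) yields $\tilde U_{s}\in\pi(\caA_{\ld_{(a-s,0),0,\varphi+\varepsilon'+\delta}})''\cap\caF$ with $\lV U^{(r)}_{\lm l^{c},\varepsilon'}-\tilde U_{s}\rV\le f^{(r)}_{\varphi,\varepsilon',\delta}(s)$, so $z_{s}:=\tilde U_{s}\,y\,\tilde U_{s}^{*}\in\pi(\caA_{\ld_{(a-s,0),0,\varphi+\varepsilon'+\delta}})''\cap\caF$ satisfies $\lV x-z_{s}\rV\le2f^{(r)}_{\varphi,\varepsilon',\delta}(s)$. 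Comparing apexes and half-angles, $\ld_{(a-s,0),0,\varphi+\varepsilon'+\delta}\subseteq\lt(t)=\lt-t\bm e_0$ whenever $s\le t-c_0$, with $c_0$ depending only on $a$ and $\lt$ (here $\varphi+\varepsilon'+\delta\le\psi$ is used). Choosing $s=t-c_0$ (admissible once $t-c_0\ge R$), Lemma~\ref{ookami} gives $S_{\rho}^{(l)V}(z_{s})=z_{s}$, hence by contractivity $\lV S_{\rho}^{(l)V}(x)-x\rV\le2\lV x-z_{s}\rV\le4f^{(r)}_{\varphi,\varepsilon',\delta}(t-c_0)$, which tends to $0$ uniformly over $\lV x\rV\le1$ as $t\to\infty$; this is (i).

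Part (ii) I would treat identically, keeping $V=V_{\rho,(\lo,\lt)}^{(g)}$ fixed and replacing $\lm l^{c}$ throughout by $(\lm l(t))^{c}=\lm l^{c}+t\bm e_0=\ld_{(a+t,0),0,\varphi}\in\Cr$. The key point is that $R^{(r)}_{\varphi,\varepsilon'}$ and $f^{(r)}_{\varphi,\varepsilon',\delta}$ depend only on $\varphi,\varepsilon',\delta$ and not on the apex, so for $x\in\pi(\caA_{\lm l(t)})'\cap\caF$ with $\lV x\rV\le1$ one gets, as above, $z_{s}\in\pi(\caA_{\ld_{(a+t-s,0),0,\varphi+\varepsilon'+\delta}})''\cap\caF$ with $\lV x-z_{s}\rV\le2f^{(r)}_{\varphi,\varepsilon',\delta}(s)$ and a bound uniform in $t$. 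Since $\lm l^{c}$ and $(\lm l(t))^{c}$ have the same argument, the hypothesis still gives $\varphi+\varepsilon'+\delta\le\psi$, so $\ld_{(a+t-s,0),0,\varphi+\varepsilon'+\delta}\subseteq\lt$ once $s\le t-c_1$, with $c_1$ depending only on $a$ and $\lt$. Taking $s=t-c_1$ and using Lemma~\ref{ookami} (now $S_{\rho}^{(l)V}=\id$ on $\pi(\caA_{\lt})''\cap\caF$) yields $\lV S_{\rho}^{(l)V}(x)-x\rV\le4f^{(r)}_{\varphi,\varepsilon',\delta}(t-c_1)\to0$.

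The main obstacle is bookkeeping rather than analysis. One must keep $U^{(r)}$, $\tilde U_{s}$, $y$ and $z_{s}$ inside $\caF$ simultaneously with the cone inclusions---this is precisely what the $\caF$-membership clause of Assumption~\ref{a1r} and the $\caF$-part of Lemma~\ref{ookami} are for---, verify the elementary geometric inclusions relating $\ld_{(a\mp s,0),0,\varphi+\varepsilon'+\delta}$ to $\lt(t)$ and to $\lt$ out of the hypothesis on arguments, and check that the error bound is independent of the translation parameter. As in Lemma~\ref{lem35}, one also tacitly uses that in the situations where this lemma is invoked the left cone $\lm l$ is large enough that $\lm l^{c}$ lies in the angular range covered by Assumption~\ref{a1r}.
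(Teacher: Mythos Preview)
Your proposal is correct and follows essentially the same route as the paper: use Assumption~\ref{a1r} on the right cone $\lm l^{c}$ (respectively $(\lm l(t))^{c}$) to approximate elements of $\pi(\caA_{\lm l})'\cap\caF$ by elements of $\pi(\caA_{\lt(t)})''\cap\caF$ (respectively $\pi(\caA_{\lt})''\cap\caF$), then invoke the $\caF$-part of Lemma~\ref{ookami}. The only cosmetic difference is that the paper packages parts (i) and (ii) of Assumption~\ref{a1r} into a single unitary $\tilde W_s:=U^{(r)}\tilde U_s^{*}\in\caU(\caF)$ close to $\unit$, whereas you keep them separate and form $z_s=\tilde U_s y\tilde U_s^{*}$; the estimates and geometry are identical.
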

\begin{proof}
The cone $\lm {l}$ is of the form $\lm {l}=\overline{\ld_{(a,0),\pi, \pi- \varphi}}$,
and $\lmk \lm l\rmk^c=\ld_{(a,0),0, \varphi}\in \Cr$.\\
(i)For each $s\ge R_{\varphi,\frac \varepsilon 4}$,
by Assumption \ref{a1r}, there exists $\tilde W_s\in\caU(\caF)$ such that 
\begin{align}
&{\pi\lmk\caA_{\ld_l } \rmk'}
\subset
\Ad\lmk\tilde W_s\rmk
\lmk
\pi\lmk
\caA_{\ld_{(a-s,0),0, \varphi+\frac \varepsilon 2}}
\rmk''
\rmk,\\
&\lV
\tilde W_s-\unit
\rV\le {f_{\varphi, \frac \varepsilon 4,\frac \varepsilon 4 }(s)}
\end{align}
For each $s\ge 0$, there exists a $t_0(s)\ge 0$ such that
$\ld_{(a-s,0),0, \varphi+\frac \varepsilon 2}\subset \lt(t)$ for all $t\ge t_0(s)$.
Applying Lemma \ref{ookami} we obtain
\begin{align}
\begin{split}
\left.S_\rho^{(l)V_{{\rho},{(\lo(t),\lt(t))}}^{(g)}}\right\vert_{\pi\lmk
\caA_{\lt(t)}\rmk''\cap \caF}
=\id.
\end{split}
\end{align}
Now for any $x\in{\pi\lmk\caA_{\ld_l } \rmk'\cap \caF}$,
 $s\ge R_{\varphi,\frac \varepsilon 2}$ and 
 $t\ge t_0(s)$,
 we have
 \begin{align}
 \begin{split}
 \Ad(\tilde W_s^*)(x)\in 
 \pi\lmk
\caA_{\ld_{(a-s,0),0, \varphi+\frac \varepsilon 2}}
\rmk''\cap\caF
\subset {\pi\lmk
\caA_{\lt(t)}\rmk''\cap \caF}.
 \end{split}
 \end{align}
Therefore, we have 
\begin{align}\label{nagasaki}
\begin{split}
&\lV \lmk S_\rho^{(l)V_{{\rho},{(\lo(t),\lt(t))}}^{(g)}}(x)-\id\rmk(x)\rV\\
&\le
\lV \lmk S_\rho^{(l)V_{{\rho},{(\lo(t),\lt(t))}}^{(g)}}-\id\rmk
\lmk \Ad(\tilde W_s^*)(x)\rmk\rV
+2\lV \Ad(\tilde W_s^*)(x)-x\rV
\le {4f_{\varphi, \frac \varepsilon 2,\frac \varepsilon 2 }(s)}\lV x\rV.
\end{split}
\end{align}
This proves (i).\\
(ii)
By Lemma \ref{ookami}
\begin{align}\label{taiwan}
\left.S_\rho^{(l)V_{{\rho},{(\lo,\lt )}}^{(g)}}\right\vert_{\pi\lmk \caA_{\lt}\rmk''\cap\caF}=\id.
\end{align}
By Assumption \ref{a1r}, for any $t\ge 2R_{\varphi,\frac \varepsilon 4}^{(r)}$,
there exists $W_t\in \caU(\fbd)$ such that
\begin{align}\label{ibaraki}
&{\pi\lmk\caA_{\lm l(t)} \rmk'}
\subset
\Ad\lmk W_t\rmk\lmk
\pi\lmk
\caA_{\ld_{ {(a+\frac t2,0)}, 0, {\varphi+\frac \varepsilon 2}}}
\rmk''
\rmk,\\
&\lV
W_t-\unit
\rV\le {f_{\varphi, \frac \varepsilon 4,\frac \varepsilon 4 }\lmk \frac t2\rmk}.
\end{align}
Furthermore, there exists $t_2\ge 0$ such that
$\ld_{ {(a+\frac t2,0)}, 0, {\varphi+\varepsilon}}\subset \lt
$ for all $t\ge t_2$.
If $t\ge \max\{t_2,2R^{(r)}_{\varphi,\frac \varepsilon 4}\}$, then 
for any 
$x\in {{\pi\lmk\caA_{\lm l(t)} \rmk'}\cap\fbd}$,
we have 
\begin{align}
\begin{split}
\Ad(W_t^*)(x)\in 
\pi\lmk
\caA_{\ld_{ {(a+\frac t2,0)}, 0, {\varphi+\frac \varepsilon 2}}}
\rmk''\cap \caF
\subset \pi\lmk\caA_{\lt}\rmk''\cap\caF.
\end{split}
\end{align}
From this we obtain (ii).
\end{proof}

 
\begin{lem}\label{kappa}Consider setting in the subsection \ref{sec:qss}.
Assume Assumption \ref{a1r} and Assumption \ref{a1l}.
Let $g,h\in G$ and $\rho,\rho'\in O^{(g)}_{(\loz,\ltz)}$, $\sigma,\sigma'\in O^{(h)}_{(\loz,\ltz)}$.
Let \[
(\lm {1\rho},\lm {2\rho}), (\lm {1\rho'}', \lm {2\rho'}'), 
(\lm {1\sigma},\lm {2\sigma}), (\lm {1\sigma'}', \lm {2\sigma'}')\in PC
\]
 such that
\begin{align}\label{hoshi}
\{(\lm {1\rho},\lm {2\rho}), (\lm {1\rho'}', \lm {2\rho'}')\}
\leftarrow 
\{
(\lm {1\sigma},\lm {2\sigma}), (\lm {1\sigma'}', \lm {2\sigma'}')
\}.
\end{align}
We set
\begin{align}
\begin{split}
&\lm{ i\rho}(t):=\lm {i\rho}-t\bm e_0,\quad \lm {i\rho'}'(t'):=\lm {i\rho'}'-t'\bm e_0,\\
&\lm {i\sigma}(s):=\lm {i\sigma}+s\bm e_0,\quad \lm {i\sigma'}'(s'):=\lm {i\sigma'}'+s'\bm e_0,
\end{split}
\end{align}
with $i=1,2$, $t,s,t',s'\ge 0$.
Let
\begin{align}
\begin{split}
\Vrl\rho{\lm {1\rho}(t)}{\lm {2\rho}(t)}g\in \Vbu\rho{\lm {1\rho}(t)}{\lm {2\rho}(t)}g,\quad
\Vrl{\rho'}{{\lm {1\rho'}}'(t')}{{\lm {2\rho'}}'(t')} g\in \Vbu{\rho'}{{\lm {1\rho'}}'(t')}{{\lm {2\rho'}}'(t')}g,\\
\Vrl\sigma{\lm {1\sigma}(s)}{\lm {2\sigma}(s)}h\in \Vbu\sigma{\lm {1\sigma}(s)}{\lm {2\sigma}(s)}h,\quad \Vrl{\sigma'}{\lm {1\sigma'}'(s')}{\lm {2\sigma'}'(s')}h\in \Vbu{\sigma'}{\lm {1\sigma'}'(s')}{\lm {2\sigma'}'(s')}h
\end{split}
\end{align}
for $t,t',s,s'\ge 0$.
Then for any
\begin{align}
\begin{split}
X_\rho^{(t,t')}\in \Mor_{\caB_l}\lmk \Srl\rho{\lm {1\rho}(t)}{\lm {2\rho}(t)}g,
\Srl{\rho'}{{\lm {1\rho'}}'(t')}{{\lm {2\rho'}}'(t')} g
\rmk,\\
X_\sigma^{(s,s')}\in
\Mor_{\caB_l}\lmk
\Srl\sigma{\lm {1\sigma}(s)}{\lm {2\sigma}(s)}h,
\Srl{\sigma'}{\lm {1\sigma'}'(s')}{\lm {2\sigma'}'(s')}h
\rmk
\end{split}
\end{align}
with $\lV X_\rho^{(t,t')}\rV, \lV X_\sigma^{(s,s')}\rV\le 1$,
we have 
\begin{align}
\begin{split}
\lim_{s,t,s',t'\to\infty}\lV
X_\rho^{(t,t')}\otimes_{\caB_l} X_\sigma^{(s,s')}
-X_\sigma^{(s,s')}\otimes_{\caB_l} \Theta\lmk h^{-1}\rmk\lmk X_\rho^{(t,t')}\rmk
\rV=0.
\end{split}
\end{align}
Here, the tensor product is taken with 
\begin{align}
\begin{split}
\Theta\lmk h^{-1}\rmk\lmk X_\rho^{(t,t')}\rmk
\in 
\Mor_{\caB_l}\lmk S_{\Theta(h^{-1})(\rho)}^{(l),\Theta(h^{-1})\lmk V_{{\rho},{{\lm {1\rho}}(t)}{{\lm {2\rho}}(t)}}^{(g)}\rmk},
S_{\Theta(h^{-1})(\rho')}^{(l),\Theta(h^{-1})\lmk V_{{\rho'},{{\lm {1\rho'}}'(t')}{{\lm {2\rho'}}'(t')}}^{(g)}\rmk}\rmk.
\end{split}
\end{align}
\end{lem}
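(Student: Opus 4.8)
The plan is to follow the estimate in the corresponding lemma of \cite{BA}, now keeping track of the ``tails'' allowed by approximate Haag duality. Abbreviate $S_{\rho,t}:=\Srl{\rho}{\lm{1\rho}(t)}{\lm{2\rho}(t)}{g}$, $S_{\sigma,s}:=\Srl{\sigma}{\lm{1\sigma}(s)}{\lm{2\sigma}(s)}{h}$, and likewise for the primed data. Unwinding the definition of $\otimes_{\caB_l}$ on morphisms,
\[
X_\rho^{(t,t')}\otimes_{\caB_l}X_\sigma^{(s,s')}=X_\rho^{(t,t')}\,S_{\rho,t}\lmk X_\sigma^{(s,s')}\rmk,
\qquad
X_\sigma^{(s,s')}\otimes_{\caB_l}\Theta\lmk h^{-1}\rmk\lmk X_\rho^{(t,t')}\rmk=X_\sigma^{(s,s')}\,S_{\sigma,s}\lmk\Theta\lmk h^{-1}\rmk\lmk X_\rho^{(t,t')}\rmk\rmk,
\]
so it suffices to show that the norm of the difference of the two right-hand sides tends to $0$ as $s,t,s',t'\to\infty$.

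I would first localize the morphisms. Since $\rho,\rho'\in O^{(g)}_{(\loz,\ltz)}$, the endomorphisms $S_{\rho,t}$ and $S_{\rho',t'}$ agree with $\pi\beta_g^{(\cdot)}$ on $\caA_{\lm{1\rho}(t)\cup\lm{2\rho}(t)}$ and $\caA_{\lm{1\rho'}'(t')\cup\lm{2\rho'}'(t')}$ respectively; comparing them on a right cone $\ld_r$ contained, for $t,t'$ large, in both $\lm{2\rho}(t)$ and $\lm{2\rho'}'(t')$ (where $\beta_g^{(\cdot)}$ is trivial), the intertwining relation gives $X_\rho^{(t,t')}\in\pi\lmk\caA_{\ld_r}\rmk'\cap\caF$, and then $\Theta\lmk h^{-1}\rmk\lmk X_\rho^{(t,t')}\rmk\in\pi\lmk\caA_{\ld_r}\rmk'\cap\caF$ as $\Ad R_{h^{-1}}$ preserves $\pi\lmk\caA_{\ld_r}\rmk''$ and $\caF$. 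Symmetrically $X_\sigma^{(s,s')}\in\pi\lmk\caA_{\ld_l}\rmk'\cap\caF$ for $s,s'$ large, with $\ld_l\in\Cl$. Hypothesis (\ref{hoshi}) (equivalently (\ref{kaeru2})) is exactly what lets us choose $\ld_l,\ld_r$ with $\arg\lmk\lmk\ld_l\rmk^c\rmk_\varepsilon\subset\arg\lm{2\rho}\cap\arg\lm{2\rho'}'$ and $\arg\lmk\lmk\ld_r\rmk^c\rmk_\varepsilon\subset\arg\lm{1\sigma}\cap\arg\lm{1\sigma'}'$, and so that the wedge regions carrying $X_\rho^{(t,t')}$ and $X_\sigma^{(s,s')}$ become disjoint once the parameters are large.

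With these in hand, Lemma \ref{ou}(i) applied to $S_{\rho,t}$ (its excitation region sliding leftwards, acting on $\pi\lmk\caA_{\ld_l}\rmk'\cap\caF$) gives $\lV S_{\rho,t}\lmk X_\sigma^{(s,s')}\rmk-X_\sigma^{(s,s')}\rV\to0$ as $t\to\infty$, uniformly in $s,s'$; and Lemma \ref{lem35} applied to $S_{\sigma,s}$ (excitation sliding rightwards, acting on $\pi\lmk\caA_{\ld_r}\rmk'$) gives $\lV S_{\sigma,s}(Z)-\Ad R_h(Z)\rV\to0$ uniformly over $\lV Z\rV\le1$ with $Z\in\pi\lmk\caA_{\ld_r}\rmk'$; taking $Z=\Theta\lmk h^{-1}\rmk\lmk X_\rho^{(t,t')}\rmk$ and using $\Ad R_h\circ\Theta\lmk h^{-1}\rmk=\id$, we obtain $\lV S_{\sigma,s}\lmk\Theta\lmk h^{-1}\rmk\lmk X_\rho^{(t,t')}\rmk\rmk-X_\rho^{(t,t')}\rV\to0$ as $s\to\infty$, uniformly in $t,t'$. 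Hence the difference of the two right-hand sides above equals, up to $o(1)$, the commutator $X_\rho^{(t,t')}X_\sigma^{(s,s')}-X_\sigma^{(s,s')}X_\rho^{(t,t')}$. Finally, using $X_\rho^{(t,t')}\in\pi\lmk\caA_{\ld_r}\rmk'\cap\caF$, $X_\sigma^{(s,s')}\in\pi\lmk\caA_{\ld_l}\rmk'\cap\caF$, approximate Haag duality (Assumptions \ref{a1r},\ref{a1l}) and the argument of \cite{BA}: each of them is, up to errors governed by the functions $f^{(r)}_{\varphi,\varepsilon,\delta},f^{(l)}_{\varphi,\varepsilon,\delta}$, approximated by an element of $\pi(\caA_\ld)''$ for $\ld$ a cone lying in the aforementioned disjoint wedge regions, so $\lV[X_\rho^{(t,t')},X_\sigma^{(s,s')}]\rV\to0$. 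Assembling the three estimates proves the claim.

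The step I expect to be the main obstacle is the combined localization--disjointness bookkeeping of the last paragraph: verifying that $\ld_l,\ld_r$ — and, after invoking approximate Haag duality, the slightly fattened, vertex-translated cones in which $X_\rho^{(t,t')}$, $\Theta(h^{-1})(X_\rho^{(t,t')})$ and $X_\sigma^{(s,s')}$ are effectively supported — can be chosen so that the relevant von Neumann algebras commute, using only the angular slack built into (\ref{hoshi}). This is precisely where the ``tails'' force a more careful argument than in \cite{BA}; the remaining manipulations with the decreasing functions $f_{\varphi,\varepsilon,\delta}$ are routine.
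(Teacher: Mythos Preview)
Your proposal is correct and follows essentially the same route as the paper's proof: localize the intertwiners via Lemma \ref{ookami}, apply Lemma \ref{ou}(i) and Lemma \ref{lem35} to replace $S_{\rho,t}$ by $\id$ and $S_{\sigma,s}$ by $\Ad R_h$ on the relevant commutants, then kill the remaining commutator using the locality (\ref{xloc}) together with Assumptions \ref{a1l}, \ref{a1r} and the angular condition (\ref{hoshi}). The paper is equally terse at your flagged ``main obstacle,'' simply citing these ingredients without spelling out the cone bookkeeping.
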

\begin{proof}
Note from Lemma \ref{ookami} that
\begin{align}\label{xloc}
\begin{split}
X_\rho^{(t,t')}&\in 
\Mor_{\caB_l}\lmk \Srl\rho{\lm {1\rho}(t)}{\lm {2\rho}(t)}g,
\Srl{\rho'}{{\lm {1\rho'}}'(t')}{{\lm {2\rho'}}'(t')} g
\rmk,\\
&\subset \pi\lmk\caA_{{\lm {2\rho}(t)}\cap {{\lm {2\rho'}}'(t')}}\rmk'\cap \caF\\
X_\sigma^{(s,s')}&\in
\Mor_{\caB_l}\lmk
\Srl\sigma{\lm {1\sigma}(s)}{\lm {2\sigma}(s)}h,
\Srl{\sigma'}{\lm {1\sigma'}'(s')}{\lm {2\sigma'}'(s')}h
\rmk\\
&\subset\pi\lmk\caA_{{\lm {1\sigma}(s)}\cap {\lm {1\sigma'}'(s')}}\rmk'\cap \caF
\subset \pi\lmk\caA_{{\lm {1\sigma}}\cap {\lm {1\sigma'}'}}\rmk'\cap \caF.
\end{split}
\end{align}
Because $\Theta(h)$ doesn't change local algebras and $\caF$, we also have
\begin{align}
\begin{split}
\Theta\lmk h^{-1}\rmk\lmk X_\rho^{(t,t')}\rmk
\in \pi\lmk\caA_{{\lm {2\rho}(t)}\cap {{\lm {2\rho'}}'(t')}}\rmk'\cap \caF
\subset \pi\lmk\caA_{{\lm {2\rho}}\cap {{\lm {2\rho'}}'}}\rmk'\cap \caF.
\end{split}
\end{align}
We have
\begin{align}\label{geji}
\begin{split}
&\lV
X_\rho^{(t,t')}\otimes_{\caB_l} X_\sigma^{(s,s')}
-X_\sigma^{(s,s')}\otimes_{\caB_l} \Theta\lmk h^{-1}\rmk\lmk X_\rho^{(t,t')}\rmk
\rV\\
&=\lV
X_\rho^{(t,t')}\Srl\rho{\lm {1\rho}(t)}{\lm {2\rho}(t)}g\lmk X_\sigma^{(s,s')}\rmk
-X_\sigma^{(s,s')} 
\Srl\sigma{\lm {1\sigma}(s)}{\lm {2\sigma}(s)}h\lmk \Theta\lmk h^{-1}\rmk\lmk X_\rho^{(t,t')}\rmk\rmk
\rV.
\end{split}
\end{align}
By Lemma \ref{lem35},
we have
\begin{align}
\lim_{s\to\infty}
\sup_{t,t'}\lV
\lmk S_\sigma^{(l)V_{{\sigma},{(\lm {1\sigma}(s),\lm {2\sigma}(s))}}^{(h)}}
-\Ad R_h
\rmk\lmk \Theta\lmk h^{-1}\rmk\lmk X_\rho^{(t,t')}\rmk\rmk
\rV=0.
\end{align}
By Lemma \ref{ou}, we have
\begin{align}
\lim_{t\to\infty}\sup_{s,s'}
\lV
\lmk S_\rho^{(l)V_{{\rho},{(\lm {1\rho}(t),\lm {2\rho}(t))}}^{(g)}}
-\id
\rmk\lmk X_\sigma^{(s,s')}\rmk
\rV=0.
\end{align}
Substituting these to (\ref{geji}), and the locality (\ref{xloc})
combined with  Assumption \ref{a1l} Assumption \ref{a1r} and (\ref{hoshi})
we obtain
\begin{align}
\begin{split}
&\lim_{t,t',s,s'\to\infty}\lV
X_\rho^{(t,t')}\otimes_{\caB_l} X_\sigma^{(s,s')}
-X_\sigma^{(s,s')}\otimes_{\caB_l} \Theta\lmk h^{-1}\rmk\lmk X_\rho^{(t,t')}\rmk
\rV\\
&=\lim_{t,t',s,s'\to\infty}\lV
X_\rho^{(t,t')} X_\sigma^{(s,s')}
-X_\sigma^{(s,s')} 
\Ad R_h\lmk \Theta\lmk h^{-1}\rmk\lmk X_\rho^{(t,t')}\rmk\rmk
\rV=0.
\end{split}
\end{align}
\end{proof}
\begin{lem}\label{lem38}Consider setting in the subsection \ref{sec:qss}.
Assume Assumption \ref{a1l} and Assumption \ref{a1r}.
Let $g,h\in G$ and $\rho\in O^{(g)}_{(\loz,\ltz)}$, $\sigma\in O^{(h)}_{(\loz,\ltz)}$.
Let $(\lm {1\rho},\lm {2\rho}), 
(\lm {1\sigma},\lm {2\sigma})\in PC$ such that
$\{(\lm {1\rho},\lm {2\rho})\}
\leftarrow 
\{
(\lm {1\sigma},\lm {2\sigma})
\}$.
We set
\begin{align}
\begin{split}
&\lm{ i\rho}(t):=\lm {i\rho}-t\bm e_0,\quad 
\lm {i\sigma}(s):=\lm {i\sigma}+s\bm e_0,
\end{split}
\end{align}
with $i=1,2$, $t,s\ge 0$.
Let
\begin{align}
\begin{split}
\Vrl\rho{\lm {1\rho}(t)}{\lm {2\rho}(t)}g\in \Vbu\rho{\lm {1\rho}(t)}{\lm {2\rho}(t)}g,\quad
\Vrl\sigma{\lm {1\sigma}(s)}{\lm {2\sigma}(s)}h\in \Vbu\sigma{\lm {1\sigma}(s)}{\lm {2\sigma}(s)}h
\end{split}
\end{align}
for $t,s\ge 0$.
Then the limit 
\begin{align}\label{fune}
\begin{split}
&\epsilon_G\lmk \rho,\sigma\rmk\\
&:=\lim_{t,s\to\infty}
\lmk
\Vrl\sigma{\lm {1\sigma}(s)}{\lm {2\sigma}(s)}h\otimes_{\caB_l}
\Theta(h^{-1})\lmk \Vrl\rho{\lm {1\rho}(t)}{\lm {2\rho}(t)}g\rmk\rmk^*
\lmk
\Vrl\rho{\lm {1\rho}(t)}{\lm {2\rho}(t)}g
\otimes_{\caB_l}
\Vrl\sigma{\lm {1\sigma}(s)}{\lm {2\sigma}(s)}h
\rmk\\
&\in \caU(\caF)
\end{split}
\end{align}
exits and it is independent of the choices of
$(\lm {1\rho},\lm {2\rho})$, 
$(\lm {1\sigma},\lm {2\sigma})$,
$\Vrl\rho{\lm {1\rho}(t)}{\lm {2\rho}(t)}g$, $\Vrl\sigma{\lm {1\sigma}(s)}{\lm {2\sigma}(s)}h$.
Here the tensor product is taken for
\begin{align}
\begin{split}
\Vrl\rho{\lm {1\rho}(t)}{\lm {2\rho}(t)}g\in
\Mor_{\caB_l}\lmk
S_\rho^{(l)\unit}, S_{\rho}^{(l)\Vrl\rho{\lm {1\rho}(t)}{\lm {2\rho}(t)}g}
\rmk,\quad
\Vrl\sigma{\lm {1\sigma}(s)}{\lm {2\sigma}(s)}h\in
\Mor_{\caB_l}\lmk
S_\sigma^{(l)\unit}, S_{\rho}^{(l), \Vrl\sigma{\lm {1\sigma}(s)}{\lm {2\sigma}(s)}h}
\rmk
\end{split}
\end{align}

\end{lem}
\begin{proof}
If $(\lm {1\rho},\lm {2\rho}), (\lm {1\rho}', \lm {2\rho}'), 
(\lm {1\sigma},\lm {2\sigma}), (\lm {1\sigma}', \lm {2\sigma}')\in PC$
satisfy (\ref{hoshi}),
then we have
\begin{align}\label{dai}
\begin{split}
&\lV
\begin{gathered}
\lmk
\Vrl\sigma{\lm {1\sigma}(s)}{\lm {2\sigma}(s)}h\otimes_{\caB_l}
\Theta(h^{-1})\lmk \Vrl\rho{\lm {1\rho}(t)}{\lm {2\rho}(t)}g\rmk\rmk^*
\lmk
\Vrl\rho{\lm {1\rho}(t)}{\lm {2\rho}(t)}g
\otimes_{\caB_l}
\Vrl\sigma{\lm {1\sigma}(s)}{\lm {2\sigma}(s)}h
\rmk\\
-\lmk
\Vrl\sigma{\lm {1\sigma}'(s')}{\lm {2\sigma}'(s')}h\otimes_{\caB_l}
\Theta(h^{-1})\lmk \Vrl\rho{\lm {1\rho}'(t')}{\lm {2\rho}'(t')}g\rmk\rmk^*
\lmk
\Vrl\rho{\lm {1\rho}'(t')}{\lm {2\rho}'(t')}g
\otimes_{\caB_l}
\Vrl\sigma{\lm {1\sigma}'(s')}{\lm {2\sigma}'(s')}h
\rmk
\end{gathered}
\rV\\
&=
\lV
\begin{gathered}
\lmk
\Vrl\sigma{\lm {1\sigma}'(s')}{\lm {2\sigma}'(s')}h\otimes_{\caB_l}
\Theta(h^{-1})\lmk \Vrl\rho{\lm {1\rho}'(t')}{\lm {2\rho}'(t')}g\rmk\rmk
\lmk
\lmk \Vrl\sigma{\lm {1\sigma}(s)}{\lm {2\sigma}(s)}h\rmk^*\otimes_{\caB_l}
\lmk \Theta(h^{-1})\lmk \Vrl\rho{\lm {1\rho}(t)}{\lm {2\rho}(t)}g\rmk^*\rmk\rmk
\\
-
\lmk
\Vrl\rho{\lm {1\rho}'(t')}{\lm {2\rho}'(t')}g
\otimes_{\caB_l}
\Vrl\sigma{\lm {1\sigma}'(s')}{\lm {2\sigma}'(s')}h
\rmk
\lmk
\lmk \Vrl\rho{\lm {1\rho}(t)}{\lm {2\rho}(t)}g\rmk^*
\otimes_{\caB_l}
\lmk \Vrl\sigma{\lm {1\sigma}(s)}{\lm {2\sigma}(s)}h\rmk^*
\rmk
\end{gathered}
\rV\\
&=
\lV
\begin{gathered}
\lmk
\Vrl\sigma{\lm {1\sigma}'(s')}{\lm {2\sigma}'(s')}h
\lmk \Vrl\sigma{\lm {1\sigma}(s)}{\lm {2\sigma}(s)}h\rmk^*
\otimes_{\caB_l}
\Theta(h^{-1})\lmk \Vrl\rho{\lm {1\rho}'(t')}{\lm {2\rho}'(t')}g\lmk \Vrl\rho{\lm {1\rho}(t)}{\lm {2\rho}(t)}g\rmk^*
\rmk\rmk
\\
-
\lmk
\Vrl\rho{\lm {1\rho}'(t')}{\lm {2\rho}'(t')}g\lmk \Vrl\rho{\lm {1\rho}(t)}{\lm {2\rho}(t)}g\rmk^*
\otimes_{\caB_l}
\Vrl\sigma{\lm {1\sigma}'(s')}{\lm {2\sigma}'(s')}h\lmk \Vrl\sigma{\lm {1\sigma}(s)}{\lm {2\sigma}(s)}h\rmk^*
\rmk
\end{gathered}
\rV.
\end{split}
\end{align}
Because
\begin{align}
\begin{split}
\Vrl\rho{\lm {1\rho}'(t')}{\lm {2\rho}'(t')}g\lmk \Vrl\rho{\lm {1\rho}(t)}{\lm {2\rho}(t)}g\rmk^*
\in 
 \Mor_{\caB_l}\lmk \Srl\rho{\lm {1\rho}(t)}{\lm {2\rho}(t)}g,
\Srl{\rho}{{\lm {1\rho}}'(t')}{{\lm {2\rho}}'(t')} g
\rmk,\\
\Vrl\sigma{\lm {1\sigma}'(s')}{\lm {2\sigma}'(s')}h\lmk \Vrl\sigma{\lm {1\sigma}(s)}{\lm {2\sigma}(s)}h\rmk^*
\in\Mor_{\caB_l}\lmk
\Srl\sigma{\lm {1\sigma}(s)}{\lm {2\sigma}(s)}h,
\Srl{\sigma}{\lm {1\sigma}'(s')}{\lm {2\sigma}'(s')}h
\rmk,
\end{split}
\end{align}
 by Lemma \ref{kappa}, the left hand side of (\ref{dai}) converges to $0$
 as $t,s,t',s'\to \infty$.
 This proves the existence of the limit (\ref{fune}), and its value is independent
of the choices of
$\Vrl\rho{\lm {1\rho}(t)}{\lm {2\rho}(t)}g$, $\Vrl\sigma{\lm {1\sigma}(s)}{\lm {2\sigma}(s)}h$.
It also proves that if  $(\lm {1\rho},\lm {2\rho}), (\lm {1\rho'}', \lm {2\rho'}'), 
(\lm {1\sigma},\lm {2\sigma}), (\lm {1\sigma'}', \lm {2\sigma'}')\in PC$
satisfy (\ref{hoshi}), then 
 the corresponding
 limits are the same.
 For general  $(\lm {1\rho},\lm {2\rho}), 
(\lm {1\sigma},\lm {2\sigma}),  (\lm {1\rho}', \lm {2\rho}'),(\lm {1\sigma}', \lm {2\sigma}')\in PC$,
we can construct a finite sequence 
$\lmk (\lm {1\rho}^{(i)},\lm {2\rho}^{(i)}), (\lm {1\sigma}^{(i)},\lm {2\sigma}^{(i)})\rmk
\in PC^{\times 2}$ satisfying (\ref{hoshi})
\begin{align}
\{(\lm {1\rho}^{(i)},\lm {2\rho}^{(i)}), 
(\lm {1\rho}^{(i+1)},\lm {2\rho}^{(i+1)})
\}\leftarrow
\{(\lm {1\sigma}^{(i)},\lm {2\sigma}^{(i)}), (\lm {1\sigma}^{(i+1)},\lm {2\sigma}^{(i+1)})\}
\end{align}
connecting $(\lm {1\rho},\lm {2\rho}), (\lm {1\sigma},\lm {2\sigma})$ and 
$(\lm {1\rho}', \lm {2\rho}'),(\lm {1\sigma}', \lm {2\sigma}')$.
Hence the independence of the limit is proven
for general case.
\end{proof}

\begin{lem}\label{tako}Consider setting in the subsection \ref{sec:qss}.
Assume Assumption \ref{a1l} and Assumption \ref{a1r}.
Let $g,h\in G$ and $\rho\in O^{(g)}_{(\loz,\ltz)}$, $\sigma\in O^{(h)}_{(\loz,\ltz)}$.
Let $(\lm{1\rho},\lm{2\rho}), (\lm{1\rho},\lm{2\rho})\in PC$.
Then we have the following.
\begin{description}
\item[(i)]
If $\{(\loz,\ltz)\}\leftarrow \{ (\lm {1\sigma},\lm {2\sigma})\}$,
then
\begin{align}
\begin{split}
\epsilon_G(\rho,\sigma)=
\lim_{s\to\infty} 
\lmk
\Vrl\sigma{\lm {1\sigma}(s)}{\lm {2\sigma}(s)}h
\rmk^*
S_\rho^{(l) \unit}
\lmk
\Vrl\sigma{\lm {1\sigma}(s)}{\lm {2\sigma}(s)}h
\rmk
\end{split}
\end{align}
for $\lm {i\sigma}(s):=\lm {i\sigma}+s\bm e_0$, $i=1,2$, $s\ge 0$ and
any 
$\Vrl\sigma{\lm {1\sigma}(s)}{\lm {2\sigma}(s)}h\in \Vbu\sigma{\lm {1\sigma}(s)}{\lm {2\sigma}(s)}h$.
\item[(ii)]
If $\{(\lm {1\rho},\lm {2\rho})\}\leftarrow \{  (\loz,\ltz)\}$,
then
\begin{align}
\begin{split}
\epsilon_G(\rho,\sigma)=
\lim_{t\to\infty} 
S_\sigma^{(l) \unit}\lmk\Theta(h^{-1})\lmk
\lmk
\Vrl\rho{\lm {1\rho}(t)}{\lm {2\rho}(t)}g
\rmk^*\rmk
\rmk
\Vrl\rho{\lm {1\rho}(t)}{\lm {2\rho}(t)}g
\end{split}
\end{align}
for $\lm {i\sigma}(s):=\lm {i\sigma}+s\bm e_0$, $i=1,2$, $s\ge 0$ and
any 
$\Vrl\rho{\lm {1\rho}(t)}{\lm {2\rho}(t)}g\in \Vbu\rho{\lm {1\rho}(t)}{\lm {2\rho}(t)}g$.

\end{description}
\end{lem}
\begin{proof}
(i)Choose $(\lm {1\rho},\lm {2\rho})\in PC$ so that
$\{(\lm {1\rho},\lm {2\rho}), (\loz,\ltz)\}\leftarrow \{ (\lm {1\sigma},\lm {2\sigma})\}$.
With the notation in Lemma \ref{lem38}
we have
\begin{align}\label{kujira}
\begin{split}
&\epsilon_G\lmk \rho,\sigma\rmk\\
&:=\lim_{t,s\to\infty}
\lmk \Vrl\sigma{\lm {1\sigma}(s)}{\lm {2\sigma}(s)}h\rmk^*
\Srl\sigma{\lm {1\sigma}(s)}{\lm {2\sigma}(s)}h\lmk
\Theta(h^{-1})\lmk \Vrl\rho{\lm {1\rho}(t)}{\lm {2\rho}(t)}g\rmk^*
\rmk\\
&\quad\quad\quad\quad \Vrl\rho{\lm {1\rho}(t)}{\lm {2\rho}(t)}g
S_\rho^{(l)\unit}\lmk 
\Vrl\sigma{\lm {1\sigma}(s)}{\lm {2\sigma}(s)}h
\rmk
\end{split}
\end{align}
Because
\begin{align}
\begin{split}
\Theta(h^{-1})\lmk \lmk \Vrl\rho{\lm {1\rho}(t)}{\lm {2\rho}(t)}g\rmk^*\rmk
\in \pi\lmk\caA_{{\lm {2\rho}(t)}\cap \ltz}\rmk'\cap\caF
\end{split}
\end{align}
we have
\begin{align}
\begin{split}
\lim_{s\to\infty}\sup_t
\lV
\Srl\sigma{\lm {1\sigma}(s)}{\lm {2\sigma}(s)}h\lmk
\Theta(h^{-1})\lmk \Vrl\rho{\lm {1\rho}(t)}{\lm {2\rho}(t)}g\rmk^*\rmk
-\Ad R^h\lmk
\Theta(h^{-1})\lmk \Vrl\rho{\lm {1\rho}(t)}{\lm {2\rho}(t)}g\rmk^*
\rmk
\rV=0,
\end{split}
\end{align}by Lemma \ref{lem35}
Substituting this to (\ref{kujira}), we obtain (i).
\\
(ii)
Choose $(\lm {1\sigma},\lm {2\sigma})\in PC$ so that
$\{(\lm {1\rho},\lm {2\rho})\}\leftarrow \{ (\lm {1\sigma},\lm {2\sigma}), (\loz,\ltz)\}$.
Then with the notation in Lemma \ref{lem38}, we have
\begin{align}\label{kuma}
\begin{split}
&\epsilon_G\lmk \rho,\sigma\rmk\\
&\lim_{t,s\to\infty}
\lmk
\Vrl\sigma{\lm {1\sigma}(s)}{\lm {2\sigma}(s)}h\otimes_{\caB_l}
\Theta(h^{-1})\lmk \Vrl\rho{\lm {1\rho}(t)}{\lm {2\rho}(t)}g\rmk\rmk^*
\lmk
\Vrl\rho{\lm {1\rho}(t)}{\lm {2\rho}(t)}g
\otimes_{\caB_l}
\Vrl\sigma{\lm {1\sigma}(s)}{\lm {2\sigma}(s)}h
\rmk\\
&=\lim_{t,s\to\infty}
\lmk S_\sigma^{(l)\unit}\lmk \Theta(h^{-1})\lmk \Vrl\rho{\lm {1\rho}(t)}{\lm {2\rho}(t)}g\rmk\rmk^*\rmk
\lmk \Vrl\sigma{\lm {1\sigma}(s)}{\lm {2\sigma}(s)}h\rmk^*
\lmk
\Vrl\rho{\lm {1\rho}(t)}{\lm {2\rho}(t)}g
\otimes_{\caB_l}
\Vrl\sigma{\lm {1\sigma}(s)}{\lm {2\sigma}(s)}h
\rmk\\
&=\lim_{t,s\to\infty}
\lmk S_\sigma^{(l)\unit}\lmk \Theta(h^{-1})\lmk \Vrl\rho{\lm {1\rho}(t)}{\lm {2\rho}(t)}g\rmk\rmk^*\rmk
\lmk \Vrl\sigma{\lm {1\sigma}(s)}{\lm {2\sigma}(s)}h\rmk^*
\Vrl\rho{\lm {1\rho}(t)}{\lm {2\rho}(t)}g
S_\rho^{(l)\unit }\lmk \Vrl\sigma{\lm {1\sigma}(s)}{\lm {2\sigma}(s)}h\rmk\\
&=\lim_{t,s\to\infty}
\lmk S_\sigma^{(l)\unit}\lmk \Theta(h^{-1})\lmk \Vrl\rho{\lm {1\rho}(t)}{\lm {2\rho}(t)}g\rmk\rmk^*\rmk
\lmk \Vrl\sigma{\lm {1\sigma}(s)}{\lm {2\sigma}(s)}h\rmk^*
\Srl\rho{\lm {1\rho}(t)}{\lm {2\rho}(t)}g
\lmk \Vrl\sigma{\lm {1\sigma}(s)}{\lm {2\sigma}(s)}h\rmk
\Vrl\rho{\lm {1\rho}(t)}{\lm {2\rho}(t)}g.
\end{split}
\end{align}
Because
\begin{align}
\begin{split}
\Vrl\sigma{\lm {1\sigma}(s)}{\lm {2\sigma}(s)}h
\in \pi\lmk\caA_{{\lm {1\sigma}(s)}\cap \loz}\rmk'\cap\caF
\end{split}
\end{align}
we have
\begin{align}
\begin{split}
\lim_{t\to\infty}\sup_{s}\lV \lmk \Srl\rho{\lm {1\rho}(t)}{\lm {2\rho}(t)}g-\id\rmk
\lmk \Vrl\sigma{\lm {1\sigma}(s)}{\lm {2\sigma}(s)}h\rmk\rV=0,
\end{split}
\end{align}
by Lemma \ref{ou}.
Substituting this to (\ref{kuma}), we obtain (ii).
\end{proof}
\begin{lem}\label{intw}Consider setting in the subsection \ref{sec:qss}.
Assume Assumption \ref{a1l} and Assumption \ref{a1r}.
Let $g,h\in G$ and $\rho\in O^{(g)}_{(\loz,\ltz)}$, $\sigma\in O^{(h)}_{(\loz,\ltz)}$.
Then we have 
\begin{align}
\begin{split}
\epsilon_G(\rho,\sigma)\in \Mor_G\lmk \rho\otimes \sigma,
\sigma\otimes\Theta(h^{-1})(\rho)
\rmk.
\end{split}
\end{align}
\end{lem}
\begin{proof}
Let $A\in \caA_{\rm loc}$.
With $\{(\loz,\ltz)\}\leftarrow \{ (\lm {1\sigma},\lm {2\sigma})\}$,
from Lemma \ref{tako} (i),
we have 
\begin{align}
\begin{split}
&\epsilon_G(\rho,\sigma) \lmk \rho\otimes \sigma\rmk(A)=
\lim_{s\to\infty} 
\lmk
\Vrl\sigma{\lm {1\sigma}(s)}{\lm {2\sigma}(s)}h
\rmk^*
S_\rho^{(l) \unit}
\lmk
\Vrl\sigma{\lm {1\sigma}(s)}{\lm {2\sigma}(s)}h
\rmk S_\rho^{(l)\unit}S_\sigma^{(l)\unit}\pi(A)\\
&=\lim_{s\to\infty} 
\lmk
\Vrl\sigma{\lm {1\sigma}(s)}{\lm {2\sigma}(s)}h
\rmk^*
S_\rho^{(l) \unit}
\lmk
\Srl\sigma{\lm {1\sigma}(s)}{\lm {2\sigma}(s)}h\pi(A)\cdot \Vrl\sigma{\lm {1\sigma}(s)}{\lm {2\sigma}(s)}h
\rmk \\
\end{split}
\end{align}
for $\lm {i\sigma}(s):=\lm {i\sigma}+s\bm e_0$, $i=1,2$, $s\ge 0$ and
any 
$\Vrl\sigma{\lm {1\sigma}(s)}{\lm {2\sigma}(s)}h\in \Vbu\sigma{\lm {1\sigma}(s)}{\lm {2\sigma}(s)}h$.
Because $A\in \caA_{{\lm {1\sigma}(s)}}$ eventually as $s\to\infty$,
we have $\Srl\sigma{\lm {1\sigma}(s)}{\lm {2\sigma}(s)}h\pi(A)=\pi\beta_h(A)$ eventually,
by Lemma \ref{ookami}.
Hence we have
\begin{align}\label{ika}
\begin{split}
&\epsilon_G(\rho,\sigma) \lmk \rho\otimes \sigma\rmk(A)=
\lim_{s\to\infty} 
\lmk
\Vrl\sigma{\lm {1\sigma}(s)}{\lm {2\sigma}(s)}h
\rmk^*
\lmk
\rho\beta_h(A)\rmk S_\rho^{(l) \unit}
\lmk\Vrl\sigma{\lm {1\sigma}(s)}{\lm {2\sigma}(s)}h
\rmk .
\end{split}
\end{align}
Note from Lemma \ref{tokage} that
\begin{align}\label{uma}
\begin{split}
\Theta(h)\lmk\Vrl\sigma{\lm {1\sigma}(s)}{\lm {2\sigma}(s)}h\rmk
\in
\Theta(h)\lmk 
\Vbk{\sigma}{\lmk \lm {1\sigma}(s), {\lm {2\sigma}(s)}\rmk }
^{(h)}\rmk
=\Vbk{\Theta(h)\lmk \sigma\rmk}{\lmk \lm {1\sigma}(s), {\lm {2\sigma}(s)}\rmk}^{(h)}
\end{split}
\end{align}
For our local $A$, there exists $u_0\ge 0$ such that 
$A\in \caA_{\lmk \ltz+u_0\bm e_0\rmk^c}$.
Then because $\ltz+u_0\bm e_0\subset \ltz$,
we have
\begin{align}
\begin{split}
\rho\beta_h(A)\in
\rho\lmk \caA_{\lmk \ltz+u_0\bm e_0\rmk^c} \rmk
\subset \rho\lmk \caA_{\lmk \ltz+u_0\bm e_0\rmk} \rmk'
=\pi\lmk \caA_{\lmk \ltz+u_0\bm e_0\rmk}\rmk'.
\end{split}
\end{align}
Therefore, from Lemma \ref{lem35} and (\ref{uma}), we have
\begin{align}
\lim_{s\to\infty}
\lV
\lmk S_{\Theta(h)(\sigma)}^{(l)\Theta(h)\lmk\Vrl\sigma{\lm {1\sigma}(s)}{\lm {2\sigma}(s)}h\rmk}
-\Ad R_h
\rmk
\lmk \rho\beta_h(A)\rmk
\rV=0.
\end{align}
Substituting this to (\ref{ika}), and using Lemma \ref{tokage}, we obtain
\begin{align}
\begin{split}
&\epsilon_G(\rho,\sigma) \lmk \rho\otimes \sigma\rmk(A)\\
&=
\lim_{s\to\infty} 
\lmk
\Vrl\sigma{\lm {1\sigma}(s)}{\lm {2\sigma}(s)}h
\rmk^*
\Ad R_h^*
S_{\Theta(h)(\sigma)}^{(l)\Theta(h)\lmk\Vrl\sigma{\lm {1\sigma}(s)}{\lm {2\sigma}(s)}h\rmk}
\lmk
\rho\beta_h(A)\rmk 
S_\rho^{(l) \unit}
\lmk\Vrl\sigma{\lm {1\sigma}(s)}{\lm {2\sigma}(s)}h
\rmk \\
&=
\lim_{s\to\infty} 
\lmk
\Vrl\sigma{\lm {1\sigma}(s)}{\lm {2\sigma}(s)}h
\rmk^*
S_{\sigma}^{(l)\Vrl\sigma{\lm {1\sigma}(s)}{\lm {2\sigma}(s)}h}\Ad R_h^*
\lmk
\rho\beta_h(A)\rmk 
S_\rho^{(l) \unit}
\lmk\Vrl\sigma{\lm {1\sigma}(s)}{\lm {2\sigma}(s)}h
\rmk \\
&=
\lim_{s\to\infty} 
S_{\sigma}^{(l)\unit}\Ad R_h^*
\lmk\rho\beta_h(A)\rmk \cdot 
\lmk
\Vrl\sigma{\lm {1\sigma}(s)}{\lm {2\sigma}(s)}h
\rmk^*
S_\rho^{(l) \unit}
\lmk\Vrl\sigma{\lm {1\sigma}(s)}{\lm {2\sigma}(s)}h
\rmk \\
&=S_{\sigma}^{(l)\unit}\Theta(h^{-1})\lmk\rho\rmk(A)
\epsilon_G\lmk \rho,\sigma\rmk
=\lmk \sigma\otimes\Theta(h^{-1})\lmk\rho\rmk\rmk(A)
\epsilon_G\lmk \rho,\sigma\rmk
\end{split}
\end{align}
\end{proof}

\begin{lem}\label{hex}Consider setting in the subsection \ref{sec:qss}.
Assume Assumption \ref{a1l} and Assumption \ref{a1r}.
For any
$g,h,k\in G$ and $\rho\in O^{(g)}_{(\loz,\ltz)}$, $\sigma\in O^{(h)}_{(\loz,\ltz)}$,
$\gamma\in O^{(k)}_{(\loz,\ltz)}$, we have
\begin{align}
\begin{split}
\epsilon_G\lmk\rho\otimes \sigma,\gamma\rmk
=\lmk
\epsilon_G(\rho,\gamma)\otimes  \id_{\Theta(k^{-1})(\sigma)}
\rmk
\lmk
\id_\rho\otimes\epsilon_G( \sigma,\gamma)
\rmk.
\end{split}
\end{align}
\end{lem}
\begin{proof}
Choose $(\lm {1\gamma},\lm {2\gamma})\in PC$ such that
 $\{(\loz,\ltz)\}\leftarrow \{ (\lm {1\gamma},\lm {2\gamma})\}$.
By Lemma \ref{tako},
\begin{align}
\begin{split}
&\lmk
\epsilon_G(\rho,\gamma)\otimes  \id_{\Theta(k^{-1})(\sigma)}
\rmk
\lmk
\id_\rho\otimes\epsilon_G( \sigma,\gamma)
\rmk\\
&=
\epsilon_G(\rho,\gamma) S_\rho^{(l)\unit}\lmk \epsilon_G(\sigma,\gamma)\rmk\\
&=\lim_{s\to\infty} 
\lmk
\Vrl\gamma{\lm {1\gamma}(s)}{\lm {2\gamma}(s)}k
\rmk^*
S_\rho^{(l) \unit}
\lmk
\Vrl\gamma{\lm {1\gamma}(s)}{\lm {2\gamma}(s)}k
\rmk
S_\rho^{(l)\unit}\lmk
\lmk
\Vrl\gamma{\lm {1\gamma}(s)}{\lm {2\gamma}(s)}k
\rmk^*
S_\sigma^{(l) \unit}
\lmk
\Vrl\gamma{\lm {1\gamma}(s)}{\lm {2\gamma}(s)}k
\rmk
\rmk\\
&=\lim_{s\to\infty} 
\lmk
\Vrl\gamma{\lm {1\gamma}(s)}{\lm {2\gamma}(s)}k
\rmk^*
S_\rho^{(l)\unit}\lmk
S_\sigma^{(l) \unit}
\lmk
\Vrl\gamma{\lm {1\gamma}(s)}{\lm {2\gamma}(s)}k
\rmk
\rmk\\
&=\epsilon_G(\rho\otimes \sigma,\gamma).
\end{split}
\end{align}
\end{proof}
\begin{lem}Consider setting in the subsection \ref{sec:qss}.
Assume Assumption \ref{a1l} and Assumption \ref{a1r}.
For any
$g,h,k\in G$ and $\rho\in O^{(g)}_{(\loz,\ltz)}$, $\sigma\in O^{(h)}_{(\loz,\ltz)}$,
$\gamma\in O^{(k)}_{(\loz,\ltz)}$, we have
\begin{align}
\begin{split}
\epsilon_G\lmk\rho, \sigma\otimes \gamma\rmk
=\lmk
\id_{\sigma} \otimes \epsilon_G\lmk \Theta(h^{-1})(\rho),\gamma\rmk
\rmk
\lmk
\epsilon_G( \rho, \sigma) \otimes \id_\gamma
\rmk.
\end{split}
\end{align}
\end{lem}
\begin{proof}
Choose $(\lm {1\rho},\lm {2\rho})\in PC$ such that
$\{(\lm {1\rho},\lm {2\rho})\}\leftarrow \{  (\loz,\ltz)\}$.
Let $\Vrl\rho{\lm {1\rho}(t)}{\lm {2\rho}(t)}g\in \Vbu\rho{\lm {1\rho}(t)}{\lm {2\rho}(t)}g$.
By Lemma \ref{tokage},
we have
\begin{align}
\begin{split}
\Vrl{\Theta(h^{-1})(\rho)}{\lm {1{\rho}}(t)}{\lm {2{\rho}}(t)}{h^{-1}gh}
:=\Theta(h^{-1})\lmk
\Vrl\rho{\lm {1\rho}(t)}{\lm {2\rho}(t)}g\rmk
\in 
\Vbk
{\Theta(h^{-1})(\rho)}{\lmk \lm {1{\rho}}(t)\lm {2{\rho}}(t)\rmk}^{h^{-1}gh}.
\end{split}
\end{align}
By Lemma \ref{tako}, we have
\begin{align}
\begin{split}
&\lmk
\id_{\sigma} \otimes \epsilon_G\lmk \Theta(h^{-1})(\rho),\gamma\rmk
\rmk
\lmk
\epsilon_G( \rho, \sigma) \otimes \id_\gamma
\rmk\\
&=S_\sigma^{(l)\unit}\lmk \epsilon_G\lmk \Theta(h^{-1})(\rho),\gamma\rmk\rmk
\epsilon_G( \rho, \sigma)\\
&=\lim_{t\to\infty} 
S_\sigma^{(l)\unit}\lmk
S_\gamma^{(l) \unit}\lmk\Theta(k^{-1})\lmk
\lmk
\Vrl{\Theta(h^{-1})(\rho)}{\lm {1{\rho}}(t)}{\lm {2{\rho}}(t)}{h^{-1}gh}
\rmk^*\rmk
\rmk
\Vrl{\Theta(h^{-1})(\rho)}{\lm {1{\rho}}(t)}{\lm {2{\rho}}(t)}{h^{-1}gh}
\rmk\\
&\quad \quad S_\sigma^{(l) \unit}\lmk\Theta(h^{-1})\lmk
\lmk
\Vrl\rho{\lm {1\rho}(t)}{\lm {2\rho}(t)}g
\rmk^*\rmk
\rmk
\Vrl\rho{\lm {1\rho}(t)}{\lm {2\rho}(t)}g\\
&=\lim_{t\to\infty} 
S_\sigma^{(l)\unit}\lmk
S_\gamma^{(l) \unit}\lmk\Theta(k^{-1})\lmk
\lmk
\Theta(h^{-1})\lmk
\Vrl\rho{\lm {1\rho}(t)}{\lm {2\rho}(t)}g\rmk
\rmk^*
\rmk
\rmk
\Theta(h^{-1})\lmk
\Vrl\rho{\lm {1\rho}(t)}{\lm {2\rho}(t)}g\rmk
\rmk\\
&\quad \quad S_\sigma^{(l) \unit}\lmk\Theta(h^{-1})\lmk
\lmk
\Vrl\rho{\lm {1\rho}(t)}{\lm {2\rho}(t)}g
\rmk^*\rmk
\rmk
\Vrl\rho{\lm {1\rho}(t)}{\lm {2\rho}(t)}g\\
&=\lim_{t\to\infty} 
S_\sigma^{(l)\unit}\lmk
S_\gamma^{(l) \unit}\lmk\Theta(k^{-1})\lmk
\lmk
\Theta(h^{-1})\lmk
\Vrl\rho{\lm {1\rho}(t)}{\lm {2\rho}(t)}g\rmk
\rmk^*
\rmk
\rmk
\rmk\Vrl\rho{\lm {1\rho}(t)}{\lm {2\rho}(t)}g\\
&=\lim_{t\to\infty} 
S_\sigma^{(l)\unit}S_\gamma^{(l) \unit}
\lmk
\Theta((hk)^{-1})\lmk
\Vrl\rho{\lm {1\rho}(t)}{\lm {2\rho}(t)}g
\rmk^*
\rmk
\Vrl\rho{\lm {1\rho}(t)}{\lm {2\rho}(t)}g\\
&=\epsilon_G(\rho,\sigma\otimes \gamma).
\end{split}
\end{align}
\end{proof}

\begin{lem}\label{egn}Consider setting in the subsection \ref{sec:qss}.
Assume Assumption \ref{a1l} and Assumption \ref{a1r}.
For any
$g,h\in G$ and $\rho,\rho'\in O^{(g)}_{(\loz,\ltz)}$, $\sigma,\sigma'\in O^{(h)}_{(\loz,\ltz)}$,
and
$X\in \Mor_G(\rho,\rho')$, $Y\in \Mor_G(\sigma,\sigma')$, we have
\begin{align}
\begin{split}
\epsilon_G(\rho',\sigma')\lmk X\otimes Y\rmk=
\lmk Y\otimes\Theta(h^{-1})(X)\rmk\epsilon_G(\rho,\sigma)
\end{split}
\end{align}
\end{lem}
\begin{proof}
Let $g,h\in G$ and $\rho,\rho'\in O^{(g)}_{(\loz,\ltz)}$, $\sigma,\sigma'\in O^{(h)}_{(\loz,\ltz)}$.
Let $(\lm {1\rho},\lm {2\rho}), 
(\lm {1\sigma},\lm {2\sigma}), \in PC$ such that
$\{(\lm {1\rho},\lm {2\rho}) \}
\leftarrow 
\{
(\lm {1\sigma},\lm {2\sigma})
\}$.
We set
$\lm{ i\rho}(t):=\lm {i\rho}-t\bm e_0$,
$\lm {i\sigma}(s):=\lm {i\sigma}+s\bm e_0$,
with $i=1,2$, $t,s\ge 0$.
Let
\begin{align}
\begin{split}
\Vrl\rho{\lm {1\rho}(t)}{\lm {2\rho}(t)}g\in \Vbu\rho{\lm {1\rho}(t)}{\lm {2\rho}(t)}g,\quad
\Vrl{\rho'}{\lm {1\rho}(t)}{\lm {2\rho}(t)}g\in \Vbu{\rho'}{\lm {1\rho}(t)}{\lm {2\rho}(t)}g
\\
\Vrl\sigma{\lm {1\sigma}(s)}{\lm {2\sigma}(s)}h\in \Vbu\sigma{\lm {1\sigma}(s)}{\lm {2\sigma}(s)}h,\quad 
\Vrl{\sigma'}{\lm {1\sigma}(s)}{\lm {2\sigma}(s)}h\in \Vbu{\sigma'}{\lm {1\sigma}(s)}{\lm {2\sigma}(s)}h
\end{split}
\end{align}
for $t,s\ge 0$.
Then we have
\begin{align}\label{samurai}
\begin{split}
&\lV
\epsilon_G(\rho',\sigma')\lmk X\otimes Y\rmk-
\lmk Y\otimes\Theta(h^{-1})(X)\rmk\epsilon_G(\rho,\sigma)
\rV\\
&=\lim_{t,s\to\infty}
\lV
\begin{gathered}
\lmk \Vrl{\sigma'}{\lm {1\sigma}(s)}{\lm {2\sigma}(s)}h\otimes_{\caB_l}
\Theta(h^{-1})\lmk \Vrl{\rho'}{\lm {1{\rho}}(t)}{\lm {2{\rho}}(t)}g\rmk\rmk^*
\lmk
\Vrl{\rho'}{\lm {1{\rho}}(t)}{\lm {2{\rho}}(t)}g
\otimes_{\caB_l}
\Vrl{\sigma'}{\lm {1{\sigma}}(s)}{\lm {2{\sigma}}(s)}h
\rmk\\
\lmk X\otimes_{\caB_l} Y\rmk\\
-\lmk Y\otimes_{\caB_l}\Theta(h^{-1})(X)\rmk
\lmk \Vrl{\sigma}{\lm {1\sigma}(s)}{\lm {2\sigma}(s)}h\otimes_{\caB_l}
\Theta(h^{-1})\lmk \Vrl{\rho}{\lm {1{\rho}}(t)}{\lm {2{\rho}}(t)}g\rmk\rmk^*
\lmk
\Vrl{\rho}{\lm {1{\rho}}(t)}{\lm {2{\rho}}(t)}g
\otimes_{\caB_l}
\Vrl{\sigma}{\lm {1{\sigma}}(s)}{\lm {2{\sigma}}(s)}h
\rmk
\end{gathered}
\rV\\
&=\lim_{t,s\to\infty}
\lV
\begin{gathered}
\lmk
\Vrl{\rho'}{\lm {1{\rho}}(t)}{\lm {2{\rho}}(t)}gX
\lmk \Vrl{\rho}{\lm {1{\rho}}(t)}{\lm {2{\rho}}(t)}g\rmk^*
\otimes_{\caB_l}
\Vrl{\sigma'}{\lm {1{\sigma}}(s)}{\lm {2{\sigma}}(s)}hY\lmk \Vrl{\sigma}{\lm {1{\sigma}}(s)}{\lm {2{\sigma}}(s)}h\rmk^*
\rmk\\
-\lmk \Vrl{\sigma'}{\lm {1\sigma}(s)}{\lm {2\sigma}(s)}h Y
 \lmk \Vrl{\sigma}{\lm {1\sigma}(s)}{\lm {2\sigma}(s)}h\rmk^*
\otimes_{\caB_l}\Theta(h^{-1})\lmk \Vrl{\rho'}{\lm {1{\rho}}(t)}{\lm {2{\rho}}(t)}g\rmk
\Theta(h^{-1})(X)\lmk \Theta(h^{-1})\lmk \Vrl{\rho}{\lm {1{\rho}}(t)}{\lm {2{\rho}}(t)}g\rmk^*\rmk
\rmk\\
\end{gathered}
\rV\\
&=\lim_{t,s\to\infty}
\lV
\begin{gathered}
\lmk
\Vrl{\rho'}{\lm {1{\rho}}(t)}{\lm {2{\rho}}(t)}gX
\lmk \Vrl{\rho}{\lm {1{\rho}}(t)}{\lm {2{\rho}}(t)}g\rmk^*
\otimes_{\caB_l}
\Vrl{\sigma'}{\lm {1{\sigma}}(s)}{\lm {2{\sigma}}(s)}hY\lmk \Vrl{\sigma}{\lm {1{\sigma}}(s)}{\lm {2{\sigma}}(s)}h\rmk^*
\rmk\\
-\lmk \Vrl{\sigma'}{\lm {1\sigma}(s)}{\lm {2\sigma}(s)}h Y
 \lmk \Vrl{\sigma}{\lm {1\sigma}(s)}{\lm {2\sigma}(s)}h\rmk^*
\otimes_{\caB_l}\Theta(h^{-1})\lmk \Vrl{\rho'}{\lm {1{\rho}}(t)}{\lm {2{\rho}}(t)}g
X\lmk \Vrl{\rho}{\lm {1{\rho}}(t)}{\lm {2{\rho}}(t)}g\rmk^*\rmk
\rmk\\
\end{gathered}
\rV\\
\end{split}
\end{align}
Because
\begin{align}
\begin{split}
\Vrl{\rho'}{\lm {1{\rho}}(t)}{\lm {2{\rho}}(t)}gX
\lmk \Vrl{\rho}{\lm {1{\rho}}(t)}{\lm {2{\rho}}(t)}g\rmk^*
\in \Mor_{\caB_l}\lmk \Srl\rho{\lm {1\rho}(t)}{\lm {2\rho}(t)}g,
\Srl{\rho'}{{\lm {1\rho}}(t)}{{\lm {2\rho}}(t)} g
\rmk,\\
\Vrl{\sigma'}{\lm {1{\sigma}}(s)}{\lm {2{\sigma}}(s)}hY\lmk \Vrl{\sigma}{\lm {1{\sigma}}(s)}{\lm {2{\sigma}}(s)}h\rmk^*
\in
\Mor_{\caB_l}\lmk
\Srl\sigma{\lm {1\sigma}(s)}{\lm {2\sigma}(s)}h,
\Srl{\sigma'}{\lm {1\sigma}(s)}{\lm {2\sigma}(s)}h
\rmk
\end{split}
\end{align}
the right hand side of (\ref{samurai}) goes to $0$ as $t,s\to\infty$, from Lemma \ref{kappa}.
\end{proof}
\begin{lem}\label{tentomushi}Consider setting in the subsection \ref{sec:qss}.
Assume Assumption \ref{a1l} and Assumption \ref{a1r}.
For $\rho\in O^{(g)}_{(\loz,\ltz)}$ and $\sigma\in O^{(h)}_{(\loz,\ltz)}$,
we have
\begin{align}
\begin{split}
\Theta(k)\lmk\epsilon_G(\rho,\sigma)\rmk
=\epsilon_G(\Theta(k)(\rho),\Theta(k)(\sigma))
\end{split}
\end{align}
\end{lem}
\begin{proof}
If $\{(\loz,\ltz)\}\leftarrow \{ (\lm {1\sigma},\lm {2\sigma})\}$,
then
\begin{align}
\begin{split}
\epsilon_G(\rho,\sigma):=
\lim_{s\to\infty} 
\lmk
\Vrl\sigma{\lm {1\sigma}(s)}{\lm {2\sigma}(s)}h
\rmk^*
S_\rho^{(l) \unit}
\lmk
\Vrl\sigma{\lm {1\sigma}(s)}{\lm {2\sigma}(s)}h
\rmk
\end{split}
\end{align}
for $\lm {i\sigma}(s):=\lm {i\sigma}+s\bm e_0$, $i=1,2$, $s\ge 0$ and
any 
$\Vrl\sigma{\lm {1\sigma}(s)}{\lm {2\sigma}(s)}h\in \Vbu\sigma{\lm {1\sigma}(s)}{\lm {2\sigma}(s)}h$
by Lemma \ref{tako}.
From Lemma \ref{tokage}, we also have
$\Theta(k)\lmk \Vrl\sigma{\lm {1\sigma}(s)}{\lm {2\sigma}(s)}h\rmk
\in \Vbk{\Theta(k)\lmk \sigma\rmk}{({\lm {1\sigma}(s)}{\lm {2\sigma}(s)})}^{(khk^{-1})}$,
and $S_{\Theta(k)(\rho)}^{(l),\unit}
=\Ad R_k\circ S_{\rho}^{(l), \unit}\circ \Ad R_k^*$
Therefore, we have
\begin{align}
\begin{split}
&\Theta(k)\lmk \epsilon_G(\rho,\sigma)\rmk:=
\lim_{s\to\infty} 
\Theta(k)\lmk
\Vrl\sigma{\lm {1\sigma}(s)}{\lm {2\sigma}(s)}h
\rmk^*
S_{\Theta(k)(\rho)}^{(l),\unit}
\lmk
\Theta(k)\lmk
\Vrl\sigma{\lm {1\sigma}(s)}{\lm {2\sigma}(s)}h
\rmk
\rmk\\
&=\epsilon_G(\Theta(k)(\rho),\Theta(k)(\sigma)).
\end{split}
\end{align}

\end{proof}


\begin{thebibliography}{10}


\bibitem[We]{Wentopo}{Topological order in rigid states,X. G. Wen,
International Journal of Modern Physics B 04, 239 (1990)}
\bibitem[K]{TC}{ Fault-tolerant quantum computation by anyons,A. Kitaev,
Annals of Physics 303, 2 (2003).}
\bibitem[DKLP]{Topomemory}{ Topological quantum memory, E. Dennis, A.
Kitaev, A. Landahl, and J. Preskill, Journal of Mathematical Physics 43,
4452 (2002). }
\bibitem[Wi]{Wilczek}{Quantum Mechanics of Fractional-Spin Particles,F.
Wilczek, Phys. Rev. Lett. 49, 957 (1982).}
\bibitem[NL]{E2}{Direct observation of anyonic braiding
statistics, Nakamura, J., Liang, S., Gardner, G. C., and Manfra, M. J.,
Nature Physics 16, 931-936 (2020).}
\bibitem[GQAI]{E3}{Non-Abelian braiding of graph vertices in a
superconducting processor,Google Quantum AI and Collaborators,Nature volume
618, 264--269 (2023)
\bibitem[TSG]{TSG}{Two-Dimensional Magnetotransport in the Extreme
Quantum Limit, D. C. Tsui, H. L. Stormer, and A. C. Gossard, Phys. Rev.
Lett. 48, 1559 (1982). }
\bibitem[L]{Laughlin}{Anomalous Quantum Hall Effect: An Incompressible
Quantum Fluid with Fractionally Charged Excitations, R. B. Laughlin, Phys.
Rev. Lett. 50, 1395 (1983).}

\bibitem[Ha]{SLH}{ Luttinger liquid theory' of one-dimensional quantum
fluids. I. Properties of the Luttinger model and their extension to the
general 1D interacting spinless Fermi gas,F.D.M. Haldane, J. Phys. C, 14
(1981),}

\bibitem[A]{SLA}The Resonating Valence Bond State in La2CuO4 and
Superconductivity, P.W. Anderson, Science, 235 (1987)}

\bibitem[EH]{C1}{  Classifying fractionalization: Symmetry classification
of gapped Z2 spin liquids in two dimensions,A. M. Essin and M. Hermele,
Phys. Rev. B 87, 104406 (2013).}
\bibitem[MR]{C2}{  Classification of symmetry enriched topological phases
with exactly solvable models, A. Mesaros and Y. Ran, Phys. Rev. B 87,
155115 (2013). }
\bibitem[LV]{C3}{  Classification and properties of symmetry-enriched
topological phases: Chern-Simons approach with applications to Z2 spin
liquids,  Y.-M. Lu and A. Vishwanath, Phys. Rev. B 93, 155121 (2016). }
\bibitem[He]{C4}{ String flux mechanism for fractionalization in
topologically ordered phases,M. Hermele, Phys. Rev. B 90, 184418 (2014)}
\bibitem[TLF]{C5}{ Symmetry fractionalization and twist defects, N.
Tarantino, N. H. Lindner, and L. Fidkowski, New Journal of Physics 18,
035006 (2016). }
\bibitem[HBFL]{C6}{  Symmetry-enriched string nets: Exactly solvable
models for SET phases, C. Heinrich, F. Burnell, L. Fidkowski, and M. Levin,
Phys. Rev. B 94, 235136 (2016). }
\bibitem[CGJQ]{C7}{ Exactly solvable models for symmetry-enriched
topological phases, M. Cheng, Z.-C. Gu, S. Jiang, and Y. Qi, Phys. Rev. B
96, 115107 (2017).}
\bibitem[BBCW]{2dSET}{Symmetry fractionalization, defects, and gauging of
topological phases, Barkeshli, Maissam and Bonderson, Parsa and Cheng, Meng
and Wang, Zhenghan, PRB 100, 115147 (2019)}
\bibitem[O1]{MTC}{ A derivation of braided C*-tensor categories from gapped
ground states satisfying the approximate Haag duality, Yoshiko Ogata,
arXiv:2106.15741}
\bibitem[LW]{Stringnets}{String-net condensation: A physical mechanism for
topological phases, Michael A. Levin, Xiao-Gang Wen, Phys.Rev. B71 (2005)
045110}
18-22
\bibitem[KVW]{KVW}
{
An operator algebraic approach to symmetry defects and fractionalization
Kyle Kawagoe, Siddharth Vadnerkar, Daniel Wallick
arXiv:2410.23380}
\bibitem[M]{muger2005conformal} Conformal orbifold theories and braided crossed g-categories. 
Michael M\"uger. Communications
in mathematical physics, 260:727--762, 2005.
\bibitem[O2]{BA}
{Boundary states of a bulk gapped ground state in 2-d quantum spin systems.}
Yoshiko Ogata. 
Communications
in mathematical physics, 405 : 213 2024.
\bibitem[O3]{O3}
An $H^3(G,\bbT)$-valued index of symmetry-protected topological phases with on-site finite group symmetry for two-dimensional quantum spin systems
Forum of Mathematics, Pi 9 1-62 202.

\bibitem[GIS]{GIS}
String order parameters for symmetry fractionalization in an enriched toric code
J Garre-Rubio, M Iqbal, DT Stephen
Physical Review B, 2021

\bibitem[Na1]{Na1}
Localized endomorphisms in Kitaev's toric code on the plane
Naaijkens
Rev. Math. Phys. 23 (2011), 347-373
\bibitem[Na2]{Na2}
Haag duality and the distal split property for cones in the toric code
Naaijkens
Lett. Math. Phys. 101 (2012), 341-354
\end{thebibliography}
\end{document}